\definecolor{lcolor}{rgb}{0.6,0.3,0.3}
\definecolor{qcolor}{rgb}{0.19,0.55,0.91}
\definecolor{hcolor}{rgb}{0.9,0.2,0.5}
\definecolor{scolor}{rgb}{0.9,0.5,0.2}
\newcommand{\tc}[2]{#2}
\newcommand{\spl}[1]{\xrightarrow{\textnormal{\;split\;}#1}}
\newcommand{\p}{\partial}
\newcommand{\Z}{\mathbbm{Z}}
\newcommand{\R}{\mathbbm{R}}
\newcommand{\C}{\mathbbm{C}}
\newcommand{\one}{\mathbbm{1}}
\newcommand{\cone}{\tc{hcolor}{\textnormal{Cone}}}
\newcommand{\npar}{{\tc{scolor}{m}}}
\newcommand{\witness}{\tc{qcolor}{u}}
\newcommand{\eps}{\epsilon}
\newcommand{\recalleps}{{$\eps = (\begin{smallmatrix} 0 & 1 \\ -1 & 0 \end{smallmatrix})$}}
\newcommand{\nsheaf}{\mathcal{N}}
\newcommand{\step}{\vskip 3mm}
\newcommand{\fixy}{{a}}
\newcommand{\fixx}{{b}}
\newcommand{\fixxy}{{c}}
\newcommand{\fixY}{{A}}
\newcommand{\fixX}{{B}}
\newcommand{\fixXY}{{C}}
\newcommand{\ff}[1]{\mathcal{#1}}
\newcommand{\dgca}{\textnormal{dgca}}
\newcommand{\dgLa}{\textnormal{dgLa}}
\newcommand{\vecx}{{\tc{scolor}{V}}}
\newcommand{\ax}{{\mathfrak a}}
\newcommand{\gx}{{\mathfrak g}}
\newcommand{\hx}{{\mathfrak h}}
\newcommand{\mx}{{\mathfrak m}}
\newcommand{\nx}{{\mathfrak n}}
\newcommand{\ym}{{\mathfrak u}}
\newcommand{\gr}{{\mathfrak v}}
\newcommand{\mext}{\mathcal{M}}
\newcommand{\cext}{\mathcal{C}}
\newcommand{\kr}[1]{\C[k_{1\ldots #1}]}
\newcommand{\vwr}[1]{\C[v_{1\ldots #1},w_{1\ldots #1}]}
\newcommand{\cc}{\tc{qcolor}{\Gamma}}
\newcommand{\ux}{{\mathfrak u}}
\newcommand{\px}{{\mathfrak p}}
\newcommand{\qx}{{\mathfrak q}}
\newcommand{\so}{\tc{scolor}{{\mathfrak{so}}_{1,3}}}
\renewcommand{\sl}{{\mathfrak{sl}}_{2}}
\newcommand{\repaux}[1]{\tc{qcolor}{#1}}
\newcommand{\rep}[2]{\repaux{(}#1,#2\repaux{)}}
\DeclareMathOperator{\sheafHom}{\mathscr{Hom}\text{\kern -3pt {\calligra\large om}}\,}
\DeclareMathOperator{\Frac}{Frac}
\DeclareMathOperator{\image}{im}
\DeclareMathOperator{\End}{End}
\DeclareMathOperator{\Hom}{Hom}
\DeclareMathOperator{\Res}{Res}
\DeclareMathOperator{\MC}{MC}
\DeclareMathOperator{\tr}{tr}
\DeclareMathOperator{\divv}{div}
\newcommand{\fuse}[1]{\otimes_{#1}}
\renewcommand{\subset}{\subseteq}
\renewcommand{\supset}{\supseteq}
\renewcommand{\O}{\mathcal{O}}
\newcommand{\zl}{\tc{hcolor}{V}}
\newcommand{\M}{M} 
\newcommand{\wt}[1]{\smash{\widetilde{#1}}}
\newcommand{\sM}{\wt{\M}} 
\newcommand{\JX}{I'}
\newcommand{\Q}{\tc{qcolor}{Q}}
\newcommand{\hoff}{\tc{hcolor}{H}}
\newcommand{\dtot}{d_{\textnormal{tot}}}
\newcommand{\an}[1]{{\tc{qcolor}{\C^{#1}}}} 
\newcommand{\vn}{{\tc{scolor}{s}}}
\newtheoremstyle{mytheoremstyle}
{14pt}
{14pt}
{\itshape}
{20pt}
{\bfseries}
{.}
{.5em}
{}
\newtheoremstyle{myremarkstyle}
{10pt}
{10pt}
{}
{20pt}
{\itshape}
{.}
{.5em}
{}
\newtheoremstyle{myproofstyle}
{12pt}
{12pt}
{}
{20pt}
{\itshape}
{.}
{.5em}
{}
\theoremstyle{mytheoremstyle}
\newtheorem{theorem}{Theorem}[section]
\newtheorem{definition}[theorem]{Definition}
\newtheorem{lemma}[theorem]{Lemma}
\newtheorem{prop}[theorem]{Proposition}
\theoremstyle{myremarkstyle}
\newtheorem{remark}[theorem]{Remark}
\newtheorem{warning}[theorem]{Warning}
\newtheorem*{example}{Example}
\theoremstyle{myproofstyle}
\newcommand{\hprimeaux}[1]{\tc{scolor}{h}_{\Q #1}}
\newcommand{\hprimeJ}{\tc{scolor}{h}_{\Q_J}}
\newcommand{\dprimeaux}[1]{\tc{scolor}{d}_{\Q #1}}
\newcommand{\hprime}{\hprimeaux{}}
\newcommand{\dprime}{\dprimeaux{}}
\providecommand{\keywords}[1]{{\footnotesize \textbf{\textit{Keywords---}} #1}}
\newcommand{\xshift}{2}
\newcommand{\yshift}{3}
\newcommand{\rcircle}{0.07}
\newcommand{\xshort}{4}
\newcommand{\dx}{1.25}
\newcommand{\dy}{1}
\newcommand{\haux}[5]{\draw [shorten >=\xshort,shorten <=\xshort,#3] (#1) -- (#2) node[midway,#4] {#5};}
\newcommand{\hl}[3]{\haux{#1}{#2}{}{anchor=east,xshift=-\xshift,yshift=\yshift}{#3}}
\newcommand{\hr}[3]{\haux{#1}{#2}{}{anchor=west,xshift=\xshift,yshift=\yshift}{#3}}
\newcommand{\il}[3]{\haux{#1}{#2}{dashed}{anchor=east,xshift=-\xshift,yshift=\yshift}{#3}}
\newcommand{\ir}[3]{\haux{#1}{#2}{dashed}{anchor=west,xshift=\xshift,yshift=\yshift}{#3}}
\newcommand{\pr}[3]{\haux{#1}{#2}{dashed}{anchor=west,xshift=\xshift,yshift=\yshift}{#3}}
\newcommand{\braux}[5]{\draw [black,fill=#1] (#2) circle (#3*\rcircle) node [#4] {#5};}
\newcommand{\br}[3]{\braux{qcolor}{#1}{1}{#2}{#3}}
\newcommand{\wrap}[2]{%
    \begin{tikzpicture}[xscale=#1,baseline=(current  bounding  box.center)]
     #2\end{tikzpicture}}
\newcommand{\graphbr}{\wrap{1}{
  \coordinate (a) at (-\dx/2,0);
  \coordinate (b) at (\dx/2,0);
  \coordinate (ab) at (0,0.7*\dy);
  \coordinate (out) at (0,1.6*\dy);
  \hl{a}{ab}{$k_1$}
  \hr{b}{ab}{$k_2$}
  \hr{ab}{out}{$k_1+k_2$}
  \br{ab}{}{}
}}
\newcommand{\graphpos}[1]{\wrap{1}{
  \coordinate (a) at (-2*\dx,0);
  \coordinate (b) at (-\dx,0);
  \coordinate (c) at (0,0);
  \coordinate (d) at (0.7*\dx,0);
  \coordinate (e) at (1.7*\dx,0);
  \coordinate (ab) at (-1.5*\dx,\dy);
  \coordinate (abc) at (-\dx,2*\dy);
  \coordinate (de) at (\dx,1.5*\dy);
  \coordinate (abcde) at (0,3*\dy);
  \coordinate (out) at (0,4*\dy);
  #1
}}
\newcommand{\graphx}[9]{\graphpos{
  \il{a}{ab}{#1}
  \ir{b}{ab}{#2}
  \hl{ab}{abc}{#3}
  \ir{c}{abc}{#4}
  \hl{abc}{abcde}{#5}
  \il{d}{de}{#6}
  \ir{e}{de}{#7}
  \hr{de}{abcde}{#8}
  \pr{abcde}{out}{#9}
  \br{ab}{}{}
  \br{abc}{}{}
  \br{abcde}{}{}
  \br{de}{}{}
}}
\newcommand{\graphintro}{\graphx{$i$}{$i$}{$\hoff$}{$i$}{$\hoff$}{$i$}{$i$}{$\hoff$}{$p$}}
\newcommand{\graphtech}{\graphx{$i_{k_1}$}{$i_{k_2}$}{$h_{k_1+k_2}$}{$i_{k_3}$}{$h_{k_1+k_2+k_3}$}{$i_{k_4}$}{$i_{k_5}$}{$h_{k_4+k_5}$}{$p_{k_1+k_2+k_3+k_4+k_5}$}}
\newcommand{\graphmomentum}{\graphx{$i_{k_1}$}{$i_{k_2}$}{$\hoff_{k_1+k_2}$}{$i_{k_3}$}{$\hoff_{k_1+k_2+k_3}$}{$i_{k_4}$}{$i_{k_5}$}{$\hoff_{k_4+k_5}$}{$p_{k_1+k_2+k_3+k_4+k_5}$}}
\newcommand{\graphfactoraux}[1]{\wrap{0.9}{
  \coordinate (a) at (-1.3*\dx,0);
  \coordinate (b) at (-0.7*\dx,0);
  \coordinate (c) at (-0.2*\dx,0);
  \coordinate (d) at (0.6*\dx,0);
  \coordinate (e) at (1.4*\dx,0);
  \coordinate (f) at (1.8*\dx,0);
  \coordinate (g) at (2.2*\dx,0);
  \coordinate (ab) at (-0.9*\dx,1.2*\dy);
  \coordinate (de) at (0.8*\dx,0.5*\dy);
  \coordinate (fg) at (1.85*\dx,\dy);
  \coordinate (cde) at (0.3*\dx,1*\dy);
  \coordinate (u) at (0.2*\dx,1.6*\dy);
  \coordinate (v) at (0.1*\dx,2.4*\dy);
  \coordinate (abcde) at (0*\dx,3*\dy);
  \coordinate (abcdefg) at (0.8*\dx,3.4*\dy);
  \coordinate (out) at (1*\dx,4.3*\dy);
  \il{a}{ab}{}
  \il{b}{ab}{}
  \il{c}{cde}{}
  \il{d}{de}{}
  \il{e}{de}{}
  \il{f}{fg}{}
  \il{g}{fg}{}
  \hl{fg}{abcdefg}{}
  \hl{ab}{abcde}{}
  \hl{de}{cde}{}
  #1
  \hl{abcde}{abcdefg}{}
  \pr{abcdefg}{out}{}
  \br{ab}{}{}
  \br{de}{}{}
  \br{fg}{}{}
  \br{cde}{}{}
  \br{abcde}{}{}
  \br{abcdefg}{}{}
}}
\newcommand{\graphfactor}{\graphfactoraux{
  \hr{cde}{abcde}{$\hoff_{k_J}$}
}}
\newcommand{\graphfactored}{\graphfactoraux{
  \pr{cde}{u}{$p$}
  \ir{v}{abcde}{$i$}
  \haux{u}{v}{}{anchor=west,xshift=\xshift,yshift=\yshift}{${\hprimeJ}$}
  \braux{hcolor}{u}{1.3}{}{}
  \braux{hcolor}{v}{1.3}{}{}
}}
\title{Scattering amplitudes in YM and GR\\
       as minimal model brackets\\
       and their recursive characterization}
\author{%
Andrea N\"utzi, ETH Zurich\thanks{andrea.nuetzi@math.ethz.ch}\\
Michael Reiterer, ETH Zurich\thanks{michael.reiterer@protonmail.com,
present affiliation: The Hebrew University of Jerusalem}}
\date{}
\begin{document}

\maketitle
\begin{abstract}
Attached to both Yang-Mills and General Relativity about
Minkowski spacetime are distinguished
gauge independent objects known as the on-shell tree scattering amplitudes.
We reinterpret
and rigorously construct them as $L_\infty$ minimal model brackets.
This is based on formulating
YM and GR as differential graded Lie algebras.
Their minimal model brackets
are then given by a sum of trivalent (cubic) Feynman tree graphs.
The amplitudes are gauge independent when all internal lines are off-shell,
not merely up to $L_\infty$ isomorphism,
and we include a homological algebra proof of this fact.
Using the homological perturbation lemma,
we construct homotopies (propagators) that are optimal
in bringing out the factorization of the residues of the amplitudes.
Using a variant of Hartogs extension for singular varieties,
we give a rigorous account of a recursive characterization of the amplitudes
via their residues independent of their original definition in terms of Feynman graphs
(this does neither involve so-called BCFW shifts
 nor conditions at infinity under such shifts).
Roughly, the amplitude with $N$ legs is the unique section
of a sheaf on a variety of $N$ complex momenta
whose residues along a finite list of irreducible codimension one subvarieties
(prime divisors)
factor into amplitudes with less than $N$ legs.
The sheaf is a direct sum of rank one sheaves
labeled by helicity signs.
To emphasize that amplitudes are robust objects,
we give a succinct list of properties that suffice for a
dgLa so as to produce the YM and GR amplitudes respectively.

\end{abstract}

\keywords{scattering amplitudes, General Relativity, Yang Mills,
          gauge independence, differential graded Lie algebra,
          L-infinity algebra, homotopy transfer,
          Feynman tree graphs, factorization, BCFW,
          homological perturbation lemma, prime divisors, Hartogs extension}

\tableofcontents

\section{Introduction}

Gauge theories in physics are redundant
in their description of physical processes;
to extract actual physical information one must quotient by a large gauge group.
Tree scattering amplitudes, specifically the on-shell tree scattering amplitudes
in Yang-Mills theory (YM) and General Relativity (GR),
are interesting because they are gauge independent,
devoid of any redundancy.
These objects are extensively studied in the physics literature,
see for example
the Parke-Taylor formulas for YM amplitudes in \cite{parketaylor}
exposing significant cancellations among Feynman graphs
for certain helicity configurations\footnote{%
The Parke-Taylor formulas are usually given for the color-ordered or partial YM amplitudes,
from which one can obtain the YM amplitudes.
Color-ordering is routine terminology in the literature on YM amplitudes,
but we do not use it in this paper.
}\textsuperscript{,}\footnote{%
A related instance of cancellations, there are helicity configurations
where the amplitudes are identically zero \cite{parketaylor0}.
The Parke-Taylor formulas are for the simplest nonzero components
and are known as maximally helicity violating (MHV) amplitudes.};
the discussion of the factorization of residues along
collinear and multiparticle singularities in \cite{mpx}
and earlier $S$-matrix literature;
the Britto-Cachazo-Feng-Witten (BCFW) recursion relations in \cite{bcfw,arkanikaplan};
the monograph \cite{arkanietal} and a textbook \cite{elyu}.
Tree scattering amplitudes are so named because they can be defined
as a sum of Feynman tree graphs,
and they are associated with classical physics,
whereas graphs with loops are associated with quantum physics.
\step
An interesting open question for mathematics,
not answered in this paper but an important motivation,
is how these gauge independent amplitudes are rigorously related
to solutions
to the partial differential field equations;
for GR
these are near-Minkowski solutions to the Einstein equations for vacuum, $\textnormal{Ricci} = 0$\footnote{%
Near-Minkowski solutions were rigorously constructed in \cite{ck}.}.
One would like to see rigorously how the
tree scattering amplitudes relate past incoming to future outgoing asymptotic data
for actual solutions to the
semilinear (YM) respectively quasilinear (GR)
hyperbolic partial differential field equations,
not an easy task.
This would clarify to what extent these amplitudes `describe
the nonlinear interaction of gravitational waves' as modeled by GR.
\step
This paper is about the tree scattering amplitudes themselves,
developed rigorously from their definition as $L_\infty$ minimal model
brackets implemented using Feynman graphs
to their recursive characterization independent of Feynman graphs.
It includes details often taken for granted in the literature
on amplitudes,
and departs conceptually from it
because the development is consistently homological and 
geometrical. Here is a summary:
\newcommand{\dc}[1]{\textnormal{\emph{#1}}}
\begin{description}
\item[\dc{Part I comprising Sections
\ref{sec:mm}, \ref{sec:ymgr}, \ref{sec:opthom}.}]
To prove gauge independence and to treat YM and GR in a unified way,
we define the tree scattering amplitudes in a non-standard manner
as the $L_\infty$ minimal model brackets of a differential graded Lie algebra (dgLa)
with an additional `momentum' grading.
In the mathematics literature on the homotopy transfer theorem, e.g.~\cite{lv},
it is shown that the minimal model brackets of a dgLa can be
defined as a sum of trivalent (cubic) Feynman tree graphs
 of the caricatural form
\[
\graphintro
\]
where each node stands for an application of the dgLa bracket $[-,-]$ whose inputs are the two lines
coming in on the bottom and whose output is the line leaving on the top,
whereas the lines (edges) are decorated by a contraction for the dgLa differential $d$
encoding a choice of gauge\footnote{%
In physics terminology, the on-shell quasi-isomorphisms $i$ and $p$ choose polarization vectors,
the off-shell homotopy $\hoff$ is the propagator.
The differential $d$ is, as a Fourier multiplier,
a matrix with entries polynomial in the momentum $k \in \C^4$,
with $d^2 = 0$,
with homology on-shell but no homology off-shell.
The shell is the light cone in momentum space $\C^4$.}.
We prove gauge independence in this homological language,
namely that individual trees depend on the gauge but the sum of trees does not.
Only then do we show that YM and GR can actually be formulated as the
Maurer-Cartan equations $du + \tfrac12 [u,u] = 0$ in a suitable dgLa\footnote{%
The unknown $u$ stands for, roughly,
a Lie algebra valued connection one-form and curvature two-form in YM,
respectively an orthonormal frame and connection in GR.
These are formulations where the partial differential field equations are first order.
}.
That these dgLa yield the partial differential field equations
and gauge transformations of YM and GR
suffices for our purpose;
we omit a more direct and pedantic proof that the tree amplitudes as we define
them coincide with those in the physics literature,
defined in terms of the (non-trivalent) Feynman tree graphs
associated to the YM and GR Lagrangians.
(The details of the construction of these dgLa do not permeate this paper.
It is an important by-product of the recursive characterization in Part II
 that all dgLa with the right homology, nontrivial bracket,
 Lorentz invariance and suitable homogeneity,
 yield the same amplitudes.)
The homotopy $\hoff$ is not unique and is the direct homological analog
of the propagator in the physics literature,
a matrix whose entries are rational functions of the momentum $k \in \C^4$
that is necessarily singular when $k$ is on the light cone.
We construct `optimal homotopies' $\hoff$ using the homological perturbation lemma,
whose residue at the light cone factor in a specific way
(namely with an $i$ factoring out on the left and a $p$ factoring out on the right)
and which at once implies that the residues of the corresponding poles of the scattering 
amplitude factor, key for Part II.
These homotopies are defined locally,
on Zariski open sets, and hence so are individual trees,
but the amplitudes can be glued to global sections of a sheaf
by gauge independence; gluing is in Part II.
The main results of Part I are Theorems \ref{well-defined2}, \ref{theorem:ymgrnew}, \ref{theorem:opthom}.
\item[\dc{Part II comprising Sections
\ref{sec:variety}, \ref{sec:helicityone}, \ref{sec:helicity}, \ref{sec:recchar}.}]
The purpose of this part of the paper
is to derive recursion relations for tree scattering amplitudes $\{-,\ldots,-\}$
of the caricatural form, known as factorization of residues\footnote{%
So the residues factor into
products of amplitudes with fewer legs, that is,
compositions of minimal model brackets with fewer arguments.
The case of four legs,
that is, minimal model brackets with three arguments,
can involve an extra sum on the right.
For details see \eqref{eq:resrule}.},
\begin{equation}\label{eq:rr}
\Res_\px \{-,\ldots,-\} \;=\;\{-,\ldots,\{-,\ldots,-\}\}
\end{equation}
and to show that these recursions characterize the amplitudes uniquely,
given the homogeneity degrees of the amplitudes,
given permutation invariance,
and given the amplitudes with three legs, $\{-,-\}$,
which in turn admit a simple characterization using Lorentz invariance.
(This detaches tree scattering amplitudes from Feynman graphs,
 and indeed Part I enters Part II only to prove the existence of a solution
 to these recursions.
 One can go ahead and try to find other constructions without
 Feynman graphs that satisfy the recursions;
 literature along these lines includes \cite{csw,arkanietal}.
 Interesting as such approaches are,
 they are no replacement for Part I for those interested in studying 
 the connection between amplitudes
 and the partial differential field equations, namely the Maurer-Cartan equations,
 in more depth, a key motivation for this paper.)
 To make sense of \eqref{eq:rr}
 we study the complex algebraic variety of kinematically admissible momenta;
 we classify the irreducible codimension one subvarieties,
 labeled by prime divisors $\px$, along which
 amplitudes can have poles\footnote{%
  Poles can only occur when an internal momentum (a sum of a subset of momenta
  excluding trivial subsets and those equal to a single momentum)
  goes on-shell, but this is not always an irreducible variety.
  The irreducible components are labeled by
  height one prime ideals $\px$.
  For some helicity configurations, some potential poles are absent
  due to cancellations.};
 we clarify what we mean by the residue along $\px$;
 we introduce sheaves suitable for YM respectively GR amplitudes;
 we construct a codimension two subset where amplitudes need not be defined;
 and we obtain a variant of Hartogs extension for these sheaves\footnote{%
 Recall that the classical Hartogs extension theorem says that a holomorphic function
 of two complex variables $z,w$ defined on say $0 < |z|^2 + |w|^2 < 1$
    extends holomorphically to the origin,
 so there can be no singularities in codimension two.
 For singular varieties or sheaves,
 an analogous Hartogs phenomenon may or may not hold.
 It holds for the structure sheaf of complete intersections,
 sometimes said to have mild singularities,
 and we reduce our statement to this case.
 It can fail in the presence of wilder singularities, see Remark \ref{remark:hfail}.}.
 Hartogs is used to prove the recursive characterization.
 A hint of a proof of the recursive characterization,
 glossing over geometric details,
 is at the end of \cite{bcfw}.
 (This discussion at the end of \cite{bcfw}
  is separate from, and not to be confused with, the main result of that paper
  known as BCFW recursion, see below.)
 The recursive characterization implies qualitative properties
 that are not clear from Feynman graphs,
 such as the pole structure of the Parke-Taylor formulas.
 In our non-Lagrangian setup it is not immediate
 that the amplitudes $\{-,\ldots,-\}$, which are invariant under permuting the inputs,
 are also invariant under exchanging the output with an input
 relative to suitable bilinear forms,
 but we prove this also using Hartogs.
 The main result of Part II are Theorems
 \ref{theorem:pd4}, \ref{theorem:nlcm}, \ref{theorem:n3h}, \ref{theorem:unique}, \ref{theorem:exists}.
\end{description}

The following things appear to be new:
the interpretation of on-shell tree scattering amplitudes as $L_\infty$ minimal model brackets
and the unified treatment of YM and GR made possible by this,
including a homological proof of gauge independence;
the purely trivalent tree Feynman graphs particularly in GR\footnote{%
By contrast, the Einstein-Hilbert Lagrangian for GR
is not polynomial and the associated Feynman rules, about Minkowski spacetime,
yield nodes of arbitrarily high degree.};
the classification of prime divisors along which amplitudes can have poles;
the construction of optimal homotopies using the homological perturbation lemma
from which the factorization of residues easily follows;
a detailed statement and proof of the recursive characterization, using a Hartogs extension argument.
\step
 
Let us discuss how
BCFW recursion \cite{bcfw} compares.
In its original form,
this effectively restricts amplitudes
to suitable $\C^2$ subspaces contained in the variety of kinematically admissible momenta
(there are enough such subspaces)
and by homogeneity one passes to the Riemann sphere $\mathbbm{P}(\C^2)$
hence one complex variable.
The north pole is a point of the form
$P = (0,\ldots,0,q,0,\ldots,0,-q,0,\ldots,0)$ with on-shell $q \in \C^4-0$.
One uses the familiar theorem from complex analysis by which
a meromorphic function on $\mathbbm{P}(\C^2)$
with simple poles is determined by the location and residues of the poles
up to an additive constant\footnote{%
Incidentally, the classical Hartogs extension theorem is one way to prove this.
By using the residue theorem one can produce explicit formulas.
}.
The residues come from restricting \eqref{eq:rr}
but one also needs control at the point at infinity $P$,
which is not immediate \cite{bcfw,arkanikaplan}.
By comparison, the recursive characterization in this paper does 
not require checking conditions at points such as $P$
because it exploits the Hartogs phenomenon.
It requires \eqref{eq:rr}
only away from a codimension two subset $Z$ of the variety;
$Z$ contains the entire singular locus and in particular $P \in Z$.
Roughly, the amplitudes are the unique sections given away from $Z$
that satisfy the recursions \eqref{eq:rr} away from $Z$.
No conditions need to be checked along $Z$.
This seems to be a simple and natural recursive characterization,
hence we give a rigorous account of this,
with Hartogs as a simple proof.
\step

\emph{Acknowledgments:}
We thank Thomas Willwacher for very useful comments on parts of this manuscript;
Eugene Trubowitz for studying $L_\infty$ algebras and homotopy transfer with us;
and Horst Kn\"orrer for his interest.

\section{An informal overview}

The technical sections of this paper are, at least to some extent,
logically independent units.
This entails that the connections and the bigger picture may not always be clear.
The only two places where things are tied together is this technical but informal overview,
where we have taken the freedom to reorganize the material to simplify the discussion,
and then rigorously in Section \ref{sec:recchar}.
\step
{\bf Homological framework.}
    Both YM and GR about Minkowski spacetime admit
    a differential graded Lie algebra (dgLa) formulation.
    The partial differential classical field equations are
    the Maurer-Cartan equations
    \[
         du + \tfrac{1}{2}[u,u] = 0
    \]
    for the unknown $u$, an element of degree one in the dgLa.
    In particular, in the case of GR the MC-equations are
    equivalent to the vanishing of the Ricci curvature.
    Elements of degree zero act as infinitesimal gauge transformations.
    References are \cite{zeitlin,costello} for YM and \cite{rt} for GR.
    The only nonlinearity is the bracket,
    which will give rise to trivalent Feynman tree graphs.

    For amplitudes, we use a variant of these dgLa,
    set in complex momentum space, $k \in \C^4$.
    So $k$ is formally the variable appearing in the Fourier transform.
    This dgLa $\gx$ comes with an additional `momentum' grading,
    $\gx = \bigoplus_{k \in \C^4} \gx_k$, that is respected by the operations,
    $d \gx_k \subset \gx_k$ and $[\gx_{k_1},\gx_{k_2}] \subset \gx_{k_1+k_2}$:
    \[
      \graphbr
    \]
    We have $\gx_k \simeq \vecx$ for a
    finite-dimensional graded vector space $\vecx$.
    Relative to a basis of $\vecx$,
    the differential and bracket are
    arrays whose entries are polynomials in the momenta $k$ and $k_1,k_2$ respectively.
    Actually, they are first order polynomials, corresponding to first order
    partial differential operators.
 
    Elements of $\gx$
    may be interpreted as finite linear combination of plane waves,
    namely interpret $v \in \gx_k \simeq \vecx$
    as the plane wave $\R^4 \ni x \mapsto v e^{ikx}$.
    However, we use $\gx$ not because we are interested in finite linear combinations of plane waves,
    but because it is convenient for the definition of amplitudes.
\step
    {\bf The homology of YM and GR.}
    The differential $d$ describes the linearized problem.
    Its homology is a distinguishing feature of YM and GR.

    Let $\Q$ be the Minkowski square of $k$,
    so $\Q=0$ is the complex light cone.
    It is convenient to let $k$ be complex $2\times 2$ matrices
    and to set $\Q = \det k$.
    Both YM and GR have homology only on-shell, meaning on the cone,
     $\hx = \bigoplus_{k: \Q=0} \hx_k$ where $\hx_k = H(\gx_k)$,
    and only in degrees $i=1,2$. These split into helicities,
    \[
      \hx_k^i = \hx_k^{i,+} \oplus \hx_k^{i,-}
    \]
    The $\hx_k^{i,\pm}$ define rank one sheaves on the cone away from the origin,
    times a Lie algebra in the case of YM
    that we refer to as the internal Lie algebra.
\step
{\bf Amplitudes as minimal model brackets. Gauge independence.}
We define the amplitudes to be the $L_\infty$ minimal model brackets\footnote{%
These are all minimal model brackets given that
$\hx$ is concentrated in degrees one and two,
because the minimal model bracket $\hx^{\otimes n} \to \hx$ has degree $2-n$.}
\begin{equation} \label{eq:prelbr}
   \{-,\ldots,-\} \quad:
                  \quad
    \hx_{k_1}^1 \otimes \cdots \otimes \hx_{k_n}^1
    \to \hx_{k_1 + \ldots + k_n}^2
\end{equation}
with all internal lines off-shell, meaning
for all $J \subset \{1,\ldots,n\}$ with $1 < |J| < n$,
the momentum $k_J = \sum_{i \in J} k_i$ is not on the cone.
The minimal model brackets are defined as a sum of trivalent tree graphs,
implementing $L_\infty$ homotopy transfer
\cite{kontsevich,huebschmannstasheff,berglund,ksoibelman}.
A typical such tree is
\begin{equation}\label{eq:graphintro}
\graphmomentum
\end{equation}
standing for the nested composition
$\pm p[\hoff[\hoff[i-,i-],i-],\hoff[i-,i-]]$.
One needs the following objects to define trees,
informally:
\begin{itemize}
\item \emph{An off-shell homotopy for every internal line:}
A matrix $\hoff$ with entries rational
in $k$ such that $\hoff^2 = 0$ and $\hoff d + d \hoff = \one$.
Singular along $\Q = 0$.
Its evaluation at $k$ is a map $\hoff_k: \gx_k \to \gx_k$
of degree minus one.
\item \emph{An on-shell contraction for every external line:}
A contraction $(h,i,p)$
meaning
$hdh = h$, $h^2 = 0$, $ip = \one - dh-hd$, $pi = \one$.
Further $dhd=d$ along $\Q=0$.
They are matrices with entries rational in $k$,
regular including along $\Q = 0$.
Here $i_k : \hx_k \to \gx_k$ and $p_k : \gx_k \to \hx_k$
of degree zero.
\end{itemize}
Then \eqref{eq:prelbr}
is the sum of all trivalent trees decorated
with $i$ at every input,
$p$ at the output, $\hoff$ at every internal line,
and $[-,-]$ at each node.

The homotopies and contractions encode gauge choices.
But while individual trees depend on these choices,
we prove that the sum of all trees does not if all internal lines are off-shell.
The gauge independence is in the strong sense,
and not merely up to $L_\infty$ isomorphism
which holds for abstract reasons \cite{kontsevich}.

Gauge independence is a prerequisite to constructing global objects.
In fact the homotopies and contractions
are only required to be
regular on some open patch of momenta $k_1,\ldots,k_n$.
Gauge independence allows one to glue
these locally valid tree expansions to construct a global object.

The amplitudes are invariant under the group $S_n$
that permutes the $n$ inputs,
but invariance under $S_{n+1}$ is not by construction,
since our setting is without a Lagrangian
and it distinguishes inputs and outputs.
We prove invariance under $S_{n+1}$
at a late stage, using residues and the recursive characterization.

\step
{\bf Optimal homotopies via the homological perturbation lemma.}
    The requirement $\hoff d + d \hoff = \one$
    means that the homotopy $\hoff$ is an inverse of $d$ 
    in a homological sense, and as such it is highly non-unique.
    Sticking to momentum conserving homotopies, $\hoff$ is a
    matrix whose entries are rational functions of $k$.
    However, we never write down explicit matrix entries.
    In fact the $\hoff$ that we use would not necessarily
    be convenient for explicit by-hand calculations, instead
    they distinguish themselves by their structure.

    A homotopy necessarily has a singularity near the cone, of type $1/\Q$.
    By an optimal homotopy we mean
    one where $1/\Q$ is multiplied by a matrix of the lowest possible rank.
    We show that every point on the cone
    has a Zariski open neighborhood on which there is a homotopy
    of the form
    \begin{equation}\label{eq:homx}
         \hoff \;=\; h + \frac{1}{\Q}i{\hprime}p
    \end{equation}
    with regular $h$, $i$, ${\hprime}$, $p$.
    Here ${\hprime}$ is a two-by-two matrix\footnote{%
      Actually a block matrix that is zero except for a two-by-two block
      going from homological degree two to one.
      Times the dimension of the internal Lie algebra in the case of YM.}
     so that $i{\hprime}p$ has the lowest possible rank.
    The restriction ${\hprime}|_{\Q = 0}$ is a canonical
    isomorphism $\hx^2 \to \hx^1$\footnote{%
    We describe in words how this map arises, see Section \ref{sec:opthom} for details.
    The derivative of $d$ transversal to $\Q=0$ induces a differential
    on the homology $\hx$ along $\Q=0$.
    It has a canonical normalization, using $\Q$.
    For YM and GR one obtains a complex $0 \to \hx^1 \to \hx^2 \to 0$
    at each point of the cone away from the origin.
    It is exact, hence induces a map $\hx^2 \to \hx^1$.
    }.
    Intuitively, $p$ receives and $i$ emits an on-shell particle.

    Our construction of $\hoff$ is conceptual,
    without reference to YM or GR.
    It is based on a nested application
    of the homological perturbation lemma.
\step
{\bf The momenta variety and the helicity sheaf for $N = n+1$ particles.} 
    The amplitudes are naturally sections of a sheaf on a variety
    that we refer to as the variety of kinematically admissible
    complex momenta,
    \[
        \big\{ (k_1,\ldots,k_N) \in (\C^4)^N \;\mid\; \Q_1 = \ldots = \Q_N = 0,
               \quad k_1 + \ldots + k_N = 0 \big\}
    \]
    a direct product of $N$ cones,
    intersected with a codimension 4 plane.
    We discuss its geometry.
    It has two irreducible components when $N=3$,
    but it is irreducible when $N \geq 4$.
    It  is a complete intersection and therefore
    has the property $S_2$ of Serre so that a variant of
    Hartogs extension holds for the structure sheaf.

    The amplitudes
    will be sections of a sheaf that is a product of $N$ sheaves,
    one for every input and output,
    times the sheaf of an effective divisor
    to accommodate $1/\Q$ singularities in the homotopies.
    The tree expansion yields sections on open patches.
    Gluing yields amplitudes on the complement of a subset $Z$
    of codimension two, a union of various sets
    where one does not have a valid tree expansion.
    Hartogs can limit the behavior along the undefined locus $Z$.
\step
{\bf Prime divisors and the factorization of residues.} 
        For every subset $J \subset \{1,\ldots,N\}$
        such that both $J$ and its complement $J^c$
       have at least two elements,
    define the internal momentum
    \[
        k_J = \textstyle\sum_{i \in J} k_i
    \]
    with Minkowski square $\Q_J$.
    Homotopies generally introduce $1/\Q_J$ singularities in amplitudes,
    known as collinear or multiparticle singularities.
    If $|J|, |J^c| \geq 3$
    then the codimension one subvariety $\Q_J = 0$ is irreducible,
    but in general it decomposes.
    Each component is a prime divisor $\px$
    and we classify them.
    Associated to each $\px$ is a
    residue condition for minimal model brackets, schematically\footnote{%
    Both sides are sections of a sheaf along the variety $\zl(\px)$,
    away from where this intersects $\zl(\qx)$
    for other such prime divisors $\qx \neq \px$.}
    \begin{equation}\label{eq:resc}
      \Res_\px \big\{-,\ldots,-\big\} \;=\;
      \big\{-,\ldots,-,{\hprimeJ}\big\{-,\ldots,-\big\}\big\}
    \end{equation} 
    There is actually a sum on the right hand side
    over all $J$ such that $\px$ is a minimal prime over $(\Q_J)$,
    but it always degenerates to a single term if $N > 4$.
    There are interesting exceptions in the case $N=4$,
    where the sum is over three terms.

    The residue condition can be thought of as a factorization.
    It follows immediately if, exploiting gauge independence,
    one uses optimal homotopies \eqref{eq:homx}.
    We illustrate this for a small tree and a $J$ for which
    $(\Q_J)$ is itself a prime divisor:
\[
\Res_{\Q_J}
\left(\;\graphfactor\;\right)
\;=\;
\graphfactored
\]
The subtree below ${\hprimeJ}$ is terminated by a $p$,
and the subtree above is entered with an $i$.
Hence, crucially, the subtrees
are again of the general form \eqref{eq:graphintro}.

    If we sum over all trees, then only those
    containing $k_J$ as an internal momentum contribute to this particular residue.
    On the right hand side we get a sum that factors into a double sum
    over all trees below and all trees above the low-rank operator ${\hprimeJ}$.
    They are reassembled into two brackets, giving \eqref{eq:resc}.

    We prove that these recursive conditions uniquely determine the amplitudes.
    This is what we refer to as the unique recursive characterization.
\step
\newcommand{\si}{S} 
{\bf Recursive characterization. Hartogs extension. Local cohomology.}
Let $X$ be the 
variety of kinematically admissible momenta,
$R$ its coordinate ring.
The amplitudes are only constructed away from a closed subset $Z \subset X$
of codimension two;
$Z$ is the union of all pairwise intersections of the zero loci of distinct prime divisors.
For every helicity configuration there is a sheaf $\sM$
associated to some finitely generated graded $R$-module $\M$\footnote{%
Here $\wt{\phantom{x}}$ is the module-to-sheaf functor,
defined by $\wt{M}(\{f \neq 0\}) = M_f$
for every nonzero ring element $f$,
where $M_f$ is module localization at the powers of $f$.
See Appendix \ref{appendix:m2s}.
}.
We refer to the sheaf that does not allow poles,
so the amplitudes are not in $\sM(X-Z)$ because they have poles,
but the difference of two amplitudes with identical residues is in $\sM(X-Z)$.
Consider the restriction map of graded $R$-modules
\begin{equation}\label{eq:rmrm}
\sM(X) \to \sM(X-Z)
\end{equation}
Its failure to be injective or
surjective is measured by respectively
the 0th and 1st local cohomology modules\footnote{%
The prototypical case is when $Z$ is a single point, but we have a bigger $Z$.
}.
Though $M$ is not locally free,
a variant of Hartogs extension shows that
\eqref{eq:rmrm} is an isomorphism.
The recursive characterization
follows since $\sM(X) = M$ is empty in the relevant degree,
for YM and GR\footnote{%
A more ambitious use of local cohomology \cite{localcohomology}
would be to derive detailed properties of the amplitudes
near various singular points of the variety.
Essentially one asks how constraining the structure of the sheaf itself is,
when combined with permutation symmetry and Lorentz invariance and homogeneity.
In this case, in \eqref{eq:rmrm}
one must replace $M$ by 
a module (just) big enough to accommodate poles,
so that the amplitude is a section on $X-Z$.}.

\section{Preliminaries} \label{sec:pre}

Here we collect a few definitions and facts that are used in several sections,
the reader may want to at least skim this section before moving on.
\step

{\bf Ground field.} All vector spaces and
algebras and varieties are over the complex numbers $\C$.
Tensor products of vector spaces are over $\C$.

\step
{\bf Differential. Homotopy. Contraction. Homotopy equivalence.}
On a $\Z$-graded vector space or module,
by a \emph{differential} we mean an endomorphism $d$ of degree one that satisfies $d^2 = 0$.
A space with a differential is called a complex.
We use the following terminology:
\begin{itemize}
\item A \emph{homotopy} for $d$ is an endomorphism $h$ of degree minus one
that satisfies
\[
  h^2 = 0 \qquad hdh = h
\]
Every homotopy yields three mutually orthogonal projections
\begin{equation}\label{eq:threep}
  dh \qquad\qquad (\pi=)\mathbbm{1}-dh-hd \qquad\qquad hd
\end{equation}
\item A \emph{contraction} for $d$
is what some authors call a strong deformation retract with side conditions.
Namely a triple $(h,i,p)$ where $h$ is a homotopy as above
and where $i,p$ are linear maps such that
\[
  pi = \mathbbm{1} \qquad ip = \mathbbm{1} - dh - hd
\]
Note that $hi = ph = 0$.
The codomain of $p$ and domain of $i$ is a second graded module,
as in the non-commutative \emph{contraction diagram}:
\begin{equation}\label{eq:hip}
\begin{tikzpicture}[baseline=(current  bounding  box.center)]
  \matrix (m) [matrix of math nodes, column sep = 32mm, minimum width = 4mm ]
  {
    \bullet & \bullet \\
  };
  \path[-stealth]
    (m-1-1) edge [out=15,in=180-15] node [above] {$p$} (m-1-2)
            edge [out=180+25,in=180-25,min distance=8mm] node
                 [left,xshift=-1mm] {$d$ \footnotesize differential} (m-1-1)
            edge [loop above,out=90+25,in=90-25,min distance=8mm] node
                 [above] {$h$ \footnotesize homotopy} (m-1-1)
    (m-1-2) edge [out=180+15,in=-15] node [below] {$i$} (m-1-1)
            edge [out=25,in=-25,min distance=8mm] node
                 [right,xshift=1mm] {$pdi$ \footnotesize differential} (m-1-2);
\end{tikzpicture}
\end{equation}
Using the differential $pdi$,
the maps $p$ and $i$ are a homotopy equivalence.
\item A \emph{homotopy equivalence} between two complexes $(C,d)$ and $(C',d')$
are maps $R \in \Hom^0(C,C')$, $L \in \Hom^0(C',C)$,
$\witness \in \End^{-1}(C)$, $\witness' \in \End^{-1}(C')$
where $R,L$ are chain maps and
$L R = \one - d\witness - \witness d$
and $RL = \one - d'\witness'-\witness'd'$. 
We usually consider all four maps to be part of the data.
Note that $R,L$ are quasi-isomorphisms.
The composition of two homotopy equivalences
$C \leftrightarrow C'$ and $C' \leftrightarrow C''$
is a homotopy equivalence $C \leftrightarrow C''$.
\end{itemize}

We often require that a homotopy satisfy $dhd = d$.
Then the images of the projections in \eqref{eq:threep} are
respectively $\image d$,
a complement of $\image d$ in $\ker d$,
and a complement of $\ker d$.
Conversely, for every choice of two such graded complements
there is a unique corresponding such homotopy\footnote{%
In particular every differential on a \emph{vector space} admits a homotopy with $dhd=d$.
On infinite-dimensional vector spaces,
the existence of complementary subspaces is not immediate,
but it is standard and convenient to adopt axioms that imply the existence.}.
For a contraction, $dhd=d$ is equivalent to $di = 0$ or $pd = 0$ or $pdi = 0$.
In this case the space on the right in \eqref{eq:hip}
is canonically isomorphic, via $i$ and $p$, to the homology of $d$.
\step


{\bf Homological perturbation lemma.} 
Given a contraction and a perturbation of the differential,
the HPL produces a perturbed contraction.
Explicitly, if the perturbed differential is called $d'$,
and if we abbreviate $\delta = d'-d$, then
\begin{equation}\label{eq:hpl}
  h' = h(\mathbbm{1} + \delta h)^{-1}
  \qquad
  i' = (\mathbbm{1} + h \delta)^{-1} i
  \qquad
  p' = p(\mathbbm{1} + \delta h)^{-1}
\end{equation}
is the new contraction      
if $\delta$ is suitably small so that the inverses are defined.
The HPL keeps the spaces fixed and only perturbs the arrows in \eqref{eq:hip}.
Beware that $dhd=d$ does not imply $d'h'd'=d'$.
See \cite{crainic} for an exposition.
One may think of $h-h' = h'(d'-d)h$
as analogous to Hilbert's resolvent identity.
\step
{\bf Differential graded Lie algebra and MC-elements.}
A \emph{graded Lie algebra}
or gLa is a $\Z$-graded vector space $\gx$
with a bracket $[-,-] \in \Hom^0(\gx \otimes \gx,\gx)$
that respects the grading,
is graded antisymmetric,
and satisfies the graded Jacobi identity. Explicitly for all homogeneous elements,
the degree of $[x,y]$ is the sum of the degrees of $x$ and $y$ and
$[x,y] = -(-1)^{xy}[y,x]$ and
\[
[x,[y,z]]
+ (-1)^{x(y+z)}[y,[z,x]]
+ (-1)^{z(x+y)}[z,[x,y]] = 0
\]
A \emph{differential graded Lie algebra}
or dgLa $\gx$ is a gLa 
with $d \in \End^1(\gx)$ a differential, $d^2 = 0$,
compatible with the bracket in the sense of the Leibniz rule
$d[x,y] = [dx,y] + (-1)^x[x,dy]$.
The Maurer-Cartan set is
\[
  \MC(\gx) = \{u \in \gx^1 \mid du + \tfrac{1}{2}[u,u] = 0\}
\]
Formally, the Lie algebra $\gx^0$ acts on this set,
and $\MC(\gx)/{\sim}$
is the moduli space of interest,
a rigorous variant of which is the deformation functor \cite{kontsevich}.
\step
{\bf The Lie algebra of the Lorentz group and its representations.}
  Given a 4-dimensional complex vector space
  with a nondegenerate quadratic form $\Q$,
  the automorphism Lie algebra is $\sl \oplus \sl$
  with $\sl$ the complex Lie algebra of traceless $2 \times 2$ matrices.
  If the vector space is that of $2 \times 2$ complex matrices\footnote{%
  A minor clash of notations,
  the letter $d$ is also used for differentials. 
  }
  \[
    k = \begin{pmatrix}
      a & b \\ c & d
    \end{pmatrix}
  \]
  and $\Q = \det k = ad-bc$,
  then left-multiplication and right-multiplication by matrices with
  determinant one yield all automorphisms.
  Define right-multiplication with a transpose to get a left-action.
  At the Lie algebra level, $\sl \oplus \sl$.
  The finite-dimensional irreducible representations are
  \[
    \rep{p}{q} = S^{2p}\rep{\tfrac12}{0} \otimes S^{2q}\rep{0}{\tfrac12}
  \]
  where $p,q \geq 0$ are half-integers, $\rep{\tfrac12}{0} \simeq \C^2$
  and $\rep{0}{\tfrac12} \simeq \C^2$ are the fundamental representations of left and right $\sl$ respectively,
  and $S$ is the symmetric tensor product.
  So $\dim \rep{p}{q} = (2p+1)(2q+1)$.
  As $\sl \oplus \sl$ representations,
  \begin{equation}\label{eq:cg}
    \rep{p}{q} \otimes \rep{p'}{q'}
        \;\simeq\;
        \textstyle\bigoplus_{p'' = |p-p'|}^{p+p'}
        \bigoplus_{q''=|q-q'|}^{q+q'} \rep{p''}{q''}
  \end{equation}
  where $p''$ and $q''$ increase in steps of one.
  The Lie algebra $\sl \oplus \sl$
  is 
  the complexification
  of the real Lie algebra of the Lorentz group,
  as follows.
  On the real subspace of Hermitian $2 \times 2$ matrices,
  $\Q$ is a real quadratic form of signature
  ${+}{-}{-}{-}$
  whose automorphism Lie algebra
  is the real subalgebra of $\sl \oplus \sl$ of elements of the form
  $A \oplus \overline{A}$ where a bar means element-wise conjugation.
  
\step
{\bf The Lorentz equivariant complexes $\cc_{\pm h}$.}
For every half-integer\footnote{%
  A minor clash of notations,
  the letter $h$ is also used for homotopies. 
  }
 $h \geq \tfrac{1}{2}$ called
helicity and for every
momentum $k \in \rep{\tfrac{1}{2}}{\tfrac{1}{2}} \simeq \C^4$
define complexes
\begin{equation}
  \label{eq:lec}
  \begin{aligned}
    \cc_{h} : & \qquad
    0 \to \rep{h}{0}
      \to \rep{h-\tfrac{1}{2}}{\tfrac{1}{2}}
      \to \rep{h-1}{0}
      \to 0\\
    \cc_{-h} : & \qquad
    0 \to \rep{0}{h}
      \to \rep{\tfrac{1}{2}}{h-\tfrac{1}{2}}
      \to \rep{0}{h-1}
      \to 0
  \end{aligned}
\end{equation}
  where, by definition,
  the three terms are in homological degrees $1$, $2$, $3$ respectively
  and where the last term is dropped when $h = \frac12$.
  The dependence on $k$ is implicit in the differential.
  By definition, the differential is
  linear in $k \in \rep{\tfrac{1}{2}}{\tfrac{1}{2}}$
  and it is the unique $\sl \oplus \sl$ equivariant map
  \[
    \rep{\tfrac{1}{2}}{\tfrac{1}{2}} \otimes \rep{h}{0}
      \to \rep{h-\tfrac{1}{2}}{\tfrac{1}{2}}
      \qquad
      \rep{\tfrac{1}{2}}{\tfrac{1}{2}} \otimes
      \rep{h-\tfrac{1}{2}}{\tfrac{1}{2}}
      \to \rep{h-1}{0}      
  \]
  for $\cc_{h}$, analogous for $\cc_{-h}$.
  The uniqueness is by \eqref{eq:cg} and is up
  to an irrelevant multiplicative constant.
  This is a differential because its square
  is an equivariant map
  $S^2\rep{\tfrac{1}{2}}{\tfrac{1}{2}} \otimes \rep{h}{0} \to \rep{h-1}{0}$
  that vanishes by
  $S^2\rep{\tfrac{1}{2}}{\tfrac{1}{2}} \simeq \rep{0}{0} \oplus \rep{1}{1}$
  and \eqref{eq:cg}.
  Explicitly,
  for $\cc_{\pm h}$ the first part of the differential is given by 
  \begin{subequations}\label{eq:diffk}
  \begin{equation}
     S^{2h}\C^2 \spl{}
        S^{2h-1}\C^2 \otimes \C^2
        \xrightarrow{\one \otimes k^\mp \eps}
        S^{2h-1}\C^2 \otimes \C^2
  \end{equation}
  and the second is given by
  \begin{multline}
        S^{2h-1}\C^2 \otimes \C^2
        \spl{\otimes\one}
        (S^{2h-2}\C^2 \otimes \C^2) \otimes \C^2
        \simeq
        S^{2h-2}\C^2 \otimes (\C^2 \otimes \C^2)\\
        \xrightarrow{\one \otimes (\eps k^{\pm} \eps)}
        S^{2h-2}\C^2
  \end{multline}
  \end{subequations}
  where
  $k = (\begin{smallmatrix} a & b \\ c & d \end{smallmatrix})$
  and {\recalleps} and $k^+ = k$ and $k^- = k^T$,
  splitting is defined by
  $z^{\otimes p} \mapsto z^{\otimes p-1} \otimes z$ for all $z \in \C^2$,
  and $\eps k^{\pm} \eps : \C^2 \otimes \C^2 \to \C$,
   $x \otimes y \mapsto x^T\eps k^\pm \eps y$\footnote{%
  The composition of the maps \eqref{eq:diffk} is zero by construction.
  To check it directly, it suffices to show that it annihilates
  all elements of the form $z^{\otimes 2h}$ with $z \in \C^2$ since they span
  $S^{2h}\C^2$.
  One obtains $(z^T \eps k^\pm \eps k^\mp \eps z) z^{\otimes (2h-2)}$
  which is zero since $\eps k^\pm \eps k^\mp \eps$ is antisymmetric.
  }.
  More explicitly still,
  there are bases for which the differential for $\cc_2$ is
  given by
  \begin{equation}\label{eq:ccm}
   \begin{pmatrix}
    a & c & 0 & 0 & 0 \\
    b & d & 0 & 0 & 0 \\
    0 & a & c & 0 & 0 \\
    0 & b & d & 0 & 0 \\
    0 & 0 & a & c & 0 \\
    0 & 0 & b & d & 0 \\
    0 & 0 & 0 & a & c \\
    0 & 0 & 0 & b & d
  \end{pmatrix}
  \qquad
  \text{and}
  \qquad
  \begin{pmatrix}
    b & -a & d & -c & 0 & 0 & 0 & 0\\
    0 & 0 & b & -a & d & -c & 0 & 0\\
    0 & 0 & 0 & 0 & b & -a & d & -c
  \end{pmatrix}
\end{equation}
  Switch $b$, $c$ for $\cc_{-2}$.
  Similar for $\cc_{\pm h}$.
  The homologies are in Lemmas \ref{lemma:sph}, \ref{lemma:sph2}.

\section*{Part I}
\addcontentsline{toc}{section}{\rule{0pt}{14pt}Part I}
This concerns the construction of tree scattering amplitudes using Feynman graphs.
This contains, if at a rather formal level,
the link between the classical partial differential field equations and amplitudes.
\section{The $L_\infty$ minimal model is gauge independent} \label{sec:mm}

Homotopy transfer refers generally to the transfer of certain algebraic structures
through quasi-isomorphisms, see e.g.~\cite{lv}.
In the special case of a dgLa $\gx$
and a contraction to the homology $\hx$,
one obtains an $L_\infty$ algebra structure on $\hx$
called the $L_\infty$ minimal model,
unique up to $L_\infty$ isomorphisms.
There are explicit formulas for the $L_\infty$ minimal model brackets
as a sum of trees, see below.
We refer to the literature for proofs
that this actually defines the $L_\infty$ minimal model,
including formulas for suitable $L_\infty$ quasi-isomorphisms
 \cite{kontsevich,huebschmannstasheff,berglund,ksoibelman,lv}.
\step
We show that for a dgLa with a momentum grading,
and homotopies that respect the momentum grading,
the $L_\infty$ minimal model brackets defined using a tree expansion are independent of the homotopy
in the strong sense (not merely up to $L_\infty$ isomorphism)
when all internal lines are off-shell.
Namely, individual trees may depend on the homotopy,
but the sum of all trees does not.
The results are for all homological degrees,
so amplitudes are a special case.

\begin{definition}[Momentum grading] \label{def:momgr}
  Suppose $K$ is an Abelian group that we call momentum space.
  By a dgLa $\gx = \bigoplus_{i\in \Z} \gx^i$ with momentum grading we mean one that
  carries a compatible $K$-grading, with algebraic direct sum
  \[
        \gx = \textstyle\bigoplus_{k \in K} \gx_k
  \]
  Compatibility means that the $K$-grading
  respects the $\Z$-grading and that
  \[
    d \gx_k \subset \gx_k
    \qquad\qquad
    [\gx_{k_1},\gx_{k_2}] \subset \gx_{k_1+k_2}
  \]
  Then the homology of $\gx$ also decomposes,
  $\hx = \bigoplus_{k \in K} \hx_k$
  where $\hx_k$ is the homology of $\gx_k$.
  A momentum $k$ will be called on-shell if $\hx_k \neq 0$, off-shell if $\hx_k = 0$.
\end{definition}
We define the minimal model brackets as a sum of trivalent trees.
Let $T_n$ be the set of tree graphs
  	with $n+1$ labeled leaves,
        $n-2$ unlabeled internal lines,
        $n-1$ internal nodes of degree 3 (known as trivalent or cubic).
The leaves $1,\ldots,n$ are called inputs, and $n+1$ the output.
Let $P_n$ be such trees with a planar embedding.
The map $P_n \to T_n$ that forgets the embedding
is surjective\footnote{%
Each fiber of $P_n \to T_n$ has $2^{n-1}$ elements,
and
 $|T_n| = (2n-3)!!$.}.

\begin{definition}[Trees] \label{def:matet}
For a dgLa $\gx$ and a homotopy $h$ that satisfies $dhd = d$,
  let $p: \gx \to \hx$
  and $i: \hx \to \gx$ be the induced contraction.
  For every $P \in P_n$ define $m_{P,h} \in \Hom^{2-n}(\hx^{\otimes n},\hx)$ as follows:
\begin{itemize}
\item Decorate each input leaf by $i$, the output leaf by $p$.
\item Decorate each internal line by $h$.
\item Decorate each node by $\llbracket-,-\rrbracket$.
Here $\llbracket x,y\rrbracket = (-1)^x [x,y]$ for all $x,y \in \gx$.
\item
Given $x_1 \otimes \cdots \otimes x_n \in \hx^{\otimes n}$
one inserts each $x_j$ at the input labeled $j$.
\item
Multiply by the sign needed to permute
$x_1,\ldots,x_n$ into place,
where an even (odd) $x_j$ is considered odd (even)
for the purpose of this permutation\footnote{%
Concretely, the permutation is given by reading off the input
labels of $P$ counter-clockwise,
starting just to the left of the output.
The sign is equal to the ordinary permutation signature for 
permuting only the $x_j$ with even degree.}.
\item Multiply by the sign
$(-1)^{x_{n-1} + x_{n-3} + x_{n-5} + \ldots}$\footnote{%
This sign, independent of $P$,
contributes to the simple Koszul sign rule in Definition \ref{def:mmb}.
}.
\end{itemize}
Then $m_{P,h}$ is independent of the planar embedding\footnote{%
To see this, note that the building block
$h\llbracket-,-\rrbracket$
is homogeneous of degree zero and graded symmetric
as a map $\gx[1]^{\otimes 2} \to \gx[1]$.
Here $\gx[1]$ is obtained from $\gx$ by shifting the degree by one.
}.
So for every $T \in T_n$ we can set $m_{T,h} = m_{P,h}$
where $P \in P_n$ is any planar embedding of $T$.
\end{definition}
\begin{example}
The set $P_n$ is in bijection with full parenthesizations of any permutation of 
the elements $1,\ldots,n$. With this understanding,
\begin{align*}
m_{(12),h}(x_1,x_2) & = p[ix_1,ix_2]\\
m_{((((12)3)4)5),h}(x_1,\ldots,x_5) & = 
   p[h[h[h[ix_1,ix_2],ix_3],ix_4],ix_5]\\
m_{(((12)3)(45)),h}(x_1,\ldots,x_5) & = 
(-1)^{1+x_1+x_2+x_3} p[h[h[ix_1,ix_2],ix_3],h[ix_4,ix_5]]
\end{align*}
Note that if all inputs have odd degree,
one always gets a plus sign.
\end{example}
\begin{definition}[The minimal model brackets associated to a homotopy] \label{def:mmb}
  For a dgLa $\gx$ and a homotopy $h$ as above,
   the $n$-slot minimal model bracket
  \[
    \{-,\ldots,-\}_h = \sum_{T \in T_n} m_{T,h}
  \]
  is in $\Hom^{2-n}(\hx^{\otimes n},\hx)$.
  It satisfies $\{\ldots,y,x,\ldots\}_h = -(-1)^{xy} \{\ldots,x,y,\ldots\}_h$.
  In this section, $\gx$ has a momentum grading
  and we assume $h\gx_k \subset \gx_k$ for all $k$.
\end{definition}

One may want to take a more cavalier attitude towards signs.
By contrast, momentum conservation at each node is essential.
Pictorially, a bracket such as
 $\hx_{k_1} \otimes \cdots \otimes \hx_{k_5}
\to \hx_{k_1 + \ldots + k_5}$ uses
trees like the following where $h_k = h|_{\gx_k}$.
\[
\graphtech
\]
\begin{warning}[Discontinuous nature of $h$]
\label{warning:discontinuous}
The space $K$ has no topology
and no continuity in $k$ is assumed.
But even if $K = \C^4$ as in other sections,
even if the $\gx_k$ are finite-dimensional and isomorphic
and $d_k = d|_{\gx_k}$ is polynomial in $k$,
the homotopy $h_k$ must still be discontinuous because the projection
$i_kp_k = \one - d_kh_k - h_kd_k$
is zero off-shell but nonzero on-shell.
When we apply this section,
$h$ will be separately defined off-shell
and on-shell, see Section \ref{sec:pre}.
\end{warning}
\begin{theorem}[Gauge independence I] \label{theorem:well-defined}
  The bracket $\{-,\ldots,-\}_h$ is independent of the
  momentum-conserving homotopy $h$ when evaluated on
  \[
        \hx_{k_1} \otimes \cdots \otimes \hx_{k_n} \to \hx_{k_1 + \ldots + k_n}
  \]
  with all internal lines off-shell, meaning for all
  $(k_1,\ldots,k_n) \in K^n$
  such that $\hx_{k_J}=0$ for all subsets $J \subset \{1,\ldots,n\}$
  with $1 < |J| < n$ and $k_J = \sum_{i \in J} k_i$.
\end{theorem}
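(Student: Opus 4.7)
Write $\eta = h - h'$, a momentum-conserving operator of degree $-1$. Subtracting $dhd = d$ from $dh'd = d$ gives $d \eta d = 0$. On every off-shell component $\gx_{k_J}$ (where $\hx_{k_J} = 0$), both projectors $\pi_h = \one - dh - hd$ and $\pi_{h'}$ must land in a complement of $\image d$ in $\ker d$ inside $\gx_{k_J}$; but that complement is the zero subspace, so $\pi_h = \pi_{h'} = 0$ on each off-shell $\gx_{k_J}$, which yields $d\eta + \eta d = 0$ there. This is the single algebraic input extracted from the off-shell hypothesis.

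Proceed by interpolation: put $h_t = h + t(h'-h)$ for $t \in [0,1]$. Linearity in $h_t$ preserves $dh_t d = d$, and a short computation shows $\pi_{h_t} = \one - dh_t - h_t d$ is still a projector, so one gets induced maps $i_t,p_t$ with $i_t p_t = \pi_{h_t}$ and $p_t i_t = \one$ depending smoothly on $t$. (The remaining axioms $h_t^2 = 0$ and $h_t d h_t = h_t$ can fail for intermediate $t$, but they are not invoked in Definition \ref{def:mmb}.) The tree sum $\{x_1,\ldots,x_n\}_{h_t}$ is polynomial in $t$, and the theorem reduces to showing that its $t$-derivative vanishes identically.

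Differentiating yields a sum indexed by pairs (tree $T$, marked site on $T$). At a marked internal edge, the decoration $h_t$ is replaced by $h' - h$; at a marked input or output leaf, $i_t$ or $p_t$ is replaced by its $t$-derivative. At on-shell leaves, $\dot i_t$ lies in $\image d$ and $\dot p_t$ annihilates $\ker d$; those contributions are absorbed at the root by $p_t d = 0$ and at the inputs by $d i_t = 0$, both coming from $d h_t d = d$. For a marked internal edge carrying an off-shell momentum $k_J$, apply Leibniz for $d$ at the two brackets adjacent to the $\eta$-decorated edge and use $d\eta + \eta d = 0$ to ``slide'' the $\eta$-decoration onto one of the neighbouring edges, possibly across a trivalent node via graded Jacobi. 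With careful Koszul-sign tracking each such term pairs off against a neighbouring (tree, edge) term of opposite sign; the sliding terminates harmlessly at leaves by the same $d i_t = 0$ and $p_t d = 0$ identities.

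\textbf{Main obstacle.} The algebraic inputs---Leibniz, graded Jacobi, $d^2 = 0$, and the off-shell identity $d\eta + \eta d = 0$---are routine. The bulk of the work lies in the combinatorial bookkeeping of the pairwise cancellation: matching (tree, edge) pairs consistently with correct signs so that the derivative cancels summand by summand. This is the classical tree combinatorics underlying $L_\infty$ homotopy transfer, and the off-shell hypothesis is what lets the sliding run to completion without residual projector terms.
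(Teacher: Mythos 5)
Your strategy---interpolate between $h$ and $h'$, differentiate the tree sum, and cancel via Leibniz and graded Jacobi---is in the same spirit as the paper's, but the specific route you propose has a genuine gap right at the start, and it is not a gap that can be patched by being more careful about combinatorics.

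The claim that $\pi_{h_t} = \one - dh_t - h_td$ is again a projector along the straight line $h_t = h + t(h'-h)$ is false. Writing $\pi_t = (1-t)\pi_h + t\pi_{h'}$, one finds
\[
  \pi_t^2 - \pi_t \;=\; -t(1-t)\,(\pi_h - \pi_{h'})^2,
  \qquad
  (\pi_h - \pi_{h'})^2 \;=\; d(hh' + h'h)d,
\]
and the right-hand side is \emph{not} identically zero for two arbitrary homotopies satisfying $h^2=(h')^2=0$, $hdh=h$, $h'dh'=h'$, $dhd=dh'd=d$. A three-term complex $\C^3 \to \C^3 \to \C^3$ with nonzero homology in the middle degree already admits pairs $h,h'$ with $d(hh'+h'h)d \neq 0$: roughly, one needs the two choices to disagree both in the complement-of-cycles $Y$ and the complement-of-boundaries $X$ in the same degree. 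Once $\pi_{h_t}$ fails to be a projector, the objects $i_t,p_t$ you invoke do not exist, and the tree sum $\{-,\ldots,-\}_{h_t}$ is undefined for intermediate $t$; the statement that it is ``polynomial in $t$'' has no content. The subsequent claims that $\dot i_t$ lands in $\image d$ and that $\dot p_t$ annihilates $\ker d$ are also unjustified and would require a gauge-fixing that you do not provide.

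There is a second, structural, issue even if one switches to the variant $M_h$ (with $\pi$ at the leaves, as in Theorem~\ref{well-defined2}) to avoid the $i_t,p_t$ problem. The derivative of the internal-line decoration is $\dot h_t = h' - h$, which has no $d$ factored out on either side. The cancellation mechanism (Lemma~\ref{lemma:noobs}) is driven by an explicit $d$ being pushed through the tree via the Leibniz rule and $dh = \one - hd - \pi$ off-shell. Your ``slide $\eta$'' step tries to conjure such a $d$ out of $d\eta + \eta d = 0$, but $\eta$ lives on one momentum sector $\gx_{k_J}$ and its neighbour lives on another $\gx_{k_{J'}}$, so the decoration cannot literally be slid from one edge to the next, and no chain-homotopy $E$ emerges from this procedure as stated.

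The paper circumvents both problems by the ABC Lemma~\ref{lemma:abc}: it replaces the straight line by a concatenation of three one-parameter families, each of which (i) stays inside the space of genuine homotopies with $dhd=d$, and (ii) perturbs $h$ and $\pi$ in a form with $d$ manifestly on the left or on the right ($\dot h = d\dot\fixx - \dot\fixx d$, $\dot\pi = dh\dot\fixy\pi$, $\dot\pi = \pi\dot\fixxy hd$). That extra structure is exactly what makes Lemma~\ref{lemma:noobs} applicable and yields the explicit chain homotopy $E$ of Theorem~\ref{well-defined2}. Without some such decomposition, your approach does not close.
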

\begin{proof}
  By Theorem \ref{well-defined2},
  $M_h : \gx^{\otimes n} \to \gx$ is a chain map
  so $dM_h = M_h\dtot$ since $d\pi = \pi d = 0$,
  it induces the minimal model bracket $\hx^{\otimes n} \to \hx$ on homology,
  and $M_h$ and $M_{h'}$ are homotopy equivalent
  when internal lines off-shell.
  Therefore they induce the same map on homology.
\qed\end{proof}
\begin{theorem}[Gauge independence II] \label{well-defined2}
  Let $M_h \in \Hom^{2-n}(\gx^{\otimes n},\gx)$ be defined
  like the minimal model brackets but with
  input and output leaves decorated by
  $\pi = \one - dh - hd$, replacing $i$ and $p$.
  Let $\dtot$ be the differential on $\gx^{\otimes n}$, so
  \[
    \dtot = \textstyle\sum_{i=1}^n
    (\pm \one)^{\otimes (i-1)} \otimes d \otimes \one^{\otimes (n-i)}
  \]
  where $\pm \one$ is the sign map.
  Then for all homotopies $h$ and $h'$
  there exists a linear map $E: \gx^{\otimes n} \to \gx$
  (that can be given as a sum of trees
  built using only $h$ and $h'$ and $d$ and the bracket) such that
  \[
    M_{h'} - M_h \qquad \text{equals} \qquad dE + E\dtot
  \]
  when evaluated on
  $\gx_{k_1} \otimes \cdots \otimes \gx_{k_n} \to \gx_{k_1 + \ldots + k_n}$
  with all internal lines off-shell,
  where off-shell means homology-free just like in Theorem 
  \ref{theorem:well-defined}.
\end{theorem}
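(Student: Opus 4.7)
The plan is to exhibit the chain homotopy $E$ by telescoping the decoration of each tree from $h$ to $h'$ one edge (or leg) at a time. For a fixed $T \in T_n$, I order its internal edges and external legs and interpolate decoration-by-decoration from ``everywhere $h,\pi$'' to ``everywhere $h',\pi'$''. Each elementary step contributes a defect tree with a single edge carrying either $\eta := h' - h$ (on an internal edge) or $\pi' - \pi = -(d\eta + \eta d)$ (on an external leg), the remaining positions retaining the partial telescope's $h,h',\pi,\pi'$ decorations. Summing over trees and telescoping steps expresses $M_{h'} - M_h$ as a sum of defects, and the goal is to realize each defect as $dE_* + E_* \dtot$ for an explicit tree operation $E_*$ built from $h,h',d$ and the bracket, so that $E = \sum E_*$ is the desired chain homotopy.

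The key algebraic input of the off-shell hypothesis is as follows. For a defect with $\eta$ on an internal edge $e$, let $J \subsetneq \{1,\ldots,n\}$ with $1 < |J| < n$ be the set of input leaves lying below $e$, so the momentum flowing across $e$ is $k_J$. By hypothesis $\hx_{k_J} = 0$, hence $\pi_{k_J} = \pi'_{k_J} = 0$, and both $dh+hd = \one$ and $dh'+h'd = \one$ hold on $\gx_{k_J}$. A one-line calculation then yields
\[
\eta \;=\; d(hh') - (hh')d \quad\text{on } \gx_{k_J},
\]
which identifies $\eta$ with the graded commutator $[d,hh']$ on that off-shell subspace. Substituting this into the defect and propagating each resulting $d$ through the surrounding tree by Leibniz for $[-,-]$ and by $dh = \one - hd - \pi$ on every internal edge it traverses, the $d$ either reaches the root $\pi$ (where it vanishes by $d\pi = 0$), reaches an input leg as $\pi d$ (where it vanishes by $\pi d = 0$), or assembles as the outer $d$ of a $dE_*$ contribution. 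Every $\pi$-residue produced by the use of $dh = \one - hd - \pi$ lands on a further internal edge whose momentum is again off-shell by the same hypothesis, so it vanishes; this off-shell vanishing of internal $\pi$-residues is precisely what upgrades the generic ``equal up to $L_\infty$-isomorphism'' statement to the strong chain-homotopy statement claimed here.

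External-leg defects are handled in the same spirit: splitting $\pi' - \pi = -d\eta - \eta d$, the $\eta d$ piece at an input leg becomes an $E_* \dtot$ contribution with $E_*$ the tree that carries $\eta$ on that leg (the other legs still project via $\pi$, and $\pi d = 0$ forces only the relevant input to contribute to $\dtot$), while the $d\eta$ piece pairs with the corresponding internal $d$-propagation to give a $dE_*$ contribution. The main obstacle is the combinatorial bookkeeping: Leibniz contributes two summands at every node, $dh = \one - hd - \pi$ contributes three at every internal edge, and the ``$\one$''-terms that contract an edge into a higher-valent configuration must be matched against contributions coming from trees of different shape elsewhere in the sum. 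One must organize these intermediate trees so that all shortened-edge $\one$-terms and internal $\pi$-residues either cancel between neighboring contributions or vanish by off-shellness, leaving precisely $M_{h'} - M_h = dE + E\dtot$ with $E$ a finite sum of trees decorated by $h,h'$ and exactly one insertion of $hh'$ (or $\eta$ at a leg) at the distinguished position. Tracking the Koszul signs fixed in Definition \ref{def:matet} throughout this reorganization is routine but tedious, and is where the bulk of the verification lives.
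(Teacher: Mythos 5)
Your key observation is genuinely the right one and is a nice shortcut: on an off-shell line one indeed has $h'-h = d(hh') - (hh')d$, so the internal-edge defect is literally a graded commutator $[d,hh']$. This replaces the paper's machinery (Lemma~\ref{lemma:abc}, the ABC lemma) with a single identity, and in fact your $\eta$ plays exactly the role of the paper's type-$\ff{\fixX}$ variation $\dot h = d\dot\fixx - \dot\fixx d$ with $\dot\fixx = hh'$. Your reading of the role of off-shellness (killing $\pi$-residues on internal lines so that the generic up-to-iso statement upgrades to a strict chain homotopy) is also correct.

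However, there is a genuine gap, and it is precisely where you wave at ``routine but tedious'' bookkeeping. After propagating the $d$'s with Leibniz and $dh=\one-hd-\pi$, the $\one$-terms contract an internal edge into a doubly nested bracket $[[{-},{-}],{-}]$, and these terms do \emph{not} cancel edge-by-edge or tree-by-tree: they cancel only in triples, via the graded Jacobi identity, after summing over tree shapes with matching decorations on the four adjacent lines. This is exactly the content of Lemma~\ref{lemma:noobs} in the paper, and you never invoke the Jacobi identity at all. Worse, your telescoping orders the decorations \emph{within each fixed tree $T$}, so the defect attached to $(T,\ell)$ carries a tree-dependent mixture of $h$ and $h'$ on the surrounding edges; the three trees in a Jacobi triple (which differ by rearranging four lines around the contracted node) will generally not carry matching $h/h'$ decorations under your ordering, so the Jacobi cancellation is not even set up to apply. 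The paper avoids this entirely by interpolating a single global homotopy $h(s)$ along a polynomial curve and differentiating at $s=0$, which gives a uniform decoration (all $h$) away from the one varied edge and thus a clean reduction to Lemma~\ref{lemma:noobs}. To repair your argument you would either need a tree-shape-independent telescoping order that makes the $h/h'$ patterns compatible across each Jacobi triple, or switch to the paper's curve-interpolation device; in either case the Jacobi cancellation must actually be carried out, since it is the heart of the theorem.
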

The following lemma,
which is for any complex of vector spaces,
connects any two homotopies $h$ and $h'$ by three transformations.
We will use it to connect the homotopies by three curves
polynomial in a parameter $s$. Pictorially,
\newcommand{\dsmooth}[3]{\draw [black] plot [smooth, tension=1]
    coordinates {(#1) (#2) (#3)};}
\newcommand{\zpt}[2]{\draw [black,fill=black]
    (#1) circle (0.04) node [anchor=south,xshift=0,yshift=3] {#2};}
\begin{center}
  \begin{tikzpicture}[scale=2]
  \coordinate (a) at (0,0);
  \coordinate (ab) at (0.5,-0.4);
  \coordinate (b) at (1,0);
  \coordinate (bc) at (1.5,-0.4);
  \coordinate (c) at (2,0);
  \coordinate (cd) at (2.5,-0.4);
  \coordinate (d) at (3,0);
  \dsmooth{a}{ab}{b}
  \dsmooth{b}{bc}{c}
  \dsmooth{c}{cd}{d}
  \zpt{a}{$h$};
  \zpt{b}{$h_{\fixY}$};
  \zpt{c}{$h_{\fixX}$};
  \zpt{d}{$h_{\fixXY} = h'$};
\end{tikzpicture}
\end{center}
\begin{lemma}[The ABC lemma]\label{lemma:abc}
Given is a complex with differential $d$.
In this lemma we only consider homotopies $h$ that satisfy $dhd=d$,
and we denote $\pi = \one - dh - hd$.
If $h$ is a homotopy then another homotopy $h'$ is given by
\[
\begingroup
\renewcommand{\arraystretch}{1.15}
\begin{array}{c|c|c|c}
& h' & \pi' & \text{constraints}\\
\hline
\ff{\fixY} & h(\one-\fixy \pi) & (\one+dh\fixy )\pi & d\fixy  = \pi \fixy  = \fixy d = \fixy h = 0\\
\ff{\fixX} & (\one+d\fixx d)h(\one-d\fixx d) & \pi & \pi \fixx  = h\fixx  = \fixx h = \fixx \pi = 0\\
\ff{\fixXY} & (\one-\pi \fixxy )h & \pi (\one+\fixxy hd) & h\fixxy  = d\fixxy  = \fixxy \pi = \fixxy d = 0
\end{array}
\endgroup
\]
for all $\fixy ,\fixxy  \in \End^0(\gx)$ or $\fixx  \in \End^{-2}(\gx)$ respectively,
subject to the constraints.
And any two homotopies $h$ and $h'$
are related by a composition of $\ff{\fixY}$, $\ff{\fixX}$, $\ff{\fixXY}$.
Namely,
if one sets $h_\fixY = \ff{\fixY}h$
and $h_\fixX = \ff{\fixX}h_\fixY$ and  $h_\fixXY = \ff{\fixXY}h_\fixX$
with
\begin{equation}\label{eq:abc}
\fixy  = -dh'\pi
\qquad
\fixx  = -h_\fixY dh'h_\fixY
\qquad
\fixxy  = -\pi_\fixX h'd
\end{equation}
then $h_\fixXY = h'$.
\end{lemma}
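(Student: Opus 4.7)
The plan is to split the proof into two stages: (a) verify each row of the table genuinely yields a homotopy with the claimed $\pi'$, and (b) check that the three-step composition with the parameters in \eqref{eq:abc} carries $h$ to $h'$.

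For stage (a), I would verify $h'^2 = 0$, $dh'd = d$, and that $\pi' := \one - dh' - h'd$ matches the stated formula, in each of the three rows. The computations are direct bookkeeping, using $d^2 = h^2 = 0$, $dhd = d$, $hdh = h$, the derived identities $d\pi = \pi d = h\pi = \pi h = 0$, and the listed constraints on the parameter. Each row admits a small shortcut: for $\ff{\fixY}$, $\pi d = 0$ yields $h'd = hd$, while $\pi h = 0 = h^2$ kills every cross term in $(h - h\fixy\pi)^2$; for $\ff{\fixX}$, the factorization $(\one + d\fixx d)(\one - d\fixx d) = \one$ (from $d^2 = 0$) reduces $h'^2$ to $h^2$ and $dh'd$ to $dhd$; the $\ff{\fixXY}$ row is dual to $\ff{\fixY}$.

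For stage (b), the cleanest conceptual framework is that a homotopy $h$ satisfying $dhd = d$ is equivalent to the data of two complementary subspaces, $V := \image \pi$ and $W := \image hd$, complementing $\image d$ inside $\ker d$ and $\ker d$ inside $\gx$ respectively; conversely $(V, W)$ determines $h$ uniquely. Direct inspection of the formulas then shows $\ff{\fixY}$ shears $V$ to $\{v + \fixy v : v \in V\}$ while fixing $W$, $\ff{\fixX}$ shears $W$ along $\image d$ while fixing $V$, and $\ff{\fixXY}$ shears $W$ along $V$ while fixing $V$. Since the pair of complements attached to $h'$ differs from that of $h$ by exactly such three shears, the composition can reach $h'$, and uniqueness of $h$ given $(V, W)$ forces equality. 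To check that the specific formulas \eqref{eq:abc} pick out the right three shears, I would verify the constraints at each stage: for $\fixy = -dh'\pi$ they all follow from $d^2 = 0$ and $\pi d = \pi h = 0$; then $h_\fixY = h + hdh'\pi$ and $\pi_\fixY = (\one - dh')\pi = \pi'\pi$ (using $\pi^2 = \pi$ and $dhd = d$); the constraints for $\fixx$ then follow from $h_\fixY^2 = 0$ together with $h_\fixY d = hd$; and $\pi_\fixX = \pi_\fixY$ makes the constraints for $\fixxy$ immediate in turn.

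The main obstacle is the final identification $h_\fixXY = h'$. A direct algebraic telescoping in $h, h', d$ is possible but error-prone because of the nested products; the geometric picture above converts it into tracking how the characteristic pair evolves step by step, $(V, W) \to (V', W) \to (V', W'') \to (V', W')$, and the claim then follows from uniqueness of $h$ given its characteristic pair. I would structure the write-up around this picture and verify the individual shears algebraically to keep the proof rigorous.
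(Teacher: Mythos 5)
Your proposal matches the paper's proof in its essential structure: stage (a) is the same direct verification of the table, and stage (b) rests on the same bijection between homotopies with $dhd=d$ and pairs of graded complements $(\image\pi,\,\image hd)$, with the three transformations understood as shears that each fix part of the pair. The paper organizes the reverse direction slightly differently — it first proves a characterization (if $hd=h'd$ then $\ff{\fixY}h=h'$ with $\fixy=-dh'\pi$, and the two analogues), then defines the intermediate homotopies by prescribing their characteristic pairs and invokes that characterization — whereas you propose to apply the transformations with parameters \eqref{eq:abc} and track the pair trajectory, appealing to uniqueness at the end. These are logically interchangeable, and you correctly flag the nontrivial step you would still need to check algebraically (that the given parameters effect exactly the intended shears, e.g.\ that $\fixy=-dh'\pi$ sends $\image\pi$ to $\image\pi'$ while fixing $\image hd$); the paper handles precisely this via the computation $h(\one-\fixy\pi)=h'$ under the hypothesis $hd=h'd$.
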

\begin{proof}
First the transformations in the table.
For $\ff{\fixXY}$
we have $(h')^2 = 0$
using $h^2 = 0$, $h\pi = 0$,
we have $h'dh' = h'$
using $hdh = h$, $d\pi = 0$,
we have $dh'd = d$
using $dhd = d$, $d\pi = 0$,
and $\pi' = \one - h'd-dh'$.
Similar for $\ff{\fixY}$, $\ff{\fixX}$.
Before proving the second part of the theorem,
we derive another characterization of $\ff{\fixY}$, $\ff{\fixX}$, $\ff{\fixXY}$.

Note that 
$\ff{\fixY}$ implies $hd = h'd$,
$\ff{\fixX}$ implies $\pi = \pi'$,
$\ff{\fixXY}$ implies $dh = dh'$.
Conversely,
for all $h$ and $h'$,
if $hd=h'd$ then they are related by $\ff{\fixY}$ using $\fixy  = -d h'\pi$,
if $\pi = \pi'$ then by $\ff{\fixX}$ using $\fixx  = -hdh'h$,
if $dh=dh'$ then by $\ff{\fixXY}$ using $\fixxy  = -\pi h'd$.
Say in the case $\ff{\fixY}$, the given $\fixy $ has degree zero,
satisfies the constraints in the table, and
$h(\one-\fixy \pi) = h(\one + dh'\pi)
= h + hdh'\pi
= h + h'dh'\pi = h + h'\pi
= h + h'(\one-dh-hd) = h'$
as required using $h'hd = h'h'd = 0$
and $h'dh = hdh = h$. 
Similar for $\ff{\fixX}$, $\ff{\fixXY}$.
The constraints make
$\fixy,\fixx,\fixxy$ unique.

Though one can proceed at the level of equations,
we switch to a geometric argument. 
Recall the bijection between
homotopies $h$ with $dhd=d$ and pairs $(X,Y)$ of graded subspaces
where $X$ is a complement of $\image d$ in $\ker d$,
$Y$ a complement of $\ker d$.
The bijection is established by
$X = \image \pi$, $Y = \image hd$.
The last paragraph shows that
$\ff{\fixY}$ connects any two homotopies with the same $Y$, 
$\ff{\fixX}$ those with the same $X$ and $Y \oplus \image d$,
$\ff{\fixXY}$ those with the same $X$ and $Y\oplus X$.

Now, given any two homotopies $h$, $h'$
corresponding to $(X,Y)$, $(X',Y')$ respectively,
define homotopies $h_\fixY$, $h_\fixX$, $h_\fixXY$ by
$X_\fixY = X_\fixX = X_\fixXY = X'$,
$Y_\fixY = Y$, $Y_\fixXY = Y'$,
and $Y_\fixX $ is given by $Y\oplus\image d = Y_\fixX \oplus\image d$ and
$Y_\fixX\oplus X'=Y'\oplus X'$.
There exists a unique such $Y_\fixX$ since $Y$ and $Y'$ are complements of
$\ker d = \image d\oplus X'$. 
By the last paragraph we have 
$h_\fixY = \ff{\fixY}h$,
$h_\fixX = \ff{\fixX}h_\fixY$,
$h_\fixXY = \ff{\fixXY}h_\fixX$
for some $\ff{\fixY}$, $\ff{\fixX}$, $\ff{\fixXY}$.
One can see that they are given by \eqref{eq:abc}.
By construction, $h_\fixXY=h'$.
\qed	
\end{proof}
\begin{proof}[of Theorem \ref{well-defined2}]
It suffices to prove the theorem 
in the special cases $\ff{\fixY}$, $\ff{\fixX}$, $\ff{\fixXY}$ of Lemma \ref{lemma:abc}.
Since $h\gx_k \subset \gx_k$, \eqref{eq:abc} implies
$\fixy \gx_k, \fixx \gx_k, \fixxy \gx_k\subset \gx_k$.
The brackets are polynomial in $h$ and $\pi$,
so if we consider polynomial curves
$\fixy(s)$,
$\fixx(s)$, 
$\fixxy(s)$ then the brackets are polynomial in $s$.
It suffices to show that we get the desired result when differentiating 
with respect to $s$,
namely that $\dot{M} = d\dot{E} + \dot{E}\dtot$ for some $\dot{E}$
where a dot denotes a derivative at $s=0$\footnote{%
Setting $s=0$ is no loss of generality since
$\ff{\fixY}$, $\ff{\fixX}$, $\ff{\fixXY}$ satisfy group laws
in $\fixy$, $\fixx$, $\fixxy$
provided one makes minor modifications
for consistency with the constraints.
}:
\begin{itemize}
  \item $\ff{\fixY}$. Here $\dot{h} = -h\dot{\fixy }\pi$
    and $\dot{\pi} = dh\dot{\fixy } \pi$.
    Since all internal lines are off-shell hence annihilated by $\pi$,
    we effectively have $\dot{h} = 0$.
    So only inputs and outputs are affected.
    One can take $\dot{E} = h\dot{\fixy } M$ in this case.
 \item $\ff{\fixX}$. Here $\dot{h} = d\dot{\fixx }dh-hd\dot{\fixx }d
   = d\dot{\fixx } - \dot{\fixx }d$ and $\dot{\pi} = 0$.
   In particular, only internal lines are affected.
   Here $\dot{E} = 0$.
  \item $\ff{\fixXY}$. Here $\dot{h} = -\pi \dot{\fixxy } h$
    and $\dot{\pi} = \pi \dot{\fixxy } hd$.
    As in $\ff{\fixY}$,
    we effectively have $\dot{h} = 0$.
    Here $\dot{E} = M \dot{\fixXY}$ where $\dot{\fixXY}$
    is a graded symmetrization of $\dot{\fixxy } h \otimes \one^{\otimes n-1}$.
\end{itemize}
With this setup, it suffices to show 
separately for every nonempty $J \subset \{1,\ldots,n\}$
that the following infinitesimal variations,
affecting internal lines with momentum $k_J = \sum_{i \in J} k_i$,
yield zero after summation over all trees:
\begin{itemize}
  \item Variations of type $d\dot{\fixx }$ at an internal line if $1 < |J| < n$ 
	respectively $dh\dot{\fixy }\pi$ at the input if $|J| = 1$.
	Note the $d$ on the left.
  \item Variations of type $-\dot{\fixx }d$ at an internal line if $1 < |J| < n$,
    respectively $\pi \dot{\fixxy } hd$ at the output if $|J| = n$.
    Note the $d$ on the right.
\end{itemize}
All cases reduce to Lemma \ref{lemma:noobs}.
Given the results for $\dot{E}$,
one can see that $E$ has the claimed form.
For $\ff{\fixY}$ take $E=haM$,
for $\ff{\fixX}$ take $E=0$,
similar for $\ff{\fixXY}$.
\qed\end{proof}
\begin{lemma}[A cancellation] \label{lemma:noobs}
The map $\gx_{k_1} \otimes \cdots \otimes \gx_{k_n} \to \gx_{k_1 + \ldots + k_n}$
defined just like $M$ but with the output leaf decorated by $Nd$ (rather than $\pi$)
is identically zero
when all internal lines are off-shell.
Likewise if one input leaf is decorated by $dN$.
Here $N$ is any momentum conserving operator of degree $-1$
to guarantee that $Nd$ respectively $dN$ have degree zero\footnote{%
This implies that the fact
that $m_{T,h} = m_{P,h}$ is independent of the planar embedding
also holds with the decorations $Nd$ respectively $dN$.
Hence the map in this lemma is well-defined.}.
\end{lemma}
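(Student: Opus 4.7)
The plan is to reduce the lemma to showing that $d$ applied to $\sum_{T \in T_n} m_{T,h}(x_1, \ldots, x_n)$ at the output vanishes; applying $N$ on the left then gives the claim. The argument rests on three identities: the graded Leibniz rule for the dgLa bracket, the relation $dh = \one - hd - \pi$ on every internal edge together with the hypothesis $\pi|_{\gx_{k_J}} = 0$ for off-shell internal momenta $k_J$, and the relation $d\pi = d - d^2 h - dhd = 0$ at each input leaf, valid because $dhd = d$.

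For a single tree $T$, propagate $d$ from the output through the bracket nodes via Leibniz, $d[A,B] = [dA,B] \pm [A, dB]$. Whenever $d$ arrives at an internal edge decorated by $h$, substitute $dh = \one - hd$ (the $\pi$ term drops by off-shellness); the $-hd$ term moves $d$ below the edge and continues downward, while the $\one$ term, which I will call a \emph{contraction} of that edge, removes the $h$ and leaves two adjacent bracket nodes fused. Whenever $d$ reaches an input $\pi x_i$, the branch dies by $d\pi = 0$. Iterating to exhaustion, the surviving contributions from $T$ are indexed by a single internal edge $e$ of $T$ that has been contracted, with the other $h$'s intact, and with a sign determined by the path from the output to $e$.

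It remains to show that these contracted-edge contributions cancel after summing over all $T \in T_n$. I partition the configurations into triples arising from rotation of a trivalent subgraph: three trees containing the patterns $[[A,B],C]$, $[A,[B,C]]$, and $[B,[A,C]]$ at a pair of adjacent nodes, with $A$, $B$, $C$ a common set of subtree outputs, contribute after contraction of the shared edge exactly the three terms of graded Jacobi applied to $A$, $B$, $C$, which sum to zero. Since each contracted-edge configuration lies in exactly one such triple, the total sum vanishes. The variant with $dN$ at one input is dual: propagate $d$ upward from that input via $hd = \one - dh$ on internal edges (again using off-shellness) and use $\pi d = d - dhd - hd^2 = 0$ at the output.

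The main obstacle will be the careful bookkeeping of graded signs across the Leibniz, Jacobi, and tree-symmetry operations so that the Jacobi triples partition the contracted-edge contributions cleanly. The planar-embedding independence of $m_{T,h}$ recorded in Definition \ref{def:matet} should reduce this substantially by letting me choose a canonical embedding before performing the rotation, but verifying exhaustiveness of the partition --- particularly for edges incident to the input or output leaves, where the local rotation may degenerate --- still requires a careful enumeration.
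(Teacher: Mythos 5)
Your proposal matches the paper's proof essentially step for step: propagate $d$ through the tree via Leibniz, collect the $\one$ terms from $dh = \one - hd - \pi$ into a basket indexed by a distinguished internal edge, and cancel the basket terms in triples obtained by permuting the four lines adjacent to that edge, which sum to zero by graded Jacobi. The degeneracy you flag at the end does not in fact occur: every internal edge (the ones decorated by $h$) joins two trivalent internal nodes, so it always has exactly four adjacent lines, exactly one of which leads to the output, and the equivalence classes always have exactly three members --- this is precisely how the paper organizes the cancellation.
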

\begin{proof} 
When $d$ is the output, and all inputs are odd, then
this is the lemma in dgLa-based deformation theory
that says that \emph{obstructions are cocycles} \cite{gerstenhaber}.
In general, when the output is decorated by $Nd$, the proof is by
repeatedly moving occurrences of $d$ \emph{down the trees} using this algorithm:
\begin{equation}\label{eq:alg}
    \left.
    \begin{aligned}
      & \text{If $d$ hits a bracket as in $d[-,-]$, use the Leibniz rule.}\\[4pt]
      & \text{If $d$ hits an $h$,
        replace $dh = - hd - \pi + \one$.}\\[-2pt]
      & \text{Terms from $\pi$ vanish since $\pi$ annihilates off-shell momenta.}\\[-2pt]
      & \text{Terms from $\one$ are put in a basket, to be dealt with later.}\\[4pt]
      & \text{If $d$ hits an input $\pi$, use $d\pi = 0$.}
    \end{aligned}\;\;\;\right\}
\end{equation}
We must show that the terms in the basket add to zero.
These terms are one-to-one with $T_n'$,
the set of trees like $T_n$ but with a distinguished internal line.
The distinguished line is the one decorated by $\one$,
corresponding to a direct nesting of two brackets as in $[[-,-],-]$.
Define an equivalence relation on $T_n'$
that identifies trees that differ
only by a permutation of the four lines adjacent to the distinguished line.
Each equivalence class has three elements as in
\newcommand{\xblob}[2]{%
  \draw [black] (#1) node {#2};
}
\newcommand{\xjacobi}[3]{\wrap{1}{
  \coordinate (b) at (-0.5,0.5);
  \coordinate (c) at (-0.5,-0.5);
  \coordinate (d) at (1.5,-0.5);
  \coordinate (a) at (1.5,0.5);
  \coordinate (x) at (0,0);
  \coordinate (y) at (1,0);
  \haux{y}{a}{shorten >=9}{}{}
  \haux{x}{b}{shorten >=9}{}{}
  \haux{x}{c}{shorten >=9}{}{}
  \haux{y}{d}{shorten >=9}{}{}
  \haux{x}{y}{}{anchor=south}{$\one$}
  \xblob{a}{$t_R$}
  \xblob{b}{$#1$}
  \xblob{c}{$#2$}
  \xblob{d}{$#3$}
  \br{x}{}{}
  \br{y}{}{}
}}
\[
\xjacobi{t_A}{t_B}{t_C}
\qquad\qquad
\xjacobi{t_B}{t_C}{t_A}
\qquad\qquad
\xjacobi{t_C}{t_A}{t_B}
\]
where $t_R,t_A,t_B,t_C$ are subtrees
and we agree that
a planar embedding is chosen for each
inducing an embedding for the three terms,
and the output leaf is in $t_R$.
These three terms are 
$\sigma \llbracket \llbracket A,B \rrbracket,C\rrbracket$
and $\sigma' \llbracket \llbracket B,C\rrbracket ,A\rrbracket$
and $\sigma''\llbracket \llbracket C,A \rrbracket,B \rrbracket$,
inserted into the operator given by $t_R$.
Their relative signs are
\[
\sigma : \sigma' : \sigma''
= 1 :(-1)^{(A+1)(B+C)} : (-1)^{(C+1)(A+B)}
\]
and so they add to zero by the definition of $\llbracket-,-\rrbracket$
and the Jacobi identity.
The relative signs are due to 
the permutation sign in Definition \ref{def:matet},
since $A+1$, $B+1$, $C+1$ are equal
to the number of even elements entering $t_A$, $t_B$, $t_C$ respectively, mod 2.
No relative sign was produced by algorithm \eqref{eq:alg},
in particular the last application of the Leibniz rule is without sign
since $\one$ is left-adjacent to $t_R$. 

Analogous if $dN$ decorates an input leaf,
in this case one repeatedly moves occurrences of $d$
\emph{away from that input}, a modification of \eqref{eq:alg}.
Suppose that input is in $t_A$.
We get zero again since the relative signs are
\[
\sigma : \sigma' : \sigma''
= \underline{(-1)} : \underline{(-1)^{B+C+1}} (-1)^{A(B+C)} : \underline{(-1)^C} (-1)^{(C+1)(A+B+1)}
\]
because the number of even elements entering $t_A$ is now $A$ mod 2
due to the presence of the operator $N$,
and the underlined relative signs are introduced by the final application of the Leibniz rule.
\qed\end{proof}

\section{YM and GR admit a homological formulation} \label{sec:ymgr}

We define two dgLa whose
associated Maurer-Cartan equations are the ordinary nonlinear classical field
equations of YM respectively GR about Minkowski spacetime $\R^4$.
In particular, for GR the solutions are Ricci-flat metrics.
\step
Logically, this section is organized around an existence theorem
that says that for $h=1,2$
there exist dgLa with nontrivial bracket
whose homology is globally isomorphic for $k \in \C^4-0$
to that of the complexes $\cc_{-h} \oplus \cc_{+h}$\footnote{%
Tensored with the internal Lie algebra in the YM case $h=1$.},
plus some Lorentz invariance and homogeneity.
The point is that the amplitudes\footnote{%
Which we recall are the minimal model brackets with all internal lines off-shell.}
are the same for all such dgLa,
by the unique recursive characterization in Section \ref{sec:recchar}.
That is,
the amplitudes will be seen to be insensitive to properties of the dgLa
other than those listed in the existence theorem below.
\step
This section is in part based on
\cite{zeitlin,costello} for YM and \cite{rt} for GR.
\begin{theorem}[%
Existence of a dgLa for YM and GR about Minkowski] \label{theorem:ymgrnew}
Let $h=1$ or $h=2$, that we refer to as YM and GR respectively.
For $h=1$ suppose we are given a finite-dimensional non-Abelian Lie algebra $\ym$.
Then there exists a dgLa $\gx$ with the following properties:
\begin{itemize}
\item \emph{Momentum grading:}
  It has a $\C^4$ momentum grading as in 
  Definition \ref{def:momgr},
  \[
    \gx = \textstyle\bigoplus_{k \in \C^4} \gx_k
  \]
There is a graded vector space $\vecx$ of finite dimension $d_\vecx$
and isomorphisms $\gx_k \simeq \vecx$
such that the differential $\gx_k \to \gx_k$
and bracket
$\gx_{k_1} \otimes \gx_{k_2} \to \gx_{k_1+k_2}$
are given by arrays of size
$d_\vecx \times d_\vecx$
and
$d_\vecx \times d_\vecx \times d_\vecx$
with entries in $\C[k]$
and $\C[k_1,k_2]$ respectively\footnote{%
Of course, many entries vanish due to the $\Z$-grading.}.
\item \emph{Global structure of the homology:}
Recall the definition of the $\cc$-complexes in \eqref{eq:lec}.
There is a collection of isomorphisms, one for every $k \neq 0$,
between the homology of $\gx_k\simeq \vecx$ and the homology of
\begin{equation}\label{eq:hxx}
  \begin{aligned}
    \textnormal{YM\,:} & \qquad
    (\cc_{-1} \oplus \cc_1) \otimes \ym\\
    \textnormal{GR\,:} & \qquad
    \cc_{-2} \oplus \cc_2
  \end{aligned}
\end{equation}
regular in the sense that:%
{\addtocounter{footnote}{1}
 \footnotetext{%
   These homotopy equivalences on Zariski open sets
   need not coincide on overlaps,
   but they induce the same isomorphism on homology.
   The matrices are relative to a basis of $\vecx \simeq \gx_k$.
 }\addtocounter{footnote}{-1}}%
\begin{equation}\label{claim:qisos}
\parbox[b]{101mm}{Every $k \neq 0$
  has a Zariski open neighborhood on which
  this isomorphism is induced by a homotopy equivalence
  (Section \ref{sec:pre}) 
  given by four matrices whose entries
  are regular rational functions in $k$.\footnotemark{}}
\end{equation}
\item \emph{Homogeneity and Lorentz equivariance:}
Let $\hx_k=\hx_k^-\oplus \hx_k^+$ be the homology of
\eqref{eq:hxx}.
The minimal model bracket of $\gx$, viewed as a map
\[
\hx_{k_1}^1 \otimes \cdots \otimes \hx_{k_n}^1 \to \hx_{k_1 + \ldots + k_n}^2
\]
with $k_1,\ldots,k_n,k_1 + \ldots + k_n \neq 0$
and assuming all internal lines off-shell,
so that it is well-defined by Theorem \ref{theorem:well-defined},
satisfies:
\begin{itemize}
\item It is homogeneous of degree $3-2n$\footnote{%
Recall that the differential
on $\cc_{\pm h}$ is a matrix linear homogeneous in $k$,
so $\hx_{\lambda k} = \hx_k$ for all $\lambda \in \C^\times$.
With this understanding,
the bracket 
$\hx_{\lambda k_1} \otimes \cdots \otimes \hx_{\lambda k_n} \to \hx_{\lambda k_1 + \ldots + \lambda k_n}$
is equal to $\lambda^{3-2n}$ times the bracket
$\hx_{k_1} \otimes \cdots \otimes \hx_{k_n} \to \hx_{k_1 + \ldots + k_n}$.}.
\item For $n=2$ it is Lorentz invariant,
with $\ux$ the trivial representation.
For YM this $n=2$ bracket is proportional to the Lie bracket of $\ux$,
namely an antisymmetric map times the Lie bracket of $\ux$.
\item For $n=2$ and $\sigma = \pm$, the maps $\hx^{1,-\sigma} \otimes \hx^{1,\sigma}
\to \hx^{2,\sigma}$ do not identically vanish.
In particular the dgLa bracket is not trivial.
\end{itemize}
\end{itemize}
\end{theorem}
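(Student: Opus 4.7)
The plan is to construct the dgLa explicitly, starting from the first-order (frame/connection) formulations of YM and GR about Minkowski found in \cite{zeitlin,costello} and \cite{rt} respectively, and then Fourier-transforming to obtain a $\C^4$-graded dgLa whose differential and bracket are polynomial arrays in the momenta. For YM with internal Lie algebra $\ym$, I take the $\Z$-graded first-order formulation built from a $\ym$-valued connection one-form together with an auxiliary two-form, with the BV fields sitting in degrees $0,1,2,3$ (ghosts, fields, antifields, antighosts) and Maurer-Cartan equations recovering ordinary Yang-Mills. For GR I follow \cite{rt}, using a frame and Lorentz spin connection in Cartan form. Because the differential and bracket are first-order differential operators, their Fourier transforms have polynomial entries, giving the required decomposition $\gx = \bigoplus_{k\in\C^4} \gx_k$ with $\gx_k\simeq \vecx$.

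The homology calculation proceeds pointwise in $k$. Off the cone $\det k\neq 0$ the linearized YM/GR operator is invertible, so $\hx_k=0$. On the cone, decomposing $\vecx$ into Lorentz irreducibles and matching $\ker d_k$ and $\image d_k$ against the explicit differentials in \eqref{eq:ccm} and its analog for $\cc_{\pm 1}$ identifies $\hx_k$ with the homology of $(\cc_{-1}\oplus\cc_1)\otimes\ym$ for YM and $\cc_{-2}\oplus\cc_2$ for GR, uniformly in $k\neq 0$. For the regularity condition \eqref{claim:qisos}, I pick at each $k\neq 0$ a maximal-rank minor of $d_k$; Cramer's rule inverts it with denominator equal to that polynomial minor, yielding regular rational splittings of $\gx_k$ into $\image d_k$, a harmonic complement, and a complement of $\ker d_k$ on the Zariski open set where the minor does not vanish. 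These splittings assemble into the four matrices of a homotopy equivalence with $(\cc_{-h}\oplus\cc_h)\otimes\ym$ or $\cc_{-2}\oplus\cc_2$ regular on that patch, and finitely many Lorentz translates cover $\C^4-0$.

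The structural properties of the $n=2$ bracket are then verified directly. Homogeneity of degree $3-2n$ follows from a scaling count for the basic building blocks: the dgLa differential and bracket are linear in momenta, while $\hoff$ is rational with a definite scaling weight, and the weights combine uniformly across all trees of a given shape. Lorentz equivariance of every ingredient of the construction implies Lorentz invariance of the $n=2$ bracket once one passes to Lorentz-invariant pairings on $\hx$, and antisymmetry of $[-,-]$ combined with the rank-one helicity decomposition forces the YM bracket at $n=2$ to be a scalar multiple of the Lie bracket of $\ym$. Non-triviality of the maps $\hx^{1,-\sigma}\otimes\hx^{1,\sigma}\to\hx^{2,\sigma}$ reduces to an explicit three-particle computation, which produces the familiar nonzero spinor-bracket expression on a generic point of admissible momenta.

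The principal obstacle is \eqref{claim:qisos}. By Warning \ref{warning:discontinuous} no single homotopy equivalence can be globally regular across the cone, so the proof must proceed by a covering argument, and one must verify that the resulting pointwise identification of $\hx_k$ with the homology of the $\cc_{\pm h}$ complex is consistent across patches (this is why the footnote to \eqref{claim:qisos} only demands agreement on homology, not on the four defining matrices). Everything else is a translation of the constructions of \cite{zeitlin,costello,rt} into the present $\Z$-graded dgLa framework, together with careful bookkeeping in Lorentz representation theory.
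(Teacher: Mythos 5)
Your construction of the dgLa matches the paper's in substance: the paper builds the YM dgLa as $\ax\otimes\ym$ with $\ax$ a dgca subquotient of $(\C\oplus\C\eps)\otimes\Omega$, which as a graded space is the $\C\oplus\C^7\oplus\C^7\oplus\C$ you describe in BV-type degrees $0,1,2,3$; and the GR dgLa is $(\Omega\otimes\gr)/I$ with $\gr=\C^4\oplus\so$ and differential $[m,-]$ as in \cite{rt}. Your treatment of the last bullet point (homogeneity, Lorentz invariance for $n=2$, non-triviality) matches the paper's, which explicitly omits these as straightforward checks once one notes that the Lorentz group and $\C^\times$ act as dgLa automorphisms.

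Where your plan diverges, and where there is a genuine gap, is \eqref{claim:qisos}. You propose to identify $\hx_k$ with the homology of the reference complex by ``matching Lorentz irreducibles'' pointwise, and to establish regularity by choosing a maximal-rank minor of $d_k$ and inverting it by Cramer's rule. Cramer's rule does give a regular rational \emph{contraction} of $\gx_k$ onto $\hx_k$ on a Zariski open patch. But \eqref{claim:qisos} demands a \emph{homotopy equivalence} between the complex $\gx_k$ and the complex $(\cc_{-h}\oplus\cc_h)\otimes\ym$ (resp.\ $\cc_{-2}\oplus\cc_2$), i.e.\ four matrices relating two different complexes, not a contraction of one of them. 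To assemble one from your contraction you would still need an isomorphism $\hx_k\simeq H(\cc_{\pm h})_k$, regular in $k$ on the patch and — per the footnote on \eqref{claim:qisos} — inducing the same map on homology across overlaps. Your ``Lorentz irreducible matching'' does not supply this: the stabilizer of a fixed $k\neq 0$ on the cone is only the little group, and since each helicity component of the homology is rank one (times $\dim\ym$), a pointwise irreducible-matching argument is determined only up to a $k$-dependent scalar that you have not shown to be rational, regular, or consistent across charts. The paper closes this gap with a structural device you do not have: short exact sequences of complexes with exact middle term, e.g.\ \eqref{eq:ymses} and \eqref{eq:grses}, whose connecting morphism gives a \emph{canonical} isomorphism on homology, and whose zig-zag homotopy equivalence (Lemma \ref{lemma:zz}) is made regular on a patch by the HPL (Lemma \ref{lemma:qqqiso}). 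That canonicity is precisely what makes the overlap-consistency automatic, and is the key idea missing from your sketch.

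Separately, your off-cone argument (``the linearized operator is invertible, so $\hx_k=0$'') also needs care in the dgLa setting: the differential $d_k$ on $\gx_k$ is a $d_\vecx\times d_\vecx$ nilpotent matrix, not an invertible operator, and exactness off the cone is a statement about $\ker d_k=\image d_k$, which you should justify (the paper again gets it from the exactness of the SES middle term). This is minor compared to the homotopy-equivalence gap, but worth flagging.
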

\newcommand{\ct}[1]{\textnormal{\footnotesize \##1}}
\begin{proof}
  The remainder of this section.
  We omit straightforward checks that the dgLa constructed below
  yield minimal model brackets that are homogeneous 
  and, for $n=2$, Lorentz invariant and nontrivial as claimed.
  For Lorentz invariance and homogeneity,
  note that the Lorentz group and $\C^\times$ act as dgLa automorphisms.
  A sloppy calculation of the degree of homogeneity is
  \[
   (-1) \cdot \ct{inputs} + 0 \cdot \ct{nodes}
   + (-1)\cdot \ct{internallines} + (+1) \cdot \ct{outputs} = 3-2n
  \]
  where inputs and output
  contribute the degree of homogeneity of the connecting morphisms
  for the short exact sequences
  \eqref{eq:ymses} respectively \eqref{eq:grses}.
\qed\end{proof}

The elements of $\gx$ may be interpreted as finite linear combinations
of plane waves with complex momenta.
In fact, in this section we freely transition between
the momentum space dgLa $\gx$
and the position space or $C^\infty$-variant $\gx^\infty$:
\begin{quote}
    Namely $\gx^\infty = C^\infty \otimes \vecx$
    where $C^\infty = C^\infty(\R^4)$ are the smooth complex valued functions of $x \in \R^4$.
    This will be a dgLa whose differential and bracket are constant coefficient
    differential operators.
\end{quote}
One determines the other by requiring that
the map $\gx \to \gx^\infty$ given by
    \begin{equation}\label{eq:infalg}
      \gx_k \ni v \mapsto e^{ikx} \otimes v
    \end{equation}
be a dgLa morphism. Here $(x \mapsto e^{ikx}) \in C^\infty$ is a plane wave,
$kx$ the component-wise dot product.
Note that $\gx^\infty$ is a free $C^\infty$-module of rank equal to the dimension of $\vecx$,
but the differential and bracket will not be $C^\infty$-linear.
\begin{remark}[On constructing dgLa]
A \emph{dgca} is a differential graded commutative associative algebra over $\C$.
There are natural products $\dgca \otimes \dgca = \dgca$
and $\dgca \otimes \dgLa = \dgLa$
and if one forgets the algebra structure, they correspond to
the standard tensor product of complexes\footnote{%
Incidentally if $A$ is a {\dgca} then $A \otimes A$
can also be given the structure of a {\dgLa} by setting
$[u \otimes v, x \otimes y]
= (-1)^{y(v+x)} uy\otimes vx - (-1)^{u(v+x)} vx \otimes uy$
and $d(u \otimes v) = (du) \otimes v + (-1)^u u \otimes dv$.
If $\pi(u \otimes v) = (-1)^{uv} v \otimes u$
then $\pi[-,-] = [\pi-,\pi-]$, $\pi d = d\pi$,
so $\{u \mid \pi u = u\}$ is a sub-\dgLa.}.
\end{remark}

Let $\Omega$ be the
dgca of complex de Rham differential forms on $\R^4$
with the de Rham differential.
So $\Omega = C^\infty \otimes \wedge \C^4$
where the second factor $\wedge \C^4$ is the unital gca
freely generated in degree one by the symbols $dx^0,dx^1,dx^2,dx^3$.
As a representation of the Lorentz group,
 $\wedge \C^4 \simeq \wedge \rep{\frac12}{\frac12}$ is isomorphic to
\begin{equation}\label{eq:omrep}
0 \to \rep{0}{0} \to \rep{\tfrac12}{\tfrac12} \to
      \rep{1}{0} \oplus \rep{0}{1} \to \rep{\tfrac12}{\tfrac12} \to \rep{0}{0} \to 0
\end{equation}
where the arrows indicate the essentially
unique differential that depends linearly on $k \in \rep{\frac12}{\frac12}$
and is Lorentz equivariant,
which is nothing but the de Rham differential in momentum space. It is exact if $k \neq 0$.
Decompose $\Omega^2 = \Omega^2_+ \oplus \Omega^2_-$
where $\Omega^2_{\pm}$
is the $C^\infty$-submodule generated by all
$dx^0 dx^a \pm i dx^b dx^c$
with $a,b,c$ a cyclic permutation of $1,2,3$.
That is
\begin{equation}\label{eq:om2rep}
\Omega^2_+ \simeq C^\infty \otimes \rep{1}{0}
\qquad \qquad
\Omega^2_- \simeq C^\infty \otimes \rep{0}{1}
\end{equation}
Set $\Omega^{\leq 2}_{\pm} = \Omega^0 \oplus \Omega^1 \oplus \Omega^2_{\pm}$
and $\Omega_{\pm}^{\geq 2} = \Omega^2_{\pm} \oplus \Omega^3 \oplus \Omega^4$.
Note that $\Omega^{\geq 2}_{\pm}$
coincide with the complexes
$\cc_{\pm 1}$ up to a shift of the homological degree by one.
\begin{prop}[YM dgLa]
  Let $\C \oplus \C \eps$ be the dgca with 
  $\eps$ a symbol of degree $-1$,
  product given by $\eps^2 = 0$,
  and differential $z \oplus w\eps \mapsto w \oplus 0\eps$.
  Then the tensor product of dgca
  $(\C \oplus \C\eps) \otimes \Omega
  \simeq \Omega \oplus \eps\Omega$ has a dgca subquotient\footnote{%
    That is, the numerator is a sub-dgca, and the denominator
    is a dgca ideal in the numerator.}
  \begin{equation}\label{eq:ainf}
    \ax^\infty = \frac{\Omega \oplus \eps \Omega_+^{\geq 2}}{%
      \Omega_-^{\geq 2} \oplus \eps 0}
  \end{equation}
  As a $C^\infty$-module,
  $\ax^\infty \simeq \Omega^{\leq 2}_+
                 \oplus \eps \Omega_+^{\geq 2}$.
  The tensor product of this dgca with any finite-dimensional complex Lie algebra $\ym$
  yields a dgLa, with $\ym$ ungraded,
  \[
      \gx^\infty = \ax^\infty \otimes \ym
    \]
  The associated MC-equation yields ordinary YM with `internal Lie algebra' $\ym$.
\end{prop}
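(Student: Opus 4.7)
The plan is to verify four claims in sequence: (i) the numerator $\Omega\oplus\eps\Omega_+^{\geq 2}$ is closed under the dgca structure of $(\C\oplus\C\eps)\otimes\Omega$; (ii) the denominator $\Omega_-^{\geq 2}$ is a dgca ideal in it; (iii) the tensor product with $\ym$ yields a dgLa in the usual way; (iv) the resulting Maurer--Cartan equation recovers ordinary YM.

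For (i), recall that in the tensor product of dgca the differential is $d(\eps\otimes\omega)=1\otimes\omega-\eps\otimes d\omega$ (since $|\eps|=-1$ and $d\eps=1$), so $d(\eps\omega)=\omega-\eps d\omega$. Hence closedness under $d$ reduces to $d\Omega_+^{\geq 2}\subset \Omega_+^{\geq 2}$, which is clear because $\Omega^3,\Omega^4$ carry no $\pm$ splitting and de Rham sends $\Omega_+^2\to\Omega^3$. For the product, the only nontrivial check is that $\Omega\cdot\Omega_+^{\geq 2}\subset\Omega_+^{\geq 2}$: in total degree two one uses $\Omega^0\cdot\Omega_+^2\subset\Omega_+^2$, and in total degrees three and four the target is all of $\Omega^{3,4}$ so there is nothing to check. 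Products involving two $\eps$-factors vanish by $\eps^2=0$.

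For (ii), the differential preserves $\Omega_-^{\geq 2}$ by the same argument as above. The crucial product check is that $\eps\Omega_+^{\geq 2}\cdot\Omega_-^{\geq 2}$ lies in the ideal; the only potentially nonzero piece is $\Omega_+^2\wedge\Omega_-^2$, and this vanishes by the well-known wedge-orthogonality of self-dual and anti-self-dual two-forms (as one checks on the explicit basis $dx^0\wedge dx^a\pm i\,dx^b\wedge dx^c$). All other products either land in $\Omega^{\geq 5}=0$ or directly in $\Omega_-^{\geq 2}$. This gives the $C^\infty$-module identification of (ii) by quotienting: $\Omega/\Omega_-^{\geq 2}\simeq \Omega_+^{\leq 2}$, whence $\ax^\infty\simeq \Omega_+^{\leq 2}\oplus\eps\Omega_+^{\geq 2}$.

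For (iii), the standard construction $[a\otimes X,b\otimes Y]=ab\otimes[X,Y]$ on the tensor product of a gca with an (ungraded) Lie algebra gives a gLa; graded antisymmetry follows from graded commutativity of $ab$ combined with antisymmetry of $[X,Y]$, and Jacobi is routine. The Leibniz rule for $d$ on $\ax^\infty\otimes\ym$ follows from the Leibniz rule for the dgca differential.

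For (iv), the main conceptual point. Degree one of $\ax^\infty\otimes\ym$ is $(\Omega^1\oplus\eps\Omega_+^2)\otimes\ym$, so a general degree-one element reads $u=A+\eps\phi$. I would compute $du=(dA)_++\phi-\eps\,d\phi$, where $(dA)_+$ is the self-dual projection of the de Rham differential (the anti-self-dual part is killed in the quotient), and $\tfrac12[u,u]=\tfrac12[A,A]_+-\eps[A,\phi]$ using $\eps^2=0$ together with $A\cdot\eps\phi=-\eps\,A\wedge\phi$ (signs from $|A|=1$, $|\eps|=-1$). Splitting $du+\tfrac12[u,u]=0$ into its $\Omega_+^2$ and $\eps\Omega^3$ components yields
\[
  \phi \;=\; -\bigl(dA+\tfrac12[A,A]\bigr)_+ \;=\; -F_+ \qquad\text{and}\qquad d_A\phi=0,
\]
where $F=dA+\tfrac12[A,A]$ is the curvature and $d_A=d+[A,-]$. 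Substituting gives $d_AF_+=0$. Combined with the automatic Bianchi identity $d_AF=0$, i.e.\ $d_AF_++d_AF_-=0$, one obtains $d_AF_-=0$, and thus $d_A{\star}F=d_AF_+-d_AF_-=0$, which is the ordinary YM equation for the connection $A$ valued in $\ym$. Conversely, given any solution to YM, setting $\phi=-F_+$ produces an MC element.

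The main obstacle is the subquotient check, specifically the orthogonality $\Omega_+^2\wedge\Omega_-^2=0$ (without which the denominator would not be an ideal in the numerator); everything else is bookkeeping with signs and degrees. I expect no other difficulties.
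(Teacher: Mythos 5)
Your proposal is correct and follows essentially the same route as the paper's (much terser) proof: the two key algebraic facts used — $\Omega\cdot\Omega_+^{\geq 2}\subset\Omega_+^{\geq 2}$ for the subalgebra property, and $\Omega_+^2\wedge\Omega_-^2=0$ for the ideal property — are exactly the ones the paper cites, and the module identification and MC computation are the same. Where you diverge slightly is in the final step: the paper records the MC equations as the first-order system $\ddr A + F + \tfrac12[A,A]=0$ in $\Omega^2/\Omega_-^2\otimes\ym$ and $\ddr F + [A,F]=0$ in $\Omega^3\otimes\ym$, declares these to be the YM equations, and relegates the reconciliation with the familiar form to a footnote (real form, $F=E+iB$, and a reference to Costello's first-order YM); you instead solve $\phi=-F_+$ and combine $d_AF_+=0$ with the automatic Bianchi identity to recover the second-order equation $d_A\star F=0$. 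Both are valid; your reduction is more explicit but less in the spirit of the first-order formulation the paper is really after. One small slip: $\star$ has eigenvalues $\pm i$ on $\Omega^2$ in Lorentzian signature, so $\star F$ is $\pm i(F_+-F_-)$ rather than $F_+-F_-$; this is an immaterial constant and does not affect the conclusion $d_A\star F=0$.
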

\newcommand{\ddr}{d_\Omega}
\begin{proof}
  The dgca axioms hold for $\C \oplus \C \eps$.
  In \eqref{eq:ainf}, the numerator is a subcomplex, and a subalgebra using
  $\Omega \Omega_+^2 \subset \smash{\Omega_+^{\geq 2}}$.
  The denominator is a subcomplex of the numerator,
  and an algebra ideal using 
  $\Omega_+^2 \Omega_-^2 = 0$.
  So $\ax^\infty$ is a dgca subquotient.
	Let $u \in (\gx^\infty)^1$,
	so $u = A \oplus \eps F$ with $A \in \Omega^1 \otimes \ym$
	and $F \in \Omega^2_+ \otimes \ym$.
	The MC-equations $du + \tfrac{1}{2}[u,u]=0$
  are 
  $\ddr A + F + \tfrac{1}{2} [A,A] = 0$ in $\Omega^2/\Omega^2_- \otimes \ym$
  and
  $\ddr F + [A,F] = 0$ in $\Omega^3 \otimes \ym$ with $\ddr$
  the de Rham differential and the bracket
  is the product of forms and the bracket in $\ym$.
  These are the YM equations\footnote{To recover the standard real form, let $\ym$ be a real 
  Lie algebra and take tensor products over the reals.
  Restrict $A$ to the real subspace of $\Omega^1\otimes \ym$.
  Note that $ \Omega_+^2\otimes\ym$ has no real structure, 
  but relative to a coordinate system we essentially have $F=E+iB$, the 
  electric and magnetic fields. 
  Then $dF+[A,F]=0$ encodes both the equation of motion 
  and the Bianchi identity.}.
  See also
  \emph{First order Yang-Mills theory}
  in Costello \cite{costello}.
\qed\end{proof}
Using the standard basis for $\Omega$ generated by $dx^0, dx^1, dx^2, dx^3$
and the basis for $\Omega_+^2$ given before, we get
$\gx^\infty = C^\infty \otimes \vecx$ with
\[
  \vecx \simeq (\C \oplus \C^7 \oplus \C^7 \oplus \C) \otimes \ym
\]
and with summands in homological degrees $0,1,2,3$ respectively.
The differential is a constant coefficient first order differential operator,
the bracket is bilinear over $C^\infty$
meaning it does not involve derivatives.
\begin{prop}[YM homology] \label{prop:ymhom}
  Let $\ax = \bigoplus_{k \in \C^4} \ax_k$
  be the algebraic variant of $\ax^\infty$.
  Then for $k \neq 0$ there is a global canonical isomorphism
  between the homology of $\ax_k$
  and that of $\cc_{-1} \oplus \cc_1$.
  This extends trivially to 
  $\gx = \ax \otimes \ym$.
\end{prop}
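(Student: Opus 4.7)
The strategy is to exhibit a short exact sequence of complexes
\begin{equation*}
  0 \to \Omega^{\leq 2}_+ \to \ax_k \to \eps\,\Omega^{\geq 2}_+ \to 0
\end{equation*}
and to identify the sub and quotient complexes, up to a shift, with the target complexes $\cc_{-1}$ and $\cc_1$. To check this is a SES, recall that as a graded $C^\infty$-module $\ax^\infty = \Omega^{\leq 2}_+ \oplus \eps\,\Omega^{\geq 2}_+$, with differential induced by $(\omega,\eps\eta)\mapsto (d_\Omega\omega + \eta,\, -\eps\,d_\Omega\eta)$ modulo $\Omega^{\geq 2}_-$. The first summand is a subcomplex (its internal differential is $P_+d_\Omega$, vanishing from degree $2$ since $d_\Omega\Omega^2_+\subset \Omega^3\subset \Omega^{\geq 2}_-$), and the induced quotient differential on $\eps\,\Omega^{\geq 2}_+$ is simply $-\eps\,d_\Omega$. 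After shifting $\eps\,\Omega^p$ into degree $p-1$, this quotient is the complex $\Omega^2_+\to \Omega^3\to \Omega^4$ in degrees $1,2,3$, which by uniqueness of Lorentz-equivariant $k$-linear differentials is canonically $\cc_1$.

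For $H^\bullet(\Omega^{\leq 2}_+)$ we exploit $\Omega^{\leq 2}_+\simeq \Omega/\Omega^{\geq 2}_-$ via the auxiliary SES
\begin{equation*}
  0 \to \Omega^{\geq 2}_- \to \Omega \to \Omega^{\leq 2}_+ \to 0.
\end{equation*}
At every $k\neq 0$ the fiber $\Omega_k\simeq \wedge^\bullet\C^4$ with differential $ik\wedge(-)$ is the Koszul complex of a nonzero linear form, hence acyclic, so the associated LES gives canonical isomorphisms $H^i(\Omega^{\leq 2}_+)\simeq H^{i+1}(\Omega^{\geq 2}_-)\simeq H^i(\cc_{-1})$, the second being the shift identification of $\Omega^{\geq 2}_-$ with $\cc_{-1}$.

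The heart of the argument, and the one step that is not merely bookkeeping, is to verify that the connecting homomorphism $H^i(\cc_1)\to H^{i+1}(\cc_{-1})$ in the LES of the first SES vanishes. Unraveling it, a class $[\eps\eta]$ with $d_\Omega \eta=0$ is sent to the class of the $\Omega^2_+$-component of $\eta$ in $H^{i+1}(\Omega^{\leq 2}_+)$. This component can be nonzero only for $i=1$, in which case $\eta\in\Omega^2_+$ with $d_\Omega\eta=0$. Koszul acyclicity then supplies $\omega\in\Omega^1$ with $\eta=d_\Omega\omega$, and since $P_-\eta=0$ we get $\eta=P_+d_\Omega\omega$, exhibiting $\eta$ as a coboundary in $\Omega^{\leq 2}_+$; equivalently $(-\omega,\eps\eta)\in\ax_k$ is a genuine cocycle cohomologous to $(0,\eps\eta)$, providing a canonical section $H^\bullet(\cc_1)\to H^\bullet(\ax_k)$ modulo exact terms.

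Consequently the LES degenerates into canonically split short exact sequences, yielding the asserted canonical isomorphism $H(\ax_k)\simeq H(\cc_{-1})\oplus H(\cc_1)$ for every $k\neq 0$. The extension to $\gx=\ax\otimes\ym$ is immediate since $\ym$ is a $\C$-vector space, so tensoring with $\ym$ preserves the SES, its LES, and all cohomology computations above.
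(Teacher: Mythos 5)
Your proof is correct, but it takes a genuinely different route from the paper's. The paper writes a single short exact sequence
$0 \to I_- \oplus I_+ \to (\Omega \oplus \eps I_+)\oplus I_+ \to \ax^\infty \to 0$
in which the middle term is engineered to be exact for $k\neq 0$ (by reorganizing it as a block-lower-triangular complex with diagonal blocks a mapping cone of the identity on $I_+$ and the Koszul complex $\Omega$). The connecting morphism of the LES is then, with no further checking, the desired isomorphism $H(\ax_k)\simeq H(I_-\oplus I_+)[1]=H(\cc_{-1}\oplus\cc_1)$. You instead filter $\ax$ by its natural $C^\infty$-module summand $\Omega^{\leq2}_+$, identify the quotient $\eps\Omega^{\geq2}_+$ with $\cc_1$ directly, identify $H(\Omega^{\leq2}_+)$ with $H(\cc_{-1})$ via the auxiliary sequence $0\to\Omega^{\geq2}_-\to\Omega\to\Omega^{\leq2}_+\to 0$, and then prove that the connecting homomorphism of the main sequence vanishes by a Koszul-exactness argument. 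Both approaches are correct. The paper's exact-middle-term trick avoids any analysis of the connecting map, and it meshes directly with Lemma \ref{lemma:zz} (the zig-zag lemma with homotopy equivalence) which is what Lemma \ref{lemma:qqqiso} later uses to upgrade the isomorphism to the regular homotopy equivalence \eqref{claim:qisos}; your decomposition-plus-splitting argument is more transparent about where the two helicity summands come from, but you would need to supply a bit more to produce the regular homotopy equivalence data that the paper needs downstream.

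One small point of care that your write-up glosses over but that does hold: the section $H(\cc_1)\to H(\ax_k)$ you construct, sending $[\eps\eta]$ to $[(-\omega,\eps\eta)]$ with $\eta=d_\Omega\omega$, is well defined modulo exact terms because any two choices of $\omega$ differ by a $d_\Omega$-closed $1$-form, which by Koszul acyclicity is $d_\Omega$ of a $0$-form $\alpha$, and then $(-\alpha',0)\mapsto(d_\Omega(-\alpha'),0)$ (here $\alpha'=-\alpha$, $\beta=0$ in your notation) kills the discrepancy. You assert this in the phrase "providing a canonical section \ldots modulo exact terms," and the claim is correct, but it deserves a sentence since the canonicity of the isomorphism is part of what Proposition \ref{prop:ymhom} asserts.
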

\begin{proof}
  Abbreviate $I_{\pm} = \Omega_{\pm}^{\geq 2}$
  viewed as a complex with the de Rham differential.
  Let
  $\Omega \oplus \eps I_+$ be the complex with differential
  $a \oplus \eps b \mapsto (da + b) \oplus (-\eps db)$.
  There is a short exact sequence of complexes
  \begin{equation}\label{eq:ymses}
      0 \to I_- \oplus I_+ \to (\Omega \oplus \eps I_+)
      \oplus I_+ \to \ax^\infty \to 0
  \end{equation}
  The left term is a direct sum of complexes.
  The middle term has a differential that in $2 \times 2$
  block form is lower triangular
  with the lower left term
  $\Omega \oplus \eps I_+ \to I_+$, $a \oplus \eps b \mapsto b$,
  which yields a differential.
  The first map in the sequence is the direct sum of the two inclusion maps,
  hence a chain map. We get a short exact sequence 
  inducing the correct differential on
  $\ax^\infty \simeq (\Omega \oplus \eps I_+)/I_-$.
  
  Passing from $C^\infty$ to the algebraic level,
  analogous to \eqref{eq:infalg},
  we get a short exact sequence of complexes
  for every $k \in \C^4$.
  If $k \neq 0$ then the middle term
  is exact because if we reorganize
  as $(\eps I_+ \oplus I_+) \oplus \Omega$
  then the differential is lower block triangular
  as a $2 \times 2$ matrix
  and exact since both diagonal entries are if $k \neq 0$,
  namely $\eps I_+ \oplus I_+$ is exact
  since it is the mapping cone for the identity map,
  and $\Omega$ is exact if $k \neq 0$.
  So for $k \neq 0$,
  the associated long exact sequence in homology yields
  an isomorphism of the homology of $\ax_k$ with
  the homology of $I_- \oplus I_+$
  with a degree shift by one, which is $\cc_{-1} \oplus \cc_1$
  by \eqref{eq:omrep}, \eqref{eq:om2rep}.
\qed\end{proof}

To define the dgLa for GR
we use Minkowski spacetime as auxiliary background structure.
This account obscures invariance properties,
and the more invariant
account \cite{rt} does not use $\Omega = C^\infty \otimes \wedge \C^4$ the way we do here.
We define directly the complex version,
but there is an obvious real structure at every step
so nothing is lost.
Consider the direct sum of Lie algebras $\gr = \C^4 \oplus \so$
which in particular is not the Lie algebra of the Poincare group.
We have the following Lie algebra representations:
\begin{itemize}
  \item The Abelian Lie algebra $\C^4$ acts on $\Omega$
    by differentiation on $C^\infty$ by 
 identifying $\C^4 \simeq \C \p_0 \oplus \C \p_1 \oplus \C \p_2 \oplus \C \p_3$
where $\p_0,\p_1,\p_2,\p_3$
are the usual partial derivatives,
and by acting trivially on $\wedge \C^4$.
\item The Lie algebra $\so \simeq \sl \oplus \sl$,
  which is the complexification of the Lie algebra of the Lorentz group,
  acts on $\Omega$ by acting trivially on $C^\infty$, and as the fundamental representation
on $dx^0,dx^1,dx^2,dx^3$ which extends uniquely to an action
as derivations of degree zero on $\wedge \C^4$.
\end{itemize}
They combine to an action of $\gr$ on $\Omega$
denoted $v \mapsto (\omega \mapsto v(\omega))$.
Then the tensor product of vector spaces
$\Omega \otimes \gr$
is a gLa with grading from $\Omega$ and 
bracket\footnote{%
This bracket is well-defined, bilinear,
and satisfies the graded Jacobi identity.
More generally, if $X$ is a graded commutative algebra, $P$ a Lie algebra 
and $P \to \mathrm{Der}^0(X)$ a Lie algebra map,
so a representation as $\C$-linear derivations of degree zero, 
then on $X \otimes_\C P$ we get a gLa bracket by setting
$[xp,x'p'] = (xp(x'))p' - (p'(x)x')p + (xx')[p,p']$.}
\begin{equation}\label{eq:br}
  \underbrace{[\omega \otimes v,\omega' \otimes v']}_{\text{bracket in $\Omega \otimes \gr$}}
      = (\omega v(\omega'))\otimes v' - (v'(\omega) \omega') \otimes v
      + (\omega \omega') \otimes \underbrace{[v,v']}_{\mathclap{\text{bracket in $\gr$}}}
    \end{equation}
for all $\omega,\omega' \in \Omega$ and $v,v' \in \gr$.
Note that $\Omega \otimes \gr$ is naturally a
module over $\Omega$ by left-multiplication,
but the bracket is not bilinear over $\Omega$
nor even $C^\infty$\footnote{%
  So the bracket involves differentiation.
The language of algebroids captures this structure.}.
Elements of degree one sometimes define a metric, as follows.
\begin{definition}[Associated frame and metric] \label{def:met}
    Given a real element of $\Omega^1 \otimes \gr$, drop the $\so$ part,
which leaves something of the form
$e_a^\mu dx^a \otimes \p_\mu$ with summation implicit,              
and $e_a^\mu \in C^\infty$. Consider the real vector fields
$e_a = e_a^\mu \p_\mu$.
If linearly independent,
we get a metric by
$g(e_a,e_b)_{a,b=0,1,2,3} = \textnormal{diag}(-1,1,1,1)$.
\end{definition}
An example is the real element of $\Omega^1 \otimes \gr$ given by
\eqref{eq:mink},
which is in $\MC(\Omega \otimes \gr)$ and whose associated metric is the Minkowski metric.
But not every solution to the Einstein equations of GR comes from a real 
nondegenerate element of $\MC(\Omega \otimes \gr)$,
so $\Omega \otimes \gr$ is not yet the right object.
\begin{prop}[GR dgLa]
  As a representation, $\wedge^2 \C^4 \otimes \so$
  contains exactly one copy of $\rep{2}{0} \oplus \rep{0}{2}$.
  Denote the span over $\Omega$ of this copy 
  by $I \subset \Omega^{\geq 2} \otimes \gr$.
  Then $I$ is an ideal of the gLa $\Omega \otimes \gr$ with bracket \eqref{eq:br}.
  Set
  \[
    \gx^\infty = \frac{\Omega \otimes \gr}{I}
  \]
  a gLa quotient. Then the element
  \begin{equation}\label{eq:mink}
     m = dx^0 \otimes \p_0 + dx^1 \otimes \p_1 + dx^2 \otimes \p_2 + dx^3 \otimes \p_3
   \end{equation}
  satisfies $[m,m] = 0$ in $\gx^\infty$,
  and the associated differential
  $d = [m,-]$ makes $\gx^\infty$ a dgLa. The MC-equation yields ordinary GR
  (Ricci-flat metrics)
  for real elements that define a nondegenerate frame as in
  Definition \ref{def:met}\footnote{%
  Since $m$ has a nondegenerate frame, so do all elements close to $m$.
  }.
\end{prop}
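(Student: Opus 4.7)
The plan is to verify the four assertions in order: the representation decomposition, the ideal property, the relations $[m,m]=0$ and $d^2=0$, and the identification of the MC equation with Ricci-flat GR.

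For the decomposition I would use $\C^4 \simeq \rep{\tfrac12}{\tfrac12}$, hence $\wedge^2\C^4 \simeq \rep{1}{0}\oplus\rep{0}{1}$ (self-dual/anti-self-dual two-forms), together with $\so \simeq \rep{1}{0}\oplus\rep{0}{1}$. Applying \eqref{eq:cg} summand by summand yields
\[
\wedge^2\C^4\otimes\so \;\simeq\; \rep{2}{0}\oplus\rep{0}{2}\oplus 2\rep{1}{1}\oplus 2\rep{1}{0}\oplus 2\rep{0}{1}\oplus 2\rep{0}{0},
\]
containing exactly one copy of $\rep{2}{0}\oplus\rep{0}{2}$, which I denote $W \subset \wedge^2\C^4\otimes\so$; by construction $I = \Omega\cdot W$.

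The main obstacle is showing that $I$ is a gLa ideal, which I would verify as follows. Fix $\omega\otimes v \in \Omega\otimes\gr$ with $v = a+A$, $a \in \C^4$ and $A \in \so$, and an element of $I$ of the form $\omega'\cdot i$ with $\omega' \in \Omega$ and $i = \sum_j \eta_j\otimes s_j \in W$. Expand $[\omega\otimes v,\omega'\cdot i]$ via \eqref{eq:br}. The first and third terms of \eqref{eq:br} combine, using the Leibniz rule for the derivation action of $v$ on $\omega'\eta_j$ in $\Omega$ (with $a$ differentiating the $C^\infty$-part and $A$ acting as a derivation on $\wedge\C^4$), into
\[
\omega\, v(\omega')\cdot i \;+\; \omega\omega'\cdot(A\cdot i),
\]
where $A\cdot i$ denotes the diagonal Lorentz action on $\wedge^2\C^4\otimes\so$; both summands lie in $\Omega\cdot W = I$ because $W$ is Lorentz-invariant. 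The delicate piece is the middle term $-\sum_j (s_j(\omega)\omega'\eta_j)\otimes v$, whose $\gr$-slot is the arbitrary $v$ and therefore does not obviously land in $W$. I expect it to vanish identically: factoring out the $C^\infty$-part of $\omega$ (on which $\so$ acts trivially), the remaining $\sl\oplus\sl$-equivariant contraction
\[
\wedge^p\C^4 \otimes W \to \wedge^{p+2}\C^4, \qquad \xi \otimes (\eta\otimes s) \mapsto s(\xi)\,\eta,
\]
must vanish by Schur's lemma, because for every $p \in \{0,1,2,3,4\}$ the decomposition of the source via \eqref{eq:cg} shares no irreducible with the target ($\rep{1}{0}\oplus\rep{0}{1}$, $\rep{\tfrac12}{\tfrac12}$, $\rep{0}{0}$, $0$, $0$ respectively). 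This representation-theoretic check is the heart of the proof.

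The remaining statements are short. Since $m$ lies purely in the $\C^4$-part of $\gr$, the third term of \eqref{eq:br} in $[m,m]$ vanishes because $[\C^4,\C^4]=0$ in $\gr$, while the first two vanish because $\p_\mu(dx^\nu) = 0$ (the $\C^4$-action on $\wedge\C^4$ is trivial). Hence $[m,m]=0$ already in $\Omega\otimes\gr$, and a fortiori in $\gx^\infty$; the graded Jacobi identity then yields $d^2 = [m,[m,-]] = \tfrac12[[m,m],-] = 0$, and the Leibniz property of $d$ follows from the gLa axioms. Finally, for a real $u \in (\gx^\infty)^1$ defining a nondegenerate frame in the sense of Definition \ref{def:met}, splitting $u = e + \sigma$ into $\C^4$- and $\so$-valued parts, the MC equation $du + \tfrac12[u,u] = 0$ decomposes into a Cartan-type torsion-free equation (which uniquely determines $\sigma$ from $e$ by the nondegeneracy of the frame) and a curvature equation in $\Omega^2\otimes\so$ modulo $I$; quotienting by $I$ annihilates precisely the Weyl ($\rep{2}{0}\oplus\rep{0}{2}$) part of the curvature, leaving the vanishing of the Ricci tensor. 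For the detailed identification of this first-order formulation with standard Ricci-flat GR I would refer to \cite{rt}.
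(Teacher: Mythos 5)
Your proof is correct and follows essentially the same route as the paper's sketch: the key step, that $I$ lies in the kernel of the action map $\Omega\otimes\gr\to\End(\Omega)$, $\omega\otimes v\mapsto(\omega'\mapsto\omega v(\omega'))$, so that the middle term of the bracket \eqref{eq:br} drops out, is exactly what your degree-by-degree Schur argument establishes, and the recombination of the first and third terms into $\Omega\cdot(\gr\text{-action on }W)$ via Lorentz invariance of $W$ is the paper's chain ``$\Omega[\gr,I]\subset\Omega I\subset I$''. You helpfully fill in details that the paper's sketch leaves implicit, namely the explicit $p$-by-$p$ comparison of irreducibles in $\wedge^p\C^4\otimes W$ against $\wedge^{p+2}\C^4$ and the verification that $[m,m]=0$ already holds in $\Omega\otimes\gr$.
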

\begin{proof}[Sketch]
  As representations
  $\wedge^2 \C^4 \simeq \so \simeq \rep{1}{0} \oplus \rep{0}{1}$
  so by \eqref{eq:cg} their product contains
  $\rep{2}{0} \oplus \rep{0}{2}$ once,
  and $I$ is isomorphic to $C^\infty$ times
\begin{equation}\label{eq:icplx}
0 \to \rep{2}{0} \oplus \rep{0}{2} \to \rep{\tfrac32}{\tfrac12} \oplus \rep{\tfrac12}{\tfrac32}
                             \to \rep{1}{0} \oplus \rep{0}{1} \to 0
\end{equation}
The arrows, not used in this proof, indicate the essentially
unique Lorentz equivariant differential linear in $k \in \rep{\frac12}{\frac12}$,
equivalently $[m,-]$.
  By \eqref{eq:cg}, \eqref{eq:omrep} one sees that
  $I$ is in the kernel of 
  $\Omega \otimes \gr \to \End(\Omega)$,
  $\omega \otimes v \mapsto (\omega' \mapsto \omega v(\omega'))$.
  Hence $[\Omega \otimes \gr, I] \subset \Omega [\gr, I] \subset \Omega I \subset I$,
  an ideal.
  One gets ordinary GR, see \cite{rt}\footnote{%
    The paper \cite{rt} uses a more general variant with $\gr$ replaced by
  $\C^4 \oplus (\so \oplus \C)$. This generalization
  is not needed for this paper,
  but it can provide flexibility in other applications.}.
\qed\end{proof}
Using standard bases, we get
$\gx^\infty = C^\infty \otimes \vecx$ with
\[
      \vecx \simeq \C^{10} \oplus \C^{40} \oplus \C^{50} \oplus \C^{24} \oplus \C^4
\]
with summands in homological degrees $0,1,2,3,4$.
The stabilizer Lie algebra of $m$ given by
$\{ x\in (\gx^{\infty})^{0}\mid [x,m]=0 \}$ acts as dgLa automorphisms,
corresponding to infinitesimal translations and Lorentz transformations.
The differential and the bracket are constant coefficient
first order differential operators.
\begin{prop}[GR homology] \label{prop:grhom}
  Let $\gx = \bigoplus_{k \in \C^4} \gx_k$ be the algebraic variant of
  $\gx^\infty$.
  Then for $k \neq 0$ there is a global canonical isomorphism
  between the homology of $\gx_k$
  and that of $\cc_{-2} \oplus \cc_2$.
\end{prop}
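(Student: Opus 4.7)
The plan is to mimic the short exact sequence argument used in the proof of Proposition~\ref{prop:ymhom}. The ``GR dgLa'' proposition preceding this one already exhibits $I$ as a differential ideal in $\Omega \otimes \gr$ under $d = [m,-]$, so there is a short exact sequence of complexes $0 \to I \to \Omega \otimes \gr \to \gx^\infty \to 0$. Passing to the momentum-$k$ component (plane waves $e^{ikx}\otimes v$) yields for each $k$ a short exact sequence of finite-dimensional complexes. The proposition will then follow from two ingredients: (i)~acyclicity of the middle complex for $k \neq 0$, and (ii)~an identification of $H(I_k)$ with the homology of $\cc_{-2}\oplus\cc_2$ up to a homological degree shift by one; the connecting homomorphism of the associated long exact sequence then delivers the claimed canonical isomorphism.

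For (i), use the Lie algebra splitting $\gr = \C^4 \oplus \so$. Computing $[m,\omega \otimes v]$ directly from \eqref{eq:br} and using that $\C^4$ acts trivially on $\wedge \C^4$ and $[\p_\mu,v] = 0$ for all $v \in \gr$, one finds that $\Omega \otimes \C^4$ is a subcomplex on which $[m,-]$ reduces to the de Rham differential, and that the quotient $\Omega \otimes \so$ inherits the de Rham differential as well (the off-diagonal piece $\omega\otimes v\mapsto -v(dx^\mu)\omega\otimes\p_\mu$ lands in the subcomplex and dies in the quotient). Since the momentum-$k$ piece of $\Omega$ is de Rham acyclic for $k \neq 0$ by \eqref{eq:omrep}, both the sub and quotient are acyclic, hence so is the middle.

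For (ii), the ``GR dgLa'' proposition asserts that $I$ is isomorphic as a complex to $C^\infty$ times \eqref{eq:icplx}, placing it in homological degrees $2,3,4$ with Lorentz types agreeing with $\cc_2\oplus\cc_{-2}$ shifted up by one. The representation content is forced by Clebsch--Gordan \eqref{eq:cg}: the $\Omega$-span of the generator $\rep{2}{0}\oplus\rep{0}{2}\subset \wedge^2\C^4\otimes\so$ can only land in $\rep{\frac{3}{2}}{\frac{1}{2}}\oplus\rep{\frac{1}{2}}{\frac{3}{2}}\subset \wedge^3\C^4\otimes\so$ in degree $3$ and in $\rep{1}{0}\oplus\rep{0}{1}\simeq \wedge^4\C^4\otimes\so$ in degree $4$, by matching source and target summands. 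The differential $[m,-]$ restricted to $I$ is Lorentz equivariant and linear in $k$, so by uniqueness it coincides up to nonzero scalars with the standard differential on $\cc_2\oplus\cc_{-2}$, provided it does not vanish identically in any slot, which is checked by an explicit one-line computation. Hence $H^{i+1}(I_k)\simeq H^i(\cc_{-2}\oplus\cc_2)_k$ canonically and naturally in $k$.

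Combining (i) and (ii) through the long exact sequence in homology yields the canonical isomorphism $H^i(\gx_k)\simeq H^{i+1}(I_k)\simeq H^i(\cc_{-2}\oplus\cc_2)_k$ for every $k\neq 0$, globally in $k$ since all constructions are algebraic. The main obstacle is the structural identification of $I$ in (ii), which is stated but not fully proved in the ``GR dgLa'' proposition: one must verify that the $\Omega$-span cuts out \emph{exactly} the representations in \eqref{eq:icplx} and that the inherited differential is nonzero in each slot. Both are Clebsch--Gordan computations, but must be executed carefully degree by degree; everything else (the short exact sequence, the acyclicity argument, the connecting homomorphism) is formal.
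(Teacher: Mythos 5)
Your proposal is correct and follows essentially the same route as the paper's proof: the same short exact sequence $0 \to I \to \Omega\otimes\gr \to \gx^\infty \to 0$, the same block-triangular decomposition $\Omega\otimes\gr = (\Omega\otimes\C^4)\oplus(\Omega\otimes\so)$ reducing to the de Rham differential on each block, the same appeal to \eqref{eq:icplx} to identify $I$, and the same connecting homomorphism. Your closing caveat about the representation-theoretic identification of $I$ is fair --- the paper also leaves this at the level of a sketch in the GR dgLa proposition --- but it is not a gap relative to what the paper itself proves.
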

\begin{proof}
  Recall that $m$ given by \eqref{eq:mink} satisfies $[m,m] = 0$
  both in $\Omega \otimes \gr$
  and in $\gx^\infty$,
  and therefore defines a differential on both by $[m,-]$,
  and by restriction on the ideal $I$.
  Therefore we have a short exact sequence of complexes
\begin{equation}\label{eq:grses}
0 \to I \to \Omega \otimes \gr \to \gx^\infty \to 0
\end{equation}

Passing from $C^\infty$ to the algebraic level,
see \eqref{eq:infalg}, we get a short exact
sequence of complexes for every $k \in \C^4$.
If $k \neq 0$ then the middle term is exact because if we reorganize as
$(\Omega \otimes \so) \oplus (\Omega \otimes \C^4)$ then the differential
is lower block triangular as a $2\times 2$ matrix and exact
because the diagonal entries are exact, namely
the differential on both $\Omega \otimes \so$ and $\Omega \otimes \C^4$
is simply the de-Rham differential on $\Omega$,
and $\Omega$ is exact when $k \neq 0$. So for $k \neq 0$,
the associated long exact sequence in homology yields an isomorphism of the homology
of $\gx_k$ with the homology of $I$ with a degree shift by one, which is
$\cc_{-2} \oplus \cc_2$ by \eqref{eq:icplx}.
\qed\end{proof}
We now construct the homotopy equivalence
required for Theorem \ref{theorem:ymgrnew}.
\newcommand{\leftmap}{\ell}
\newcommand{\rightmap}{r}
\newcommand{\zzc}{C}
\begin{lemma}[A zig-zag lemma with homotopy equivalence] \label{lemma:zz}
  Given is a short exact sequence
  of complexes of vector spaces\footnote{%
  Or modules over a unital commutative ring.}
  with exact middle term,
     \[
       0 \to \zzc'' \xrightarrow{\rightmap' } \zzc' \xrightarrow{\rightmap  } \zzc \to 0
     \]
  with differentials $d''$, $d'$, $d$.
  Suppose $h' \in \End(\zzc')$ is a homotopy that witnesses
  the exactness of $\zzc'$,
  so $(h')^2 = 0$ and $h'd' + d'h' = \mathbbm{1}$.
  Suppose $\leftmap'  \in \Hom(\zzc',\zzc'')$, $\leftmap  \in \Hom(\zzc,\zzc')$
  witness the exactness of the short exact sequence,
  $\rightmap' \leftmap' +\leftmap \rightmap  
  = \rightmap  \leftmap 
  = \leftmap' \rightmap' 
  = \mathbbm{1}$\footnote{Note that $\leftmap ,\leftmap' $ need not be chain maps.}.
  Then a homotopy equivalence $C'' \leftrightarrow C$, with a degree shift by one, 
  is given by $R = \rightmap  h'\rightmap'  \in \Hom(\zzc'',\zzc)$,
  $L = \leftmap' d'\leftmap  \in \Hom(\zzc,\zzc'')$,
  $\rightmap  h'\leftmap \in \End(\zzc)$,
   $\leftmap' h'\rightmap' \in \End(\zzc'')$,
  and $L$ induces the usual connecting morphism.
\end{lemma}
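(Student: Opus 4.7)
The plan is to verify the four required conditions (the chain map properties of $R$ and $L$, and the two homotopy identities) directly from the given relations $d'r' = r'd''$, $dr = rd'$, $r'\ell'+\ell r = \one$, $r\ell = \one$, $\ell'r' = \one$, $rr' = 0$, $(h')^2 = 0$, and $d'h' + h'd' = \one$. An equivalent route would be to use the splitting to identify $C' \simeq C''\oplus C$ and put every operator into $2\times 2$ matrix form, but the direct calculation is short enough to make this bookkeeping unnecessary.

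The chain map property of $R$ falls out immediately by sandwiching the homotopy identity:
\[
dR + Rd'' \;=\; drh'r' + rh'r'd'' \;=\; r(d'h'+h'd')r' \;=\; rr' \;=\; 0.
\]
Checking that $L = \ell'd'\ell$ is a chain map is more delicate, because $\ell$ and $\ell'$ are not themselves chain maps; this step is the only one in the proof where $(d')^2 = 0$ is used essentially. The idea is to push $d''$ past $\ell'$: from $r'd''\ell' = d'r'\ell' = d'(\one - \ell r) = d' - d'\ell r$, one computes $r'd''\ell' d' = -d'\ell dr$ using $(d')^2 = 0$ and $rd' = dr$; right-multiplying by $\ell$ and using $r\ell = \one$ gives $r'd''L = -d'\ell d$, and left-multiplying by $\ell'$ with $\ell'r' = \one$ then yields $d''L + Ld = 0$.

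For the homotopy identities, the plan is to expand using $r'\ell' = \one - \ell r$ and then simplify with $h'd' = \one - d'h'$. Concretely,
\[
LR \;=\; \ell'd'\ell r h'r' \;=\; \ell'd'(\one - r'\ell')h'r' \;=\; \ell'd'h'r' - \ell'd'r'\ell'h'r'.
\]
The first summand equals $\one - u''d''$ via $d'h' = \one - h'd'$ and $\ell'r' = \one$; the second equals $d''u''$ via $\ell'd'r' = \ell'r'd'' = d''$; together $LR = \one - u''d'' - d''u''$. The calculation of $RL = \one - du - ud$ is symmetric, starting from the same expansion $r'\ell' = \one - \ell r$ but applied on the other side and using $rd' = dr$. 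The remaining claim, that $L$ induces the usual connecting morphism, is just a matter of unpacking: for $x \in \ker d$, the element $d'\ell x$ lies in $\image r'$ because $r d'\ell x = d r \ell x = dx = 0$, and its unique preimage under $r'$ is $\ell'd'\ell x = Lx$, matching the snake lemma prescription.

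The main obstacle is the chain map property of $L$. It is the only place in the proof where $(d')^2 = 0$ is used in a genuine way, and the only place where one must resist the temptation to treat $\ell$ and $\ell'$ as chain maps. Everything else reduces to routine re-bracketing once the right factorizations are recognized.
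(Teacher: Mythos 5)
Your proof is correct and follows essentially the same strategy as the paper's: direct verification of each identity from the split relations, the chain map property of $r,r'$, and the homotopy relations for $h'$. The paper only writes out the computation showing $Ld = -d''L$ and says the rest is analogous; your version is more complete (you also verify $LR$, $RL$, and the connecting morphism claim), and your route through $r'd''\ell'$ for the $L$ chain map check is a modest rearrangement of the paper's insertion of $r\ell = \one$ followed by $\ell r = \one - r'\ell'$, using the same key facts $(d')^2 = 0$ and the split identities.
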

\begin{proof}
  For example,
    $Ld = \leftmap' d'\leftmap d
    = \leftmap' d'\leftmap d\rightmap  \leftmap 
    = \leftmap' d'\leftmap \rightmap  d'\leftmap 
    = \leftmap' d'(\one-\rightmap' \leftmap' )d'\leftmap 
    = -\leftmap' d'\rightmap' \leftmap' d'\leftmap 
    = -\leftmap' \rightmap' d''\leftmap' d'\leftmap  
    = -d''L$
  and analogously
  $Rd'' = -dR$.
\qed\end{proof}
\begin{lemma}[Homotopy equivalences in Theorem \ref{theorem:ymgrnew}] \label{lemma:qqqiso}
  The isomorphisms in Propositions \ref{prop:ymhom} and \ref{prop:grhom}
  have the property \eqref{claim:qisos}.
\end{lemma}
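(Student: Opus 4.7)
The plan is to apply Lemma \ref{lemma:zz} to the short exact sequences \eqref{eq:ymses} and \eqref{eq:grses} at the algebraic (momentum space) level, exhibiting all auxiliary data --- the splittings $\ell$, $\ell'$ and the contracting homotopy $h'$ of the middle term --- as matrices rational in $k$ and regular on a Zariski open cover of $\C^4 - 0$. The four maps $R$, $L$, $rh'\ell$, $\ell'h'r'$ produced by the lemma will then inherit these properties, and the formula $L = \ell' d' \ell$ realizes the connecting morphism of the long exact sequence in homology, which is precisely the isomorphism used in Propositions \ref{prop:ymhom} and \ref{prop:grhom}.

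The splittings $\ell$, $\ell'$, $r$, $r'$ can be taken $k$-independent. Both short exact sequences split as graded vector spaces in a natural algebraic way: for YM using the decomposition $\Omega^2 = \Omega^2_+ \oplus \Omega^2_-$ together with the identification $\ax^\infty \simeq \Omega^{\leq 2}_+ \oplus \eps I_+$; for GR using a Lorentz equivariant decomposition isolating the unique copy of $\rep{2}{0} \oplus \rep{0}{2}$ inside $\wedge^2 \C^4 \otimes \gr$. None of this depends on $k$, so these four maps are constant matrices.

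The main work is constructing $h'$ with $(h')^2 = 0$ and $d' h' + h' d' = \one$ on the middle term, rational in $k$ and regular on a Zariski neighborhood of any given $k_0 \neq 0$. As established in the proofs of Propositions \ref{prop:ymhom} and \ref{prop:grhom}, in each case the middle differential is lower block triangular, with diagonal blocks that are either a de Rham complex $\Omega$ tensored with a fixed finite-dimensional space, or, for YM, the mapping cone of the identity $\eps I_+ \to I_+$. The mapping cone block admits the $k$-independent square-zero swap homotopy $\eps b \oplus c \mapsto \eps c \oplus 0$. For a de Rham block, pick any $\mu \in \{0,1,2,3\}$ with $k_0^\mu \neq 0$ and use $h_\Omega = \iota_{\p_\mu}/k_\mu$: the graded commutator of the de Rham differential with $\iota_{\p_\mu}$ equals multiplication by $k_\mu$, so $d h_\Omega + h_\Omega d = \one$, while $\iota_{\p_\mu}^2 = 0$ forces $h_\Omega^2 = 0$. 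All entries of $h_\Omega$ have denominators that are powers of $k_\mu$, so $h_\Omega$ is regular on $U_\mu = \{k : k_\mu \neq 0\}$, and the four sets $U_\mu$ cover $\C^4 - 0$.

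To combine the diagonal homotopy $h_0$ with the off-diagonal perturbation $\delta = d' - d_0$, I would invoke the homological perturbation lemma \eqref{eq:hpl}, obtaining $h' = h_0(\one + \delta h_0)^{-1}$. Since $\delta$ strictly raises the block filtration, $\delta h_0$ is nilpotent, so the geometric series is a finite sum of matrices regular on $U_\mu$, and hence $h'$ is regular on $U_\mu$. The step that I expect to warrant the most care is verifying that $(h')^2 = 0$ survives the perturbation; this reduces to the identity $(\one + \delta h_0)^{-1} h_0 = h_0$, valid because every term $(-\delta h_0)^n h_0$ with $n \geq 1$ ends in $h_0 \cdot h_0 = 0$, whence $(h')^2 = h_0 \cdot h_0 \cdot (\one + \delta h_0)^{-1} = 0$. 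The identity $d' h' + h' d' = \one$ then follows from the general HPL conclusion that $(h', 0, 0)$ is a contraction for $d'$. With these ingredients in place, Lemma \ref{lemma:zz} produces the four matrices required by \eqref{claim:qisos}.
\qed
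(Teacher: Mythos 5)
Your argument is correct and uses the same two tools as the paper — Lemma \ref{lemma:zz} with constant splittings, and the HPL \eqref{eq:hpl} — but the HPL is applied in a genuinely different way. The paper picks an arbitrary pointwise complex-valued homotopy $h_{k_0}$ of the middle term at the given $k_0 \neq 0$ (one exists because we are over $\C$), and perturbs by $\delta = d'-d'_{k_0}$, which vanishes at $k_0$; Cramer's rule then makes $h' = h_{k_0}(\one+\delta h_{k_0})^{-1}$ rational with denominator nonvanishing at $k_0$. You instead build an explicit block-diagonal homotopy $h_0$ adapted to the lower-triangular block structure already used in the proofs of Propositions \ref{prop:ymhom} and \ref{prop:grhom} — de Rham blocks via $\iota_{\p_\mu}/k_\mu$ on $U_\mu = \{k_\mu \neq 0\}$, the mapping-cone block via a constant swap — and perturb by the strictly lower-triangular $\delta$, so $\delta h_0$ is nilpotent and the HPL series is a finite sum. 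What the paper's variant buys is generality and brevity: it never needs to know what the middle complex looks like, and it parallels the proof of Theorem \ref{theorem:opthom}. What yours buys is explicitness: a concrete homotopy with denominators that are visibly powers of a single $k_\mu$, rather than an a priori determinant. Both construct the four matrices of \eqref{claim:qisos} locally on a Zariski cover of $\C^4-0$, which suffices. Two minor remarks: the graded commutator of the momentum-space de Rham differential with $\iota_{\p_\mu}$ is $k_\mu$ only up to an irrelevant normalization ($ik_\mu$ in the Fourier convention of \eqref{eq:infalg}); and once the initial contraction $(h_0,0,0)$ satisfies the side conditions, the HPL already guarantees $(h')^2=0$, so your direct check, while correct, is redundant.
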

\begin{proof}
  Apply Lemma \ref{lemma:zz}
  to the short exact sequences \eqref{eq:ymses} and \eqref{eq:grses}.
  Here $\rightmap'$ and $\rightmap$ are matrices with constant
  meaning $k$-independent complex entries,
  hence $\leftmap' $ and $\leftmap $ can be taken to be constant matrices.
  The middle term's differential
  has polynomial entries,
  and it is exact at every $k \neq 0$.
  A homotopy exists pointwise since we are over $\C$.
  Pick a homotopy at the given point $k \neq 0$.
  Use the HPL \eqref{eq:hpl} to extend the homotopy
  to a neighborhood, with rational entries in $k$.
  See the proof of Theorem \ref{theorem:opthom}
   for an application of the HPL, with more details.
\qed\end{proof}
\begin{remark}[Color-ordered amplitudes]
We expect that the $A_\infty$ minimal model for the dgca $\ax$
gives the so-called color-ordered amplitudes.
We have not pursued interesting relations known as Bern-Carrasco-Johansson or BCJ,
and Kawai-Lewellen-Tye or KLT relations, in the physics literature.
\end{remark}

\section{Optimal homotopies} \label{sec:opthom}

We construct homotopies
for complexes $C$ of vector spaces
that depend on a parameter $k \in \C^\npar$.
We assume that the differential $d$ is a matrix
with entries in the polynomial ring
$\C[k] = \C[k_1,\ldots,k_\npar]$
and that there is homology only
along the zero locus of an irreducible polynomial $\Q \in \C[k]$.
Under some assumptions, we construct homotopies with
entries rational in $k$
that degenerate just as much as they have to along $\Q = 0$,
formula \eqref{eq:opthom} below.
The construction is an iterated
application of the homological perturbation lemma, HPL.

This section applies in particular to the complexes $\cc_{\pm h}$
that depend parametrically on the momentum $k \in \C^4$
for which the homology carrying
subvariety $\Q=0$ is the light cone, see Lemmas \ref{lemma:sph}, \ref{lemma:sphdgLa}.
Recall that homotopies play the role of
propagators in YM and GR, and they encode gauge choices.
\step

Consider this situation:
\vskip 4mm
\begin{center}
  \begin{tikzpicture}
  \draw [black] (0,0)
  to[bend angle = 8, bend left] (3,0.5)
  to[bend angle = 12, bend left] (3.5,1.5) node [anchor=west,yshift=-10] {$\Q=0$}
  to[bend angle = 8, bend right] (0.5,1)
  to[bend angle = 12, bend right] cycle;
  \draw [black,->] (1.85,0.85) -- (1.95,-0.5) node [anchor=west,xshift=3] {$\xi$};
  \draw [black,fill=blue] (1.85,0.85) circle (0.07) node [anchor=west,xshift=3] {$q$};
\end{tikzpicture}
\end{center}
Start from a smooth point $q$ of the homology carrying variety $\Q=0$.
To construct a homotopy in a neighborhood of $q$
we need an assumption that
we paraphrase as \emph{the homology disappears to first order
transversal to the variety at $q$}. To state this precisely,
pick any vector $\xi$ transversal to the variety at $q$.
Differentiate $d^2 = 0$ to find,
with a dot denoting a derivative along $\xi$,
\[
  \dot{d}d+d\dot{d}=0 \qquad 
  2\dot{d}^2 + \ddot{d}d + d\ddot{d} = 0
\]
The first implies that
$\dot{d}$ induces a linear map on the homology $\hx_q$ at $q$.
The second implies that this induced map is a differential.
The assumption will be that $\hx_q$,
as a complex with differential induced by $\dot{d}$, is exact.
\begin{definition}[Regular homology point] \label{def:regp}
  Let $C$ be a finite-dimensional vector space
  with a differential $d$ with entries in the polynomial ring $\C[k]$.
  Let $\hx_k$ be the homology at $k \in \C^\npar$.
  Suppose $\Q \in \C[k]$ is irreducible
  and that $\hx_k \neq 0$ implies $\Q(k) = 0$.
  We say that $q \in \C^\npar$ is a regular homology point if:
  \begin{itemize}
    \item $\Q(q) = 0$ and $q$ is a smooth point of the variety $\Q=0$.
    \item $\dim \hx_k = \dim \hx_q$ for all $k$ with $\Q(k) = 0$
      in a Zariski neighborhood of $q$.
    \item There exists a $\xi \in T_q\C^\npar$ such that,
      with a dot denoting a derivative along $\xi$ at $q$,
      we have $\dot{\Q} \neq 0$ and the
      differential on $\hx_q$ induced by $\dot{d}$ is exact.
  \end{itemize}
\end{definition}
\begin{example} Set $\npar=1$ and
    consider $C : 0 \to \C \xrightarrow{k^a} \C \to 0$ and $\Q = k$.
    Then $q=0$ is regular homology point if $a=1$, but not if $a \geq 2$.
\end{example}
\begin{theorem}[Optimal homotopy] \label{theorem:opthom}
  Given a regular homology point $q$,
  there exist matrices $h,i,p,{\dprime},{\hprime}$
  with entries in the field of fractions $\C(k)$
  whose denominators do not vanish at $q$,
  such that over $\C(k)$ we have:
  \begin{itemize}
    \item $h^2 = 0$,\;$hdh=h$,\;$ip = \one-dh-hd$,\;$pi = \one$,
    and at $q$ we have $dhd = d$.
    \item $pdi = \Q {\dprime}$,\;$({\dprime})^2 = ({\hprime})^2 = 0$,\;${\hprime}{\dprime} + {\dprime}{\hprime} = \one$.
  \end{itemize}
  Furthermore $\hoff^2 = 0$ and $\hoff d+d\hoff = \one$ where
      \begin{equation}\label{eq:opthom}
        \hoff = h + \frac{1}{\Q} i{\hprime}p
      \end{equation}
  One can freely choose $h$, $i$, $p$ at $q$
  provided they satisfy the first bullet at $q$.
  If $C$ is graded, 
  everything can be made compatible with the grading.
  \end{theorem}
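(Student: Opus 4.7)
The strategy is a nested application of the homological perturbation lemma (HPL): an outer HPL to propagate a contraction from $q$ to a neighborhood, an inner HPL to build the auxiliary homotopy $\hprime$ on the homology side, and a composition of contractions to assemble $\hoff$. At $q$ the complex $(C, d|_q)$ of finite-dimensional $\C$-vector spaces admits a contraction $(h_0, i_0, p_0)$ to $\hx_q$ satisfying all standard contraction axioms, $h_0^2 = 0$, $h_0 d|_q h_0 = h_0$, $p_0 i_0 = \one$, $i_0 p_0 = \one - d|_q h_0 - h_0 d|_q$, $d|_q h_0 d|_q = d|_q$, plus the side conditions $h_0 i_0 = p_0 h_0 = 0$; these may be chosen freely subject to the axioms, which accounts for the final freedom claim. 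Viewing $\delta(k) = d(k) - d|_q$ as a perturbation, HPL produces
\[
  h = h_0(\one + \delta h_0)^{-1}, \qquad
  i = (\one + h_0 \delta)^{-1} i_0, \qquad
  p = p_0(\one + \delta h_0)^{-1},
\]
a contraction of $(C, d)$ to $(\hx_q, d_\hx)$ with induced differential $d_\hx = p d i$. The inverses are rational matrices regular at $q$ (expanded as geometric series), and HPL preserves all side conditions and the chain-map properties $d i = i d_\hx$, $p d = d_\hx p$. The identity $d h d = d$ holds at $q$ by construction but generally fails away from $q$.

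For every $k$ near $q$ with $\Q(k) = 0$, the assumption that $\dim \hx_k$ is locally constant, combined with the chain-map property of $i$ and $p$, forces these maps to induce mutually inverse isomorphisms on homology, so $d_\hx(k) = 0$. Hence every entry of the rational matrix $d_\hx$, regular at $q$, vanishes on the smooth irreducible hypersurface $\Q = 0$ near $q$; since $\Q$ generates the corresponding prime ideal in the regular local ring $\O_q$, every entry is divisible by $\Q$, and we obtain $d_\hx = \Q \dprime$ with $\dprime$ regular at $q$. Moreover $(\dprime)^2 = 0$, from $(d_\hx)^2 = \Q^2 (\dprime)^2 = 0$ in the domain $\O_q$. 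Differentiating $d_\hx = \Q \dprime$ along $\xi$ at $q$ gives $\dprime|_q = (p_0 \dot d i_0)/\dot \Q$, which by hypothesis is an exact differential on $\hx_q$ rescaled by the nonzero scalar $\dot \Q$. Choose $\hprime_0 \in \End(\hx_q)$ with $\hprime_0^2 = 0$ and $\hprime_0 \dprime|_q + \dprime|_q \hprime_0 = \one$, view this as a trivial contraction of $(\hx_q, \dprime|_q)$ to $(0, 0)$, and HPL-perturb $\dprime|_q$ to $\dprime$. This yields $\hprime = \hprime_0(\one + (\dprime - \dprime|_q) \hprime_0)^{-1}$, regular at $q$, with $(\hprime)^2 = 0$ and $\hprime \dprime + \dprime \hprime = \one$.

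Finally, $\hprime/\Q$ is a contracting homotopy of $(\hx_q, \Q \dprime)$ to the zero complex, since $(\hprime/\Q)(\Q \dprime) + (\Q \dprime)(\hprime/\Q) = \hprime \dprime + \dprime \hprime = \one$. Composing this with the outer contraction via the standard formula $h_{\mathrm{tot}} = h_1 + i_1 h_2 p_1$ produces a contraction of $(C, d)$ to the zero complex whose homotopy is exactly \eqref{eq:opthom}, namely $\hoff = h + (1/\Q) i \hprime p$. The resulting contraction identities read $\hoff d + d \hoff = \one$ and, using the side conditions $h^2 = h i = p h = \hprime^2 = 0$ together with $p i = \one$, also $\hoff^2 = 0$. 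Every step respects any chosen $\Z$-grading. The main obstacle is the divisibility claim $\Q \mid d_\hx$: it combines the local-constancy-of-homology hypothesis with the smoothness of $q$ on $\Q = 0$, and is exactly what forces the singular part of $\hoff$ to have the lowest possible rank $i \hprime p$.
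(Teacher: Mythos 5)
Your proof is correct and follows essentially the same route as the paper's: an outer HPL from a contraction at $q$ to get $(h,i,p)$ regular near $q$, a divisibility argument identifying $pdi = \Q\dprime$, the derivative computation showing $\dprime|_q$ is exact, an inner HPL producing $\hprime$, and assembly of $\hoff$ by composing contractions. The only cosmetic difference is that you extract the factor $\Q$ via the regular local ring $\O_q$ (where $\Q$ generates the prime ideal of the smooth hypersurface) while the paper invokes Hilbert's Nullstellensatz plus irreducibility of $\Q$ — both arguments are equivalent and correct; and where you speak of geometric series, the paper phrases the same regularity claim via Cramer's rule, which is what actually certifies that the entries are rational with denominators nonvanishing at $q$.
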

  In summary,
  $h,i,p$ is a contraction regular at and near $q$;
  the contraction is to $\hx_q$;
  the induced differential is of the form $\Q {\dprime}$ with a regular differential
  ${\dprime}$ that is exact as witnessed by a regular ${\hprime}$;
  and $C$ is exact over $\C(k)$
  as witnessed by a homotopy $\hoff$ that is regular except for an explicit $1/\Q$.

The proof below is in two stages. With reference to \eqref{eq:hip}:
\begin{itemize}
\item {\bf 1st stage.} 
If $k$ is a point close to $q$, then the differential $d$ is a small perturbation of $d_q$.
Therefore starting from a contraction diagram at $q$, it extends to 
a Zariski open neighborhood using the HPL:
\[
\begin{tikzpicture}[baseline=(current  bounding  box.center)]
  \matrix (m) [matrix of math nodes, column sep = 17mm, minimum width = 7mm ]
  {
    C\vphantom{\hx_q} & \hx_q \\
  };
  \path[-stealth]
    (m-1-1) edge [out=15,in=180-15] node [above] {$p$} (m-1-2)
            edge [out=180+25,in=180-25,min distance=8mm] node
                 [left,xshift=-1mm] {$\mathllap{d}$} (m-1-1)
            edge [loop above,out=90+25,in=90-25,min distance=8mm] node
                 [above] {$h$} (m-1-1)
    (m-1-2) edge [out=180+15,in=-15] node [below] {$i$} (m-1-1)
            edge [out=25,in=-25,min distance=8mm] node
                 [right,xshift=1mm] {$\mathrlap{\Q {\dprime}}$} (m-1-2);
\end{tikzpicture}
\]
\item
{\bf 2nd stage.}
The factorization $\Q {\dprime}$ is proved using the Nullstellensatz,
and it is natural to study ${\dprime}$.
The assumptions imply that ${\dprime}$ is exact at $q$,
hence in a neighborhood by the HPL:
\[
\begin{tikzpicture}[baseline=(current  bounding  box.center)]
  \matrix (m) [matrix of math nodes, column sep = 17mm, minimum width = 7mm ]
  {
    \hx_q & 0\vphantom{\hx_q} \\
  };
  \path[-stealth]
    (m-1-1) edge [out=15,in=180-15] node [above] {$0$} (m-1-2)
            edge [out=180+25,in=180-25,min distance=8mm] node
                 [left,xshift=-1mm] {$\mathllap{{\dprime}}$} (m-1-1)
            edge [loop above,out=90+25,in=90-25,min distance=8mm] node
                 [above] {${\hprime}$} (m-1-1)
    (m-1-2) edge [out=180+15,in=-15] node [below] {$0$} (m-1-1)
            edge [out=25,in=-25,min distance=8mm] node
                 [right,xshift=1mm] {$\mathrlap{0}$} (m-1-2);
\end{tikzpicture}
\]
\end{itemize}
Contraction diagrams can be composed, and this gives $\hoff$.

  \begin{proof}
    Denote by $d_q$ the differential at $q$, a matrix with complex entries.
    Such a differential always admits a contraction $h_q,i_q,p_q$
    with $d_qh_qd_q = d_q$, given by matrices with complex entries.
    Set $\delta = d - d_q$ and
    \[
        h = h_q(\one + \delta h_q)^{-1}
        \qquad
        i = (\one + h_q \delta)^{-1} i_q
        \qquad
        p = p_q(\one + \delta h_q)^{-1}
    \]
    Since $\delta$ has entries in $\C[k]$ and vanishes at $q$, Cramer's rule
    implies that $h,i,p$ are matrices with entries in $\C(k)$
    with denominators that do not vanish at $q$.
    The HPL \eqref{eq:hpl} implies that
    $h,i,p$ is a homotopy of $d$ over $\C(k)$
    but not necessarily $dhd = d$, though this does hold at $q$ by construction.
    The homologies of $d$ and $pdi$ coincide at each point.
    The assumption that $k \mapsto \dim \hx_k$
    be constant along $\Q=0$ near $q$
    implies that $pdi$ vanishes along $\Q=0$
    because the dimension of the homology drops at points where $pdi \neq 0$.
    So Hilbert's Nullstellensatz and the irreducibility of $\Q$
    imply $pdi = \Q {\dprime}$ for some matrix ${\dprime}$ with entries in $\C(k)$
    whose denominators do not vanish at $q$.
    Necessarily $({\dprime})^2 = 0$ and we now show that ${\dprimeaux{q}}$ is exact.
    Differentiating $pdi = \Q {\dprime}$ and evaluating at $q$ yields
    \begin{equation}\label{eq:dotequiv}
      p_q\dot{d}i_q = \dot{\Q}{\dprimeaux{q}}
    \end{equation}
    using $p_qd_q = d_qi_q = 0$.
    The differential $p_q\dot{d}i_q$ is
    that on $\hx_q$ induced by $\dot{d}$
    which was assumed exact, so now $\dot{\Q} \neq 0$
    implies that ${\dprimeaux{q}}$ is exact.
    Hence it admits a homotopy ${\hprimeaux{q}}$ with
    $({\hprimeaux{q}})^2 = 0$ and ${\hprimeaux{q}}{\dprimeaux{q}} + {\dprimeaux{q}}{\hprimeaux{q}} = \one$.
    Set
    \[
        {\hprime} = h_q'(\one + ({\dprime}-{\dprimeaux{q}}) h_q')^{-1}
    \]
    By Cramer's rule,
    ${\hprime}$ has entries in $\C(k)$ with denominators that do not vanish at $q$,
    and the HPL \eqref{eq:hpl}
    implies $({\hprime})^2 = 0$ and ${\hprime}{\dprime}+{\dprime}{\hprime} = \one$.
    Define $\hoff$ by \eqref{eq:opthom}.
    It is now immediate that it has all the required properties.
 \qed\end{proof}
\begin{lemma}[Canonical exact differential on homology] \label{lemma:uniqd}
	The map
	$i {\dprime} p$ induces an exact differential on $\hx_q$,
	at all regular homology points $q$ where it is regular.
	It is equivalently induced by
	$\dot{d}/\dot{\Q}$ for all transversal $\xi\in T_q\C^{\npar}$.
\end{lemma}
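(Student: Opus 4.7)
The plan is to read the lemma off from the proof of Theorem \ref{theorem:opthom}, in particular from the intermediate identity \eqref{eq:dotequiv} established there, and to unpack its content as a statement about the canonical differential on $\hx_q$.

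First, I would check that for every transversal $\xi$ the operator $p_q\dot d\,i_q$ really is a differential on $\hx_q$. Differentiating $d^2 = 0$ once gives $\dot d\, d + d\,\dot d = 0$; differentiating again and evaluating at $q$ yields $2\dot d^2 + d_q\ddot d + \ddot d\,d_q = 0$. Conjugating this by $i_q$ on the right and $p_q$ on the left and using the contraction identities $d_q i_q = 0$ and $p_q d_q = 0$ gives $(p_q\dot d\,i_q)^2 = 0$, so it is indeed a differential on $\hx_q$.

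Next, differentiating the identity $pdi = \Q\,{\dprime}$ from Theorem \ref{theorem:opthom} along $\xi$ at $q$ reproduces \eqref{eq:dotequiv}:
\[
  p_q\dot d\,i_q \;=\; \dot{\Q}\,{\dprimeaux{q}},
\]
the terms with $\dot p$, $\dot i$, and $\dot{\dprime}$ dropping out by $d_q i_q = 0$, $p_q d_q = 0$, and $\Q(q) = 0$ respectively. Dividing by the nonzero scalar $\dot\Q$ gives $(p_q\dot d\,i_q)/\dot{\Q} = {\dprimeaux{q}}$, whose right-hand side does not depend on the choice of $\xi$. This proves the $\xi$-independence of the differential on $\hx_q$ induced by $\dot d/\dot\Q$, and matches it with the one induced by $i\,{\dprime}\,p$: since $(i,p)$ is a contraction to $\hx_q$ at $q$, the map $i\,{\dprime}\,p$ evaluated at $q$ equals $i_q\,{\dprimeaux{q}}\,p_q$ and therefore induces the differential ${\dprimeaux{q}}$ on $\hx_q$.

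Exactness then follows from the regular homology point hypothesis: by Definition \ref{def:regp} there is at least one transversal $\xi$ for which the differential on $\hx_q$ induced by $\dot d$ is exact. By the $\xi$-independence just proved (up to the nonzero rescaling $\dot{\Q}$), ${\dprimeaux{q}}$ is exact; equivalently, the homotopy ${\hprime}$ built in the second stage of the proof of Theorem \ref{theorem:opthom} restricts at $q$ to a witness of this exactness. I do not expect a serious obstacle here: the lemma is essentially a repackaging of \eqref{eq:dotequiv}, and the only care needed is bookkeeping the vanishing terms when differentiating $pdi = \Q\,{\dprime}$ at $q$.
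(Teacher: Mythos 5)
Your proof is correct and takes the same approach as the paper, which compresses the whole argument into a single sentence ("By \eqref{eq:dotequiv}, which holds for all $\xi$"); you have simply unpacked that sentence, including the bookkeeping of vanishing terms when differentiating $pdi = \Q\,\dprime$, and closed the loop with the regular-homology-point hypothesis for exactness.
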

\begin{proof}
By \eqref{eq:dotequiv}, which holds for all $\xi $.
\qed\end{proof}
\begin{lemma}[Trivial homotopy away from $\Q=0$] \label{lemma:trivhom}
  If $\Q(q) \neq 0$ so that $\hx_q=0$, then there exists a matrix $h$
  with entries in $\C(k)$
  whose denominators do not vanish at $q$,
  such that over $\C(k)$ we have $h^2 = 0$ and $dh+hd = \one$.
\end{lemma}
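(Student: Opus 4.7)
The plan is to reuse the first stage of the proof of Theorem \ref{theorem:opthom}, which simplifies drastically in the off-shell case because the target homology is zero. Since $\Q(q) \neq 0$ and homology is supported on $\Q = 0$, the complex $(C, d_q)$ is exact. Over $\C$, pick any homotopy $h_q$ with $h_q^2 = 0$ and $d_q h_q d_q = d_q$; exactness then forces $d_q h_q + h_q d_q = \one$, so the corresponding contraction $(h_q, i_q, p_q)$ has target $\hx_q = 0$ and $i_q = p_q = 0$.

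Next, I apply the HPL with the perturbation $\delta = d - d_q$, whose entries lie in $\C[k]$ and vanish at $q$. Following \eqref{eq:hpl}, define
\[
h = h_q(\one + \delta h_q)^{-1}.
\]
Since $(\one + \delta h_q)|_q = \one$ has determinant one, Cramer's rule shows $h$ has entries in $\C(k)$ with denominators non-vanishing at $q$. The HPL guarantees $h^2 = 0$, and since the perturbed contraction still has $i' = (\one + h_q\delta)^{-1} i_q = 0$ and $p' = p_q(\one + \delta h_q)^{-1} = 0$, the contraction relation $i'p' = \one - dh - hd$ degenerates to $dh + hd = \one$ over $\C(k)$.

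The only thing that could be subtle is the existence of $h_q$ at the first step, but this is immediate: an exact complex of finite-dimensional vector spaces over $\C$ admits such a homotopy, obtained e.g.\ by picking graded complements of $\ker d_q$ and using them to define $h_q$ as the inverse of $d_q$ between suitably identified complementary subspaces (satisfying $d_q h_q d_q = d_q$, $h_q^2 = 0$). So no real obstacle arises; this lemma is essentially the ``second stage vacuous'' specialization of Theorem \ref{theorem:opthom}, with the $\tfrac{1}{\Q}i{\hprime}p$ correction term absent because $\hx_q = 0$.
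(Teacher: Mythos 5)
Correct, and this is precisely what the paper's one-line proof ("Again, an application of the HPL.") refers to: apply the HPL to a contraction at $q$ exactly as in the first stage of Theorem \ref{theorem:opthom}, observing that $i_q = p_q = 0$ because $\hx_q = 0$, so the relation $i'p' = \one - dh - hd$ degenerates to $dh + hd = \one$. Your elaboration of the paper's terse proof is accurate.
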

\begin{proof} Again, an application of the HPL.
\qed\end{proof}
\begin{lemma}[Regular homology point stable under homotopy equivalence] \label{lemma:rhpx}
Suppose $C$, $C'$ are two complexes with homology only along $\Q = 0$
with $\Q$ irreducible.
Suppose $\Q(q) = 0$ and there exists a homotopy equivalence
$C \leftrightarrow C'$ given by four matrices with entries in $\C(k)$, regular at $q$.
Then
$q$ is a regular homology point of $C$ 
if and only if
$q$ is a regular homology point of $C'$.
\end{lemma}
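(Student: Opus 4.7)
The first condition in Definition \ref{def:regp}, smoothness of $\{\Q=0\}$ at $q$, is intrinsic to the variety and requires no argument. Since homotopy equivalence is symmetric, it suffices to transfer the other two conditions in one direction. Throughout, let $R,L,\witness,\witness'$ be the four matrices witnessing the homotopy equivalence $C \leftrightarrow C'$, so that $R,L$ are chain maps with $LR = \one - d\witness-\witness d$ and $RL = \one-d'\witness'-\witness'd'$.

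Because $R,L,\witness,\witness'$ are all regular at $q$, they are regular on a common Zariski open neighborhood $U$ of $q$. At every $k \in U$ the chain maps $R_k, L_k$ descend to mutually inverse isomorphisms $R_{k,\ast}: \hx_k \to \hx'_k$, so $\dim \hx_k = \dim \hx'_k$ throughout $U$. Intersecting $U$ with a Zariski neighborhood on which $k \mapsto \dim \hx_k$ is constant along $\{\Q=0\}$ yields a neighborhood with the same property for $k \mapsto \dim \hx'_k$, which transfers the second bullet of Definition \ref{def:regp}.

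For the third condition, take the same transversal $\xi \in T_q\C^\npar$ that witnesses regularity for $C$; the condition $\dot{\Q}\neq 0$ only concerns $\Q$ and so survives. Denoting the derivative along $\xi$ at $q$ by a dot, I would differentiate the chain map identity $Rd=d'R$, giving
\[
R_q\dot{d} - \dot{d}' R_q \;=\; d'_q \dot{R} - \dot{R} d_q.
\]
Applied to a $d_q$-cycle $x$, the right hand side equals $d'_q(\dot{R}x) \in \image d'_q$, so passing to homology the induced isomorphism $R_\ast : \hx_q \to \hx'_q$ satisfies $R_\ast\,[\dot{d}] = [\dot{d}']\, R_\ast$. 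Since $R_\ast$ is an isomorphism, the differential $[\dot{d}]$ on $\hx_q$ is exact if and only if the differential $[\dot{d}']$ on $\hx'_q$ is exact, which is the third bullet of Definition \ref{def:regp} transferred between $C$ and $C'$.

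The only genuine computation is the intertwining identity in the last paragraph; everything else is bookkeeping about where matrix-valued rational functions are regular. I do not anticipate any serious obstacle, as this is the standard homotopy invariance of the first-order structure that a one-parameter family of differentials induces on homology, packaged in a form suited to Definition \ref{def:regp}.
\qed
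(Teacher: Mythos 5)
Your proof is correct, but it routes around the paper's argument in a meaningful way. The paper first invokes Theorem \ref{theorem:opthom} to replace $(C,d)$ near $q$ by the equivalent contraction $(\hx_q,\Q\dprime)$, then composes contractions to obtain a homotopy equivalence $(\hx_q,\Q\dprime) \leftrightarrow (C',d')$, differentiates the relation $Ld'R = \Q\dprime(\one - \Q\witness\dprime - \Q\dprime\witness)$, and reads off $L\dot d' R = \dot\Q\,\dprime$ at $q$; exactness of $[\dot d']$ then comes from exactness of $\dprime$. You instead differentiate the raw chain-map identity $Rd = d'R$ directly at $q$ and observe that $R_\ast$ intertwines the induced differentials $[\dot d]$ and $[\dot d']$ on homology; since $R_\ast$ is an isomorphism (from the homotopy-equivalence identities evaluated at $q$), the two complexes $(\hx_q,[\dot d])$ and $(\hx'_q,[\dot d'])$ are simultaneously exact. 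Your version is more self-contained — it does not need Theorem \ref{theorem:opthom} or the explicit $\Q\dprime$ factorization — and it also treats the second (dimension-constancy) bullet of Definition \ref{def:regp} explicitly, which the paper leaves implicit. The paper's route has the minor advantage of producing a concrete matrix identity $L\dot d' R = \dot\Q\,\dprime$, which feeds back into Lemma \ref{lemma:uniqd}'s picture of $\dprime$; but as a proof of the stated lemma both arguments are complete and correct.
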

\begin{proof}
We work with matrices
whose entries are in $\C(k)$ with denominators that do not vanish at $q$.
If $q$ is a regular homology point for $C$,
apply Theorem \ref{theorem:opthom} to it.
Composition yields a homotopy equivalence $(\hx_q,\Q\dprime) \leftrightarrow (C',d')$
by matrices $R,L,\witness,\witness'$ with
$d'R = \Q R \dprime$
and $L d' = \Q \dprime L$
and $RL = \one - \witness'd'-d'\witness'$
and $LR = \one - \Q \witness\dprime - \Q \dprime \witness$.
Hence
$Ld' R = \Q \dprime (\one - \Q \witness \dprime - \Q \dprime \witness)$.
The derivative along $\xi \in T_q\C^\npar$ yields
$L\smash{\dot{d}'} R = \dot{\Q} \dprime$
since $L d' = d' R = \Q = 0$ at $q$.
Since $R$, $L$ are quasi-isomorphisms,
the differential that $\smash{\dot{d}'}$
induces on homology is isomorphic to $\dot{\Q} \dprime$ hence exact.
So $q$ is a regular homology point of $C'$.
\qed\end{proof}
\begin{lemma}[Single particle homology,
              cf.~Lemma \ref{lemma:sph2}] \label{lemma:sph}
  The complex $\cc_{\pm h}$ has no homology if $\Q = ad-bc \neq 0$.
  If $\Q(q) = 0$ but $q \neq 0$ then
  \[
    \hx_q^i \simeq \begin{cases} \C & \text{if $i=1,2$}\\
      0 & \text{if $i \neq 1,2$}
    \end{cases}
  \]
  and every such point $q$ is a regular homology point, so Theorem \ref{theorem:opthom} applies.
\end{lemma}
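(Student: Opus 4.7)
The strategy is to treat $\cc_{+h}$ in detail; the case $\cc_{-h}$ follows by transposing $k$, which swaps the two $\sl$ factors. In the notation of \eqref{eq:diffk}, set $A = k^T\eps$ and $B = \eps k \eps$, both of determinant $\Q$; a direct calculation gives $BA = -\Q \eps$, which reproves $d^2 = 0$.

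For off-shell exactness ($\Q \neq 0$), both $A$ and $B$ are invertible, so I apply the chain isomorphism $\one \otimes A^{-1}$ on the middle space $S^{2h-1}\C^2 \otimes \C^2$. Under this, $d_1$ becomes the symmetrization $\mu: z^{\otimes 2h} \mapsto z^{\otimes 2h-1} \otimes z$, and $d_2$ becomes $-\Q$ times the $\eps$-contraction $z^{\otimes 2h-1} \otimes w \mapsto (z^T \eps w)\, z^{\otimes 2h-2}$. The resulting sequence $0 \to S^{2h}\C^2 \xrightarrow{\mu} S^{2h-1}\C^2 \otimes \C^2 \to S^{2h-2}\C^2 \to 0$ is a standard short exact sequence (injectivity of $\mu$, surjectivity of the contraction, zero composition, and matching dimensions), so $\cc_{+h}$ is exact whenever $\Q \neq 0$.

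For on-shell points $q \neq 0$, I use the spinor-helicity decomposition $q = \lambda \tilde\lambda^T$ with $\lambda,\tilde\lambda \in \C^2 - 0$. Then $d_1(z^{\otimes 2h}) = (\lambda^T\eps z)\, z^{\otimes 2h-1} \otimes \tilde\lambda$ and $d_2(z^{\otimes 2h-1} \otimes w) = (\lambda^T\eps z)(\tilde\lambda^T\eps w)\, z^{\otimes 2h-2}$. Introduce the auxiliary map $\sigma: S^{2h}\C^2 \to S^{2h-1}\C^2$ with $\sigma(z^{\otimes 2h}) = (\lambda^T\eps z)\, z^{\otimes 2h-1}$; after normalizing so that $\lambda = (1,0)$, $\sigma$ is a scalar multiple of $\partial_y$ on $\C[x,y]_{2h}$, whose kernel is $\C x^{2h}$ and which surjects onto $\C[x,y]_{2h-1}$ (the analogous claim holds for $\sigma': S^{2h-1}\C^2 \to S^{2h-2}\C^2$). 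This gives $\ker d_1 = \C\lambda^{\otimes 2h}$ so $\hx_q^1 \simeq \C$; and $\ker d_2 = \C\lambda^{\otimes 2h-1} \otimes \C^2 + S^{2h-1}\C^2 \otimes \C\tilde\lambda$ of dimension $2h+1$, while $\image d_1 = S^{2h-1}\C^2 \otimes \C\tilde\lambda$ of dimension $2h$, yielding $\hx_q^2 \simeq \C$ with generator $\lambda^{\otimes 2h-1} \otimes w$ for any $w \notin \C\tilde\lambda$.

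To verify that $q$ is a regular homology point in the sense of Definition \ref{def:regp}, note that the cone is smooth away from the origin, and the previous paragraph gives constancy of $\dim \hx_k$ along the cone in a neighborhood of any $q \neq 0$. For the transversality condition, choose $\dot k$ so that both $\dot \Q = \tr(\mathrm{adj}(q)\, \dot k) \neq 0$ and $\dot k^T \eps \lambda \notin \C\tilde\lambda$; these are two generic open conditions on $\dot k$ and are jointly satisfiable for every $q \neq 0$. Then $\dot d_1$ sends the generator $\lambda^{\otimes 2h}$ of $\hx_q^1$ to $\lambda^{\otimes 2h-1} \otimes \dot k^T\eps\lambda$, whose class in $\hx_q^2$ is nonzero by the second condition; since $\hx_q^1$ and $\hx_q^2$ are each one-dimensional, the induced map is an isomorphism and hence exact. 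The main difficulty is the polarization bookkeeping identifying $\sigma$ with a partial derivative and verifying its surjectivity; once that is settled, the Euler-characteristic count forces the remaining dimension equalities, and the regular-homology-point check then reduces to an explicit choice of $\dot k$.
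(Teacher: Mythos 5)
Your proof is correct, and it takes a genuinely different route from the paper's. The paper's proof (all of two sentences) reduces the off-shell and on-shell claims to a single representative point of each orbit of the automorphism group of the cone, namely $(a,b,c,d) = (1,0,0,1)$ and $(1,0,0,0)$ with $\xi = (0,0,0,1)$, and then checks directly against the explicit matrices \eqref{eq:ccm}. Your argument instead works uniformly at every point: off-shell you use the factorization $BA = -\Q\eps$ to conjugate the complex into a classical short exact Koszul-type sequence $0\to S^{2h}\C^2\to S^{2h-1}\C^2\otimes\C^2\to S^{2h-2}\C^2\to 0$, and on-shell you compute kernels and images explicitly via the spinor decomposition $q=\lambda\tilde\lambda^T$ and the polarization maps $\sigma,\sigma'$, finishing with an explicit generator of each $\hx_q^i$. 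What the paper's approach buys is brevity; what yours buys is transparency and generators of the homology in spinor form (which are in fact used implicitly elsewhere, e.g.~in Lemma \ref{lemma:sph2}). One small remark: your two conditions on $\dot k$ are not independent. At $q=\lambda\tilde\lambda^T$ one has $\dot\Q = \tr(\mathrm{adj}(q)\,\dot k) = -\lambda^T\eps\dot k\eps\tilde\lambda$, while the condition $\dot k^T\eps\lambda\notin\C\tilde\lambda$ is $\tilde\lambda^T\eps\dot k^T\eps\lambda\neq 0$, and these two scalars are equal (transpose a scalar, use $\eps^T=-\eps$ twice). So any transversal $\xi$ with $\dot\Q\neq 0$ automatically makes $\dot d_1$ an isomorphism on homology, which is consistent with Lemma \ref{lemma:uniqd}'s statement that the normalized induced differential is independent of $\xi$. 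Stating this makes the "regular homology point" verification even cleaner.
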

\begin{proof}
It suffices to check this for one point
in every orbit of the automorphism group of the cone $\Q = 0$.
Convenient are $(a,b,c,d) = (1,0,0,1)$
as well as $(a,b,c,d) = (1,0,0,0)$ and $\xi = (0,0,0,1)$.
Use \eqref{eq:ccm}.
\qed\end{proof}
\newcommand{\sleft}{{\cc}}
\newcommand{\sright}{\vecx}
\begin{lemma}[Single particle homology for the dgLa] \label{lemma:sphdgLa}
Let $h=1,2$.
Take any dgLa as in Theorem \ref{theorem:ymgrnew},
viewed as a vector space $\vecx$ with differential polynomial in $k$.
Use $\cc$ as a shorthand for \eqref{eq:hxx}.
Let $\Q = ad-bc$ and pick any $q \neq 0$ with $\Q(q) = 0$.
Then $q$ is a regular homology point of $\vecx$ and $\cc$,
and there exists a diagram of homotopy equivalences,
commutative up to homotopy equivalence,
by matrices with entries in $\C(k)$ whose denominators are nonzero at $q$:
\newcommand{\bax}{2pt}
\begin{equation}\label{eq:sqhdgLa}
\begin{tikzpicture}[baseline=(current  bounding  box.center)]
  \matrix (m) [matrix of math nodes, column sep = 35mm, row sep = 11mm, minimum width = 7mm, minimum height = 8mm ]
  {
    \sleft & \sright \\
    (\hx_q^\sleft,\Q \dprime^\sleft) & (\hx_q^{\sright}, \Q \dprime^{\sright})\\
  };
  \path[-stealth]
    (m-1-1) edge [transform canvas={yshift=\bax}] node [above] {$R$} (m-1-2)
    (m-1-2) edge [transform canvas={xshift=\bax}] node [right] {$p^{\sright}$} (m-2-2)
    (m-2-2) edge [transform canvas={yshift=-\bax}] node [below] {$L' = p^\sleft Li^{\sright}$} (m-2-1)
    (m-2-1) edge [transform canvas={xshift=-\bax}] node [left] {$i^\sleft$} (m-1-1)
    (m-1-2) edge [transform canvas={yshift=-\bax}] node [below] {$L$} (m-1-1)
    (m-2-2) edge [transform canvas={xshift=-\bax}] node [left] {$i^{\sright}$} (m-1-2)
    (m-2-1) edge [transform canvas={yshift=\bax}] node [above] {$R' = p^{\sright} Ri^\sleft$} (m-2-2)
    (m-1-1) edge [transform canvas={xshift=\bax}] node [right] {$p^\sleft$} (m-2-1);
\end{tikzpicture}
\end{equation}
Here:
\begin{itemize}
\item The homotopy equivalence $R,L$ is as in \eqref{claim:qisos}.
\item The vertical contractions are two separate applications of Theorem \ref{theorem:opthom}.
\item The homotopy equivalence $R',L'$ is defined by composition, as indicated.
\end{itemize}
The matrices $R',L'$ are also a homotopy equivalence of the exact complexes
\[
(\hx_q^\sleft,\dprime^\sleft) \leftrightarrow (\hx_q^{\sright}, \dprime^{\sright})
\]
and $\hprime^\sleft L' = L' \hprime^{\sright}$ and $R' \hprime^\sleft = \hprime^{\sright} R'$.
Along $\Q = 0$ we have $R'L' = L'R' = \one$.
\end{lemma}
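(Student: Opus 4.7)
The plan is to assemble the diagram \eqref{eq:sqhdgLa} from two independent applications of Theorem \ref{theorem:opthom}, glued across the homotopy equivalence supplied by Theorem \ref{theorem:ymgrnew}, and then to extract the remaining identities using that $\hx_q$ lives in only two adjacent degrees. First I verify that $q$ is a regular homology point of both $\sleft$ and $\sright$: Lemma \ref{lemma:sph} handles each summand $\cc_{\pm h}$ of $\cc$ directly (the tensor with $\ym$ in the YM case is harmless), and property \eqref{claim:qisos} of Theorem \ref{theorem:ymgrnew} together with Lemma \ref{lemma:rhpx} then transports this regularity to $\vecx$. Applying Theorem \ref{theorem:opthom} independently on each side produces contractions $(h^\sleft, i^\sleft, p^\sleft)$, $(h^\sright, i^\sright, p^\sright)$ and induced data $(\dprime^\sleft, \hprime^\sleft)$, $(\dprime^\sright, \hprime^\sright)$, all regular at $q$.

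With $R' := p^\sright R i^\sleft$ and $L' := p^\sleft L i^\sright$, the pair $R', L'$ realizes the composite $(\hx_q^\sleft, \Q\dprime^\sleft) \leftrightarrow \cc \leftrightarrow \vecx \leftrightarrow (\hx_q^\sright, \Q\dprime^\sright)$ of three homotopy equivalences, so it is itself one. Commutativity of the square in \eqref{eq:sqhdgLa} up to homotopy follows from $ip = \one - dh - hd$ together with the auxiliary relations $ph = hi = pdh = 0$ and $pd = \Q\dprime p$ (all derivable from the contraction axioms and $pdi = \Q\dprime$), which rewrite, for example, $R' p^\sleft - p^\sright R$ in the form $\Q\dprime^\sright \eta + \eta d^\sleft$ with $\eta = -p^\sright R h^\sleft$. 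Factoring the nonzero $\Q \in \C(k)$ out of the chain map relation $R' \Q\dprime^\sleft = \Q\dprime^\sright R'$ gives $R' \dprime^\sleft = \dprime^\sright R'$, and analogously for $L'$, so $R', L'$ are chain maps between the exact complexes $(\hx_q^\sleft, \dprime^\sleft) \leftrightarrow (\hx_q^\sright, \dprime^\sright)$. Exactness supplies the null-homotopy $\witness := \hprime^\sleft(\one - L'R')$, which realizes $\one - L'R' = \dprime^\sleft \witness + \witness \dprime^\sleft$ by a one-line expansion using $\hprime^\sleft \dprime^\sleft + \dprime^\sleft \hprime^\sleft = \one$ and the chain-map property, and symmetrically for $R'L'$.

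The commutations $R' \hprime^\sleft = \hprime^\sright R'$ and $\hprime^\sleft L' = L' \hprime^\sright$ are forced by the shape of $\hx_q$. By Lemma \ref{lemma:sph} the homology is concentrated in degrees $1$ and $2$ only, so $\dprime : \hx^1 \to \hx^2$ is an isomorphism and $\hprime$ is its unique inverse in degree $2 \to 1$. Then on $\hx_\sleft^2$ one computes $\dprime^\sright R' \hprime^\sleft = R' \dprime^\sleft \hprime^\sleft = R'$ and $\dprime^\sright \hprime^\sright R' = R'$; injectivity of $\dprime^\sright$ on $\hx_\sright^1$ forces $R' \hprime^\sleft = \hprime^\sright R'$ there, while both sides vanish on $\hx_\sleft^1$ for degree reasons. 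The same argument applies to $L'$.

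Finally, the composite homotopy equivalence $(\hx_q^\sleft, \Q\dprime^\sleft) \leftrightarrow (\hx_q^\sright, \Q\dprime^\sright)$ comes with witnesses $\witness, \witness'$ regular at $q$ satisfying $L' R' = \one - \Q\dprime^\sleft \witness - \witness \Q\dprime^\sleft$ and $R' L' = \one - \Q\dprime^\sright \witness' - \witness' \Q\dprime^\sright$, which collapse to $\one$ on setting $\Q = 0$. The main subtlety I anticipate is the bookkeeping around the identification $\hprime = \dprime^{-1}$: this clean identification, which hinges on $\hx_q$ being concentrated in two adjacent degrees, is what makes the $\hprime$-commutations hold on the nose rather than only up to homotopy, and it obviates any need to choose the Theorem \ref{theorem:opthom} data on the $\vecx$ side compatibly with that on the $\cc$ side.
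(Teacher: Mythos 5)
Your proposal is correct and follows essentially the same route as the paper's proof: regular-homology-point via Lemmas \ref{lemma:sph} and \ref{lemma:rhpx}, the composite homotopy equivalence, canceling the scalar $\Q$ in the chain-map relations to get the $\dprime$-commutations, then exploiting concentration in degrees $1,2$ to promote these to $\hprime$-commutations and to conclude $R'L'=L'R'=\one$ along $\Q=0$. You merely spell out details the paper leaves implicit (the explicit homotopy $\eta=-p^{\sright}Rh^{\sleft}$ for the square, and the $\hprime=(\dprime)^{-1}$ bookkeeping), which is sound.
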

\begin{proof}
The first part is by Lemmas \ref{lemma:rhpx}, \ref{lemma:sph}.
Since composing homotopy equivalences yields a homotopy equivalence,
$\smash{\Q \dprime^\sleft L'} = \smash{\Q L' \dprime^{\sright}}$,
$\smash{\Q R' \dprime^\sleft} = \smash{\Q \dprime^{\sright} R'}$.
Canceling $\Q$ we get
$\smash{\dprime^\sleft L'} = \smash{L' \dprime^{\sright}}$,
$\smash{R' \dprime^\sleft} = \smash{\dprime^{\sright} R'}$.
Since there is nothing outside degrees $1,2$ by Lemma \ref{lemma:sph},
conclude
$\smash{\hprime^\sleft L'} = \smash{L' \hprime^{\sright}}$,
$\smash{R' \hprime^\sleft} = \smash{\hprime^{\sright} R'}$.
The homotopy equivalence also yields matrices $\witness,\witness'$ such that
$L'R' = \one - \smash{\Q(\dprime^\sleft \witness + \witness \dprime^\sleft)}$,
$R'L' = \one - \smash{\Q(\dprime^{\sright} \witness' + \witness' \dprime^{\sright})}$
so $R'$, $L'$ are inverses when $\Q = 0$.
\qed\end{proof}

\section*{Part II}
\addcontentsline{toc}{section}{\rule{0pt}{14pt}Part II}
This concerns characterizing tree scattering amplitudes without reference to Feynman graphs,
in algebro-geometric language.
\section[The variety of kinematically admissible complex momenta]{%
         The variety of kinematically admissible \\ complex momenta}
         \label{sec:variety}

The variety for $N=n+1$ momenta
is a direct product of $N$ light cones,
intersected with a codimension 4 plane that
implements momentum conservation.
We discuss its geometry emphasizing the Hartogs phenomenon,
and list the irreducible codimension one subvarieties (prime divisors) 
where amplitudes can have residues.
By convention, variety does not mean irreducible variety.
\step
The complex light cone for the momentum $k = (a,b,c,d)$
is the 3-dimensional irreducible
quadratic variety $\zl(ad-bc) \subset \C^4$.
The singular locus is the origin and has codimension 3.
Viewed as a matrix with vanishing determinant,
\[
  k = \begin{pmatrix} a & b \\ c & d \end{pmatrix}
\]
each $k$ on the cone is non-uniquely a product $k = vw^T$
of vectors $v,w \in \C^2$ called spinors.
This is a surjective morphism of varieties
\begin{equation}\label{eq:cc}
  \C^4 \to \zl(ad-bc),\qquad (v,w) \mapsto k = vw^T
\end{equation}
The preimage of the origin is the union
of the 2-planes $v=0$ and $w=0$.
\vskip 6mm
\begin{center}
  \begin{tikzpicture}[scale=0.95]
    \draw [dotted] (-3,-1.5) rectangle (1,1.5);
\draw (-1,-1.5) node [anchor=north,yshift=-10,align=center]
    {\footnotesize 4-dimensional affine space $\C^4$ of `spinors',\\
     \footnotesize the two lines indicate the two 2-planes\\
     \footnotesize that are mapped to the origin of the cone};
\draw [very thick,blue] (-3,0) -- (1,0);
\draw [very thick,blue] (-1,-1.5) -- (-1,1.5);
\draw (5,-1.5) -- (7,1.5);
\draw (7,-1.5) -- (5,1.5);
\draw (6,-1.5) ellipse (1 and 0.2);
\draw (6,-1.5) node [anchor=north,yshift=-10,align=center]
   {\footnotesize real projection of $\zl(ad-bc)$,\\
    \footnotesize the origin is the singular locus};
\draw (6,1.5) ellipse (1 and 0.2);
\draw [black,fill=blue] (6,0) circle (0.07);
\end{tikzpicture}
\end{center}
\vskip 2mm
There is a corresponding map on coordinate rings.
Namely the variable replacement rule
$k \mapsto vw^T$
induces an injective map of $\C$-algebras
    \[
      \C[k]/(ad-bc) \to \C[v,w]
    \]
Here $k$ is a shorthand for four variables
and $v,w$ are shorthands for two variables each.
This map embeds a prototypical non-UFD into a UFD.
The image consists of precisely the elements that are invariant
under the algebra automorphisms
$v \mapsto \lambda v$ and
$w \mapsto \lambda^{-1}w$
for all $\lambda\in \C^\times$.
The next lemma, that we will not actually use,
introduces Hartogs extension as a theme.
\begin{lemma}[Hartogs extension for the cone] \label{lemma:hc}
  A section of the structure sheaf of $\zl(ad-bc)$
  on the complement of the origin $k=0$
  extends uniquely to a global section,
  hence is the restriction of a polynomial in $k$ to the cone.
\end{lemma}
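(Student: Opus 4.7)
The plan is to reduce to classical Hartogs extension on affine space $\C^4$ via the spinor parametrization $\pi \colon \C^4 \to \zl(ad-bc)$, $(v,w) \mapsto vw^T$. Uniqueness is immediate: $\zl(ad-bc)$ is irreducible and $\zl(ad-bc)\setminus\{0\}$ is a nonempty Zariski open subset, so the restriction map $\O(\zl(ad-bc)) \to \O(\zl(ad-bc)\setminus\{0\})$ is injective. The content is therefore existence.

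Given a section $f \in \O(\zl(ad-bc)\setminus\{0\})$, I would first pull it back via $\pi$ to obtain a regular function $\tilde f = \pi^* f$ on $U = \pi^{-1}(\zl(ad-bc)\setminus\{0\}) = \C^4 \setminus (\{v=0\}\cup\{w=0\})$. The removed closed subset is a union of two $2$-planes in $\C^4$, hence of codimension two. Next, I would invoke the algebraic Hartogs statement for $\C^4$: any regular function defined on the complement of a closed subset of codimension at least two in $\C^4$ extends uniquely to a polynomial. Concretely, $U$ is covered by the four principal opens $D(v_i) \cap D(w_j)$ for $i,j \in \{1,2\}$, on each of which $\tilde f$ is a localization of a polynomial, and the UFD property of $\C[v,w]$ forces the candidate common denominators to be units, producing an extension $\tilde F \in \C[v,w]$.

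Finally, I would descend to the cone using the standard identification of $\pi$ as the affine GIT quotient for the $\C^\times$-action $(v,w) \mapsto (\lambda v, \lambda^{-1} w)$, already noted in the excerpt as the statement that $\C[v,w]^{\C^\times}$ is the image of $\C[k]/(ad-bc) \hookrightarrow \C[v,w]$. Since $\tilde f$ is $\C^\times$-invariant on the dense open $U$, the extension $\tilde F$ is $\C^\times$-invariant on all of $\C^4$ by density, hence lies in the invariant subring, giving a polynomial on the cone that restricts to $f$. I do not expect a genuine obstacle: the algebraic Hartogs step on $\C^4$ is textbook (it is a regular UFD, so depth two suffices), and the descent is routine bookkeeping with the invariant ring description of the cone; an alternative one-line argument via normality (the hypersurface $\zl(ad-bc)$ is Cohen-Macaulay with singular locus of codimension three, so satisfies Serre's $S_2$) is available but the spinor proof matches the concrete setup already introduced.
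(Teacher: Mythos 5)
Your proof is correct and follows essentially the same route as the paper: pull back along the spinor map to $\C^4\setminus(\{v=0\}\cup\{w=0\})$, apply classical Hartogs on $\C^4$, and descend via $\C^\times$-invariance. The extra remarks (uniqueness from irreducibility, UFD bookkeeping, the $S_2$ alternative) are consistent elaborations rather than a different argument.
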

\begin{proof}
  This yields a section
  on the complement of
  $\{v = 0\} \cup \{w=0\}$ which has codimension 2.
  By the classical Hartogs theorem,
  it is the restriction of a unique element of $\C[v,w]$.
  This element is necessarily invariant under $v \mapsto \lambda v$
  and $w \mapsto \lambda^{-1}w$ for all $\lambda \in \C^\times$,
  hence it comes from an element of $\C[k]$.
\qed\end{proof}
\begin{remark}\label{remark:hfail}
  A Hartogs-like statement such as Lemma \ref{lemma:hc} is not automatic
  for a singular variety.
  A well-known counterexample is the image of
  \[
    \C^2 \to \C^4
    \qquad
    (s,t) \mapsto (a,b,c,d) = (s^4,s^3t,st^3,t^4)
  \]
  This 2-dimensional subvariety of the cone $\zl(ad-bc)$
  does not satisfy Hartogs.
  As a case in point,
  the section of the structure sheaf
  given away from the origin by
  $b^2/a$, $ac/b$, $bd/c$, $c^2/d$
  (every point has a neighborhood where one of the four expressions
  is defined and they match on overlaps;
  in the parametrization this corresponds to the missing $s^2t^2$)
  does not extend to a global section.
  Note that the origin has codimension 2, and it is the singular locus.
\end{remark}

Set now
$k_i = (\begin{smallmatrix} a_i & b_i \\ c_i & d_i \end{smallmatrix})$
and $\Q_i = a_id_i-b_ic_i$
and introduce 2-component vectors $v_i$ and $w_i$ called spinors.
We write $\kr{N}$ respectively $\vwr{N}$
for the polynomial rings, each with $4N$ variables.
The following varieties restrict $N$ momenta to the cone
and impose momentum conservation.

\begin{definition}[Momenta variety]
For $N \geq 3$ set:
\begin{align*}
  I_N & = (k_1 + \ldots + k_N,\Q_1,\ldots,\Q_N)
  && \text{an ideal in $\kr{N}$}\\
  \JX_N & = (v_1w_1^T + \ldots + v_N w_N^T)
  && \text{an ideal in $\vwr{N}$}
\end{align*}
Then $k_i = v_iw_i^T$
defines a surjective morphism of varieties
$\zl(\JX_N) \to \zl(I_N)$.
\end{definition}
Counting equations,
$\zl(I_N)$ and $\zl(\JX_N)$
ought to have dimensions $3N-4$ and $4N-4$ respectively.
To make this rigorous one must show that the defining equations
are suitably independent,
a consequence of the next lemma.
\begin{lemma}[Complete intersection] \label{lemma:ci0}
  For $N \geq 3$ the given generators of $I_N$ and $\JX_N$ are regular sequences,
  so the quotients are complete intersections.
\end{lemma}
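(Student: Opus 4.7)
The plan is to invoke the standard Cohen--Macaulay criterion: since the polynomial rings $\vwr{N}$ and $\kr{N}$ are Cohen--Macaulay, a sequence of elements is regular (in any order) if and only if the ideal it generates has height equal to the length of the sequence. In a polynomial ring one has $\textnormal{ht}(J) + \dim R/J = \dim R$ for any ideal $J$, and Krull's height theorem always gives $\textnormal{ht}(J) \leq \mu(J)$, so it suffices to establish the reverse bound $\dim \zl(J) \leq \dim R - \mu(J)$ for each of the two ideals: $\dim \zl(\JX_N) \leq 4N - 4$ and $\dim \zl(I_N) \leq 3N - 4$.

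For $\JX_N$, I would project onto the $v$-variables, $\pi : \zl(\JX_N) \to \C^{2N}$, and stratify the base by the rank of $v$ viewed as a $2 \times N$ matrix. The fiber over $v$ is the kernel of the linear map $L_v : \C^{2N} \to \C^{2\times 2}$, $w \mapsto \sum_i v_i w_i^T$, whose image has dimension $2\,\textnormal{rk}(v)$. A short calculation gives dimension $2N + (2N-4) = 4N - 4$ over the open rank-$2$ stratum; dimension $(N+1) + (2N-2) = 3N - 1$ over the rank-$1$ locus; and dimension $2N$ over the origin. All three are $\leq 4N - 4$ for $N \geq 3$, and the maximum is attained on the generic stratum.

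For $I_N$, I would use the surjective morphism $\phi : \zl(\JX_N) \to \zl(I_N)$ and induct on $N$. Let $U \subset \zl(I_N)$ be the open subset where all $k_i \neq 0$. Over $U$ each fiber of $\phi$ is a product of $N$ rescaling $\C^\times$-orbits and so has pure dimension $N$; therefore any irreducible component $Y$ of $\zl(I_N)$ meeting $U$ pulls back to an open subset of $\zl(\JX_N)$ of dimension $\dim Y + N \leq 4N - 4$, giving $\dim Y \leq 3N - 4$. An irreducible component not meeting $U$ lies in $\{k_i = 0\}$ for some $i$; momentum conservation then identifies it with a subvariety of $\zl(I_{N-1})$, of dimension $\leq 3(N-1) - 4$ by induction. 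The base case $N = 3$ is handled directly: either some $k_i$ vanishes (giving a $3$-dimensional locus by momentum conservation), or all $k_i \neq 0$ (treated exactly as in the inductive step), yielding $\dim \zl(I_3) \leq 5$.

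The main technical burden is the stratification-and-fiber count for $\JX_N$; once that upper bound is secured, the bound for $I_N$ follows cleanly by combining the surjection $\phi$ with upper-semicontinuity of fiber dimensions and the one-step induction. No further obstacle is anticipated: Cohen--Macaulayness of the ambient polynomial ring then automatically promotes the resulting equality of height and generator count into the regular-sequence statement.
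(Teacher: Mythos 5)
Your proof is correct, but it takes a genuinely different route from the paper. The paper verifies the regular-sequence property purely syntactically: it exhibits for each ideal a monomial order in which the leading monomials of the generators are pairwise coprime, then invokes Buchberger's criterion to promote coprimality of leading terms into regularity. This is essentially a Gr\"obner-basis computation, spelled out explicitly for $N=3$ and asserted to generalize.

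You instead invoke the Cohen--Macaulay unmixedness criterion (a sequence of homogeneous elements of positive degree in a polynomial ring is regular iff the height of the ideal it generates equals the number of elements), which reduces the lemma to the dimension bounds $\dim\zl(\JX_N)\leq 4N-4$ and $\dim\zl(I_N)\leq 3N-4$. The first you get by stratifying the projection to $v$-space by $\mathrm{rk}(v)$ and counting fibers; the count checks out ($4N-4$, $3N-1$, $2N$ on the rank $2,1,0$ strata, all $\leq 4N-4$ for $N\geq 3$). The second you obtain from the first via the surjection $\zl(\JX_N)\to\zl(I_N)$, whose fibers over the open locus with all $k_i\neq 0$ are $(\C^\times)^N$-torsors, together with a one-step induction for any component buried in $\{k_i=0\}$, and a direct check for $N=3$ so the induction never reaches the degenerate $N=2$. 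This is all sound.

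The trade-off: the paper's argument is more elementary (no CM machinery) and delivers an explicit Gr\"obner basis, but it is ad hoc in the choice of monomial orders and only exhibits the order for $N=3$, leaving "similar for $N>3$" to the reader. Your argument leans on a standard structural theorem but is uniform in $N$, more geometric, and the dimension count for $\zl(I_N)$ is essentially a clean version of a fiber-dimension/surjectivity argument that the paper also uses elsewhere (cf.\ the discussion around Lemma \ref{lemma:fib}). Either would serve.
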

\begin{proof}
  Sufficient
  for nonzero $r_1,\ldots,r_n$
  in a polynomial ring to be a regular sequence\footnote{%
    In a Noetherian ring $R$,
    a sequence $r_1,\ldots,r_n \in R$ is called a regular sequence
    iff multiplication by $r_{i+1}$ is an injective map on $R/(r_1,\ldots,r_i)$ for all $i=0,\ldots,n-1$.
    If $R$ is graded and the $r_i$ are \emph{homogeneous} of positive degree,
    then the notion of a regular sequence is independent of the ordering,
    and replacing $r_i$ by $r_i+pr_j$ for some $j \neq i$ and $p \in R$
    maps regular sequences to regular sequences,
    like elementary row operations in linear algebra.} is that
  there exists a monomial order such that the leading monomials
  of $r_1,\ldots,r_n$ are pairwise coprime\footnote{%
  Coprimality implies that every subsequence $r_1,\ldots,r_i$ is
  a Gr\"obner basis for the ideal that it generates,
  by Buchberger's criterion.
  Hence the well-ordered sequence of monomials
  that are not divisible by the leading monomial
  of any of $r_1,\ldots,r_i$
  is a $\C$-basis of $R/(r_1,\ldots,r_i)$.
  In this basis, one can see that
  multiplication by $r_{i+1}$ is injective 
  from $R/(r_1,\ldots,r_i)$ to itself.}.
  For $\JX_3$ the degrevlex order\footnote{%
    Degrevlex, or degree reverse lexicographic,
    is a well-known admissible monomial order.
    We will mainly use: If four symbols satisfy $a,b,c > d$
    then degrevlex implies $ab > cd$.} with
  $v_1 > w_1 > v_2 > w_2' > v_3' > w_3 > v_1' > w_1' > v_2' > w_2 > v_3 > w_3'$
  works where,
  abusing notation,
  no dash or a dash indicate
  the first or second component respectively.
  The leading monomials are $v_1w_1$, $v_2w_2'$, $v_3'w_3$, $v_1'w_1'$.
  For $I_3$ the given generators are a regular sequence
  if and only if, after eliminating $k_3$ using momentum conservation,
  $(\Q_1,\Q_2, a_1d_2-b_1c_2+a_2d_1-b_2c_1)$ is
  a regular sequence in $\C[k_1,k_2]$.
  Use degrevlex with
  $a_1 > b_1 > c_1 > d_1 > a_2 > b_2 > c_2 > d_2$,
  the leading monomials are
  $b_1c_1$, $b_2c_2$, $a_2d_1$.
  Similar for $N > 3$.
\qed\end{proof}
\begin{prop}[Hartogs extension for the momenta variety] \label{prop:htg}
If $N \geq 3$
and $X = \zl(I_N)$ or $X = \zl(\JX_N)$,
and if $Y \subset X$ is a Zariski closed subset of codimension $\geq 2$,
then restriction defines an isomorphism
\begin{equation}\label{eq:hartogs}
      \O_X(X) \simeq \O_X(X-Y)
\end{equation}
That is to say,
every section of the structure sheaf away from $Y$
extends uniquely to a global section,
and is therefore the restriction of a polynomial to $X$.
\end{prop}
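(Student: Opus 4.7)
The plan is to reduce the statement to the standard Hartogs-type theorem for Cohen--Macaulay affine schemes. Writing $X = \mathrm{Spec}(R)$ and letting $I \subset R$ be any ideal with $V(I) = Y$, the claim is equivalent to $R \to \O_X(X-Y)$ being an isomorphism, and this follows whenever $\mathrm{depth}_I(R) \geq 2$.

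First I would invoke Lemma \ref{lemma:ci0}: the generators of $I_N$ and $\JX_N$ are regular sequences in the respective polynomial rings. A quotient of a Cohen--Macaulay ring (such as a polynomial ring over $\C$) by a regular sequence is again Cohen--Macaulay, so in both cases $R = \kr{N}/I_N$ or $R = \vwr{N}/\JX_N$ is Cohen--Macaulay. For a Cohen--Macaulay ring and any ideal $I \subset R$ one has $\mathrm{depth}_I(R) = \mathrm{height}(I)$, so the hypothesis that $Y$ has codimension $\geq 2$ translates to $\mathrm{depth}_I(R) \geq 2$.

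Then I would plug this into the local cohomology exact sequence
\[
0 \to H^0_I(R) \to R \to \O_X(X-Y) \to H^1_I(R) \to 0,
\]
valid for any affine Noetherian scheme with closed subscheme $Y = V(I)$. Since $H^i_I(R) = 0$ for $i < \mathrm{depth}_I(R)$, both outer terms vanish, giving the desired isomorphism $R = \O_X(X) \simeq \O_X(X-Y)$.

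The only conceptual input is the identification of the correct cohomological criterion (depth $\geq 2$) together with the Cohen--Macaulay property that converts the geometric hypothesis (codimension $\geq 2$) into it; the rest is bookkeeping via the long exact sequence of local cohomology. The substantive work has already been done in Lemma \ref{lemma:ci0}, whose proof verifies the complete intersection property by an explicit Gr\"obner basis computation, so there is no serious further obstacle here.
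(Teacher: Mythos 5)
Your proof is correct and follows essentially the same route as the paper: both reduce to Cohen--Macaulayness via Lemma \ref{lemma:ci0} and the complete intersection property, then invoke a depth-$\geq 2$ Hartogs criterion. The only difference is presentational --- the paper cites Proposition 1.11 of Hartshorne's \emph{Generalized divisors on Gorenstein schemes} after noting the Serre $S_2$ property, whereas you unwind the same criterion through the four-term local cohomology sequence and the identity $\mathrm{grade}(I) = \mathrm{height}(I)$ for CM rings, which makes the step slightly more self-contained.
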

\begin{proof}
A complete intersection
is $S_2$ which implies Hartogs.
More in detail,
a complete intersection,
meaning a polynomial ring modulo a regular sequence,
is a local complete intersection,
meaning all its local rings are complete intersections.
Hence it is a Gorenstein ring and a Cohen-Macaulay ring
by Theorems 21.3 and 18.1 of \cite{mat},
and it has the property $S_k$ of Serre for all $k$,
see Section 23 of \cite{mat}.
Equation \eqref{eq:hartogs}
now follows from Proposition 1.11 of \cite{hpaper}.
The elements of
$\O_X(X)$ are well-known to be restrictions of polynomials.
\qed\end{proof}
\begin{prop}[Singular loci]\label{lemma:sloci}
  The singular locus of $\zl(I_3)$ has codimension one.
  For $N \geq 4$ the singular loci are:
  \begin{itemize}
    \item $\zl(I_N)$:
      Points where $k_i=0$ for at least one $i = 1\ldots N$,
      and points where all the $k_1,\ldots,k_N$
      are contained in one line in $\C^4$. Codimension 3.
    \item $\zl(\JX_N)$: Points where all the $v_1w_1^T,\ldots,v_Nw_N^T$
      are contained in one line in $\C^4$.
      Codimension $2N-5 \geq 3$.
  \end{itemize}
\end{prop}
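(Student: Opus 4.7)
The plan is to apply the Jacobian criterion for complete intersections, which is valid here by Lemma \ref{lemma:ci0}: at a point of $\zl(I_N)$ or $\zl(\JX_N)$ the Jacobian of the defining generators has full rank equal to the codimension iff the point is smooth, and the singular locus is exactly where the Zariski tangent space strictly exceeds the expected dimension $3N-4$ or $4N-4$.

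For $\zl(I_N)$ with $N\geq 4$, at $(k_1,\ldots,k_N)$ the tangent space is $\{(\dot k_1,\ldots,\dot k_N):\sum \dot k_i=0,\;\dot k_i\in T_{k_i}(\text{cone})\}$, where $T_k(\text{cone})=\C^4$ if $k=0$ and otherwise is a $3$-dimensional hyperplane. Its dimension equals $\sum_i\dim T_{k_i}(\text{cone})-\dim\sum_i T_{k_i}(\text{cone})$, achieving the generic value $3N-4$ iff every $k_i\neq 0$ and the hyperplanes $T_{k_i}(\text{cone})$ together span $\C^4$. Writing $k_i=v_iw_i^T$, the hyperplane at a nonzero $k_i$ is $\{M:v_i^\perp M w_i^\perp = 0\}$ with annihilator the line $\C\cdot w_i^\perp (v_i^\perp)^T$, so two such annihilators agree iff $k_i\parallel k_j$ in $\C^4$. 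This identifies the two singular loci claimed. For codimensions, $\{k_i=0\}\cap\zl(I_N)$ adds $3$ new conditions beyond $\Q_i=0$ and therefore has codimension $3$; the collinearity locus is parametrized by a line $L$ in the cone (moduli $\rs\times\rs$, dimension $2$) together with scalars $\lambda_i$ subject to $\sum\lambda_i=0$, giving dimension $N+1$ and codimension $2N-5\geq 3$. The minimum is $3$. For $N=3$, I would invoke the familiar decomposition of $\zl(I_3)$ into two irreducible components (``MHV'' where all $v_i$ are proportional, ``anti-MHV'' where all $w_i$ are proportional), which meet along the locus where both hold and hence in codimension one, forcing codimension one for the singular locus.

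For $\zl(\JX_N)$ with $N\geq 4$, the differential at $(v_i,w_i)_i$ of the map $(v_i,w_i)\mapsto\sum v_iw_i^T$ has image $\sum_i S_i$ with $S_i=\C^2 w_i^T+v_i\C^2$. Using the trace pairing, $S_i^\perp=\{B:Bv_i=0 \text{ and } w_i^T B=0\}$, where each constraint is dropped if the corresponding spinor vanishes. A short case analysis then shows $\bigcap_i S_i^\perp=0$, and hence smoothness, iff either the nonzero $v_i$ span $\C^2$ or the nonzero $w_i$ span $\C^2$. In the remaining case there exist $v,w\in\C^2\setminus\{0\}$ with $v_i\in\C v$ and $w_i\in\C w$ for every $i$, so in particular all $v_iw_i^T$ lie on the single line $\C v w^T\subset\C^4$. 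Parametrizing this locus by $(v,w)\in\rs\times\rs$ and $v_i=\lambda_i v$, $w_i=\mu_i w$ subject to $\sum\lambda_i\mu_i=0$ yields dimension $2+2N-1=2N+1$ and codimension $(4N-4)-(2N+1)=2N-5$.

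The main obstacle is the spinor case: unlike for $\zl(I_N)$, a vanishing $k_i=v_iw_i^T=0$ does not automatically produce a singularity, because when $v_i=0$ but $w_i\neq 0$ the two-dimensional subspace $S_i=\C^2 w_i^T$ still contributes to filling out $\C^4$ together with the other $S_j$. Extracting the precise condition in terms of the spans of $\{v_i\}$ and $\{w_i\}$, and translating it into the ``all $v_iw_i^T$ on one line'' description and the associated dimension count, is the delicate part of the argument.
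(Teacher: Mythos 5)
Your approach is the same as the paper's, namely the Jacobian criterion, just phrased dually: you examine when the Zariski tangent space (the kernel of the Jacobian) jumps, whereas the paper checks directly when the Jacobian matrix of the generators of $I_N$ has full rank using its block structure $(\one_{4\times4}|\cdots|\one_{4\times4};dQ_1|\cdots|0;\ldots)$. These are equivalent and your execution is correct. You go further than the paper's written proof in two respects: you actually carry out the dimension counts for the two singular strata and the analogous analysis for $\zl(\JX_N)$, whereas the paper only details $\zl(I_N)$ with $N\geq4$ and says ``similar'' for the spinor case; and you note the subtlety that $v_i=0$ with $w_i\neq0$ need not produce a singularity of $\zl(\JX_N)$, which is exactly the point. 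In fact your characterization of the singular locus of $\zl(\JX_N)$ (nonzero $v_i$ fail to span $\C^2$ \emph{and} nonzero $w_i$ fail to span $\C^2$, i.e.\ all $v_i\in\C v$ and all $w_i\in\C w$) is a priori a proper subset of ``all $v_iw_i^T$ on one line'' as the proposition states it---e.g.\ take two independent nonzero $v_i$'s each paired with $w_i=0$---so it is the more accurate description; one can check the two sets nonetheless have the same dimension $2N+1$, so the codimension claim is unaffected.

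Two small points worth tightening. First, you invoke Lemma~\ref{lemma:ci0} (regular sequence / complete intersection) to justify the Jacobian criterion, but what the criterion actually needs is that $(k_1+\cdots+k_N,\Q_1,\ldots,\Q_N)$ is the full ideal $I(\zl(I_N))$, i.e.\ radical; the paper makes this precise by citing primality (Lemma~\ref{lemma:irr}) in addition. Second, for $N=3$ the argument that ``the two components meet in codimension one, forcing codimension one for the singular locus'' implicitly uses that the singular locus of each determinantal component $\zl(I^{\pm})$ separately has codimension $\ge1$ inside $\zl(I_3)$; this is true (those singular loci are small) but is left unsaid. Neither gap is serious; the rest of the argument, including the parametrization of the collinearity strata by $\rs\times\rs$ together with scalars subject to one relation, and the resulting dimensions $N+1$ and $2N+1$, is right.
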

\begin{proof}
  Consider say $\zl(I_N)$, $N\ge4$.
  Abbreviate $dQ_i = (d_i,-c_i,-b_i,a_i)$.
  The Jacobian of the list of generators of $I_N$ has the block structure
\[
\begin{pmatrix}
\one_{4 \times 4} & \ldots & \one_{4 \times 4}\\
dQ_1 & \ldots & 0\\
\vdots & \ddots & \vdots\\
0 & \ldots & dQ_N
\end{pmatrix}
\]
and is surjective iff $dQ_i \neq 0$ for all $i$
and $\sum_i \ker(d Q_i) = \C^4$,
hence iff $k_i \neq 0$ for all $i$
and at least two $k_i$ are linearly independent.
This is the non-singular locus by the Jacobian criterion and 
primality of $I_N$, so that $I_N$ is the ideal of $\zl(I_N)$, 
see Section I.5 of \cite{hbook} and Lemma \ref{lemma:irr} below.
Similar for $\zl(I_N')$.
\qed\end{proof}
\begin{lemma}[Preimage] \label{lemma:pre2}
  If $N \geq 4$,
  then under $\zl(\JX_N) \to \zl(I_N)$
  the preimage of every Zariski closed subset of codimension $\geq 2$
  has codimension $\geq 2$.
\end{lemma}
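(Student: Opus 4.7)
The plan is to stratify the base $\zl(I_N)$ by how many of the momenta vanish and to track the fiber dimension of the spinor map $\pi\colon \zl(\JX_N) \to \zl(I_N)$, $(v_i,w_i)_i \mapsto (v_iw_i^T)_i$. For $0\leq j\leq N$ let
\[
\Sigma_j \;=\; \{(k_1,\ldots,k_N) \in \zl(I_N) : \text{exactly } j \text{ of the } k_i \text{ vanish}\}.
\]
From the description of the fiber of $\C^4 \to \zl(ad-bc)$ given at the start of Section \ref{sec:variety}, each factor contributes a $\C^\times$-orbit of dimension one when $k_i \neq 0$ and the reducible fiber $\{v_i=0\}\cup\{w_i=0\}$ of dimension two when $k_i = 0$. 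Hence the fiber of $\pi$ over every point of $\Sigma_j$ has dimension exactly $N+j$.

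Next I bound $\dim \Sigma_j$. Fixing which $j$ momenta vanish forces the remaining $N-j$ momenta to satisfy their cone equations together with momentum conservation; by Lemma \ref{lemma:ci0} applied to $N-j$ momenta, this subset has dimension $3(N-j)-4$ when $N-j \geq 3$. When $N-j=2$ it collapses to a single cone of dimension $3$, and when $N-j\leq 1$ it is a point. Combining each case with the fiber dimension $N+j$ and using $N\geq 4$, one checks
\[
\dim \Sigma_j + N + j \;\leq\; 4N - 6 \qquad \text{for all } j \geq 1;
\]
for instance, when $N-j\geq 3$ the left side is at most $(3N-3j-4)+N+j = 4N-2j-4$, and the edge cases $N-j\in\{1,2\}$ are even more favorable because the deficit in $\dim \Sigma_j$ outweighs the larger fiber.

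Now let $Y\subset \zl(I_N)$ be Zariski closed of codimension at least two; since $\zl(I_N)$ is a complete intersection and hence equidimensional of dimension $3N-4$, this means $\dim Y \leq 3N-6$. Decompose
\[
\pi^{-1}(Y) \;=\; \bigcup_{j=0}^{N} \pi^{-1}(Y \cap \Sigma_j)
\]
and apply the standard fiber dimension inequality to each piece, noting that the fiber of $\pi$ has constant dimension $N+j$ over $\Sigma_j$:
\[
\dim \pi^{-1}(Y \cap \Sigma_j) \;\leq\; \dim(Y\cap\Sigma_j) + (N+j) \;\leq\; \min(\dim Y,\dim \Sigma_j) + N + j.
\]
For $j=0$ this is at most $(3N-6)+N = 4N-6$; for $j\geq 1$ the estimate of the second paragraph gives the same bound. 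Since $\zl(\JX_N)$ is a complete intersection of dimension $4N-4$, this shows $\pi^{-1}(Y)$ has codimension at least two.

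The only delicate step is the case analysis when $N-j\in\{0,1,2\}$, where the residual system on the non-vanishing momenta is not a complete intersection of the naive expected dimension. I would handle these explicitly (momentum conservation reduces them to a single free cone or a point), and verify that the hypothesis $N\geq 4$ is precisely what closes the numerology $\dim\Sigma_j + N + j \leq 4N-6$ in each small case.
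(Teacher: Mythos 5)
Your proof is correct, and it takes a genuinely different route from the paper's. The paper constructs a codimension-two ``bad locus'' $S \subset \zl(\JX_N)$ (singular points plus non-submersion points of the map, all shown to lie in the preimage of the singular locus of $\zl(I_N)$), and then argues by contradiction: a codimension-$\le 1$ component of the preimage would have a smooth point off $S$, where the map is a submersion, so the local image would have codimension $\le 1$ in $\zl(I_N)$, contradicting $\codim Y\ge 2$. You instead stratify the target by the number of vanishing momenta, read off the exact fiber dimension $N+j$ over each stratum from the explicit description of the fibers of $\C^4\to\zl(ad-bc)$, bound $\dim\Sigma_j$ by the complete-intersection count, and sum the fiber-dimension inequality over the strata. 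Both are sound; the paper's version is shorter given that the singular-locus and Jacobian-rank analysis is already available from Proposition \ref{lemma:sloci}, while yours is more elementary and makes the dependence on $N\ge 4$ completely transparent---it is exactly the $N-j=2$ stratum (two nonvanishing momenta forced to be opposite) that saturates the bound and is the stratum responsible for the $N=3$ counterexample of Remark \ref{remark:pre2}. One tiny slip: for $N-j=1$ the stratum $\Sigma_{N-1}$ is empty (momentum conservation forces the last $k_i$ to vanish too), not a point, but this only makes your inequality easier.
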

\begin{proof}
  Let $S \subset \zl(\JX_N)$
  be the union of a) the singular locus,
  b) the preimage of the singular locus of $\zl(I_N)$ under the map $\zl(\JX_N) \to \zl(I_N)$,
  and c) all points where the Jacobian of this map does not have full rank $3N-4$.
  One checks that b) contains a) and c).
  So $S$ has codimension 2.
  Consider now the claim.
  If the preimage does not have codimension $\geq 2$,
  then it has an irreducible component with codimension $\leq 1$,
  this component has (since its singular locus is proper and closed) 
  at least one non-singular point not in $S$,
  but then by a local analysis its image has codimension $\leq 1$,
  a contradiction.
\qed\end{proof}
\begin{remark} \label{remark:pre2}
  Lemma \ref{lemma:pre2} fails for $\zl(\JX_3) \to \zl(I_3)$
  because $\{k_1=0\}$ has codimension 2 but its preimage
  $\{v_1 = 0\} \cup \{w_1 = 0\}$ only has codimension 1.
\end{remark} 
We now decompose these varieties into irreducible components
and determine the prime ideal of each component.

\begin{prop}[Prime decomposition for $N=3$] \label{prop:p3}
  The ideal $I_3$ is radical but not prime,
  instead $I_3 = I^+ \cap I^-$
        for two prime ideals $I^{\pm} \subset \kr{3}$:
      \begin{align*}
        I^+ &= \big(k_1 + k_2 + k_3,\,
         \textnormal{all maximal minors of the
           $2 \times 6$ matrix $(k_1\,k_2\,k_3)$}\big)\\
        I^- &= \big(k_1 + k_2 + k_3,\,
         \textnormal{all maximal minors of the
           $2 \times 6$ matrix $(k_1^T\,k_2^T\,k_3^T)$}\big)
      \end{align*}
      So $\zl(I_3) = \zl(I^+) \cup \zl(I^-)$, 
      where $\zl(I^+)$ are all points of the form $k_i = vw_i^{T}$ with $w_1+w_2+w_3=0$,
      and $\zl(I^-)$ are all points of the form $k_i = v_iw^{T}$ with $v_1+v_2+v_3=0$.
     The ring $\kr{3}/I^\pm$ is Cohen-Macaulay. 
  \end{prop}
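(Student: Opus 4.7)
The plan has four sub-claims: (i) $I^\pm$ is prime and $\kr{3}/I^\pm$ is Cohen-Macaulay; (ii) $\zl(I_3) = \zl(I^+) \cup \zl(I^-)$ as sets; (iii) $I_3$ is radical; (iv) hence $I_3 = I^+ \cap I^-$. The case of $I^-$ is symmetric to $I^+$ under transposition, so I focus on $I^+$. For (i), the plan is to eliminate the four linear generators $k_1+k_2+k_3$ and work in $\C[k_1,k_2]$ (a polynomial ring in $8$ variables). Under $k_3 = -k_1-k_2$, elementary column operations (adding $k_1+k_2$ to each of the last two columns of $(k_1\,k_2\,-k_1-k_2)$) show that the image of the minor ideal equals the ideal $J$ of $2\times 2$ minors of the generic $2\times 4$ matrix $(k_1\,k_2)$. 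By the classical theory of ideals of maximal minors of a generic matrix (Hochster-Eagon; Bruns-Vetter), $J$ is prime of codimension $3$ with Cohen-Macaulay quotient; pulling back gives (i) with $I^+$ of codimension $7$.

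For (ii), the inclusion $\zl(I^+)\cup\zl(I^-) \subseteq \zl(I_3)$ is immediate from $I_3\subseteq I^\pm$. Conversely, given $(k_1,k_2,k_3) \in \zl(I_3)$, the rank-$\leq 1$ condition $\det k_i = 0$ lets us write (allowing $k_i = 0$) $k_i = v_i w_i^T$. If two of the $v_i$, say $v_1, v_2$, are linearly independent, then $v_3 = \alpha v_1 + \beta v_2$ and momentum conservation rearranges to $v_1(w_1+\alpha w_3)^T + v_2(w_2+\beta w_3)^T = 0$, forcing $w_1 = -\alpha w_3$ and $w_2 = -\beta w_3$; hence all $w_i$ are proportional to a common $w$, placing the point in $\zl(I^-)$. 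Otherwise all $v_i$ are scalar multiples of a single $v$, so $k_i = v\, w_i'^T$ with $\sum w_i' = 0$, placing the point in $\zl(I^+)$.

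For (iii), the plan is to use the Gröbner data from the proof of Lemma \ref{lemma:ci0}: after eliminating $k_3$, the ideal $I_3$ is generated in $\C[k_1,k_2]$ by a regular sequence whose leading monomials in the specified degrevlex order are $b_1c_1$, $b_2c_2$, $a_2d_1$---pairwise coprime and squarefree. By Buchberger's criterion these generators are a Gröbner basis, so the initial ideal is a squarefree monomial ideal, hence radical. The standard theorem that an ideal with squarefree initial ideal is itself radical (a consequence of Gröbner deformation, cf.\ Herzog-Hibi, \emph{Monomial Ideals}) then yields (iii). For (iv), radicality plus the Nullstellensatz plus the set-theoretic decomposition in (ii) give $I_3 = \sqrt{I_3} = I(\zl(I^+))\cap I(\zl(I^-)) = I^+\cap I^-$, where the last equality uses the primality of $I^\pm$ established in (i). The main obstacle will be the bookkeeping in the set-theoretic case analysis of (ii), involving several sub-cases depending on which $v_i$ or $w_i$ vanish; the algebraic steps rest on standard and well-documented results about generic determinantal ideals and squarefree Gröbner degenerations.
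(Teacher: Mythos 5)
Your proof is correct and its core agrees with the paper's: both reduce $\kr{3}/I^{\pm}$ to the coordinate ring of the generic determinantal variety of $2\times 4$ matrices of rank $<2$, from which primality and Cohen-Macaulayness follow by Hochster--Eagon / Eagon--Northcott. The paper's printed proof stops essentially there and delegates the remaining claims (radicality of $I_3$, the equality $I_3=I^+\cap I^-$, and the parametrization of $\zl(I^\pm)$) to ``computer tools based on Gr\"obner bases'' and Macaulay2. Your contribution is to supply hand arguments for exactly those delegated parts: a clean case analysis on the spinor factorizations for the set-theoretic decomposition, and, for radicality, the observation that after eliminating $k_3$ the three generators have squarefree, pairwise coprime leading monomials $b_1c_1$, $b_2c_2$, $a_2d_1$ in the degrevlex order already used in Lemma~\ref{lemma:ci0}, hence form a Gr\"obner basis with squarefree initial ideal, which by the standard Gr\"obner degeneration theorem forces $I_3$ to be radical. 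This trades the paper's computer check for a short theoretical argument at the modest cost of the case bookkeeping in (ii); both routes are sound, and yours is the more self-contained one. One small point worth making explicit in a final write-up: in the ``otherwise'' branch of (ii) you should note that the resulting point $k_i = v\,w_i'^T$ with $\sum w_i'=0$ does lie in $\zl(I^+)$ because all columns of $(k_1\,k_2\,k_3)$ are then multiples of $v$, so every $2\times2$ minor vanishes; this also establishes the asserted parametrization of $\zl(I^+)$ in both directions.
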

\begin{proof}
  The ring $\kr{3}/I^\pm$ is isomorphic to the coordinate ring of the determinantal variety
  of $2 \times 4$ matrices of rank $<2$, hence an integral domain and Cohen-Macaulay.
  Alternatively, use Macaulay2 \cite{m2} to check that
  $I^+$ is the kernel of $\kr{3} \to \C[v,w_{1\ldots 3}]/(w_1+w_2+w_3)$, $k_i \mapsto vw_i^T$
  which realizes $\kr{3}/I^+$ as a subring of an integral domain.
  All statements in this proposition are easily checked using computer tools based on Gr\"obner bases.
\qed\end{proof}
     Note that\footnote{%
      For instance,
       $\tr(\eps k_1^T \eps k_2)$ 
       is in the intersection but not in the product.
      } $I^+ \cap I^- \neq I^+I^-$.
\begin{prop}[Primality for $N \geq 4$] \label{lemma:irr}
  If $N \geq 4$ then $I_N$ and $\JX_N$ are prime ideals.
  The quotient rings are normal Noetherian domains.
\end{prop}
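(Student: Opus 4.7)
The plan is to apply Serre's criterion for normality and then deduce primality from connectedness of the graded spectrum. Recall that a Noetherian ring is normal (meaning every localization is a normal integral domain) if and only if it satisfies the Serre conditions $R_1$ (regular in codimension one) and $S_2$ (depth at least two at every prime of codimension at least two); and a Noetherian normal ring decomposes as a finite direct product of normal domains, one factor per connected component of its spectrum.

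First I would observe that both $I_N$ and $\JX_N$ are homogeneous ideals generated by elements of positive degree (the linear momentum conservation together with the quadratic $\Q_i$, respectively the quadratic matrix equation $v_1 w_1^T + \cdots + v_N w_N^T = 0$), so the quotients $\kr{N}/I_N$ and $\vwr{N}/\JX_N$ are graded rings with degree-zero part $\C$. By Lemma \ref{lemma:ci0} the given generators are regular sequences, so both quotients are complete intersections. A complete intersection in a polynomial ring over $\C$ is Cohen-Macaulay (Theorems 21.3 and 18.1 of \cite{mat}), and a Cohen-Macaulay ring satisfies $S_k$ for every $k$, in particular $S_2$.

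Next I would invoke Lemma \ref{lemma:sloci}: for $N \geq 4$ the singular locus of $\zl(I_N)$ has codimension three, and that of $\zl(\JX_N)$ has codimension $2N-5 \geq 3$; in either case the singular locus has codimension at least two, which is exactly condition $R_1$. Serre's criterion therefore yields normality of both quotient rings.

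Finally, to get primality I would argue that each spectrum is connected. In a graded ring with $R_0 = \C$ any idempotent $e = e_0 + e_1 + \cdots$ satisfies $e^2 = e$; comparing terms of lowest positive degree forces the positive-degree components of $e$ to vanish, leaving $e = e_0 \in \{0,1\}$. Hence each spectrum is connected and the decomposition of the normal ring as a finite product of normal domains collapses to a single factor, so the quotient is itself a normal Noetherian domain and $I_N$, $\JX_N$ are prime. The only substantive geometric input beyond Serre's criterion is the codimension bound on the singular locus from Lemma \ref{lemma:sloci}; this is the step I expect to carry the most content, but it is already available to us, so once assembled the argument is essentially formal.
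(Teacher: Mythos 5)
Your route is genuinely different from the paper's: you aim to get normality first (Serre's criterion from $S_2$ plus $R_1$) and then extract primality from connectedness of the graded spectrum, whereas the paper proves primality directly via the fiber-product Lemma~\ref{lemma:fib3} (a Gr\"obner-basis argument in the appendix) and only afterward deduces normality. The graded-idempotent argument for connectedness is correct, and the overall structure (normal Noetherian $\Rightarrow$ finite product of normal domains $\Rightarrow$ one factor by connectedness $\Rightarrow$ domain) is sound in principle.

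However, as written there is a circularity: you cite Proposition~\ref{lemma:sloci} for the codimension of the singular locus, but the paper's proof of that proposition explicitly appeals to ``primality of $I_N$, so that $I_N$ is the ideal of $\zl(I_N)$, see \dots Lemma~\ref{lemma:irr} below'' in order to apply the Jacobian criterion in its usual form. Since Lemma~\ref{lemma:irr} is exactly what you are trying to prove, Proposition~\ref{lemma:sloci} cannot be taken as an independent input. The gap is repairable: for a complete intersection $R=\C[x]/(f_1,\ldots,f_r)$ with $(f_1,\ldots,f_r)$ a regular sequence (Lemma~\ref{lemma:ci0}), the local ring $R_\mx$ at a maximal ideal is regular iff the Jacobian of $(f_1,\ldots,f_r)$ has full rank $r$ modulo $\mx$; this version of the Jacobian criterion needs no radicality hypothesis, since it only compares the embedding dimension of $B/(f_1,\ldots,f_r)$ (with $B$ regular local) against $\dim B - r$. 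Running the rank computation in the proof of Proposition~\ref{lemma:sloci} directly under that criterion gives the codimension bound, after which your argument goes through. To make the proposal self-contained you should state and use this complete-intersection form of the Jacobian criterion explicitly rather than citing Proposition~\ref{lemma:sloci}. The paper's ordering avoids the issue by establishing primality first (so Proposition~\ref{lemma:sloci} legitimately uses it) and then invoking Serre's criterion only for the normality half.
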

\begin{proof}
  Primality by Lemma \ref{lemma:fib3}
  with $n = N$, $f = 0$.
  Normality by Serre's criterion, since the ring is $S_2$ and
  the singular locus has codimension $>1$.
\qed\end{proof}
\begin{lemma}[Injective ring map]  \label{lemma:injrmap}
For $N \geq 3$ the assignment $k_i \mapsto v_i w_i^T$ induces an injective ring map
$\kr{N}/I_N \to \vwr{N}/\JX_N$.
\end{lemma}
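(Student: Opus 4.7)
The plan is to argue via zero loci and the Nullstellensatz, exploiting that $I_N$ is known to be radical.

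Let $\phi : \kr{N} \to \vwr{N}$ denote the ring map sending $k_i \mapsto v_i w_i^T$. First I would check that $\phi(I_N) \subset \JX_N$, so that $\phi$ descends to the claimed quotient map: indeed $\phi(\Q_i) = \det(v_i w_i^T) = 0$ since $v_i w_i^T$ has rank at most one, and $\phi(k_1 + \cdots + k_N) = v_1w_1^T + \cdots + v_Nw_N^T$ is the generator of $\JX_N$. Equivalently, $I_N \subset \phi^{-1}(\JX_N)$, and it remains to show the reverse inclusion.

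For the reverse inclusion, the geometric input is that the induced morphism of varieties
\[
   \zl(\JX_N) \to \zl(I_N), \qquad (v,w) \mapsto (v_1w_1^T,\ldots,v_Nw_N^T)
\]
is \emph{surjective on points}. This follows from the surjective factorization \eqref{eq:cc}: given any $(k_1,\ldots,k_N) \in \zl(I_N)$, each $k_i$ lies on the cone $\Q_i = 0$, so $k_i = v_i w_i^T$ for some spinors $v_i, w_i \in \C^2$; the relation $\sum_i v_i w_i^T = \sum_i k_i = 0$ then shows $(v,w) \in \zl(\JX_N)$. Now suppose $p \in \phi^{-1}(\JX_N)$. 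Then $\phi(p) \in \JX_N \subset \sqrt{\JX_N}$ vanishes identically on $\zl(\JX_N)$, so by surjectivity $p$ vanishes identically on $\zl(I_N)$. By the Hilbert Nullstellensatz, $p \in \sqrt{I_N}$. But $I_N$ is radical: it is prime for $N \geq 4$ by Proposition \ref{lemma:irr}, and the intersection $I^+ \cap I^-$ of two primes for $N = 3$ by Proposition \ref{prop:p3}. Hence $p \in I_N$, completing the injectivity proof.

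I do not anticipate a serious obstacle here; the only subtlety is that we do not (and need not) know whether $\JX_N$ itself is radical for $N = 3$, since passing through $\sqrt{\JX_N}$ and then using the radicality of $I_N$ circumvents the issue. For $N \geq 4$ one could alternatively argue by primality of both ideals and dimension counting, but the zero-locus argument handles both cases uniformly.
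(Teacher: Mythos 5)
Your proof is correct and follows essentially the same route as the paper's: the kernel element vanishes on $\zl(I_N)$ by surjectivity of $\zl(\JX_N)\to\zl(I_N)$, hence lies in $\sqrt{I_N}$ by the Nullstellensatz, and $\sqrt{I_N}=I_N$ since $I_N$ is radical (prime for $N\ge 4$, an intersection of two primes for $N=3$). You simply spell out the well-definedness check and the surjectivity argument explicitly, which the paper states tersely or leaves implicit.
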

\begin{proof}
An element in the kernel of $\kr{N} \to \vwr{N}/\JX_N$
vanishes on $\zl(I_N)$,
since $\zl(\JX_N) \to \zl(I_N)$ is surjective.
Therefore it is in the radical $\sqrt{I_N}$
by the Nullstellensatz, but $\sqrt{I_N} = I_N$.
\qed\end{proof}
\begin{definition}[Internal momenta] \label{def:imom}
  Given $N \geq 4$,
  an internal momentum is a subset $J
  \subset \{1,\ldots,N\}$ with $2 \leq |J| \leq N-2$ and we denote
  \begin{align*}
        k_J & = \textstyle\sum_{i \in J} k_i\\
        \Q_J &= \det k_J
  \end{align*}
  in $\kr{N}/I_N$.
  Note that $k_J = -k_{J^c}$ and $\Q_J = \Q_{J^c}$ and
  $\Q_J$ is nonzero on an open dense subset of $\zl(I_N)$.
  The integer $N$ is always taken from context.
\end{definition}
\begin{definition}[Prime divisors] Set
  $\mathfrak{P}_3 = \{\}$. For $N \geq 4$ set
  \[
    \mathfrak{P}_N = \bigg\{
        \;\;\px\subset \kr{N}/I_N\;\;\bigg|\;
        \begin{aligned}
          &\textnormal{there is an internal $J$ such that $\px$ is a minimal}\\
          &\textnormal{prime ideal over the principal ideal $(\Q_J)$}
      \end{aligned}
      \bigg\}
  \]
\end{definition}
  This is a finite set by a theorem of Noether,
  and each $\px$ has height one by Krull's principal ideal theorem.
  So $\mathfrak{P}_N$
  is a finite set of prime divisors.
\begin{theorem}[Classification of prime divisors] \label{theorem:pd4}
  For $N = 4$ there are $|\mathfrak{P}_4|=8$
  prime divisors, the following
  and their inequivalent permutations\footnote{%
   So $\zl(\px_{++++})$ is given by all
   points of the form $k_i = vw_i^T$ with $w_1+w_2+w_3+w_4=0$.}:
  \[
        \begingroup
        \renewcommand{\arraystretch}{1.25}
        \begin{array}{c||c|c}
          \px & \textnormal{$\px$ is generated by the maximal minors of}
          & \textnormal{$\px$ is minimal over}\\
          \hline
          \px_{++++} & (k_1\,k_2\,k_3\,k_4) &
                        \textnormal{all three $(\Q_{ij})$}\\
          \px_{----} & (k_1^T\,k_2^T\,k_3^T\,k_4^T) & 
                        \textnormal{all three $(\Q_{ij})$}\\
          \px_{++--} & \textnormal{$(k_1\,k_2)$ and $(k_3^T\,k_4^T)$} &
                        (\Q_{12}) = (\Q_{34})
          \end{array}
          \endgroup
  \]
  If $N \geq 5$
  then every prime divisor
  $\px \in \mathfrak{P}_N$
  lies over a unique $(\Q_J)$:
  \begin{itemize}
    \item 
  If $J$ and $J^c$ both have at least three elements,
  then $(\Q_J)$ is itself prime.
    \item
  For $i \neq j$ there are exactly two minimal primes over
  $(\Q_{ij})$, namely $\px_{ij}^+$ and $\px_{ij}^-$
  generated by the maximal minors of
  $(k_i\,k_j)$ respectively $(k_i^T\,k_j^T)$.
 \end{itemize}
\end{theorem}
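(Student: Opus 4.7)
The backbone of the plan is the injective ring map $\kr{N}/I_N \hookrightarrow \vwr{N}/\JX_N$ of Lemma \ref{lemma:injrmap}, combined with the factorization
\[
\Q_{ij} \longmapsto (v_i \wedge v_j)(w_i \wedge w_j)
\]
in $\vwr{N}/\JX_N$, and, for $|J| \geq 3$, the Cauchy--Binet expansion $\Q_J \mapsto \sum_{\{a,b\} \subset J}(v_a \wedge v_b)(w_a \wedge w_b)$. All primes appearing in the theorem are defined by spinor-parallelism conditions, so the plan is to verify primality and codimension one by exhibiting each $\zl(\px) \subset \zl(I_N)$ as the image of an explicit irreducible spinor parameter space under the map induced by $\zl(\JX_N) \to \zl(I_N)$.

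For the $N = 4$ case, I plan, for each proposed $\px$: (i) to parametrize $\zl(\px)$ directly---for instance $\zl(\px_{++++})$ by $(v, w_1, \ldots, w_4)$ with $\sum_i w_i = 0$ via $k_i = v w_i^T$, and $\zl(\px_{++--})$ for the pair $(\{i,j\}|\{k,l\})$ by $(v, w_i, w_j, v_k, v_l, w)$ with $v(w_i + w_j)^T + (v_k + v_l) w^T = 0$ via $k_i = v w_i^T$, $k_j = v w_j^T$, $k_k = v_k w^T$, $k_l = v_l w^T$---which yields primality through an embedding of $\kr{4}/\px$ into a polynomial domain; (ii) to check codimension one by a dimension count against $\dim \zl(I_4) = 8$; (iii) to verify each containment $\Q_{mn} \in \px$ by observing that the relevant $k_{mn}$ is forced to be rank $\leq 1$ on the parametrization. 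For exhaustiveness, given a minimal prime $\px \ni \Q_{ij}$ in $\kr{4}/I_4$, extending to $\vwr{4}/\JX_4$ and using the factorization, the pulled-back variety lies in $\zl(v_i \wedge v_j) \cup \zl(w_i \wedge w_j)$; a case analysis shows $\zl(v_i \wedge v_j) \cap \zl(\JX_4)$ has two irreducible components, one where $v_k \wedge v_l = 0$ too (projecting to $\px_{++++}$) and one where $w_k \wedge w_l = 0$ (projecting to $\px_{(ij)(kl)}^{++--}$), with a symmetric picture for $\zl(w_i \wedge w_j)$. Thus there are exactly four minimal primes over each $(\Q_{ij})$, assembling across the three pair-partitions into the eight primes of $\mathfrak{P}_4$.

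For $N \geq 5$ and $|J| = 2$ the same factorization applies, but now $\zl(v_i \wedge v_j) \cap \zl(\JX_N)$ is itself irreducible: with $N - 2 \geq 3$ remaining spinor pairs, the four scalar equations of momentum conservation can generically be solved for two of the $w_\ell$'s as rational functions of the other variables, giving a birational parametrization by an irreducible polynomial space, hence just the two minimal primes $\px_{ij}^{\pm}$. For $|J|, |J^c| \geq 3$, I plan to show $(\Q_J)$ is itself prime by introducing an auxiliary on-shell variable $m = k_J$ that exhibits $\zl(I_N, \Q_J)$ as a fiber product $\zl(I_{|J|+1}) \times_B \zl(I_{|J^c|+1})$ over the cone $B = \{m \in \C^4 : \det m = 0\}$, where each factor is irreducible by Proposition \ref{lemma:irr} since $|J|+1, |J^c|+1 \geq 4$. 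Irreducibility of the fiber product then reduces to surjectivity and irreducibility of the generic fibers of each projection to $B$, fitting into the same complete-intersection framework as Lemma \ref{lemma:ci0} together with Lemma \ref{lemma:fib3}.

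The main obstacle is the bookkeeping of the $N = 4$ splitting: accurately tracking the four minimal primes per pair-partition, recognizing that $\px_{++++}$ and $\px_{----}$ appear as minimal over all three $(\Q_{ij})$ while each $\px_{(ij)(kl)}^{++--}$ appears over exactly one, and ruling out stray components from lower-dimensional strata of the parametrizing spaces. A secondary delicate point is verifying, for $N \geq 5$ with $|J|, |J^c| \geq 3$, that no further splitting of $(\Q_J)$ occurs; this requires control of the generic fibers of $\zl(I_N, \Q_J) \to B$ and is where the closest interaction with Lemma \ref{lemma:fib3} is expected.
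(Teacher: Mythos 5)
Your proposal takes a genuinely different route for most of the theorem. Where the paper resolves $N=4$ by a computer check and for $N\geq5$ uses the asymmetric fiber products $\zl(I_{|J|+1})\times_{\an{4}}\cone^{|J^c|}$ and $\zl(I^{\pm})\times_{\an{4}}\cone^{N-2}$ over the \emph{smooth} base $\an{4}$ (so Lemmas~\ref{lemma:fib2}, \ref{lemma:fib3} apply directly), you work upstairs in $\zl(\JX_N)$ via the spinor factorization of $\Q_J$ and, for $|J|,|J^c|\geq3$, propose a symmetric fiber product over the cone $B$. A conceptual hand-proof for $N=4$ would be a nice complement to the paper's computer check, and the broad strategy is sound; but there are concrete gaps to close.

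First, $\zl(I_{|J|+1})\times_B\zl(I_{|J^c|+1})$ has a singular base, which breaks the miracle-flatness step underlying Lemma~\ref{lemma:fib}; you should switch to the paper's presentation with base $\an{4}$ and second factor $\cone^{|J^c|}$ (no momentum conservation there --- that constraint is supplied by the fiber condition), which is also precisely the shape Lemma~\ref{lemma:fib3} expects. Second, in $N=4$, describing the first component of $\zl(v_i\wedge v_j)\cap\zl(\JX_4)$ as ``where $v_k\wedge v_l=0$ too'' is imprecise: on the locus where $v_i,v_j\parallel v$ and $v_k,v_l\parallel v'$ with $v\not\parallel v'$, momentum conservation forces $k_i+k_j=0$, which is codimension two; the genuine codimension-one component is where all four $v_a$ are pairwise parallel, and the codimension-two locus is in fact absorbed into the \emph{other} component (where $w_k\wedge w_l=0$, since $k_k+k_l=0$ there forces $w_k\parallel w_l$). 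This is exactly the ``stray components from lower-dimensional strata'' you flagged. Third, for $N\geq5$ with $|J|=2$, the birational parametrization shows a dense open subset of $\zl(v_i\wedge v_j)\cap\zl(\JX_N)$ is the image of an irreducible space, but you must still rule out components hiding in the locus where the coordinate solve degenerates; the paper sidesteps upstairs irreducibility entirely by constructing $\zl(\px_{ij}^{\pm})$ downstairs as irreducible fiber products, checking by symmetry that both have codimension one, and observing their union is $\zl(\Q_{ij})$. Finally, note that Lemma~\ref{lemma:fib3} gives primality of the minors \emph{ideal} outright, whereas the spinor-parametrization route, as stated, gives irreducibility of its zero set and would still owe a radicality argument.
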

\begin{proof}
 For $N=4$ one can make a computer check.
 For $N \geq 5$ we write these varieties as fiber products,
 see Appendix \ref{sec:fib}:
 \begin{align*}
 \zl(\Q_J)& \simeq \zl(I_{|J| + 1}) \times_{\an{4}} \cone^{|J^c|}    \\
 \zl(\px_{ij}^{\pm}) &\simeq \zl(I^{\pm}) \times_{\an{4}} \cone^{N-2}
 \end{align*}
 Here the first fiber product is defined using
 \[ 
 \zl(I_{|J|+1})
 \xrightarrow{\;\;((k_i)_{i \in J},k) \mapsto k \;\;} 
 \an{4}
 \xleftarrow{\;\;(k_i)_{i \notin J} \mapsto \sum_{i \notin J} k_i \;\;}
 \cone^{|J^c|}
  \]
 and the second is analogous. 
 Irreducibility and primality follow from 
 Lemmas \ref{lemma:fib2}, \ref{lemma:fib3}
 given that $I_{|J|+1}$, $I^{\pm}$ are prime.
 We have used $|J|,|J^c|\ge3$.
 The $\smash{\zl(\px_{ij}^{\pm})}$ have equal codimension by symmetry,
 and their union is $\zl(\Q_{ij})$, hence
 their codimension is one. Hence $\smash{\px_{ij}^{\pm}}$ are
 minimal primes over $(\Q_{ij})$ and there are no others.
 We have listed all $\px \in \mathfrak{P}_N$. 
 There are no duplicates since the $\zl(\px)$ are pairwise different.
 So every $\px$
 lies over a unique $(\Q_J)$.
\qed\end{proof}
\begin{definition}[The $P$ and $Z$ subsets] \label{def:pz}
  For $N \geq 4$ set
  \begin{align*}
    P_N & = \textstyle\bigcup_{\px \in \mathfrak{P}_N} \zl(\px)
    &
    Z_N & =
    \textstyle\bigcup_{\px, \qx \in \mathfrak{P}_N:\;
                       \px \neq \qx} \zl(\px) \cap \zl(\qx)
   \end{align*}
   which are closed subsets of $\zl(I_N)$.
   Let
   \[
     P_3 = Z_3 = \{k_1 = 0\} \cup \{k_2 = 0\} \cup \{k_3 = 0\}
   \]
   viewed as closed subsets of one of $\zl(I^{\pm})$.
\end{definition}
   Note that $P_N$ is equivalently the
   union of all $\zl(\Q_J)$,
   but there is no analogous definition of $Z_N$.
   Note that $Z_3$ has codimension two.
\begin{prop}[Good properties away from a codimension two subset]
  \label{prop:cig}
  Let $N \geq 4$.
  Then $Z_N \subset \zl(I_N)$ has codimension $\geq 2$.
  In the complement $Z_N^c$:
  \begin{itemize}
    \item 
      We have $k_i \neq 0$ for all $i = 1,\ldots,N$
      and $k_J\neq 0$ for all internal $J$.\\
      The $k_i \in \C^4$ are pairwise linearly independent.
    \item
      The variety $\zl(I_N)$ is smooth of dimension $3N-4$.
    \item $\Q_J$
      has nonzero derivative tangent to $\zl(I_N)$, for all internal $J$.\\
      In particular $\zl(\Q_J)$ is smooth and has codimension one.
    \item $P_N$ is a smooth codimension one subvariety.
      If a point in $\zl(\px)$ also lies in $\zl(\Q_J)$,
      then necessarily $\px \supset (\Q_J)$ and locally $\zl(\px) \simeq \zl(\Q_J)$.
  \end{itemize}
\end{prop}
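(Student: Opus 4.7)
The plan is to first bound the codimension of $Z_N$ by a general irreducibility argument; then to identify a ``bad locus'' $B$ containing every point at which one of the listed properties fails and to show $B \subset Z_N$ by a case analysis; and finally to deduce the remaining claims on $Z_N^c$ via the Jacobian criterion. For the codimension, each $\px \in \mathfrak{P}_N$ has height one in the domain $R = \kr{N}/I_N$ (primality of $I_N$ for $N \geq 4$ is Proposition \ref{lemma:irr}), so $\zl(\px)$ is irreducible of codimension one in $\zl(I_N)$. For $\px \neq \qx$ in $\mathfrak{P}_N$ the intersection $\zl(\px) \cap \zl(\qx)$ is a proper closed subset of the irreducible $\zl(\px)$, hence has codimension at least two in $\zl(I_N)$, and $Z_N$ is the finite union of such intersections.

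I would take $B$ to be the locus where some $k_i = 0$, or some $k_J = 0$ for internal $J$, or some pair $k_i, k_j$ is linearly dependent; by Proposition \ref{lemma:sloci} this $B$ also contains the singular locus of $\zl(I_N)$. The containment $B \subset Z_N$ is proved by exhibiting two distinct prime divisors through each point of $B$. The archetype is $k_i, k_j$ both nonzero and proportional: writing $k_i = v_i w_i^T$ and $k_j = v_j w_j^T$ this forces $v_i \parallel v_j$ \emph{and} $w_i \parallel w_j$, placing the point in both $\zl(\px_{ij}^+)$ and $\zl(\px_{ij}^-)$ for $N \geq 5$. The case $k_i = 0$ uses $\zl(\px_{ij}^\pm)$ for two different $j \neq i$; the case $k_J = 0$ reduces to the previous ones via the identity $k_{J \setminus \{i\}} = -k_i$, which supplies a second internal momentum of vanishing square at the point and hence a second minimal prime divisor distinct from any minimal prime over $(\Q_J)$. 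For $N = 4$ the prime divisors are indexed differently (Theorem \ref{theorem:pd4}) and the analogous assertions require a short case analysis splitting on whether the rank-one constraint forced by momentum conservation on the complementary pair puts the residual proportionality on the left or on the right spinor. I expect the main obstacle to be keeping this $N = 4$ bookkeeping clean.

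On $Z_N^c$ the smoothness of $\zl(I_N)$ of dimension $3N - 4$ is immediate from $B \subset Z_N$. For the nonzero tangent derivative of $\Q_J$ at $p \in Z_N^c$ I would apply the Jacobian criterion: suppose $d\Q_J|_p$ is a linear combination of the $d\Q_i|_p$ and the components of $d(k_1 + \cdots + k_N)|_p$, and equate coefficients in each tangent variable $\delta k_i$. Using that $d\det|_k \cdot \delta k = \mathrm{cof}(k) \cdot \delta k$ with $\mathrm{cof}$ a linear bijection on $2 \times 2$ matrices, this amounts to $\lambda + \mu_i\, \mathrm{cof}(k_i) = \mathrm{cof}(k_J)$ for $i \in J$ and $\lambda + \mu_i\, \mathrm{cof}(k_i) = 0$ for $i \notin J$, where $\lambda \in \C^4$ and $\mu_i \in \C$. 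Picking distinct $i_1, i_2 \in J^c$ (possible since $|J^c| \geq 2$), pairwise linear independence of the $k_i$ on $Z_N^c$ forces $\mu_{i_1} = \mu_{i_2} = 0$ and hence $\lambda = 0$; then $\mathrm{cof}(k_J) = \mu_i\, \mathrm{cof}(k_i)$ for $i \in J$ gives $k_J \parallel k_i$, and applying this at distinct $i_1, i_2 \in J$ together with $k_J \neq 0$ contradicts pairwise linear independence. The remaining claim about $P_N$ and the statement concerning $\zl(\px) \cap \zl(\Q_J)$ is then formal: at $p \in Z_N^c$ at most one $\zl(\px)$ passes through $p$, so if $p \in P_N$ locally $P_N = \zl(\px)$, and if moreover $p \in \zl(\Q_J)$ then the unique minimal prime over $(\Q_J)$ through $p$ must be $\px$, giving $\px \supset (\Q_J)$ and locally $\zl(\Q_J) = \zl(\px)$; the nonzero tangent derivative just established then makes $P_N$ smooth of codimension one at $p$.
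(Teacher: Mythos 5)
Your proof follows the same overall plan as the paper's: bound $\operatorname{codim} Z_N$ via irreducibility and height one of the $\px$, identify a bad locus $B$ containing the singular locus of $\zl(I_N)$, show $B \subset Z_N$ by exhibiting two distinct prime divisors through each point of $B$, and then deduce the remaining bullets. The tangent-derivative computation is phrased a bit differently — you solve the linear system $d\Q_J \in \operatorname{span}\{d\Q_i, d(k_1+\cdots+k_N)\}$ directly, while the paper observes that $(k_1,\ldots,k_N)\mapsto k_J$ is a submersion on $Z_N^c$ and composes with $\det$ — but both are elementary and your version is correct (one small polish: in the $k_J=0$ step the paper uses $J\cup\{i\}$ rather than your $J\setminus\{i\}$; either works since $J\setminus\{i\}$ is internal when $|J|\ge 3$ and is distinct from $J$ and $J^c$).

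The one place the proposal is genuinely incomplete is the $N=4$ case of $B\subset Z_N$, which you wave off as bookkeeping to be done. It is a bit more than that: for $N=4$ there are no prime divisors $\px_{ij}^\pm$ — all eight elements of $\mathfrak{P}_4$ (namely $\px_{++++}$, $\px_{----}$ and the six sign-patterns like $\px_{++--}$) impose conditions on \emph{all four} momenta — so the containments you use for $N\ge5$ (e.g.\ ``$k_i=0$ places the point in $\zl(\px_{ij}^\pm)$'') simply do not apply. To run the argument one needs an extra ingredient, e.g.\ the factorization $\det(k_p+k_q)=\det(v_p\,|\,v_q)\det(w_p\,|\,w_q)$ applied to a complementary pair: if $k_1=0$ say, then $\Q_{34}=\det(-k_2)=0$ forces $v_3\parallel v_4$ or $w_3\parallel w_4$, and one then splits cases, deducing in each a second prime divisor distinct from the first. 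Your closing sentence shows you see this shape, but the argument should be carried out rather than deferred; and since the statement is claimed for $N\ge4$, the $N=4$ case cannot be dropped. (For what it is worth, the paper's own write-up is similarly compressed at this spot, so this is a gap in exposition more than a sign of a wrong approach.)
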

\begin{proof}
	Every intersection of two distinct irreducible codimension one subvarieties
	has codimension $\ge2$, so $Z_N$ has codimension $\ge2$.
  Concerning $Z_N^c$,
  several claims follow from $\zl(\px_{ij}^+) \cap \zl(\px_{ij}^-)$
  which contains all points with $k_i = 0$, 
  all points with $k_{\{i,j\}}=k_i+k_j=0$,
  and all points where $k_i$ and $k_j$ are linearly dependent.
  Points where $k_J=0$ with $|J|,|J^c| \geq 3$
  are in $\zl(\Q_J) \cap \zl(\Q_{J\cup \{i\}})\subset Z_N$
  for all $i \notin J$.
  Smoothness of $\zl(I_N)$ by Lemma \ref{lemma:sloci}.
The tangent derivative of $\Q_J=\det k_J$ is nonzero
since $k_J \neq 0$
and the Jacobian
of $Z_N^c \to \C^4, (k_1,\ldots,k_N) \mapsto k_J$
has rank four, which one sees using pairwise linear independence
of the $k_i$. 
Every point of $P_N\cap Z_N^c$ lies in a unique $\zl(\px)$,
hence in $\zl(\Q_J)$ iff $\px \supset (\Q_J)$. 
\qed\end{proof}

\section{The helicity sheaf for one particle} \label{sec:helicityone}

Here
we discuss certain rank one sheaves on the cone $X = \zl(ad-bc)$
and show how they arise as the 
homology of the complexes $\cc_{\pm h}$.
\step
In this section we denote $R = \C[k]/(ad-bc)$ and
$k = (\begin{smallmatrix} a & b \\ c & d \end{smallmatrix})$.
It is convenient to set $k^+ = k$ and $k^- = k^T$.
For every half-integer $h \geq 0$ let
\begin{equation}\label{eq:smat}
        S^{2h} k^{+} = 
        \begin{pmatrix} 
        a^{2h}&\cdots &  b^{2h} \\[-1mm] 
         \vdots & \ddots & \vdots \\[1mm]
        c^{2h}&\cdots & d^{2h}  
        \end{pmatrix}
\end{equation}
be the $(2h+1)\times (2h+1)$ matrix with entries in $R$ 
where multiplication of an entry by 
$\smash{\frac{b}{a}}=\smash{\frac{d}{c}}$ gives the next entry to the right, 
multiplication by $\smash{\frac{c}{a}}=\smash{\frac{d}{b}}$ gives the next 
entry below.
It is the symmetrized  Kronecker product of matrices,
\[ 
S^{2h}k^{+} \in \End_R(S^{2h}R^2) 
\]
using
$S^{2h}R^2 \simeq R^{2h+1}$.
The matrix $S^{2h}k^{-}$ is the transpose.
\begin{remark} \label{remark:lor1}
The Lorentz group acts as graded ring automorphisms on $R$.
The degree subspaces of $R$ are irreducible,
$R \simeq \bigoplus_h \rep{h}{h}$
where a $\C$-basis of $\rep{h}{h}$ is given by the entries of \eqref{eq:smat}.
There is a category of graded $R$-modules whose objects
$M$ are also Lorentz modules
with Lorentz equivariant scalar multiplication $R \times M \to M$,
and whose morphisms are Lorentz equivariant.
In this category,
$S^{2h}k^+$ is the unique morphism
$R \otimes \rep{0}{h} \to R \otimes \rep{h}{0}$ of degree $2h$,
up to normalization.
Hence its image is also a Lorentz module,
and its degree subspaces are seen to be irreducible\footnote{%
For every half-integer $p \geq 0$
and every equivariant map
$\rep{p}{p} \otimes \rep{0}{h} \to \rep{p+h}{p+h} \otimes \rep{h}{0}$,
its image must be contained in the unique $\rep{p}{p+h}$ subspace on the right, using \eqref{eq:cg}.
},
$\image S^{2h} k^+
\simeq \bigoplus_p \rep{p}{p+h}$.
\end{remark}
\begin{lemma}[Locally free of rank one] \label{lemma:r1}
For every maximal ideal $\mx$ corresponding to a point in $X-0$,
the complement of the origin,
there exist
column vectors $i,r$ and row vectors $\ell,s$
with entries in $R_\mx$ such that
$S^{2h}k^\pm = is$ and $\ell i = sr = 1$.
The sheaf $(\image S^{2h}k^{\pm})\wt{\phantom{x}}|_{X-0}$ is locally
 free of rank one\footnote{%
They are isomorphic to the rank one sheaves associated to multiples of the
Weil divisors $(b,d)$ and $(c,d)$ on the cone.
These divisors are mutual inverses in, and generators of, the Weil divisor class group of the cone
which is $\Z$. We do not discuss this perspective.
}.
\end{lemma}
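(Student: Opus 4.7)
The plan is to write down an explicit rank-one factorization of the matrix $S^{2h}k^\pm$ in the local ring $R_\mx$; the locally free statement then follows by inspection.

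Since $\mx$ corresponds to a point of $X-0$, it cannot contain all four of $a,b,c,d$, so at least one of them is a unit in $R_\mx$. The substitution $k \mapsto k^T$ is an automorphism of $R$ that interchanges $k^+$ and $k^-$, and the row- and column-swaps of $k$ (which preserve the ideal $(ad-bc)$ up to sign) permute the roles of $a,b,c,d$. Composing these, it suffices to treat the factorization of $S^{2h}k^+$ in the case $a \in R_\mx^\times$. In that case I would define in $R_\mx$
\[
i \;=\; \bigl(1,\, c/a,\, (c/a)^2,\,\ldots,\,(c/a)^{2h}\bigr)^T,\qquad
s \;=\; \bigl(a^{2h},\, a^{2h-1}b,\,\ldots,\,b^{2h}\bigr),
\]
and verify $is = S^{2h}k^+$. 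The outer product entry $(i',j')$ equals $(c/a)^{i'}a^{2h-j'}b^{j'} = a^{2h-i'-j'}b^{j'}c^{i'}$, which agrees with \eqref{eq:smat} directly when $i'+j' \le 2h$, and, when $i'+j' = 2h+m$ with $m \ge 1$, follows after using $bc = ad$ in the form $b^mc^m = a^md^m$ to cancel the negative power of $a$. Unimodularity is by inspection: $\ell = (1,0,\ldots,0)$ gives $\ell i = 1$, and $r = (a^{-2h},0,\ldots,0)^T$ gives $sr = 1$.

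The sheaf claim then follows at once. The factorization gives $\image S^{2h}k^\pm \subseteq R_\mx \cdot i$, while $i = S^{2h}k^\pm \cdot r$ yields the reverse inclusion, and the retraction $\ell i = 1$ exhibits $R_\mx \cdot i$ as a free $R_\mx$-module of rank one. Since this holds at every closed point of $X-0$, the associated sheaf is locally free of rank one there.

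The main point of care, rather than a substantive obstacle, is to ensure that the entries of $i$ and $s$ genuinely lie in $R_\mx$ (not merely in its fraction field) and that the identity $is = S^{2h}k^+$ holds in $R_\mx$ rather than only after inverting $a$; this is precisely what the $bc = ad$ rewriting in the range $i'+j' > 2h$ achieves. Beyond this bookkeeping, the argument reduces to direct algebraic verification and does not require any structural input beyond irreducibility of the cone and the explicit shape of \eqref{eq:smat}.
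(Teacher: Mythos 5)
Your proof is correct and follows the same strategy as the paper's: when $a$ is a unit in $R_\mx$, write an explicit split rank-one factorization of $S^{2h}k^\pm$ over $R_\mx$ (the other three cases by symmetry), then read off local freeness. The paper streamlines the verification by first factorizing the $2\times 2$ matrix $k = (\begin{smallmatrix}a\\c\end{smallmatrix})(1,b/a)$ and observing that the symmetric power of a split rank-one factorization is automatically again one, which spares the explicit $bc=ad$ bookkeeping you carry out by hand; your factorization is the same one up to rescaling by the unit $a^{2h}$.
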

See Appendix \ref{appendix:m2s} for the module to sheaf functor $\wt{\phantom{x}}$.
\begin{proof}
Over $R_\mx$ such a
`$\textnormal{split monomorphism} \circ \textnormal{split epimorphism}$'
factorization always exists for $k$,
for example if $a \notin \mx$ then
$k = (\begin{smallmatrix} a \\ c \end{smallmatrix})(1,b/a)$
and
$(1/a,0) (\begin{smallmatrix} a \\ c \end{smallmatrix})
= (1,b/a) (\begin{smallmatrix} 1 \\ 0 \end{smallmatrix}) = 1$.
Hence one exists for $S^{2h}k^\pm$.
Such a factorization is also valid over $R_f$
where $f$ is a common multiple of the denominators.
The locally free rank one claim follows from
$(\image S^{2h} k^{\pm})_f \simeq \image ((S^{2h} k^{\pm})_f) \simeq R_f$
using respectively the exactness of localization and the factorization.
\qed\end{proof}
\begin{remark}\label{remark:vbp}
As rank one vector bundles on $X-0$,
these are subbundles of a trivial bundle
with fibers at $k = vw^T$ given by $\C v^{\otimes 2h}$ and $\C w^{\otimes 2h}$
respectively.
The fiberwise pairing $v^{\otimes 2h} \otimes w^{\otimes 2h} \mapsto 1$
is independent of the factorization $k = vw^T$,
hence it is globally well-defined. This is the next lemma.
\end{remark}
\begin{lemma}[Inverse sheaf] \label{lemma:isx}
There is an isomorphism of sheaves
\begin{subequations}
\begin{equation}\label{eq:inv}
(\image S^{2h}k^+)\wt{\phantom{x}}|_{X-0}
\otimes (\image S^{2h}k^-)\wt{\phantom{x}}|_{X-0}
\;\simeq\; \O_X|_{X-0}
\end{equation}
It is induced by taking powers of the identity
\begin{equation}\label{eq:kr4}
k^+ \otimes k^-
\;=\;
\bigg(\begin{smallmatrix}
a \\ b \\ c \\ d
\end{smallmatrix}\bigg)
\begin{pmatrix}
a & c & b & d
\end{pmatrix}
\end{equation}
\end{subequations}
which holds over $R$,
using the matrix Kronecker product on the left hand side.
\end{lemma}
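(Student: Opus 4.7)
The plan is to construct a nowhere-vanishing global section $\tau$ of $(\image S^{2h}k^+)\wt{\phantom{x}} \otimes (\image S^{2h}k^-)\wt{\phantom{x}}$ on $X-0$; by Lemma \ref{lemma:r1} this tensor product is locally free of rank one on $X-0$, so $\tau$ trivializes it, and the isomorphism in the statement is the inverse of $1 \mapsto \tau$.

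First I would take the $2h$-th Kronecker power of the factorization \eqref{eq:kr4}, $k^+ \otimes k^- = \alpha\beta^T$ with $\alpha = (a,b,c,d)^T$ and $\beta = (a,c,b,d)^T$, to obtain the rank-one factorization $(k^+ \otimes k^-)^{\otimes 2h} = \alpha^{\otimes 2h}(\beta^{\otimes 2h})^T$ in $\End_R((R^4)^{\otimes 2h})$. Under the canonical reshuffle $(R^2 \otimes R^2)^{\otimes 2h} \simeq (R^2)^{\otimes 2h} \otimes (R^2)^{\otimes 2h}$, the left hand side becomes $(k^+)^{\otimes 2h} \otimes (k^-)^{\otimes 2h}$; restricting to the symmetric subspace in each factor gives
\[
S^{2h}k^+ \otimes S^{2h}k^- \;=\; \tau\,\sigma^T,
\]
where $\tau \in S^{2h}R^2 \otimes S^{2h}R^2$ is the reshuffled image of $\alpha^{\otimes 2h}$, and $\sigma$ is the analogous image of $\beta^{\otimes 2h}$. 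Both images automatically land in the symmetric subspace because of the symmetric shape of $\alpha$ and $\beta$, as I explain next.

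Next I would show that $\tau$ lies in the $R$-submodule $\image S^{2h}k^+ \otimes_R \image S^{2h}k^-$ and in fact generates it locally on $X-0$ (injectivity of the inclusion into $S^{2h}R^2 \otimes S^{2h}R^2$ on $X-0$ follows from flatness of the two factors, which are projective by Lemma \ref{lemma:r1}). On a patch $\{f \neq 0\} \subset X-0$ as furnished by Lemma \ref{lemma:r1}, one may solve $k = vw^T$ over $R_f$ (e.g., $v = (1,c/a)^T$, $w = (a,b)^T$ on $\{a\neq 0\}$), whence
\[
S^{2h}k^+ \;=\; v^{\otimes 2h}(w^{\otimes 2h})^T, \qquad S^{2h}k^- \;=\; w^{\otimes 2h}(v^{\otimes 2h})^T,
\]
so $v^{\otimes 2h}$ generates $\image S^{2h}k^+$ and $w^{\otimes 2h}$ generates $\image S^{2h}k^-$ over $R_f$. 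Under the identification $\C^2 \otimes \C^2 \simeq \C^4$ one has $\alpha = v \otimes w$, and the reshuffle then identifies $\tau$ with $v^{\otimes 2h} \otimes w^{\otimes 2h}$ on the patch, a local generator of the tensor product.

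Since such patches cover $X-0$ by Lemma \ref{lemma:r1}, the element $\tau$ is a nowhere-vanishing global section of the tensor product on $X-0$ and yields the required trivialization. The only real obstacle is bookkeeping: matching the $2h$-th Kronecker power of \eqref{eq:kr4}, which initially lives in $(R^4)^{\otimes 2h}$, with the symmetric constructions $S^{2h}k^{\pm}$ on the correct subspaces. Once one commits to the identification $\alpha = v \otimes w$ under $\C^2 \otimes \C^2 \simeq \C^4$ and uses the elementary formula $S^{2h}(vw^T) = v^{\otimes 2h}(w^{\otimes 2h})^T$, the argument collapses to a local check.
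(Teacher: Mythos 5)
Your proof is correct and follows essentially the same route as the paper: both hinge on the $2h$-th Kronecker power of \eqref{eq:kr4} together with the local rank-one factorizations of $S^{2h}k^{\pm}$ furnished by Lemma \ref{lemma:r1}. The paper packages the argument as the composite $\phi\circ\psi$ landing in the ideal $R_{\geq 2h}\subset R$, whereas you directly exhibit the column $\tau$ (the paper's $i$, with entries the degree-$2h$ monomials) as a nowhere-vanishing section on $X-0$, but this is the same trivialization.
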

\begin{proof}
Set $K^\pm = S^{2h} k^{\pm}$ and $L = (2h+1)^2$.
By \eqref{eq:kr4},
$K^+ \otimes K^- = is$
where $i$ is a column vector and $s$ a row vector with entries in $R$.
Their entries are a $\C$-basis of the degree $2h$ subspace $R_{2h} \subset R$,
they have the same entries as the matrix \eqref{eq:smat}.
Since $s : R^L \to R_{\ge2h}$ is surjective, $i: R_{\ge2h} \to R^L$ is injective,
they induce an isomorphism $\phi: \image(K^+ \otimes K^-) \to R_{\geq 2h}$.
Define the canonical map
$\psi: \image K^+ \otimes \image K^- \to \image (K^+ \otimes K^-)$.
Then $\phi\circ \psi : \image K^+ \otimes \image K^- \to R_{\geq 2h}$
induces \eqref{eq:inv}.
In fact, every point in $X-0$
has an open neighborhood of the standard form $D_f$
such that $(R_{\geq 2h})_f \simeq R_f$,
and $\psi_f$ is an isomorphism
using factorizations of $K^\pm$ over $R_f$
as in Lemma \ref{lemma:r1}.
\qed\end{proof}
\begin{lemma}[Single particle homology,
              cf.~Lemma \ref{lemma:sph}] \label{lemma:sph2}
View $\cc_{\pm h}$ as a complex of free $R$-modules,
so the differential $d$ is a matrix with entries in $R$.
\begin{itemize}
\item Local claim:
For every $\mx$ corresponding to a point
in $X-0$ there exists a contraction of the differential by matrices $h,i,p$
with entries in $R_\mx$, with $dhd = d$\footnote{%
So
$h^2 = 0$,
$hdh=h$,
$dhd=d$,
$pi = \one$,
$ip = \one-dh-hd$.
See \eqref{eq:hip}.
}.
We have $i = i^1 \oplus i^2$
and $p = p^1 \oplus p^2$ 
where $i^1$, $i^2$ are column vectors and $p^1$, $p^2$ are row vectors.
In particular
$H^j(\cc_{\pm h})\wt{\phantom{x}}|_{X-0}$
is locally free of rank one if $j=1,2$ and zero otherwise.
\item Global claim: There 
are isomorphisms of locally free rank one sheaves
\begin{equation}\label{eq:zzt}
  (\image S^{2h}k^\pm)\wt{\phantom{x}}|_{X-0}
  \;\simeq\;
  H^1(\cc_{\pm h})\wt{\phantom{x}}|_{X-0}
  \;\simeq\;
  H^2(\cc_{\pm h})\wt{\phantom{x}}|_{X-0}
\end{equation}
induced by
$\image S^{2h} k^{\pm} \hookrightarrow S^{2h}R^2 \simeq \cc_{\pm h}^1$
resp.~the differential in Lemma \ref{lemma:uniqd}.
\end{itemize}
\end{lemma}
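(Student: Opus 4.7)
The statement decomposes naturally into the local claim (a contraction over $R_\mx$) and the global claim (two isomorphisms of rank one sheaves on $X - 0$).

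For the local claim, I would invoke Theorem~\ref{theorem:opthom} after verifying its hypothesis: by Lemma~\ref{lemma:sph}, every $q \neq 0$ with $\Q(q) = 0$ is a regular homology point of $\cc_{\pm h}$. The theorem then provides matrices $h, i, p, \hprime, \dprime$ whose entries are rational in $k$ and regular at $q$; reducing modulo $(ad-bc)$ gives a contraction of $\cc_{\pm h}$ over $R_\mx$. The theorem only asserts $dhd = d$ at the single point $q$, not globally over $R_\mx$, but the existence of the contraction already shows that $H^1$ and $H^2$ are direct summands of the free $R_\mx$-modules $\cc_{\pm h}^1$ and $\cc_{\pm h}^2$, hence, as finitely generated projective modules over the Noetherian local ring $R_\mx$, free of rank one. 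This lets me split each $\cc_{\pm h}^j$ into summands involving $H^j$, $\image d$ and a complement, and then construct a contraction satisfying $dhd = d$ directly from those splittings. The decomposition $i = i^1 \oplus i^2$ and $p = p^1 \oplus p^2$ is automatic since homology is concentrated in degrees one and two. Locally-free-of-rank-one for $H^j(\cc_{\pm h})\wt{\phantom{x}}|_{X-0}$ follows since this holds at every stalk.

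For the first sheaf isomorphism, the plan is to verify $\image S^{2h}k^\pm \subset \ker d_1$ in $S^{2h}R^2$ by direct calculation, and then to check that the inclusion is fiberwise nonzero. Using $S^{2h}k^+ \cdot z^{\otimes 2h} = (kz)^{\otimes 2h}$ and the matrix identity $k^T \eps k = (ad-bc)\eps = \Q\eps$, one gets
\[
d_1\bigl((kz)^{\otimes 2h}\bigr) \;=\; (kz)^{\otimes 2h-1} \otimes (k^T \eps k)z \;=\; \Q \cdot (kz)^{\otimes 2h-1} \otimes \eps z,
\]
which vanishes in $R$; and $\ker d_1 = H^1(\cc_{+h})$ since there is no degree zero term. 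At $q \in X - 0$ with $k_0 = vw^T$, the fiber of $\image S^{2h}k^+$ is $R_\mx \cdot v^{\otimes 2h}$, which represents a nonzero class in $H^1$. By Lemma~\ref{lemma:r1} and the local claim both sides sheafify to line bundles on the smooth variety $X - 0$, so a fiberwise nonzero morphism is an isomorphism. The $\cc_{-h}$ case is parallel after exchanging rows and columns and using $k \eps k^T = \Q \eps$.

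For the second sheaf isomorphism, induced by the canonical differential $\dprime$ of Lemma~\ref{lemma:uniqd}: locally on $X - 0$, this map arises as $pdi/\Q$ via Theorem~\ref{theorem:opthom}, equivalently as $\dot d/\dot\Q$ on homology for any transversal direction $\xi$. The canonicity asserted in Lemma~\ref{lemma:uniqd} (independence of $\xi$ and of the local contraction after reduction to homology) ensures these local descriptions glue to a well-defined global sheaf morphism $\dprime : H^1 \to H^2$ on $X - 0$. At each fiber, the induced $\dprime$ is an exact differential on the two-term complex $H^1(q) \to H^2(q)$, hence an isomorphism of one-dimensional spaces; a fiberwise nonzero morphism between line bundles on a smooth variety is an isomorphism. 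The two main obstacles I anticipate are upgrading $dhd = d$ from the single point $q$ to the full local ring $R_\mx$ in the local claim, which rests on the projective-equals-free argument for finitely generated modules over a local Noetherian ring, and verifying that the locally defined $\dprime$ glues to a well-defined global sheaf morphism, which rests on the canonicity in Lemma~\ref{lemma:uniqd}.
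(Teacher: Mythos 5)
Your global claim follows the paper's proof closely: the direct computation of $d_1\bigl((kz)^{\otimes 2h}\bigr) = \Q\cdot(kz)^{\otimes 2h-1}\otimes\eps z$ using $k^T\eps k = \Q\eps$, the fiberwise check that $v^{\otimes 2h}$ survives in homology, and the use of Lemma~\ref{lemma:uniqd} to glue the local $\dprime$'s into the second isomorphism are all essentially what the paper does.

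The local claim, however, goes wrong in a subtle but consequential way. You correctly observe that Theorem~\ref{theorem:opthom}, read over $\C(k)$, only asserts $dhd=d$ at the single point $q$. But the lemma is stated over $R_\mx = (\C[k]/(ad-bc))_\mx$, and once you reduce the theorem's contraction modulo $ad-bc$, the identity $pdi = \Q\dprime$ becomes $pdi = 0$, because $\Q = ad-bc$ vanishes in $R$. Since $dhd = d$ is equivalent to $pdi = 0$ for any contraction (as recorded in Section~\ref{sec:pre}), the contraction from Theorem~\ref{theorem:opthom} already satisfies $dhd = d$ identically over $R_\mx$, not just at $q$. There is no gap to repair. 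Moreover, your proposed repair is not self-supporting: the assertion that ``the existence of the contraction already shows that $H^1$ and $H^2$ are direct summands of the free $R_\mx$-modules $\cc_{\pm h}^j$'' requires identifying the image of the projection $\one - dh - hd$ with the homology of $\cc_{\pm h}$, which is precisely what $dhd=d$ (equivalently $pdi=0$) guarantees. If you drop that identity, the contraction only gives $H(\cc_{\pm h}) \simeq H(\hx_q\otimes R_\mx, pdi)$ as a quasi-isomorphism, and the right-hand side is not obviously a projective module, so ``projective plus local implies free'' has nothing to grab onto. In short: the detour both addresses a non-problem and would fail as written; simply observe that $\Q = 0$ in $R$ kills $pdi$, and the paper's one-line invocation of Theorem~\ref{theorem:opthom} is complete.
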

\begin{proof}
The contraction is by Theorem \ref{theorem:opthom},
available by Lemma \ref{lemma:sph}.
The contraction is valid over some $R_f$,
hence
$H(\cc_h)_f\simeq H((\cc_h)_f) \simeq R_f\oplus R_f$
since localization is exact.
This implies the local claim.
The map $\image S^{2h} k^+ \hookrightarrow \cc_{h}^1$
induces a map
$\phi: \image S^{2h} k^+ \to H^1(\cc_{h})$,
by \eqref{eq:diffk} and since $k^- \eps k^+ = 0$ over $R$.
For all $\mx$ corresponding to a point in $X-0$
the map $\phi_\mx$ is an isomorphism.
To prove this use
a factorization
as in Lemma \ref{lemma:r1} and a contraction as in this lemma
to get a map $R_\mx \to R_\mx$
and check that it is an isomorphism,
which comes down to a statement over $R_\mx/\mx R_\mx \simeq \C$
namely checking that at every point $k = vw^T$
the element $v^{\otimes 2h}$ is nonzero in homology.
This proves the first isomorphism in \eqref{eq:zzt}.
For the second,
note that $i{\dprime}p$ induces an isomorphism
$H^1(\cc_{h})_\mx \to H^2(\cc_{h})_\mx$
whose inverse is induced by $i{\hprime}p$.
These matrices from Theorem \ref{theorem:opthom} have entries in $R_\mx$.
These local isomorphisms are induced by a global isomorphism on $X-0$
by Lemma \ref{lemma:uniqd}.
\qed\end{proof}

\section{The helicity sheaf} \label{sec:helicity}

We show that a certain rank one sheaf on $\zl(I_N)$
satisfies a variant of Hartogs extension if $N \geq 4$, while
for $N=3$ there is local cohomology along $Z_3$.
These results are used respectively to prove
the uniqueness of amplitudes
and to define the 2-to-1 amplitude,
in the next section.
\step
In this section,
$R$ denotes one of $R^\pm$ if $N=3$ respectively $R_N$ if $N \geq 4$, 
and $X$ denotes one of $X^\pm$ if $N=3$ respectively $X_N$ if $N \geq 4$.
Here,
\begin{align*}
  R^{\pm} & = \kr{3}/I^{\pm} &
  X^{\pm} & = \zl(I^{\pm})\\
  R_N & = \kr{N}/I_N &
  X_N & = \zl(I_N) 
\end{align*}
Note that $X$ is irreducible, $R$ is its coordinate ring and it is Cohen-Macaulay.
Set $k_i = (\begin{smallmatrix} a_i & b_i \\ c_i & d_i \end{smallmatrix})$
and define the matrices $S^{2h}k_i^{\pm}$ as in Section \ref{sec:helicityone}.
\begin{definition}[The helicity module $M$] \label{def:hmm}
  For every half-integer $h \geq 0$
  and every $N$-tuple of signs $\sigma \in \{-,+\}^N$ with $N \geq 3$:
  \begin{itemize}
    \item If $N \geq 4$, let $M_h^\sigma$
  be the finitely generated $R$-module that is the image of
  the following Kronecker product of matrices:
  \begin{equation}\label{eq:mat}
    S^{2h}k_1^{\sigma_1} \otimes \cdots \otimes S^{2h}k_N^{\sigma_N}
  \end{equation}
\item If $N=3$, make the same definition over $R^\pm$,
  denoted $M^{\pm,\sigma}_h$.
\end{itemize}
We often use the shorthand $M$.
Note that the module $M$ inherits a grading from the ambient free module,
and as such is generated in degree $2hN$.
\end{definition}
\begin{remark} \label{remark:lorN}
Analogous to Remark \ref{remark:lor1}, here
the Lorentz group acts as automorphisms on $R$ separately on each $k_i$,
preserving $k_1 + \ldots + k_N = 0$.
Hence $M$ is also a Lorentz module.
\end{remark}
One always has $\wt{M}(X) \simeq M$, see Appendix \ref{appendix:m2s}.
\begin{theorem}[Hartogs extension for the helicity sheaf] \label{theorem:nlcm}
  If $N \geq 4$ then for all Zariski closed subsets $Y \subset X$ of codimension $\geq 2$,
  the restriction map from $X$ to $X-Y$ induces an isomorphism
  \[
      M \simeq \wt{M}(X-Y)
  \]
\end{theorem}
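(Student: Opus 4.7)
The plan is to reduce the Hartogs-type statement for $M$ to the analogue for the structure sheaf (Proposition \ref{prop:htg}) by showing that $M$ satisfies Serre's $S_2$ condition as an $R$-module. Injectivity of the restriction map is automatic: $M \subset R^L$ with $L = (2h+1)^N$ is torsion-free and $X-Y$ is dense in $X$. For surjectivity, any section $s \in \wt{M}(X-Y)$ extends, via the inclusion $\wt{M} \subset \O_X^L$ and Hartogs for $\O_X$, to a unique element $\tilde s \in R^L$, and one has to show $\tilde s \in M$. Equivalently, letting $C := R^L/M$, we must show $H^0_Y(C) = 0$ for every closed $Y \subset X$ of codimension $\geq 2$; this is the same as the statement that no associated prime of $C$ has height $\geq 2$. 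Applying the depth lemma to $0 \to M \to R^L \to C \to 0$ together with the Cohen-Macaulayness of $R^L$ (Lemma \ref{lemma:ci0}), this reduces further to $M$ itself being $S_2$.

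For $S_2$-ness of $M$, the key observation is that $\wt M|_{X-S}$ is locally free of rank one on the complement of $S := \bigcup_{i=1}^N \{k_i = 0\}$: at any $\mx$ corresponding to a point outside $S$, Lemma \ref{lemma:r1} supplies split-monomorphism/split-epimorphism factorizations of each $S^{2h}k_i^{\sigma_i}$ over $R_\mx$, and taking Kronecker products yields the same for the matrix defining $M$, so $M_\mx$ is free of rank one. A straightforward dimension count (the subvariety $\{k_i = 0\} \subset X$ is isomorphic to $\zl(I_{N-1})$, of dimension $3N-7$) shows that $S$ has codimension $\geq 3$ in $X$ for $N \geq 4$. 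Over the normal Cohen-Macaulay domain $R$, $S_2$-ness of $M$ is equivalent to reflexivity $M = M^{**}$, and the cokernel of the canonical inclusion $M \hookrightarrow M^{**}$ is a priori supported on $S$.

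The hard part is showing that this cokernel vanishes, i.e., that $M$ is already divisorially closed despite the failure of local freeness at the points where some $k_i = 0$. My approach would pass to the spinor variety $X' = \zl(\JX_N)$ and the surjection $\pi\colon X' \to X$ of Section \ref{sec:variety}: on $X'$ the factorization $k_i = v_i w_i^T$ rewrites each pulled-back matrix $S^{2h}k_i^{\sigma_i}$ as an explicit outer product of a column and a row vector, so that the pulled-back image of the Kronecker product becomes a transparent product of principal fractional ideals that one can analyze directly. Combined with Lemma \ref{lemma:pre2}, which says preimages of codimension-$\geq 2$ subsets under $\pi$ remain codimension-$\geq 2$ when $N \geq 4$, and the Hartogs statement for $\O_{X'}$ of Proposition \ref{prop:htg}, one can then descend back to $X$ by taking $(\C^\times)^N$-invariants for the scaling action $(v_i, w_i) \mapsto (\lambda_i v_i, \lambda_i^{-1} w_i)$ (under which each $k_i$ is invariant) to conclude the reflexivity of $M$ and hence the theorem.
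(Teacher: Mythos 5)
Your proposal is correct, and in the end it relies on the same key mechanism as the paper's proof: pass to the spinor variety $X' = \zl(\JX_N)$, apply Hartogs there (Proposition \ref{prop:htg} holds for $X'$ as a complete intersection), use Lemma \ref{lemma:pre2} to control the codimension of the pullback $Y'$, and descend to $X$ by taking $(\C^\times)^N$-invariants. The genuine difference is the preliminary framing: you reduce the statement to $M$ being $S_2$ (equivalently reflexive, since $M$ is torsion-free over the normal Noetherian domain $R$), observe that the cokernel of $M\hookrightarrow M^{**}$ is confined to the codimension-three set $S=\bigcup_i\{k_i=0\}$, and only then invoke the spinor-variety argument to kill it. The paper skips this layer and proves bijectivity of $M \to \wt M(X-Y)$ directly for arbitrary $Y$ via the commutative diagram \eqref{eq:cdg}, built on the injective $\phi$-linear map $\alpha\colon M\to R'$, $Kx\mapsto S\phi^L x$, whose image is identified by the $(\C^\times)^N$-weight condition \eqref{eq:autoy}. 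Your reduction buys a more standard commutative-algebra vocabulary (depth, local cohomology, reflexivity) and pinpoints $S$ as the obstruction locus, at the cost of extra machinery; the paper's route is shorter since it handles all $Y$ in one stroke and never needs to know $S$ explicitly. Two small points worth tightening: Lemma \ref{lemma:r1} factors $S^{2h}k^{\pm}$ over the single-cone ring $\C[k]/(ad-bc)$, so getting the factorization over $R_\mx$ requires pushing it forward along $\C[k_i]/\Q_i\to R$ as in Proposition \ref{prop:mphom} (routine, not a gap); and the description of the pullback as "principal fractional ideals" is a valid picture but not what actually carries the descent — what carries it is that the image of $\alpha$ in $R'$ is exactly the $(\C^\times)^N$-isotypic piece \eqref{eq:autoy}, and this characterization is preserved under the Hartogs extension on $X'$ because $Y'$ is $(\C^\times)^N$-invariant.
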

The proof below contains elements of the proof of Lemma \ref{lemma:hc}.
\begin{proof}
Let $R' = \vwr{N}/\JX_N$.
Recall $\phi: R \to R'$, $k_i \mapsto v_iw_i^T$
the injective ring map in Lemma \ref{lemma:injrmap}.
Let $L = (2h+1)^N$
and let $K$ be the $L \times L$ matrix \eqref{eq:mat}.
The definition of $\phi$ implies a matrix factorization $\phi(K) = IS$
where $I$ is a column vector and $S$ is a row vector with entries in $R'$.
And $I: R' \to R'^L$ is injective since $R'$ is an integral domain.
Note that $\phi^L K = IS \phi^L$ hence the injectivity of $\phi$ and $I$
yield a well-defined injective $\phi$-linear map\footnote{%
By $\phi$-linear we mean $\alpha(rm) = \phi(r)\alpha(m)$ for all $r\in R$, $m\in M$.} 
$\alpha: M \to R'$, $Kx \mapsto S \phi^L x$.
The image of $\alpha$ is 
precisely the $\C$-subspace of all $y\in R'$ that
for all $(\lambda_1,\ldots,\lambda_N) \in (\C^\times)^{N}$
transform like 
\begin{equation}\label{eq:autoy} 
y \mapsto \smash{\lambda_1^{-2h\sigma_1 }\cdots \lambda_N^{-2h\sigma_N } y} 
\end{equation}
under the algebra automorphism of $R'$ given by
$v_i \mapsto \lambda_i v_i$ and $w_i \mapsto \lambda_i^{-1} w_i$.
Set $X' = \zl(\JX_N)$.
Let $Y'$ be the preimage of $Y$ under $X' \to X$.
We have the following commutative diagram,
where $\rho$ and $\rho'$ are restriction maps,
and $\beta = \wt{\alpha}(X-Y)$
is induced by $\alpha$ and is injective\footnote{%
An injective module map 
induces an injective map on sections
since $\wt{\phantom{x}}$ and
the sections functor are left-exact.
Apply this to $\alpha\in\Hom_{R}(M,{{}_RR'})$
and use Proposition II.5.2 (d) \cite{hbook}.}:
\begin{equation}\label{eq:cdg}
\begin{tikzpicture}[baseline=(current  bounding  box.center)]
  \matrix (m) [matrix of math nodes,
               row sep = 6mm,
               column sep = 32mm ]
  {
    M \simeq \wt{M}(X) &
    R' \simeq \wt{R'}(X')\\
    \wt{M}(X-Y) &
    \wt{R'}(X'-Y')\\
  };
  \path[-stealth]
    (m-1-1) edge node [above] {$\alpha$ \footnotesize injective} (m-1-2)
            edge node [left] {$\rho$} (m-2-1)
    (m-1-2) edge node [right] {$\rho'$ \footnotesize bijective} (m-2-2)
    (m-2-1) edge node [above] {$\beta$ \footnotesize injective} (m-2-2);
\end{tikzpicture}
\end{equation}
Crucially, $\rho'$ is bijective by
Proposition \ref{prop:htg} for $X'$
and because $Y'$ has codimension $\geq 2$
by Lemma \ref{lemma:pre2} and $N \geq 4$.
The diagram implies that $\rho$ is injective.
There are obvious $(\C^{\times})^{N}$-actions on the spaces in the diagram,
given on $R'$ as above, that make the diagram equivariant.
Geometrically, because $Y'$ is $(\C^{\times})^{N}$-invariant.
Hence elements in the image of $(\rho')^{-1} \beta$ 
transform like \eqref{eq:autoy} and are contained in the image of $\alpha$.
This implies that $\rho$ is also surjective.
\qed\end{proof}
Lemma \ref{lemma:pre2}
is not available for $N = 3$ and in fact
then there is a different result as the next lemma shows.
It is worked out in detail since
it will give the 2-to-1 amplitude
for YM and GR.
The codimension two subset $Z_3$ is chosen with
this application in mind.
There is an analogous proposition for $X^-$. 
\begin{theorem}[Hartogs extension failure for the helicity sheaf in $N=3$] \label{theorem:n3h}
  For all $\sigma \in \{-,+\}^3$,
  restriction induces an injection of graded $R^+$-modules\footnote{%
  The cokernel is the 1st local cohomology module
  \cite{localcohomology}
  of this sheaf along $Z_3$.}
  \[
    M_h^{+,\sigma} 
    \hookrightarrow
    \wt{M_h^{+,\sigma}}(X^+-Z_3)
  \]
  The module on the left is generated in degree $6h$.
   The module on the right is generated
  in degree $6h$ if $\sigma = {+}{+}{+},\, {-}{-}{-}$
  and in degree $4h$ if $\sigma = {+}{+}{-},\, {+}{-}{-}$.
  As a Lorentz representation, see Remark \ref{remark:lorN},
  the degree $4h$ subspace is:
\begin{equation} \label{eq:loc3}
    {+}{+}{-}\;\;:\;\;
    S^{2h} (\rep{0}{\tfrac12} \oplus \rep{0}{\tfrac12})
    \qquad\qquad
    {+}{-}{-}\;\;:\;\;
    S^{2h} \rep{\tfrac12}{0}
\end{equation}
Analogous for permutations of $\sigma$.
\end{theorem}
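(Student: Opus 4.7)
The plan is to mimic Theorem \ref{theorem:nlcm}, but with an auxiliary variety tailored to $X^+$. Let $R'' = \C[v,w_{1\ldots 3}]/(w_1+w_2+w_3)$ and $X'' = \zl(w_1+w_2+w_3) \subset \C^8$, a linear subvariety isomorphic to $\C^6$ and therefore smooth. Proposition \ref{prop:p3} furnishes an injective ring map $\phi: R^+ \hookrightarrow R''$, $k_i \mapsto v w_i^T$, whose image is the subring of invariants under the $\C^\times$-action $v\mapsto\lambda v$, $w_i\mapsto \lambda^{-1}w_i$; let $\pi: X'' \to X^+$ be the corresponding morphism. Writing $K = S^{2h}k_1^{\sigma_1}\otimes S^{2h}k_2^{\sigma_2}\otimes S^{2h}k_3^{\sigma_3}$, in $R''$ one has a rank-one factorization $\phi(K) = I\cdot S$, where $I$ is the column whose $i$-th tensor factor is $v^{\otimes 2h}$ when $\sigma_i=+$ and $w_i^{\otimes 2h}$ when $\sigma_i=-$, and $S$ is the dual row with the roles of $v$ and $w_i$ exchanged. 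Since $R''$ is a domain and $I$ is a nonzero column, the $\phi$-linear rule $\alpha: M_h^{+,\sigma}\to R''$, $Kx \mapsto S\phi^L(x)$, is well-defined and injective; the module $M_h^{+,\sigma}$ is itself generated in degree $6h$ because $K$ has entries of $R^+$-degree $6h$.

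Next I would run Hartogs on the smooth $X''$: the preimage $\pi^{-1}(Z_3) = \{v=0\}\cup\bigcup_i\{w_i=0\}$ is a union of four linear subspaces of codimension two, so $R''\simeq \O_{X''}(X''-\pi^{-1}(Z_3))$. Combined with $\alpha$ in the commutative square analogous to \eqref{eq:cdg}, this produces both the desired injection $M_h^{+,\sigma}\hookrightarrow \wt{M_h^{+,\sigma}}(X^+-Z_3)$ and an extension $\beta: \wt{M_h^{+,\sigma}}(X^+-Z_3)\hookrightarrow R''$. The heart of the argument is to identify $\beta(\wt{M_h^{+,\sigma}}(X^+-Z_3))$ with the $\C^\times$-weight-$(-w)$ subspace $R''_{(-w)}\subset R''$, where $w = 2h(\#\{+\}-\#\{-\})$. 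One inclusion is immediate: $\phi(R^+)$ is weight zero and every entry of $S$ has weight $-w$. For the reverse, observe that $\pi$ restricts to a free principal $\C^\times$-bundle $X''-\pi^{-1}(Z_3)\to X^+-Z_3$, and that $I$, being a tensor product of $v$ and the $w_i$, all of which are nowhere-vanishing on $X''-\pi^{-1}(Z_3)$, globally trivializes the pullback $\pi^*\wt{M_h^{+,\sigma}}$ as a $\C^\times$-equivariant line bundle of weight $w$. Descent along the principal bundle then identifies sections of $\wt{M_h^{+,\sigma}}$ on $X^+-Z_3$ with the $\C^\times$-invariant sections of this trivial equivariant line bundle, i.e.~with the weight $(-w)$ scalar functions on $X''-\pi^{-1}(Z_3)$; by the Hartogs just applied, these are exactly $R''_{(-w)}$.

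The degree and Lorentz claims then follow by inspection of $R''_{(-w)}$. Since $\phi$ doubles degrees and the entries of $S$ have $R''$-degree $6h$, the map $\alpha$ sends $R^+$-degree $d$ to $R''$-degree $2d-6h$; the minimum $R''$-degree $|w|$ of nonzero elements of $R''_{(-w)}$ thus corresponds to $R^+$-degree $(|w|+6h)/2$, which equals $6h$ for $\sigma\in\{{+}{+}{+},{-}{-}{-}\}$ and $4h$ for $\sigma\in\{{+}{+}{-},{+}{-}{-}\}$. When $w>0$, the minimum-degree part of $R''_{(-w)}$ lies purely in the $w_i$-variables; for $\sigma={+}{+}{-}$, identifying $\rep{0}{\tfrac12}\oplus\rep{0}{\tfrac12}\oplus\rep{0}{\tfrac12}$ modulo $w_1+w_2+w_3=0$ with $\rep{0}{\tfrac12}\oplus\rep{0}{\tfrac12}$ yields the Lorentz representation $S^{2h}(\rep{0}{\tfrac12}\oplus\rep{0}{\tfrac12})$. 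When $w<0$ the minimum-degree part lies purely in the $v$-variables, so for $\sigma={+}{-}{-}$ one obtains $S^{2h}\rep{\tfrac12}{0}$, reproducing \eqref{eq:loc3}. Lorentz equivariance of $\alpha$, $\phi$, and $\pi$ makes all identifications compatible with the $\sl\oplus\sl$-action, and the statements for permutations of $\sigma$ follow by symmetry.
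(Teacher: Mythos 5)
Your proof is correct and takes essentially the same route as the paper: the same embedding $\alpha: M_h^{+,\sigma}\hookrightarrow R''=\C[v,w_{1\ldots3}]/(w_1+w_2+w_3)$ via the rank-one factorization of $\phi(K)$, the same use of classical Hartogs on $X''\simeq\C^6$ with $\pi^{-1}(Z_3)$ of codimension two, and the same identification of $\wt{M}(X^+-Z_3)$ with the $\C^\times$-weight-$(-w)$ subspace of $R''$, followed by the identical degree bookkeeping and Lorentz inspection. The only stylistic difference is that you phrase the identification via descent along the principal $\C^\times$-bundle trivialized by $I$, whereas the paper verifies it by an explicit three-way comparison of spanning sets; the content is the same.
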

\begin{proof}
  This proof follows closely that of Theorem \ref{theorem:nlcm}.
  Here $M = M_h^{+,\sigma}$ and $X = X^+$
  and $R' = \C[v,w_1,w_2,w_3]/(w_1+w_2+w_3)$ an integral domain
  and $\phi: R^+ \to R'$, $k_i \mapsto vw_i^T$ an injective ring map.
  The map $\alpha: M \to R'$,
  defined analogously to the one in the proof of Theorem \ref{theorem:nlcm},
  is injective.
  The image of $\alpha$ is the $\C$-subspace of $R'$ 
  spanned by all elements of the schematic form\footnote{%
  For example, $w_1^5$ are all monomials
  in the two components of $w_1$ of degree $5$.}
\begin{equation}\label{eq:gens3}
        v^{n_1+n_2+n_3+m_1^- + m_2^- + m_3^-}
        w_1^{n_1 + m_1^+}
        w_2^{n_2 + m_2^+}
        w_3^{n_3 + m_3^+}
\end{equation}
for all $n_1,n_2,n_3 \in \Z_{\geq 0}$.
Here $m_i^\pm = h(1\pm \sigma_i)\in\{0,2h\}$.
The preimage of $Y=Z_3$ under $X' \to X$ is
$Y' = \{v=0\} \cup \{w_1=0\} \cup \{w_2 = 0\} \cup \{w_3 = 0\}$
  and has codimension 2.
  We again have the commutative diagram \eqref{eq:cdg}, in particular
  Hartogs on $X' \simeq \C^6$ implies that $\rho'$ is bijective.
  The following are equal:
  \begin{itemize}
\item[1)]
  The set of all $y\in R'$ that for $\lambda \in \C^\times$ transform like
  $y \mapsto \lambda^{-2h(\sigma_1+\sigma_2+\sigma_3)}y$
  under the algebra automorphism
  $v \mapsto \lambda v$ and $w_i \mapsto \lambda^{-1} w_i$.
    \item[2)] The $\C$-subspace of $R'$ spanned by the elements \eqref{eq:gens3}
  but allowing all $n_1,n_2,n_3 \in \Z$
  that give four nonnegative exponents.
  That is, all elements of the form
  $v^{n - 2h(\sigma_1+\sigma_2+\sigma_3)} w^n$
  with $n \geq 0$ and
  $n \geq 2h(\sigma_1+\sigma_2+\sigma_3)$
  where $w$ is schematic for all components of $w_1,w_2,w_3$.
    \item[3)] The image of $(\rho')^{-1}\beta$.
\end{itemize}
  Here $\textnormal{1)} \subset \textnormal{2)}$ is clear,
  $\textnormal{2)} \subset \textnormal{3)}$
  is proved using the fact that on $X-Z_3$ one can locally factor\footnote{%
  For example, use $v^T = (a_1,c_1)$ and
    $w_i^T = (a_i/a_1,b_i/a_1)$
    over the ring $(R^+)_{a_1}$.
  } $k_i = v w_i^T$, and
  $\textnormal{3)} \subset \textnormal{1)}$ by $\C^\times$-equivariance
  of \eqref{eq:cdg}.
  Using 2), the lowest degree pieces in the image of $(\rho')^{-1}\beta$
  are all elements $w^{6h}$ for ${+}{+}{+}$, $w^{2h}$ for ${+}{+}{-}$,
  $v^{2h}$ for ${+}{-}{-}$, $v^{6h}$ for ${-}{-}{-}$.
  For ${+}{+}{-}$ and ${+}{-}{-}$ this
  is below the lowest degree of the image of $\alpha$,
  by $4h$ in $R'$-degree, or $2h$ in $R^+$-degree.
  Given the quotient by $w_1+w_2+w_3$,
  the Lorentz modules of all $w^{2h}$ respectively $v^{2h}$ are as claimed
  using the natural Lorentz action on $R'$,
  since \eqref{eq:cdg} is Lorentz equivariant
  using the Lorentz module structure on $M$ in Remark \ref{remark:lorN}.
\qed\end{proof}
\begin{remark} \label{rem:linv}
Note that 
$S^{2h} (\rep{0}{\tfrac12} \oplus \rep{0}{\tfrac12})
\simeq 
\bigoplus_{m+n=2h}S^{m}\rep{0}{\tfrac12}\otimes S^{n}\rep{0}{\tfrac12}
$ contains the trivial representation exactly once if $2h$ is even,
never if $2h$ is odd.
The trivial representation never appears in $S^{2h} \rep{\tfrac12}{0} \simeq \rep{h}{0}$ if $h>0$.
\end{remark}
\begin{lemma}[2-to-1 amplitude to-be] \label{lemma:lig}
If $h \in \Z_{\geq 0}$ then $\wt{M_h^{+,++-}}(X^+-Z_3)$
contains a unique Lorentz invariant element of
degree $4h$, up to normalization. It is given
for any local factorization $k_i = v w_i^T$
with $w_1 + w_2 + w_3 = 0$ by
\begin{subequations}\label{eq:exfboth}
\begin{equation}\label{eq:exf}
(w_1^T \eps w_2)^h\;v^{\otimes 2h} \otimes v^{\otimes 2h} \otimes w_3^{\otimes 2h}
\end{equation}
where {\recalleps}.
Analogous element in $\widetilde{M_h^{-,--+}}(X^--Z_3)$ using $k_i = v_iw^T$:
\begin{equation}
(v_1^T \eps v_2)^h\;w^{\otimes 2h} \otimes w^{\otimes 2h} \otimes v_3^{\otimes 2h}
\end{equation}
Analogous for $\wt{M^{+,\sigma}}$ if $\sigma$ has two plus,
and for $\wt{M^{-,\sigma}}$ if $\sigma$ has two minus.
\end{subequations}
\end{lemma}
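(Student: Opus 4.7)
Uniqueness reduces to structure already established. By Theorem \ref{theorem:n3h}, the degree $4h$ subspace of $\wt{M_h^{+,++-}}(X^+-Z_3)$ is, as a Lorentz representation, $S^{2h}(\rep{0}{\tfrac12}\oplus\rep{0}{\tfrac12})$. Since $h\in\Z_{\geq 0}$ makes $2h$ even, Remark \ref{rem:linv} ensures that the trivial representation appears exactly once; hence the Lorentz-invariant subspace in degree $4h$ is at most one-dimensional. It remains to exhibit a nonzero Lorentz-invariant element in this degree and identify it with \eqref{eq:exf}.

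Next I would verify that the stated expression defines a section of $\wt{M_h^{+,++-}}$ on $X^+-Z_3$. On this locus no $k_i$ vanishes and the $k_i$ are not all collinear, so every point has a Zariski neighborhood carrying a factorization $k_i=vw_i^T$ with $w_1+w_2+w_3=0$, unique up to the $\C^\times$ action $(v,w_i)\mapsto(\lambda v,\lambda^{-1}w_i)$. Under this rescaling $(w_1^T\eps w_2)^h$ scales as $\lambda^{-2h}$ while $v^{\otimes 2h}\otimes v^{\otimes 2h}\otimes w_3^{\otimes 2h}$ scales as $\lambda^{2h+2h-2h}=\lambda^{2h}$, so the product is $\lambda$-invariant and the local expressions glue on overlaps. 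On such a patch the factorization $k_i=vw_i^T$ induces rank-one factorizations of the matrices $S^{2h}k_i^+$ and $S^{2h}k_3^-$ whose column images are spanned by $v^{\otimes 2h}$ and $w_3^{\otimes 2h}$ respectively (cf.~Lemma \ref{lemma:r1}), so $v^{\otimes 2h}\otimes v^{\otimes 2h}\otimes w_3^{\otimes 2h}$ locally generates $\wt{M_h^{+,++-}}$ as a module over the structure sheaf; multiplication by the local scalar $(w_1^T\eps w_2)^h$ then yields a section.

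Lorentz invariance is direct: the action $k_i\mapsto Ak_iB^T$ for $(A,B)\in\mathrm{SL}_2\times\mathrm{SL}_2$ is realized on the factorization by $v\mapsto Av$, $w_i\mapsto Bw_i$; the identity $B^T\eps B=\eps$ leaves $(w_1^T\eps w_2)^h$ invariant, while $v^{\otimes 2h}\otimes v^{\otimes 2h}\otimes w_3^{\otimes 2h}$ transforms equivariantly under the induced representation on $\rep{h}{0}\otimes\rep{h}{0}\otimes\rep{0}{h}$, so the product is invariant as a section. The $R^+$-degree is $4h$: writing $k_i\mapsto tk_i$ as $v\mapsto\lambda v$, $w_i\mapsto\mu w_i$ with $\lambda\mu=t$, the formula scales as $\mu^{2h}\cdot\lambda^{4h}\mu^{2h}=(\lambda\mu)^{4h}=t^{4h}$. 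Nonvanishing is checked at $v=(1,0)^T$, $w_1=(1,0)$, $w_2=(0,1)$, $w_3=(-1,-1)$, where all tensor factors are nonzero and $w_1^T\eps w_2=1$. By the uniqueness in the first paragraph this section is the advertised Lorentz-invariant element up to scale; the cases $\wt{M_h^{-,--+}}$ and the permutations with two plus or two minus signs follow by the same argument with indices permuted and the roles of $v,w$ swapped. The main obstacle is bookkeeping: disentangling the spatial Lorentz action on $X^+$, the representation action on $(S^{2h}\C^2)^{\otimes 3}$, and the $\C^\times$-ambiguity of the factorization; once these are separated, every check reduces to $\mathrm{SL}_2$-invariance of $\eps$.
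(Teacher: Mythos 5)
Your proof is correct and follows essentially the same route as the paper's (very terse) proof: both rely on Theorem \ref{theorem:n3h} and Remark \ref{rem:linv} for the at-most-one-dimensional invariant subspace, and both verify that the displayed expression is independent of the local factorization $k_i = vw_i^T$ (hence glues to a section), is Lorentz invariant, is nonzero, and has the stated degree. You simply carry out the $\C^\times$-invariance, $\mathrm{SL}_2 \times \mathrm{SL}_2$-equivariance, degree, and nonvanishing checks explicitly, which the paper leaves implicit.
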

\begin{proof}
The expression is independent of the local factorization,
so we get a section.
It is Lorentz invariant. It is unique by Remark \ref{rem:linv}.
\qed\end{proof}
The next statement is a strict aside, we will not actually use it.
Our use of base change along $\C[k_i]/\Q_i \to R$
will be straightforward.
\begin{lemma}[Flat morphism]\label{lemma:flatm}
For all $N \geq 3$ the map $(k_1,\ldots,k_N) \mapsto k_i$ from $X - \{k_i = 0\}$
to the complement of the origin of the cone is a flat morphism.
Hence the inverse image functor for quasi-coherent sheaves,
which at the level of modules corresponds to\footnote{%
See Proposition II.5.2 (e) \cite{hbook}.} $R \otimes_{\C[k_i]/\Q_i}-$,
is exact there.
\end{lemma}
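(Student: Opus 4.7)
The plan is to apply miracle flatness: a morphism of Noetherian schemes from a Cohen-Macaulay source to a regular target, all of whose fibers have pure dimension equal to the difference of dimensions, is flat (cf.~\cite{mat}, Theorem 23.1, localized at each point). By symmetry I take $i = N$ and consider $\pi = \pi_N : X - \{k_N = 0\} \to C - \{0\}$, where $C = \zl(ad-bc)$ is the cone.

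The algebraic hypotheses are immediate. The target $C - \{0\}$ is smooth of dimension $3$ since the singular locus of $C$ is the origin. The source is Cohen-Macaulay: for $N \geq 4$, $R_N$ is a complete intersection by Lemma \ref{lemma:ci0}; for $N = 3$, $R^{\pm}$ is Cohen-Macaulay by Proposition \ref{prop:p3}. Localization preserves Cohen-Macaulayness. Since $\dim X = 3N - 4$ and $\dim C = 3$, it remains to verify that every fiber $\pi^{-1}(q)$ with $q \in C - \{0\}$ has pure dimension $3N - 7$.

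For $N \geq 4$, momentum conservation eliminates $k_{N-1}$ and identifies $\pi^{-1}(q)$ with the closed subscheme of $\C^{4(N-2)}$ cut out by the $N-1$ polynomials $\det k_1, \ldots, \det k_{N-2}, \det(q + k_1 + \cdots + k_{N-2})$. The first $N - 2$ form a regular sequence trivially (disjoint variables) and define the irreducible variety $C^{N-2}$, whose coordinate ring is a domain. The last polynomial is a non-zero-divisor there provided it is not identically zero on $C^{N-2}$, which I verify by setting $k_2 = \cdots = k_{N-2} = 0$ and, using Lorentz equivariance to bring $q$ to a convenient normal form, exhibiting some $k_1 \in C$ with $\det(q + k_1) \neq 0$. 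Hence the fiber is a complete intersection of pure codimension $N - 1$, so pure dimension $3N - 7$. For $N = 3$ (with $X = X^{\pm}$; the two cases are symmetric) I use the spinor parametrization of $X^+$: writing $q = v_0 w_0^T$ with $v_0, w_0 \neq 0$ and normalizing the $\C^\times$ scaling so that $v = v_0$, the fiber $\pi^{-1}(q)$ is identified with the affine plane $\{(w_1, w_2) \in (\C^2)^2 : w_1 + w_2 = -w_0\}$, irreducible of dimension $2 = 3\cdot 3 - 7$.

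Miracle flatness then yields flatness of $\pi_N$, and exactness of the inverse image functor on quasi-coherent sheaves is a standard consequence (equivalently, $R \otimes_{\C[k_i]/\Q_i} -$ is exact after localizing to the relevant open). The main obstacle is the regular-sequence check that makes the fibers complete intersections of the expected dimension; removing $\{k_N = 0\}$ is essential, since the fiber over $q = 0$ is typically too big: in $N = 3$ with $X = X^+$, for instance, it has dimension $3$ rather than $2$ (parametrized by $v \in \C^2$ and $w_1 \in \C^2$ modulo the $\C^\times$ action, via $k_1 = vw_1^T$, $k_2 = -vw_1^T$), so flatness genuinely fails along $k_N = 0$.
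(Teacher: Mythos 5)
Your argument is correct and, like the paper, reduces to miracle flatness applied to the Cohen--Macaulay source $X-\{k_i=0\}$ and the smooth target $C-\{0\}$; the difference is only in how equidimensionality of fibers is established. You compute each fiber explicitly as a complete intersection (eliminating $k_{N-1}$, checking that the final $\det(q+k_1+\cdots+k_{N-2})$ is nonzero on the integral variety $C^{N-2}$), which is a bit more work but entirely rigorous. The paper instead observes that the automorphism group of the cone acts transitively on $C-\{0\}$, so all fibers are isomorphic and hence equidimensional of the generic dimension $3N-7$ --- a shorter invariance argument that sidesteps the regular-sequence verification. The paper also records a stronger fact for $N\ge4$ that you do not reproduce: via a Gr\"obner basis of $I_N$ whose leading terms split off $k_N$, the ring map $\C[k_i]/\Q_i\to R$ is already flat globally (indeed $R$ is free as a $\C[k_i]/\Q_i$-module), so removing $\{k_i=0\}$ is only needed for $N=3$; your closing remark correctly identifies why the excision is genuinely necessary in that case.
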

\begin{proof}
If $N \geq 4$ then the ring map $\C[k_i]/\Q_i \to R$ is already flat,
in fact $R$ is free as
a $\C[k_i]/\Q_i$-module with countable basis\footnote{%
For $i=N$,
use the $\C$-basis of $R$ associated to a Gr\"obner basis of $I_N$ 
with leading monomials $a_1,b_1,c_1,d_1,d_2a_3,b_2c_2,\ldots,b_Nc_N$
the last from $b_Nc_N - a_Nd_N$ (cf.~Lemma \ref{lemma:ci0}).
The main point is that at the level of leading monomials
we have a product situation,
all but one only involving $k_1,\ldots,k_{N-1}$, the last one only involving $k_N$.
}%
\textsuperscript{,}\footnote{%
Flatness at ring level fails for $N=3$
due to relations such as $a_3c_1 - c_3a_1$ in $I^+$.
Starting from
$(a_3,c_3) \hookrightarrow \C[k_3]/\Q_3$,
if we apply $R^+ \otimes_{\C[k_3]/\Q_3} - $
we fail to get an injection.}.
If $N \geq 3$ by `miracle flatness', since $R$ is Cohen-Macaulay,
the cone is smooth away from the origin,
and all fibers have the same dimension since the automorphism group of the cone acts transitively.
For $N=3$ the fibers are actually $\simeq \C^2$.
\qed\end{proof}
\begin{definition}[The helicity module $C$] \label{def:hmc}
  For every $N \geq 3$ and every half-integer $h \geq \frac12$
  and every $\sigma \in \{-,+\}^N$:
  \begin{itemize}
   \item If $N \geq 4$,
   define for every $i$ the complex
   $\cc_{i,\pm h} = R \otimes_{\C[k_i]/\Q_i} \cc_{\pm h}$ of free $R$-modules,
   where the second factor $\cc_{\pm h}$ is
   the complex \eqref{eq:diffk} with $k$ replaced by\footnote{%
   Actually use $-k_N$ for $i = N$.
   Since $\cc_h$ (and $\cc_{-h}$) is isomorphic to itself
   with $k$ replaced by $-k$ simply by sign reversal in homological degree 2,
   we gloss over this distinction.
   } $k_i$
   and viewed as a complex of free $\C[k_i]/\Q_i$-modules.
   Set
  \begin{equation}\label{eq:ccxx}
        C_h^\sigma = H^{3-N}\big(
             \underbrace{%
               \Hom_R\big(
               \cc_{1,-\sigma_1 h} \otimes_R \cdots \otimes_R \cc_{N-1,-\sigma_{N-1} h},
               \cc_{N,\sigma_N h}\big)
              }_{\substack{
                \textnormal{the internal hom of chain complexes}\\
                \textnormal{itself a complex of free $R$-modules}}}
            \big)
  \end{equation}
  the chain maps of homological degree $3-N$ modulo trivial chain maps.
\item If $N=3$, make the same definition over $R^\pm$, denoted $C_h^{\pm,\sigma}$.
\end{itemize}
\end{definition}
\begin{remark}
  As a matrix, the differential $d_i$ of $\cc_{i,\pm h}$ is
  that of $\cc_{\pm h}$
  but with entries reinterpreted in $R$.
  The differential of $\cc_{1,\pm h} \otimes_R \cdots \otimes_R \cc_{N-1,\pm h}$
  is given, using the Kronecker product of matrices, by
  \[
    \dtot = \textstyle\sum_{i=1}^{N-1}
               (\pm \one)^{\otimes (i-1)} \otimes
               d_i \otimes \one^{\otimes (N-1-i)}
  \]
  Then chain maps are matrices $A$ with entries in $R$
  with $A\dtot = d_N A$, modulo the trivial chain maps of the form $A =  B\dtot + d_NB$.
  The homological grading restricts all matrices
  to a specific block structure.
If $f \in R$, $f \neq 0$
then $\wt{C}(D_f) = C_f$
is the corresponding quotient space of matrices with entries in $R_f$.
\end{remark}
\begin{prop}[The helicity sheaves match on an open set]
  \label{prop:mphom}
  Suppose $h \geq \frac12$ and $N \geq 3$. 
  Set $Y = \{k_1 = 0\} \cup \cdots \cup \{k_N = 0\}$.
  \begin{itemize}
  \item Local claim:
     For every maximal ideal $\mx$ corresponding to a point in $X-Y$,
     there is over $R_\mx$ a factorization of $S^{2h}k_i^\pm$ analogous
     to the one in Lemma \ref{lemma:r1},
     respectively a contraction of the differential of $\cc_{i,\pm h}$
     analogous to the one in Lemma \ref{lemma:sph2}.
     Both are obtained by applying $\C[k_i]/\Q_i \to R$.
     The sheaves $\wt{C}$ and $\wt{M}$ are locally free of rank one on $X-Y$.
  \item Global claim:
  There is an isomorphism of locally free rank one sheaves
  \begin{equation} \label{eq:cmiso}
            \wt{C_h^\sigma}|_{X-Y}
            \;\simeq\;
            \wt{M_h^\sigma}|_{X-Y}
  \end{equation}
  induced by
  \eqref{eq:inv} and \eqref{eq:zzt}.
  For $N=3$ there are additional signs.
\end{itemize}
 \end{prop}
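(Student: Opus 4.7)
The plan is to derive the local claim by applying base change along $\C[k_i]/\Q_i \to R$ to the single-particle results of Lemmas \ref{lemma:r1} and \ref{lemma:sph2}, and then identify $\wt{C}$ with $\wt{M}$ on $X-Y$ via a homological computation of the internal Hom combined with the isomorphisms \eqref{eq:inv} and \eqref{eq:zzt}. For the local claim, let $\mx \subset R$ be a maximal ideal corresponding to a point $x \in X-Y$ and let $\mx_i' \subset \C[k_i]/\Q_i$ be its contraction under the ring map $\C[k_i]/\Q_i \to R$; since $x \notin \{k_i = 0\}$, $\mx_i'$ corresponds to a nonzero point on the cone $\zl(\Q_i)$. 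Lemmas \ref{lemma:r1} and \ref{lemma:sph2} then provide a factorization $S^{2h}k_i^\pm = is$ and a contraction $(h,i,p)$ with $dhd=d$ of the differential of $\cc_{\pm h}$, valid over some $(\C[k_i]/\Q_i)_{f_i}$ with $f_i \notin \mx_i'$. These are matrix identities in a ring, so applying $R\otimes_{\C[k_i]/\Q_i}-$ and inverting the image of $f_i$ yields the analogous factorization and contraction over $R_\mx$. Consequently $(\image S^{2h}k_i^\pm)\wt{\phantom{x}}$ and each $H^j(\cc_{i,\pm h})\wt{\phantom{x}}$ for $j=1,2$ are locally free of rank one on $X-Y$, and the statement for $\wt{M_h^\sigma}$ follows since a tensor product of rank-one locally free sheaves is again locally free of rank one.

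For the global claim, cover $X-Y$ by basic opens $D_f$ on which a contraction of every $\cc_{i,\pm h}$ is available. On such a chart the tensor product $A = \cc_{1,-\sigma_1 h}\otimes_R\cdots\otimes_R\cc_{N-1,-\sigma_{N-1}h}$ is homotopy equivalent to its homology, concentrated by K\"unneth in degrees $N-1,\ldots,2(N-1)$, while $B = \cc_{N,\sigma_N h}$ has homology only in degrees $1,2$. A chain map of degree $3-N$ induces maps $H^k(A) \to H^{k+3-N}(B)$ on homology; the only $k$ for which both sides are nonzero is $k = N-1$, and null-homotopic chain maps become zero. Hence locally
\[
\wt{C_h^\sigma}\big|_{D_f} \;\simeq\; \sheafHom\Big(\textstyle\bigotimes_{i<N} H^1(\cc_{i,-\sigma_i h})\wt{\phantom{x}},\, H^2(\cc_{N,\sigma_N h})\wt{\phantom{x}}\Big)\Big|_{D_f}.
\]
Substituting $H^1(\cc_{i,\pm h})\wt{\phantom{x}} \simeq (\image S^{2h}k_i^{\pm})\wt{\phantom{x}}$ and $H^2(\cc_{i,\pm h})\wt{\phantom{x}} \simeq H^1(\cc_{i,\pm h})\wt{\phantom{x}}$ via \eqref{eq:zzt} and Lemma \ref{lemma:uniqd}, and then applying \eqref{eq:inv} to each factor $i<N$ to rewrite $\sheafHom((\image S^{2h}k_i^{-\sigma_i})\wt{\phantom{x}},\O)$ as $(\image S^{2h}k_i^{\sigma_i})\wt{\phantom{x}}$, the right side collapses to $\bigotimes_{i=1}^{N}(\image S^{2h}k_i^{\sigma_i})\wt{\phantom{x}} \simeq \wt{M_h^\sigma}|_{D_f}$. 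Each intermediate identification is induced by a global formula, so the local isomorphisms agree on overlaps and glue.

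The main obstacle is verifying that the homotopy-transfer step is canonical: the identification of $\wt{C_h^\sigma}|_{D_f}$ with the displayed $\sheafHom$ of homologies must be shown independent of the choice of contractions on $D_f$, so that the glued map really is induced by \eqref{eq:inv} and \eqref{eq:zzt}. The additional signs promised in the $N=3$ case are a routine bookkeeping issue arising from the fact that the chain-map degree $3-N$ equals $0$ there, changing the sign conventions in the internal-Hom differential; they appear as fixed signs in the isomorphism but do not affect its existence or the rank-one conclusion.
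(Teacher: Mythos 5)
Your proof is correct and follows essentially the same route as the paper: the local claim via base change applied to Lemmas~\ref{lemma:r1} and~\ref{lemma:sph2}, the rank-one statement via the K\"unneth/degree count that isolates the $H^{N-1}(A)\to H^2(B)$ contribution, and the global isomorphism via \eqref{eq:inv} and \eqref{eq:zzt}. The canonicity concern you flag at the end resolves itself: the maps $H(E)\otimes H(E')\to H(E\otimes E')$ and $H(\Hom(E,E'))\to\Hom(H(E),H(E'))$ are canonical natural transformations defined without reference to any contraction, and the local contractions serve only to certify that these canonical maps are isomorphisms over $R_\mx$, so the glued identification is automatically the one induced by \eqref{eq:inv} and \eqref{eq:zzt}.
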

\begin{proof}
  Recall that for maps $f,f'$ or complexes $E,E'$ of finite dimensional vector spaces
  one has $\image (f \otimes f') \simeq \image f \otimes \image f'$
  and $H(E \otimes E') \simeq H(E) \otimes H(E')$
  and $H(\Hom(E,E')) \simeq \Hom(H(E),H(E'))$.
  To prove this one can choose factorizations of $f,f'$ analogous to Lemma \ref{lemma:r1}
  and contractions for $E,E'$ analogous to Lemma \ref{lemma:sph2}
  and construct explicit factorizations of $f \otimes f'$
  and contractions of $E \otimes E'$ and $\Hom(E,E')$.
  Clearly, the same arguments apply over $R_\mx$,
  given factorizations and contractions over $R_\mx$.
  In particular $\wt{C}$ and $\wt{M}$ are locally free on $X-Y$.
  The fact that $\wt{C}$ has rank one is because
  homological degree $3-N$ implies that the only contribution
  comes from a product of degree 1 elements giving a degree 2 element.
  The global claim is now clear given \eqref{eq:inv}, \eqref{eq:zzt}.
\qed\end{proof}

\newcommand{\recam}{{recursive amplitude}}
\section{The recursive characterization of amplitudes} \label{sec:recchar}

This section combines all previous
sections to construct amplitudes.
We introduce the notion of a {\recam},
Definition \ref{def:reca}.
Then there is a uniqueness result, the recursive characterization in
Theorem \ref{theorem:unique}.
Finally the existence result, namely a proof that the minimal model brackets
for YM and GR are {\recam}s, Theorem \ref{theorem:exists}.
\step
This section uses notation from all previous sections.
Often implicit are an integer $N$,
a tuple of signs $\sigma \in \{-,+\}^N$, and the helicity $h$.
Often the length $N_\sigma = |\sigma|$ is understood to determine $N$.
In this section $h=1$ or $h=2$,
though some things immediately generalize to other $h$.
We use these shorthands:
\begin{center}
\begingroup
\renewcommand{\arraystretch}{1.2}
\begin{tabular}{c||l}
\emph{shorthand} & \emph{full name}\\
\hline
$R$ & $R_N = \kr{N}/I_N$ or one of $R^{\pm} = \kr{3}/I^\pm$ if $N=3$\\
$X$ & $X_N = \zl(I_N)$ or one of $X^\pm = \zl(I^\pm)$ if $N=3$\\
$\mathfrak{P}$ & $\mathfrak{P}_N$, see Theorem \ref{theorem:pd4}\\
$P$ & $P_N$, see Definition \ref{def:pz}\\
$Z$ & $Z_N$, see Definition \ref{def:pz}\\
$M$ & $M_h^\sigma$ or one of
      $M^{\pm,\sigma}_h$ if $N=3$, see Definition \ref{def:hmm}\\
$C$ & $C_h^\sigma$ or one of
      $C_h^{\pm,\sigma}$ if $N=3$, see Definition \ref{def:hmc}
\end{tabular}
\endgroup
\end{center}

\step
We start with a preliminary for the internal Lie algebra for YM.
\begin{definition}[Internal Lie algebra]
For YM, $\ym$, is a finite-dimensional non-Abelian Lie algebra
together with an invariant nondegenerate symmetric $\C$-bilinear form $\ym \otimes \ym \to \C$.
It could be the Killing form if $\ym$ is semisimple.
\end{definition}
To keep the notation uniform,
we introduce a corresponding but trivial object for GR.
Both are denoted $\ux$,
a finite-dimensional vector space\footnote{%
Viewed as a trivial representation of the Lorentz group.
}
together with a nondegenerate symmetric $\C$-bilinear form
$\ux \otimes \ux \to \C$
and an element of $\ux^{\otimes 3}$:
\begin{equation} \label{table:ucube}
\begingroup
\renewcommand{\arraystretch}{1.3}
\begin{tabular}{c||c|c|c}
& $\ux$ & $\ux \otimes \ux \to \C$ & $\ux^{\otimes 3}$\\
\hline
YM, $h=1$ & a Lie algebra & invariant form & Lie bracket \\
GR, $h=2$ & $\C$ & multiplication & $1^{\otimes 3}$
\end{tabular}
\endgroup
\end{equation}
The isomorphism $\ux \simeq \ux^\ast$ induced by $\ux \otimes \ux \to \C$
is implicitly used in this section.
With this understanding,
for YM the element in $\ux^{\otimes 3}$ is
the Lie algebra bracket $\wedge^2 \ux \to \ux$
and it is totally antisymmetric.
For GR, $\ux^{\otimes 3}$ is totally symmetric.
\step
Introduce the additional notation
\[
\mext = M \otimes \ux^{\otimes N}
\qquad\qquad
\cext = C \otimes \ux^{\otimes N}
\]
One must think of each factor $\ux$ as associated to one momentum, 
hence one factor in \eqref{eq:mat}, \eqref{eq:ccxx}.
For example, it is implicit that the action of the permutation group $S_N$ on sections
also permutes the $\ux$ factors.
For $\cext$ one might want to use
$\ux^\ast$ for the first $N-1$ factors,
but we are already using the $\ux \simeq \ux^\ast$ convention.
As $R$-modules,
$\mext$ and $\cext$
are direct sums of finitely many copies of $M$ and $C$.
\step
See Appendix \ref{appendix:m2s} for the module to sheaf functor $\wt{\phantom{x}}$.
\step

Recall that the minimal model brackets are defined when all
internal lines are off-shell, which is $X-P$.
But given how $1/\Q$ singularities appear in homotopies,
the minimal model brackets will be seen to extend uniquely to sections on $X-Z$, 
of a sheaf constructed using the effective divisor $D$ that 
allows first order poles along the prime divisors
in $\mathfrak{P}$\footnote{%
	$R$ is normal Noetherian domain, hence Weil divisors are defined since for every height one prime ideal,
	the local ring is a DVR.
	Set
	$R(D)= \{ f\in\Frac(R)\mid  D+\divv f \ge 0 \}$.
	Then $\O_X(D) = R(D)\wt{\phantom{x}}$ and if
	$D$ is effective then there is a canonical inclusion $\O_X\hookrightarrow \O_X(D)$.
}. That is, they will be seen to be in the image of the injective (see Remark \ref{rem:0thcom}) restriction map
\[ 
 \big(\wt{\mext} \otimes \O_X(D)\big)(X-Z)\; \hookrightarrow\; \wt{\mext} (X-P)
 \]
We will directly define and construct amplitudes on the left hand side.
This is suitable for the recursive characterization which is all about residues across $P$.
\begin{definition}[An effective divisor]
  For $N \geq 4$ define the Weil divisor
  \[
     D = \textstyle\sum_{\px \in \mathfrak{P}} \px
  \]
  on the normal Noetherian domain $R$.
  If $N = 3$ then set $D = 0$.
\end{definition}
\begin{remark}[No 0th local cohomology]\label{rem:0thcom}
Suppose $\mathcal{F}$ is an $\O_X$-sheaf, locally free on $X-Z$, for example $\O_X(D)$.
Then every section on $X-Z$ with support on $P-Z$ is zero.
Hence restriction $\mathcal{F}(X-Z)\to \mathcal{F}(X-P)$ is injective.
\end{remark}

To define what a {\recam} is,
we have to define residues and a fusion operation.
These are only used away from $Z$,
where the variety is smooth,
$D$ is smooth, and
$\wt{\mext}$ is locally free,
by Proposition \ref{prop:cig}.
So there is no point in being overly technical
when defining residues and fusion.
\step
The residue along $\px \in \mathfrak{P}$
is the failure of a section of
$\O_X(\px)$ to be a section of $\O_X$.
Defined this way,
the residue does not depend on the choice of an equation
that locally defines $\zl(\px)$,
but it requires introducing another sheaf:
\[
0 \to \O_X|_{X-Z} \to \O_X(\px)|_{X-Z} \to \nsheaf_\px \to 0
\]
This short exact sequence of sheaves defines $\nsheaf_\px$,
the normal sheaf along $\zl(\px) - Z$. 
It is a locally free sheaf of rank one on $\zl(\px)-Z$.
\begin{definition}[Residue] \label{def:residue}
If $\mathcal{F}$ is an $\O_X$-sheaf, locally free on $X-Z$, then
\[
\Res_\px \;:\; \big(\mathcal{F} \otimes \O_X(D)\big)(X-Z) \;\to\;
    \big(\mathcal{F}|_{\zl(\px)-Z} \otimes \nsheaf_\px\big)(\zl(\px) - Z)
\]
is defined in the obvious way.
The common kernel of all $\Res_\px$ is $\mathcal{F}(X-Z)$.
\end{definition}
\begin{remark} 
\label{remark:respppp}
Recall that $\px_{++++}$ is a minimal prime over all three $(\Q_{ij})$.
The relative residues of the $\Q_{ij}$ will play an interesting role. They are
\[
    \eps_{12}\eps_{34} \Res_{\px_{++++}} \frac{1}{\Q_{12}}
    \;=\;
    \eps_{31}\eps_{24} \Res_{\px_{++++}} \frac{1}{\Q_{13}}
    \;=\;
    \eps_{23}\eps_{14} \Res_{\px_{++++}} \frac{1}{\Q_{23}}
\]
where $k_i = v w_i^T$
and $w_1 + w_2 + w_3 + w_4 = 0$
and $\eps_{ij} = w_i^T \eps w_j$ and \recalleps\footnote{%
In fact denoting $k_i = v_i w_i^T$
then
$\eps_{12}\eps_{34} \Q_{13} = \eps_{31} \eps_{24} \Q_{12}$
holds in $\vwr{4}/\JX_4$.
}.
\end{remark}
\step

Let $N \geq 4$, let $J$ be an internal momentum and
$\px \supset (\Q_J)$ a prime divisor.
Fusion is a bilinear operation that multiplies two amplitudes,
for $|J| + 1$ and $1+|J^c|$ particles
respectively,
and produces a new amplitude
for $N$ particles that is however only defined along a fiber product.
Here are the details:
\begin{itemize}
\item 
Geometrically, fusion is based on the isomorphism of varieties
\[
        X_{|J|+1} \times_{\an{4}} X_{1+|J^c|}
        \;\;\simeq \;\; \zl(\px)
\]
The fiber product and its identification with $\zl(\px) \subset X_N$ are defined by
\begin{equation}\label{eq:fibfuse}
\begin{tikzpicture}[baseline=(current  bounding  box.center)]
\matrix (m) [matrix of math nodes, column sep = 25mm, row sep = 8mm,minimum width=3mm,minimum height=3mm ]
{
	\zl(\px)\vphantom{_{1+|J^{c}|}} &  X_{1+|J^{c}|}\vphantom{(\px)}  \\
	X_{|J|+1}\vphantom{^{4}} & \an{4}\vphantom{_{|J|+1}}   \\
};
\path[-stealth]
(m-1-1) edge [ ]  (m-1-2)
(m-1-1) edge [ ]  (m-2-1)
(m-2-1) edge [ ] node [above] {\footnotesize{$  ((k_i)_{i \in J},k) \mapsto k   $}} (m-2-2)
(m-1-2) edge [ ] node [right] {\footnotesize $ (-k,(k_i)_{i \notin J}) \mapsto k   $} (m-2-2);
\end{tikzpicture}
\end{equation}
This is for $|J|, |J^c| \geq 3$, otherwise see Remark \ref{rem:special}.
Both factors restrict the
distinguished momentum $k = (a,b,c,d)$ to
the cone $\zl(ad-bc) \subset \an{4}$.
\end{itemize}
Given an amplitude on each factor,
pull them back along the projection morphisms
$X_{|J|+1} \leftarrow \zl(\px) \to X_{1+|J^c|}$
and take their tensor product,
then reorder the $\image S^{2h} k^\pm_i \otimes \ux$ factors.
The two superfluous $\image S^{2h} k^\pm \otimes \ux$
associated to the distinguished momentum $k$,
one from each amplitude,
are annihilated:
\begin{itemize}
\item The $\image S^{2h} k^\pm$ factors are annihilated using \eqref{eq:inv}.
So fusion is only defined if one is $\image S^{2h}k^+$
and the other is $\image S^{2h}k^{-}$.
Effectively one cancels
\begin{equation}\label{eq:cancelks}
S^{2h}\bigg(\begin{smallmatrix} a \\ b \\ c \\ d \end{smallmatrix}\bigg)
\end{equation}
Fiberwise $v^{\otimes 2h} \otimes w^{\otimes 2h} \mapsto 1$ at $k = vw^T$,
in the sense of Remark \ref{remark:vbp}.
\item The $\ux$ factors
are annihilated using $\ym \otimes \ym \to \C$.
\end{itemize}
\begin{definition}[Fusion]
Let $N_\sigma \geq 4$, let $J$ be an internal momentum and
$\px \supset (\Q_J)$ a prime divisor.
The above discussion defines a $\C$-bilinear map
\[
    \fuse{\px,J} \;:\;
\wt{\mext^{(\sigma_J,\zeta)}}(X_{|J|+1}-P) \times 
\wt{\mext^{(-\zeta,\sigma_{J^c})}}(X_{1+|J^c|}-P)\\
\;\to\;
\wt{\mext^{\sigma}}|_{\zl(\px)-Z}(\zl(\px) - Z)
\]
with 
$J$ and $J^c$ sorted in ascending order, 
$\sigma_J = (\sigma_i)_{i \in J}$, $\sigma_{J^c} = (\sigma_i)_{i \notin J}$, $\zeta = \pm$.
\end{definition}

Since the output is taken in $\zl(\px) - Z$,
and by the definition of $Z$,
it follows immediately that 
the input only requires sections on $X - P$.
When we use fusion in \eqref{eq:resrule},
restriction of the arguments from $X-Z$ to $X-P$ is implicit.
\begin{remark}[Special rules] \label{rem:special}
	If $|J| = 2$ replace the lower left factor in \eqref{eq:fibfuse} by one of $X^\pm$,
	 if $|J^c| = 2$ replace the upper right factor by one of $X^\pm$.
	Which of $X^\pm$
	is determined by requiring that the result be in $\zl(\px)$.
	So if $J = \{1,2\}$:
	\begin{itemize} 
		\item For $\px = \px_{++++}$ use $X^+ \times_{\an{4}} X^+$.
		\item For $\px = \px_{++--}$ use $X^+ \times_{\an{4}} X^-$.
		\item For $N\ge5$ and $\px = \px_{12}^+$ use $X^+ \times_{\an{4}} X$.
	\end{itemize}	
	All other cases are analogous.
\end{remark}
\begin{definition}[Recursive amplitudes] \label{def:reca}
For $h=1$ or $h=2$,
by a {\recam} we mean a collection of objects $(B^\sigma)$,
one for every $\sigma \in \{-,+\}^{N_\sigma}$
for all $N_\sigma \geq 3$, such that:
\begin{itemize}
\item If $N_\sigma = 3$
then $B^\sigma = B^{+,\sigma} \oplus B^{-,\sigma}$ where
\[
B^{\pm,\sigma} \in \wt{\mext^{\pm,\sigma}}(X^\pm-Z_3)
\]
Nonzero are only
$B^{+,\sigma}$ with $\sigma$ any permutation of ${+}{+}{-}$,
and
$B^{-,\sigma}$ with $\sigma$ any permutation of ${-}{-}{+}$.
They are up to normalization the elements in Lemma \ref{lemma:lig} 
times the given element in $\ux^{\otimes 3}$ in \eqref{table:ucube}.
Furthermore
\begin{equation}\label{eq:nonzero3}
B^{+,+-+}\neq 0
\qquad
B^{-,-+-}\neq 0
\end{equation}
\item If $N_\sigma \geq 4$ then
\[
B^{\sigma} \in (\wt{\mext^\sigma} \otimes \O_X(D))(X-Z)
\]
is a homogeneous element 
whose degree is $2h + 2(N_\sigma-3)$
lower than the generators of $\mext^\sigma$.
Furthermore,
for all prime divisors $\px \in \mathfrak{P}$:\footnote{%
The condition $N_\sigma \notin J$ has a double purpose.
First, it excludes duplicates due to
$\Q_J = \Q_{J^c}$.
Using permutation invariance, 
the summand is invariant under replacing $J$ by $J^c$ and
	$\zeta$ by $-\zeta$.
	The distinguished momentum $k$ changes sign,
	but there is no sign ambiguity 
	when we cancel \eqref{eq:cancelks}
	since $h$ is an integer.
Second, it is required for pre-amplitudes in Remark \ref{rem:pre}.}
\begin{equation}\label{eq:resrule}
\Res_\px B^\sigma
= \sum_{\substack{J:\;\px \supset (\Q_J)\\ N_\sigma \notin J}} \sum_{\zeta = \pm}
\Big(B^{(\sigma_J,\zeta)} \fuse{\px,J} B^{(-\zeta,\sigma_{J^c})}\Big)
\otimes \Res_\px \frac{1}{\Q_J}
\end{equation}
If $|J|=2$ or $|J^c|=2$ one must observe the special rules in Remark \ref{rem:special}.
\item For all $N\ge3$,
the collection $(B^\sigma)_{N_\sigma=N}$ is $S_N$ permutation invariant.
\end{itemize}
\end{definition}
\begin{remark}[Normalization] \label{rem:nzo}
If $(B^\sigma)$ is a recursive amplitude
then so is 
\[ \big(\lambda^{N_\sigma-2}\mu^{N_{\sigma}^{+}-N_{\sigma}^{-}}B^\sigma\big) \] 
for all $\lambda,\mu \in \C^\times$,
where $N_{\sigma}^{\pm}$ is the number of plus and minus 
signs in $\sigma$.
Up to such transformations, any two recursive amplitudes have
identical $(B^{\sigma})_{N_{\sigma}=3}$.
\end{remark}
\begin{remark}[Degree of homogeneity]
Using the grading on $M$ in Definition \ref{def:hmm},
amplitudes have degree 
$d_N = 4+(N-1)(2h-2)$.
This is compatible with \eqref{eq:resrule},
because fusion cancels \eqref{eq:cancelks}
and is homogeneous of degree $-2h$,
and $d_N = d_{|J|+1} + d_{1+|J^c|}-2h-2$.
Incidentally, the last equation has a unique solution for every $d_3$, and
$d_3=4h$ is the lowest allowed by Lemma \ref{lemma:lig}.
\end{remark}
\begin{theorem}[Recursive characterization; uniqueness] \label{theorem:unique}
Given $h$, $\ux$. Then
any two {\recam}s are equal, up to normalization as in Remark \ref{rem:nzo}.
\end{theorem}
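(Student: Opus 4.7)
The plan is to proceed by induction on $N_\sigma$. The base case $N_\sigma = 3$ is handled by Lemma \ref{lemma:lig}; the inductive step for $N_\sigma \geq 4$ is reduced to Hartogs extension (Theorem \ref{theorem:nlcm}) combined with a degree count.

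For the base case, Definition \ref{def:reca} explicitly forces $B^{\pm,\sigma}$ to equal the Lorentz invariant element of Lemma \ref{lemma:lig}, unique up to scalar, times the fixed element of $\ux^{\otimes 3}$ from \eqref{table:ucube}. Hence any two {\recam}s have $3$-leg pieces that differ by a scalar in each of the sectors indexed by the total number of $+$ versus $-$ signs. Permutation invariance across the four sectors, together with the nonvanishing condition \eqref{eq:nonzero3}, lines these scalars up into the two-parameter family $\lambda^{N_\sigma - 2}\mu^{N_\sigma^+ - N_\sigma^-}$ of Remark \ref{rem:nzo}. After an appropriate choice of $(\lambda,\mu) \in (\C^\times)^2$ one may therefore assume both {\recam}s agree whenever $N_\sigma = 3$.

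For the inductive step, fix $N \geq 4$ and assume that $(B^\sigma)$ and $(B'^\sigma)$ agree for all $N_\sigma < N$. Given $\sigma$ with $N_\sigma = N$, set $D^\sigma = B^\sigma - B'^\sigma \in (\wt{\mext^\sigma} \otimes \O_X(D))(X-Z)$. The residue rule \eqref{eq:resrule} is $\C$-bilinear in the two fusion arguments, each of which only involves amplitudes with strictly fewer legs, so by the inductive hypothesis $\Res_\px D^\sigma = 0$ for every $\px \in \mathfrak{P}$. By Definition \ref{def:residue} together with Remark \ref{rem:0thcom}, the vanishing of all residues places $D^\sigma$ in the subspace $\wt{\mext^\sigma}(X-Z)$. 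Since $Z$ has codimension at least two in $X$ by Proposition \ref{prop:cig} and $N \geq 4$, Theorem \ref{theorem:nlcm} gives $\wt{\mext^\sigma}(X-Z) \simeq \mext^\sigma = M \otimes \ux^{\otimes N}$. Thus $D^\sigma$ comes from a global element of this graded module, which is generated in degree $2hN$. However $D^\sigma$ is homogeneous of degree $2hN - 2h - 2(N-3)$, equal to $4$ when $h = 1$ and to $2N+2$ when $h = 2$; in both cases this is strictly less than $2hN$ for $N \geq 4$. Hence $D^\sigma = 0$, closing the induction.

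The main obstacle I anticipate is not the induction itself but the bookkeeping behind the residue rule: one must check that the $\C$-bilinear fusion operation, including the special rules of Remark \ref{rem:special} and the multi-$J$ sum for $\px \in \{\px_{++++},\px_{----},\px_{++--}\}$ at $N=4$, indeed only feeds in amplitudes with $N_\sigma < N$, so that the inductive hypothesis genuinely kills each $\Res_\px D^\sigma$. Once that is verified, Hartogs together with the degree comparison forces $D^\sigma = 0$ immediately, and uniqueness follows without any further input.
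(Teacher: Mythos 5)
Your proof is correct and follows the same structure as the paper's: induction on the number of legs, with the base case handled by Lemma \ref{lemma:lig} and normalization as in Remark \ref{rem:nzo}, and the inductive step obtained by noting that the difference has vanishing residues (hence lies in $\wt{\mext^\sigma}(X-Z)$), invoking Hartogs extension (Theorem \ref{theorem:nlcm}) to conclude it lies in $\mext^\sigma$, and then killing it by a degree comparison. Your closing concern about the fusion rule feeding in only amplitudes with strictly fewer legs resolves trivially since $2\le|J|\le N-2$ forces both fusion factors to have between $3$ and $N-1$ legs, so the inductive hypothesis applies.
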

\begin{proof}
Let $\delta B^\sigma$ be the discrepancy.
It vanishes for $N_\sigma = 3$,
and by induction for $N_\sigma \geq 4$.
In fact, by the induction hypothesis,
and since the common kernel of all $\Res_\px$ are regular elements,
we have $\delta B^\sigma \in \wt{\mext}(X-Z)$.
Since $Z$ has codimension $\geq 2$,
Theorem \ref{theorem:nlcm} implies $\delta B^\sigma \in \mext$.
But $\delta B^\sigma$ is homogeneous of degree lower than
the generators of $\mext$, hence zero.
\qed\end{proof}
\begin{lemma}[The impossibility of a single residue] \label{lemma:isr}
For all $N_\sigma \geq 4$ and all $\px \in \mathfrak{P}$,
a homogeneous element of $(\wt{\mext^\sigma} \otimes \O_X(\px))(X-Z)$
of degree at least $3$ 
lower than the generators of $\mext^\sigma$ is zero.
\end{lemma}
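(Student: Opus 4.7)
The plan is to reduce to the Hartogs extension theorem for the helicity sheaf, Theorem \ref{theorem:nlcm}, by clearing the allowed pole through multiplication by $\Q_J$ for a suitable $J$.

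First I would pick any internal momentum $J$ with $\px \supset (\Q_J)$; such a $J$ exists by the definition of $\mathfrak{P}$. Given an element $s \in (\wt{\mext^\sigma} \otimes \O_X(\px))(X-Z)$ as in the hypothesis, form $t := \Q_J \cdot s$. By Proposition \ref{prop:cig}, at every point of $\zl(\px) \cap (X-Z)$ the ideal $\px$ is locally principal and generated by $\Q_J$ (which has nonzero tangent derivative there), so one factor of $\Q_J$ exactly cancels the allowed first-order pole of $s$ along $\px$. Away from $\zl(\px)$ the section $s$ was already regular, so $t$ is a genuine section of $\wt{\mext^\sigma}$ on $X-Z$.

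Next, since $Z$ has codimension at least two by Proposition \ref{prop:cig}, Theorem \ref{theorem:nlcm} applies and forces $t \in \mext^\sigma$. The degree of $t$ equals $\deg s + \deg \Q_J = \deg s + 2$, which is at most $2hN-1$ by the hypothesis $\deg s \leq 2hN-3$. But $\mext^\sigma = M \otimes \ux^{\otimes N}$ is generated in degree $2hN$ as a graded module over the non-negatively graded domain $R$ (Definition \ref{def:hmm}), so as a graded submodule of a graded free module its pieces strictly below degree $2hN$ vanish. Therefore $t = 0$ in $\mext^\sigma$.

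Finally, to pass from $\Q_J s = 0$ back to $s=0$, I would use that $R$ is an integral domain (Lemma \ref{lemma:irr}) and that $\mext^\sigma$ is torsion-free, so $\wt{\mext^\sigma} \otimes \O_X(\px)$ injects into the constant sheaf of rational sections $\mext^\sigma \otimes_R \Frac(R)$ on the irreducible scheme $X-Z$. On the open dense subset $X - Z - \zl(\Q_J)$ the element $\Q_J$ is a unit, so $s$ vanishes there, and hence $s = 0$ globally on $X-Z$. I do not anticipate a real obstacle: the one quantitative point worth flagging is that the hypothesis ``at least $3$ lower'' in the statement is exactly what the degree count needs, namely $\deg \Q_J = 2$ plus one more to get strictly below the generator degree $2hN$.
\qed
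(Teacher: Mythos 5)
Your proposal is correct and follows essentially the same route as the paper: pick $J$ with $\px$ minimal over $(\Q_J)$, clear the pole by multiplying by $\Q_J$, apply the Hartogs theorem for the helicity sheaf (Theorem \ref{theorem:nlcm}) to land in $\mext$, use the degree bound to force the product to vanish, and then deduce that the original section vanishes. The only cosmetic difference is the last step: the paper invokes Remark \ref{rem:0thcom} (no 0th local cohomology along $P-Z$) whereas you argue directly via integrality of $R$ and torsion-freeness of $\mext$; these are the same observation in sheaf-theoretic versus algebraic clothing.
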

\begin{proof}
Let $B$ be such an element. Every $\px$ is minimal over some $(\Q_J)$,
so $\Q_J B \in \wt{\mext}(X-Z) \simeq \mext$, using Theorem \ref{theorem:nlcm}.
But $\Q_J B$ is homogeneous of degree lower than the
generators of $\mext$, hence zero. By Remark \ref{rem:0thcom}, $B = 0$.
\qed\end{proof}
\begin{prop}[Helicity violation and qualitative properties] \label{corollary:hcqp}
Every {\recam} satisfies:
\begin{itemize}
\item
If $N_\sigma \geq 4$ 
then $B^\sigma = 0$ if $\sigma$ contains fewer than two $+$ or two $-$.
\item
If $N_\sigma \geq 5$ then $B^{--++\cdots+}$,
known as MHV amplitudes\footnote{%
For YM
 these are explicitly given, color-ordered, by the
 Parke-Taylor formula.
 Quote from \cite{parketaylor}:
\emph{The Feynman diagrams for [\ldots] gluon scattering
contain propagators $(p_i+p_j+p_k)^2$, $(p_i+p_j+p_k+p_m)^2$, \ldots
These propagators must cancel for [our equation] to be correct.}},
can have nonzero residue only along the $\px_{ij}^+$.
The residue along $\px_{12}^+$ is also zero.
\end{itemize}
\end{prop}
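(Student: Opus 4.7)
The overall strategy is to analyze the residue recursion \eqref{eq:resrule} by helicity counting. In any fusion term $B^{(\sigma_J,\zeta)}\fuse{\px,J}B^{(-\zeta,\sigma_{J^c})}$ the two sub-amplitudes together carry $p+1$ pluses and $m+1$ minuses, where $p,m$ count the pluses and minuses in $\sigma$. A 3-amplitude is nonzero only if it is a permutation of $++-$ on the $X^+$-branch, or of $--+$ on the $X^-$-branch; by induction a longer amplitude is nonzero only if it has at least two of each sign. Combined with Remark \ref{rem:special}, which fixes the $X^{\pm}$ branch associated to each prime divisor, these parity-of-support constraints tightly control which residues can be nonzero.

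For part 1 I induct on $N_\sigma\ge 4$, treating the case $p\le 1$; the case $m\le 1$ is symmetric. When $N_\sigma\ge 5$ every fusion term has at least one factor of length $\ge 4$, which by induction needs $\ge 2$ pluses, forcing the companion to have $\le 0$ pluses and hence to vanish (a 3-amplitude with zero pluses is neither a permutation of $++-$ nor of $--+$). All residues being zero, $B^\sigma\in\wt{\mext^\sigma}(X-Z)=\mext^\sigma$ by Theorem \ref{theorem:nlcm}, and the degree bound in Definition \ref{def:reca} forces $B^\sigma=0$. For $N_\sigma=4$ both fusion factors are 3-amplitudes; counting rules out every $\px$ when $\sigma\in\{{+}{+}{+}{+},\,{-}{-}{-}{-}\}$, while for a permutation of $+---$ (respectively $+++-$) only the residue at $\px_{----}$ (respectively $\px_{++++}$) survives the counting. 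In this remaining case I do not evaluate the residue: since all other residues vanish, $B^\sigma\in(\wt{\mext^\sigma}\otimes\O_X(\px))(X-Z)$ for a single $\px$, and the degree drop $2h+2\ge 3$ makes Lemma \ref{lemma:isr} applicable, yielding $B^\sigma=0$.

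For part 2 I fix MHV $\sigma=--++\cdots+$ with $N\ge 5$, so at most $m+1=3$ minuses are available in any fusion pair. If $\px$ is minimal over $(\Q_J)$ with $|J|,|J^c|\ge 3$, both factors have length $\ge 4$ and each needs $\ge 2$ minuses by part 1 — impossible. If $\px=\px_{ij}^-$ the 3-factor on $X^-$ must be of shape $--+$, requiring two minuses there; any such placement leaves the companion $N-1\ge 4$ factor with a single minus, hence zero by part 1, and choices of $\{i,j\},\zeta$ that fail to realize $--+$ in the 3-factor are already zero. If $\px=\px_{12}^+$, the only admissible $J$ (since $N\notin J$) is $\{1,2\}$; then $\sigma_J=--$ and $(\sigma_J,\zeta)$ has at most one plus, so the 3-factor on $X^+$ vanishes. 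Only the $\px_{ij}^+$ with $\{i,j\}\ne\{1,2\}$ can support a nonzero residue.

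The main obstacle is the $N_\sigma=4$ step of part 1, where counting alone leaves the residue at $\px_{----}$ (or $\px_{++++}$) in play and a direct cancellation among three explicit fusion terms would otherwise need to be computed. The resolution is Lemma \ref{lemma:isr}, which converts ``only one surviving potential residue plus adequate degree drop'' into ``identically zero'' without carrying out that computation.
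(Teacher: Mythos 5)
Your proof is correct and takes essentially the same route as the paper: helicity counting in \eqref{eq:resrule} constrained by the three-point rules of Definition \ref{def:reca}, closed by Lemma \ref{lemma:isr} when exactly one potential residue survives (the $N_\sigma=4$, one-$+$ or one-$-$ case) and by Theorem \ref{theorem:nlcm} together with the degree bound when all residues vanish outright. You are somewhat more explicit than the paper's terse proof about which of these two closing steps applies where and about the prime-divisor casework in part 2, but this is a clarification of the same argument, not a different approach.
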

\begin{proof}
By \eqref{eq:resrule} the amplitude $B^{-+++}$
can have residue only along $\px_{++++}$,
so it is zero by Lemma \ref{lemma:isr} since its degree is $2h+2$ below the generators of $\mext$.
This implies the first claim for $N_\sigma = 4$,
for all $N_\sigma$ by induction using \eqref{eq:resrule}. 
The second claim is also by \eqref{eq:resrule}, 
since here the sum over $J$, $\zeta$ always degenerates to at most one term
and fusion vanishes iff both fusion factors vanish.
\qed\end{proof}
In Theorem \ref{theorem:exists} below we construct recursive amplitudes 
using the minimal model brackets. Hence $S_{N-1}$ permutation invariance 
is by construction, but by the next argument this 
 implies $S_{N}$ permutation invariance.
\begin{remark}[Recursive pre-amplitudes are recursive amplitudes]\label{rem:pre}
Define a recursive pre-amplitude just like a recursive 
amplitude in Definition \ref{def:reca}, but
with $S_N$ permutation invariance relaxed 
to $S_{N-1}$ permutation invariance in the first $N-1$ factors,
indicated by a vertical bar below.
In particular, \eqref{eq:resrule} remains in force verbatim; 
the fact that $N$ is always the rightmost element of $J^{c}$ is now critical.
For $N=3$, permutation invariance
can only fail due to a mismatch of normalizations.
For the
$(B^{+,\sigma})_{N_\sigma = 3}$
define normalization constants $n_\pm \in \C$ by
\[
B^{++|-} = n_- B^{++-}
\qquad
B^{+-|+} = n_+ B^{+-+}
\]
where the $B$ on the left are those of a given pre-amplitude,
those on the right are $S_3$ permutation invariant and nonzero 
and used as a reference.
By \eqref{eq:nonzero3} for the pre-amplitude we have $n_+ \neq 0$.
We must show $n_- = n_+$.
Abbreviate 
\begin{align*}
A&=\big(B^{-++}  \fuse{\px,12} B^{-++}\big)\otimes \Res_{\px}\tfrac{1}{\Q_{12}}\\
A'&=\big(B^{-++} \fuse{\px,13} B^{-++}\big)\otimes \Res_{\px}\tfrac{1}{\Q_{13}}\\
A''&=\big(B^{++-} \fuse{\px,23} B^{+-+}\big)\otimes \Res_{\px}\tfrac{1}{\Q_{23}}
\end{align*}
where $\px = \px_{++++}$.
By \eqref{eq:resrule} for the pre-amplitude and Lemma \ref{lemma:isr} we have 
$B^{-++|+}=0$,
hence $n_+^2 A+n_+^2 A'+n_-n_+ A''=0$ by \eqref{eq:resrule}.
On the other hand we have 
$A+A'+A''=0$ by direct calculation, using 
the relative residues in Remark \ref{remark:respppp}, 
the expressions in Lemma \ref{lemma:lig}, the given
$U \in \ux^{\otimes 3}$, and:
\begin{itemize}
	\item For YM, the Jacobi identity for the Lie algebra bracket $U$.
	\item For GR, 
	$\eps_{12}\eps_{34} + \eps_{23}\eps_{14} + \eps_{31}\eps_{24} = 0$ and $U=1$.
\end{itemize}
Since $n_+\neq 0$ and $A'' \neq 0$ we conclude that $n_- = n_+$.
This establishes $S_3$ permutation invariance for $(B^{+,\sigma})_{N_\sigma = 3}$,
the case $(B^{-,\sigma})_{N_\sigma = 3}$ is analogous.
Now $S_{N}$ permutation invariance for $N\ge4$ is by induction using \eqref{eq:resrule}.
\end{remark}
\newcommand{\xblobx}[2]{%
  \draw [black] (#1) node {#2};}
\newcommand{\plusminus}[2]{\draw [shorten >=\xshort,shorten <=\xshort]
	(x) -- (y) node[midway,xshift=-7,yshift=-0] {#1}  node[midway,xshift=-2,yshift=6] {#2} ;}
\newcommand{\brr}[1]{\draw [black] (#1) circle (0.08); \draw [black] (#1) circle (0.04);}
\newcommand{\xres}[5]{\wrap{1}{
  \coordinate (a) at (-1,0);
  \coordinate (b) at (0,0);
  \coordinate (c) at (1,0);
  \coordinate (d) at (0,2.05);
  \coordinate (x) at (-0.5,0.66);
  \coordinate (y) at (0,1.33);
  \haux{x}{a}{shorten >=9}{}{}
  \haux{x}{b}{shorten >=9}{}{}
  \haux{y}{c}{shorten >=9}{}{}
  \haux{y}{d}{shorten >=9}{}{}
  \plusminus{$#4$}{$#5$}
  \xblobx{a}{$#1$}
  \xblobx{b}{$#2$}
  \xblobx{c}{$#3$}
  \xblobx{d}{$+_4$}
  \brr{x}
  \brr{y}
}}
\newcommand{\xrat}[3]{\wrap{1}{
  \coordinate (a) at (-0.5,0);
  \coordinate (b) at (0.5,0);
  \coordinate (c) at (0,1.15);
  \coordinate (x) at (0,0.5);
  \haux{x}{a}{shorten >=8}{}{}
  \haux{x}{b}{shorten >=8}{}{}
  \haux{x}{c}{shorten >=8}{}{}
  \xblobx{a}{$#1$}
  \xblobx{b}{$#2$}
  \xblobx{c}{$#3$}
  \brr{x}
}}
The upcoming theorem 
finally connects amplitudes and minimal model brackets.
It can be viewed as an existence theorem for {\recam}s,
but beware that if existence is the only goal,
there can be more direct constructions,
perhaps using \cite{csw} or \cite{arkanietal},
but this is not a direction we pursue.
\begin{definition}[Homotopy data]
Given $h$, $\ux$.
Take any dgLa as in Theorem \ref{theorem:ymgrnew},
viewed as a vector space $\vecx$ with differential polynomial in $k$.
Denote by $d_i$ respectively $d_J$ the differential on $\vecx$,
as a function of $k_i$ respectively $k_J = \sum_{i \in J} k_i$.
Homotopy data at a maximal ideal $\mx \subset R$ corresponding to
a point $q = (q_1,\ldots,q_{N}) \in X-Z$ consists of:
 \begin{itemize}
	\item \emph{Off-shell homotopy for internal tree lines:}
	For every $J \subset \mbox{\{1,\ldots,N-1\}}$ with $1 < |J| < N-1$
	a matrix $\hoff_J$ with entries depending only on $k_J$
	that satisfy $(\hoff_J)^2 = 0$ and $\hoff_Jd_J + d_J\hoff_J = \one$:
	\begin{itemize}
		\item If $Q_{J}(q) \neq 0$, 
		use a trivial homotopy as in Lemma \ref{lemma:trivhom}.
		\item If $Q_{J}(q) = 0$,
		use an optimal homotopy based on Lemma \ref{lemma:sphdgLa}.
	\end{itemize}
	Its entries are in $R_\mx$,
	except for the explicit $1/\Q_J$ if $\Q_{J}(q)=0$.
	\item \emph{On-shell contraction for external tree lines:}
	For every $i=1\ldots N$ a contraction
	by matrices $(h_i,i_i,p_i)$ as in Lemma \ref{lemma:sphdgLa},
	depending only on $k_i$. 
	Its entries are in $R_\mx$. It satisfies
	$h_id_ih_i = h_i$, $(h_i)^2 = 0$, $i_ip_i = \one - h_id_i - h_id_i$, $p_ii_i = \one$,
	and (since $\Q_i=0$)
	$d_ih_id_i = d_i$.
\end{itemize}
\end{definition}
Every
invocation of Lemma \ref{lemma:sphdgLa} yields a diagram \eqref{eq:sqhdgLa}. 
Hence we can freely transition between
$\vecx$ and $\cc$ in the next theorem.
Here $\cc$ is a shorthand for \eqref{eq:hxx}.
\begin{theorem}[Amplitudes from minimal model brackets; existence] \label{theorem:exists}
  Given $h$, $\ux$.
  Take any dgLa as in Theorem \ref{theorem:ymgrnew}.
  Then there exists a {\recam} $(B^\sigma)$ with the following property:
  \begin{quote}
  For every $\sigma$ with $N_\sigma \geq 3$\footnote{%
  And if $N_\sigma = 3$, an additional sign to pick one of $X = X^\pm$.},
  every maximal  $\mx \subset R$ corresponding to
  a point in $X-Z$, and every choice of homotopy data at $\mx$,
 the chain map in Theorem \ref{well-defined2} induces, via \eqref{eq:cmiso},
  the section $B^\sigma$ over $R_\mx$.
	\end{quote}
\end{theorem}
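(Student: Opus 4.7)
The plan is to assemble $B^{\sigma}$ for $N_{\sigma}\ge4$ patchwise from the minimal model bracket of Definition \ref{def:mmb}, applied to the chain map $M_{h}$ of Theorem \ref{well-defined2}, and then to invoke Theorem \ref{theorem:unique} plus Remark \ref{rem:pre} to promote it to a bona fide recursive amplitude. For the base case $N_{\sigma}=3$, I take $B^{\sigma}$ to be the canonical Lorentz-invariant element of Lemma \ref{lemma:lig} tensored with the given element of $\ux^{\otimes3}$ from \eqref{table:ucube}, normalized so that both \eqref{eq:nonzero3} and the $N=3$ case of the compatibility-with-$M_{h}$ property hold; this is possible because the element of Lemma \ref{lemma:lig} spans the one-dimensional Lorentz-invariant subspace identified in Theorem \ref{theorem:n3h} and agrees, up to a scalar, with whatever $M_{h}$ produces over $R_{\mx}$ when the homotopy data picks out that unique invariant.

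For $N_{\sigma}\ge4$ and a point $q\in X-Z$ with maximal ideal $\mx$, choose homotopy data as in the statement. The sum of trees in Definition \ref{def:matet} produces, via the identification $\vecx\leftrightarrow\cc$ of Lemma \ref{lemma:sphdgLa}, an element of $\cext^{\sigma}_{h}$ localized at $\mx$, which under \eqref{eq:cmiso} becomes a section of $\wt{\mext^{\sigma}}$ over $R_{\mx}$ with possible $1/\Q_{J}$ poles coming only from the optimal homotopies used on internal lines with $\Q_{J}(q)=0$. Since $q\notin Z$, at most one prime divisor $\px\in\mathfrak{P}$ passes through $q$, and all the internal $J$ with $\Q_{J}(q)=0$ satisfy $\px\supset(\Q_{J})$; hence these local sections live in $\bigl(\wt{\mext^{\sigma}}\otimes\O_{X}(D)\bigr)_{\mx}$. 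Gauge independence (Theorems \ref{theorem:well-defined} and \ref{well-defined2}) guarantees that two different choices of homotopy data at the same point, or at overlapping patches, induce the same section on the locus where all internal lines are off-shell, i.e.\ on $X-P$, and Remark \ref{rem:0thcom} upgrades this agreement to agreement as sections of $\wt{\mext^{\sigma}}\otimes\O_{X}(D)$ on $X-Z$. This gives a well-defined global $B^{\sigma}\in\bigl(\wt{\mext^{\sigma}}\otimes\O_{X}(D)\bigr)(X-Z)$, with the correct degree by the homogeneity counting recorded in the proof of Theorem \ref{theorem:ymgrnew}.

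The main technical step is verifying the residue rule \eqref{eq:resrule}. Fix $\px$ and a point of $\zl(\px)-Z$. Pick homotopy data in which, for every internal $J$ with $\px\supset(\Q_{J})$, the off-shell homotopy is the optimal one of Theorem \ref{theorem:opthom}, $\hoff_{J}=h_{J}+\tfrac{1}{\Q_{J}}i_{J}{\hprimeJ}p_{J}$; for all other internal lines, use a trivial homotopy regular along $\px$. When we expand the tree sum, the only terms contributing to $\Res_{\px}$ are those in which some internal line carrying such a $J$ contributes its $\tfrac{1}{\Q_{J}}i_{J}{\hprimeJ}p_{J}$ summand; all other lines contribute regular factors. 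At a contributing internal line, the tree factors as a subtree below terminated by $p_{J}$ and a subtree above entered by $i_{J}$, exactly as in the pictorial cancellation displayed just before \eqref{eq:resc}. By gauge independence applied separately to the two subtrees (which are themselves trivalent trees of the form used to build minimal model brackets with $|J|+1$ and $1+|J^{c}|$ legs), each subtree assembles into the corresponding amplitude $B^{(\sigma_{J},\zeta)}$ and $B^{(-\zeta,\sigma_{J^{c}})}$, and the bridging $\frac{1}{\Q_{J}}i_{J}{\hprimeJ}p_{J}$ implements precisely the fusion $\fuse{\px,J}$ together with the factor $\Res_{\px}\tfrac{1}{\Q_{J}}$ via \eqref{eq:inv}; the sum over $\zeta=\pm$ arises from the two helicity summands of the $\cc$-complex. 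For $N_{\sigma}=4$ and $\px\in\{\px_{++++},\px_{----},\px_{++--}\}$ several $J$ share the same $\px$, and the sum over such $J$ reproduces the right-hand side of \eqref{eq:resrule}; the constraint $N_{\sigma}\notin J$ just fixes a convention (cf.\ the footnote in Definition \ref{def:reca}). I expect this residue/factorization bookkeeping, especially sorting out signs, the role of the canonical isomorphism ${\hprimeJ}$, and the fiber-product identification \eqref{eq:fibfuse}, to be the main obstacle.

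Finally, the tree sum is manifestly symmetric in the $N-1$ input legs, which yields $S_{N_{\sigma}-1}$ invariance and shows that $(B^{\sigma})$ is a recursive \emph{pre}-amplitude in the sense of Remark \ref{rem:pre}. That remark, together with the base-case normalization already fixed, promotes it to an $S_{N_{\sigma}}$-invariant recursive amplitude. Uniqueness and compatibility with $M_{h}$ at every $\mx$ and every homotopy data are then direct: by construction the section induced by $M_{h}$ over $R_{\mx}$ equals the globally defined $B^{\sigma}$, and Theorem \ref{theorem:unique} shows that no other choice of $B^{\sigma}$ is possible once the $N=3$ normalization is fixed.
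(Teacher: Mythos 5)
Your proposal is correct and follows essentially the same route as the paper's proof: construct local sections from the chain map of Theorem \ref{well-defined2} with chosen homotopy data, glue via gauge independence plus Remark \ref{rem:0thcom}, verify the residue rule \eqref{eq:resrule} from the structure of optimal homotopies and tree combinatorics, and then pass from $S_{N-1}$-invariance to $S_N$-invariance via Remark \ref{rem:pre}. The only cosmetic difference is that the paper works with $\cext$ and translates to $\mext$ at the end via \eqref{eq:cmiso}, whereas you translate at the start; the paper also spells out the degree bookkeeping (e.g.\ the shifts $\cc^1[2]$, $\cc^2[3]$) and flags two small technical mismatches with Theorem \ref{well-defined2} in footnotes, which you elide but which present no real obstruction.
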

\begin{proof}
By Remark \ref{rem:pre} it suffices to construct a recursive pre-amplitude.
By \eqref{eq:cmiso}
it suffices to construct corresponding sections $(b^\sigma)$
that instead of $\mext$ use $\cext$.
In any given tree at most one off-shell homotopy with a $1/\Q_J$ appears\footnote{%
If $N_\sigma = 4$ then three off-shell homotopies
can have a $1/\Q_J$,
but never in the same tree.}
and this goes into $\O_X(D)$.
The chain map in Theorem \ref{well-defined2}\footnote{%
   One must be careful with the signs of the momenta.
   But $\cc_h$ (respectively $\cc_{-h}$) is isomorphic to itself
   with $k$ replaced by $-k$ simply by sign reversal in homological degree 2.
   Hence we gloss over this distinction just as we did in the definition of $C$.
}
induces a
$b_\mx \in (\wt{\cext} \otimes \O_X(D))(D_{f_\mx})$,
where $f_\mx \notin \mx$
is a common multiple of all denominators
excluding the $1/\Q_J$ already taken care of.
Here $D_{f_\mx} = \{f_\mx \neq 0\}$.
Given homotopy data at two maximal ideals $\mx$, $\nx$ then
on $D_{f_\mx} \cap D_{f_\nx} = D_{f_\mx f_\nx}$
the elements $b_\mx$, $b_\nx$
differ by a trivial chain map by Theorem \ref{well-defined2}\footnote{%
There is a mismatch of the current setting with Theorem \ref{well-defined2},
which requires a possibly discontinuous contraction globally on $\C^4$,
whereas here we have one for every $i$, $J$.
Theorem \ref{well-defined2} extends to this case
 by continuity, or by
inspection of the proof of Theorem \ref{well-defined2}.}\textsuperscript{,}\footnote{%
Theorem \ref{well-defined2}
requires that internal lines be off-shell,
which fails along the subvariety $P-Z$.
We addressed this earlier.
}.
Gluing yields a section $b$ on $X-Z$,
and all choices of homotopy data yield the same $b$.

The homogeneity in Theorem \ref{theorem:ymgrnew}
means that the $(b^\sigma)$ have $R$-degree zero
if the module for chain maps in \eqref{eq:ccxx}
is taken to be\footnote{By convention,
$\cc^1 \simeq R^{2h+1}$, $\cc^2 \simeq R^{4h}$, $\cc^3 \simeq R^{2h-1}$
as graded $R$-modules, so they are generated in degree zero.
We denote by $\cc^1[2]$ the shifted module
generated in degree $-2$, etc.
}
\[
\Hom(\cc^1[2] \otimes \cdots \otimes \cc^1[2],\cc^2[3])
\]
The two isomorphisms in \eqref{eq:zzt} have degree zero using
$(\image S^{2h} k)[j]$ and $\cc^1[j]$ and $\cc^2[j-1]$ for all $j$\footnote{%
Note that the differential in Lemma \ref{lemma:uniqd}
has the same degree as $d/\Q$.}.
The isomorphism \eqref{eq:inv} is induced by a degree zero map
$(\image S^{2h}k^\pm)[2h-2] \to ((\image S^{2h} k^\mp)[2])^\ast$, with $\ast$ the dual.
Combining, the $(B^\sigma)$ have degree zero
using the grading $M[4+(N-1)(2h-2)]$, as claimed.

One can now check all properties of a recursive pre-amplitude.
The $N = 3$ pre-amplitudes are as required,
compatible with Lemma \ref{lemma:lig}, 
since the degrees match
and by Lorentz invariance in Theorem \ref{theorem:ymgrnew}.
In particular we have \eqref{eq:nonzero3} by Theorem \ref{theorem:ymgrnew}.
For $N \geq 4$, the residue \eqref{eq:resrule}
at $\px$
is checked at a given maximal $\mx \supset \px$ by choosing homotopy data and using:
\begin{itemize}
\item The structure of optimal homotopies \eqref{eq:opthom}\footnote{%
The distinction between, for example,
$\hprime^\vecx$ and $\hprime^{\cc}$ is insignificant using \eqref{eq:sqhdgLa}.
}\textsuperscript{,}\footnote{%
	Note that $i{\hprime}p$ induces a map on homology inverse to $i{\dprime}p$
	in Lemma \ref{lemma:uniqd}. 
	We get \eqref{eq:resrule} by construction 
	since \eqref{eq:cmiso} is defined using
	\eqref{eq:inv} and \eqref{eq:zzt}, and the latter is induced by $i{\dprime}p$.
}.
\item Tree combinatorics\footnote{%
There is at most one $1/\Q_J$ in any given tree.
The sum over all trees containing $J$ as an internal line is a  
double sum over all trees below and all trees above $J$.
The internal line $J$ is decorated by the optimal homotopy.
}.
\end{itemize}
The sum over $\zeta = \pm$ in \eqref{eq:resrule}
is because $\cc$ is a direct sum of two complexes.
Permutation invariance in all but the last factor is by construction.
\qed\end{proof}
It would be interesting to study qualitative properties of the amplitudes
near $Z$ in codimension $\geq 2$.
Perhaps local cohomology calculations
yield interesting constraints
only based on the structure of the sheaves.
The case where one $k_i$ approaches zero
is known as the soft gluon respectively soft graviton limit.

\appendix
\section{On the module to sheaf functor} \label{appendix:m2s}

Throughout this appendix:
\begin{quote}
\emph{$R$ is 
the coordinate ring of an irreducible affine $\C$-variety $X$.
Hence
 $R$ is a reduced affine $\C$-algebra, Noetherian, and an integral domain.}
\end{quote}
For every nonzero $f \in R$, the localization $R_f$ inherits these properties
and it is the coordinate ring of the irreducible subvariety
$D_f = \{x \in X \mid f(x) \neq 0\}$.
Every localization of $R$ is a subring of the
 field of fractions $\Frac(R)$.
\begin{lemma}[The module to sheaf functor] \label{lemma:m2s}
For every finitely generated $R$-module $M$
there exists a unique $\O_X$-sheaf $\wt{M}$
with $\wt{M}(D_f) = M_f$ for all nonzero $f\in R$,
and canonical restriction maps.
Also $\wt{R} = \O_X$,
and $\wt{\phantom{x}}$ is an exact functor from finitely generated modules to 
coherent sheaves, and
\begin{align*}
\wt{M} \otimes_{\O_X} \wt{N} & = (M \otimes_R N)\wt{\phantom{x}}\\
\sheafHom_{\O_X}(\wt{M},\wt{N}) & = \Hom_R(M,N)\wt{\phantom{x}}
\end{align*}
We also have $\wt{M}|_{D_f} = \wt{M_f}$.
\end{lemma}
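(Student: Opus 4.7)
The plan is to proceed along the standard lines of Hartshorne II.5, adapted to the convention that $\wt{\phantom{x}}$ is only being defined on finitely generated modules. First I would observe that the distinguished opens $\{D_f\}_{f \neq 0}$ form a basis of the Zariski topology that is closed under finite intersections ($D_f \cap D_g = D_{fg}$), so it suffices to construct a B-sheaf on this basis and invoke the standard theorem that a B-sheaf extends uniquely to a sheaf on $X$. The assignment $D_f \mapsto M_f$ with the canonical localization maps $M_f \to M_{fg}$ is a well-defined presheaf on the basis since $D_f = D_g$ implies $\sqrt{(f)} = \sqrt{(g)}$ and hence $M_f = M_g$ inside $M \otimes_R \Frac(R)$ (using that $R$ is a domain and $M$ is finitely generated, so $M$ embeds into $M_f$ along the non-zerodivisor $f$ only after quotienting by torsion; more cleanly, use $M_f = S^{-1}M$ where $S = \{f^n\}$ and note $S$ only depends on $\sqrt{(f)}$).

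Next I would verify the B-sheaf axiom: for every cover $D_f = \bigcup_{i \in I} D_{f_i}$, the sequence
\[
0 \to M_f \to \prod_i M_{f_i} \to \prod_{i,j} M_{f_if_j}
\]
is exact. Using Noetherianness of $R$, the cover can be replaced by a finite subcover, reducing to the standard calculation: $(f^N) \subset (f_1,\ldots,f_n)$ for some $N$, and the exactness of this sequence is then a finite check involving writing $1 = \sum g_i f_i^N$ in $R_f$ and playing the usual clearing-denominators game. Once this is in place, existence and uniqueness of $\wt{M}$ follow, along with the identifications $\wt{R} = \O_X$ and $\wt{M}|_{D_f} = \wt{M_f}$ (the latter is immediate from the universal property since $D_g \subset D_f$ means $D_g = D_{fg}$ inside $D_f$).

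The functoriality and exactness of $\wt{\phantom{x}}$ are then free: an exact sequence $0 \to M' \to M \to M'' \to 0$ of finitely generated modules localizes to an exact sequence at every $D_f$ because localization is exact, and exactness of a sequence of sheaves can be checked on a basis. Coherence is automatic since on the basis the sections are finitely generated modules over the local rings. For the tensor product identity I would compare on the basis: both sheaves give $M_f \otimes_{R_f} N_f = (M \otimes_R N)_f$, so the canonical morphism of presheaves agrees on the basis and therefore the sheafifications agree. For the $\sheafHom$ identity I would use that $M$ is finitely generated (and $R_f$ is Noetherian, so $M_f$ is finitely presented), which gives the standard isomorphism $\Hom_R(M,N)_f \simeq \Hom_{R_f}(M_f,N_f)$; applying $\wt{\phantom{x}}$ to this, and noting that $\sheafHom$ of coherent sheaves on an affine is itself quasi-coherent with these sections on basic opens, gives the result.

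The main obstacle is the sheaf axiom for the B-sheaf, specifically the exactness of the Čech-type sequence above for infinite covers; everything else (functoriality, exactness, tensor, $\sheafHom$, restriction) reduces mechanically to properties of localization once the sheaf $\wt{M}$ has been constructed. The Noetherian hypothesis on $R$ is what rescues this step by reducing it to the finite case, and the finite case is the one substantive computation in the proof.
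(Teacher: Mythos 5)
Your proposal is correct and is essentially an unpacking of the standard construction in Hartshorne II.5 and EGA I.1.3, which is exactly what the paper cites as its proof without spelling out the details. The one place to tidy is the phrasing that $S=\{f^n\}$ "only depends on $\sqrt{(f)}$" — what is true is that the localization $S^{-1}M$ (equivalently, the saturation of $S$) depends only on $\sqrt{(f)}$ — and "finitely generated modules over the local rings" should read "over the rings $R_f$", but neither affects the argument.
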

\begin{proof}
See Chapter I, Section 1.3 \cite{ega}
or Chapter II, Section 5 \cite{hbook}.
\qed\end{proof}
The second half of Lemma \ref{lemma:m2s}
can alternatively be taken to define
the tensor product,
the $\sheafHom$, and via exactness
the kernel and cokernel and image of a morphism,
for such sheaves.
These operations are local
in the sense that they commute with restriction to $D_f$,
say $(M \otimes_R N)_f = M_f \otimes_{R_f} N_f$ implies
\[
(\wt{M} \otimes_{\O_X} \wt{N})|_{D_f} = \wt{M}|_{D_f} \otimes_{\O_{D_f}} \wt{N}|_{D_f}
\]

A sheaf $\wt{M}$ is locally free of rank one
if every point has an open neighborhood $D_f$ such that $\wt{M}|_{D_f} \simeq \O_X|_{D_f}$,
that is $M_f \simeq R_f$.
Local freeness on an open subset
is preserved by $\sheafHom$ and by the tensor product of sheaves.
It is not preserved say by cokernels,
cf.~the skyscraper sheaf.

\section{Irreducibility of a fiber product} \label{sec:fib}

The goal of this appendix is Lemma \ref{lemma:fib3}.
We give two proofs,
a longer proof via Lemmas \ref{lemma:fib} and \ref{lemma:fib2}
that conveys a geometric picture, and a direct proof.

\begin{lemma} \label{lemma:fib}
Suppose $X,Y$ are irreducible affine $\C$-varieties, and
\[
        X \xrightarrow{\;\;f\;\;} \an{m} \xleftarrow{\;\;g\;\;} Y
\]
are morphisms of varieties. Suppose
$Y$ is Cohen-Macaulay,
$g$ is surjective,
every fiber of $g$ has dimension $\dim Y - m$,
and there is an open dense subset $U \subset X$
      such that the fiber $g^{-1}(f(x)) \subset Y$
      is irreducible for all $x \in U$.
Then
\[
X \times_\an{m} Y = \{(x,y) \in X \times Y \mid f(x) = g(y) \}
\]
is an irreducible variety.
\end{lemma}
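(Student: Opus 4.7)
The plan is to analyze the first projection $\pi \colon Z = X \times_{\an{m}} Y \to X$ and exploit flatness to pass from irreducibility of the fibers over $U$ to irreducibility of $Z$.

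\emph{Step 1 (Flatness via miracle flatness).} I would first apply miracle flatness (Theorem 23.1 of \cite{mat}) to $g$. The target $\an{m}$ is regular, the source $Y$ is Cohen-Macaulay, and for every closed point $y \in Y$ the equidimensionality hypothesis gives $\dim \O_{Y,y} = \dim Y = m + (\dim Y - m) = \dim \O_{\an{m},g(y)} + \dim \O_{g^{-1}(g(y)),y}$, which is exactly what miracle flatness requires. Hence $g$ is flat, and therefore so is its base change $\pi$. In particular $\pi$ is open, and since $g$ is surjective so is $\pi$.

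\emph{Step 2 (Dimension and domination of components).} Flatness together with equidimensional fibers of dimension $\dim Y - m$ implies that every irreducible component $Z_0$ of $Z$ has dimension $\dim X + \dim Y - m$. Moreover, every such $Z_0$ must dominate $X$: otherwise $\overline{\pi(Z_0)} \subsetneq X$ and
\[
\dim Z_0 \;\le\; \dim \overline{\pi(Z_0)} + (\dim Y - m) \;<\; \dim X + \dim Y - m,
\]
contradicting the previous sentence.

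\emph{Step 3 (Irreducibility over $U$).} Over $U$ the map $\pi^{-1}(U) \to U$ is still flat, open, and surjective, and now all fibers are irreducible by hypothesis. I would invoke the following topological fact: if $f \colon W \to B$ is an open surjection onto an irreducible space with all fibers irreducible, then $W$ is irreducible. (Writing $W = W_1 \cup W_2$ with $W_i$ closed, each fiber lies in some $W_i$ by irreducibility; the sets $B_i = \{b : f^{-1}(b) \subset W_i\} = B \setminus f(W \setminus W_i)$ are closed (openness of $f$), they cover $B$, and irreducibility of $B$ forces one to equal $B$.) So $\pi^{-1}(U)$, and hence $\overline{\pi^{-1}(U)}$, is irreducible of dimension $\dim X + \dim Y - m$, and thus is an irreducible component of $Z$.

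\emph{Step 4 (No other components).} I would finish by showing $\overline{\pi^{-1}(U)}$ is the only component. Let $Z_0$ be any component; by Step 2 it dominates $X$, and by openness of $\pi$ the image $\pi(Z_0)$ is open in $X$, hence meets $U$ in a nonempty open set $V$. For every $x \in V$ the irreducible fiber $\pi^{-1}(x)$ meets $Z_0$ and therefore lies in $Z_0$, so $\pi^{-1}(V) \subset Z_0$. But $\pi^{-1}(V)$ is a nonempty open subset of the irreducible space $\overline{\pi^{-1}(U)}$, hence dense in it, so the closed set $Z_0$ contains $\overline{\pi^{-1}(U)}$; equality of dimension forces $Z_0 = \overline{\pi^{-1}(U)}$.

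\emph{Expected obstacles.} The combinatorial part of Step 4 and the topological lemma in Step 3 are straightforward once set up; the subtle point is Step 1, namely checking that the equidimensionality hypothesis on $g$ really is exactly the numerical input required by miracle flatness at every closed point, and that flatness of $g$ does pass to $\pi$ under this particular (non-flat base) base change $X \to \an{m}$. Once flatness is in hand, the rest is a standard irreducibility-via-generic-fiber argument.
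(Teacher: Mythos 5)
Your overall route is the same as the paper's: miracle flatness for $g$, flatness/openness transferred to $\pi\colon X\times_{\an{m}}Y\to X$ by base change, then a topological argument exploiting the irreducible fibers over a dense open $U$. The paper compresses your Steps 2--4 into a one-line topological claim and leaves it to the reader. Your Step 3 correctly fleshes out the core of that claim.

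Two remarks on the details. First, a small but genuine slip in Step 4: you write ``by openness of $\pi$ the image $\pi(Z_0)$ is open in $X$,'' but $Z_0$ is a \emph{closed} irreducible component of $Z$, not an open set, so openness of $\pi$ does not apply to it directly. The fix is to replace $Z_0$ by the open set $W_0 := Z_0 \setminus \bigcup_{j\neq 0} Z_j$ (nonempty open, contained in $Z_0$): then $\pi(W_0)$ is a nonempty open subset of $X$, hence meets $U$, and for any $x$ there the irreducible fiber $\pi^{-1}(x)$ meets $W_0$ and is therefore contained in the unique component $Z_0$. (Alternatively, domination of $X$ plus Chevalley constructibility gives an open dense subset of $\pi(Z_0)$.) Second, Steps 2 and 4 are actually dispensable: once $\pi$ is open and $U$ is dense, $\pi^{-1}(U)$ is automatically dense in $Z$ (a nonempty open set of $Z$ missing $\pi^{-1}(U)$ would have nonempty open image missing $U$), so the irreducibility of $\pi^{-1}(U)$ established in Step 3 immediately gives $Z = \overline{\pi^{-1}(U)}$ irreducible, with no dimension count or component chase. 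Finally, note that the paper is careful to cite not just flatness but ``flat and locally of finite presentation'' to conclude openness; for finite-type morphisms of Noetherian $\C$-schemes this is automatic, as you implicitly use, but it is worth flagging since flat alone does not imply open.
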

\begin{proof}
The morphism $g$ is flat
since $Y$ is Cohen-Macaulay,
$\an{m}$ is smooth,
and the fibers of $g$ have dimension $\dim Y - m$;
see Exercise III.10.9 in \cite{hbook}.
This is sometimes referred to as `miracle flatness'.
The morphism $g$ is locally of finite presentation,
because the coordinate ring of $Y$
admits a finite presentation
as (via $g$) a $\C[z_1,\ldots,z_m]$-algebra. 
Flat and locally of finite presentation
implies that every base change of $g$ is open,
by \cite{ega} IV.2, 2.4.6.
So base change by $f$
yields an open map $X \leftarrow X \times_\an{m} Y$.
By topology, 
if $X \leftarrow A$ is continuous and open,
if $X$ is irreducible,
and if there is an open dense subset
of $X$ such that the corresponding
fibers are irreducible, then $A$ is irreducible.
\qed\end{proof}
In the following,
each $k_i$ has four components and each $v_i$ and $w_i$ has two components.
We denote $k_i = (a_i,b_i,c_i,d_i)$ and
$\Q_i = a_id_i - b_ic_i$.
\begin{lemma}[Irreducibility] \label{lemma:fib2}
Suppose $X$ is an irreducible affine $\C$-variety,
and $f: X \to \an{4}$ a morphism of varieties.
Let $n \geq 3$ be an integer
and if $n=3$ then demand that
$f$ not be identically zero.
Consider
\[
    X
    \xrightarrow{\;\;f\;\;}
    \an{4}
    \xleftarrow{\;\;(k_{1\ldots n}) \mapsto k_1 + \ldots + k_n\;\;}
    \cone^n
\]
where $\cone = \zl(ad-bc) \subset \an{4}$.
Then the fiber product
\[
    X \times_{\an{4}} \cone^n
\]
is irreducible.
Same if the right map is
$\an{4} \xleftarrow{(v_{1\ldots n},w_{1\ldots n})
\mapsto v_1w_1^T + \ldots + v_nw_n^T} \an{4n}$.
\end{lemma}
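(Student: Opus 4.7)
The plan is to apply Lemma \ref{lemma:fib} with $Y = \cone^n$ (respectively $Y = \an{4n}$) and $g$ the given summation map into $\an{4}$. First I dispense with the easy hypotheses: $Y$ is irreducible since $\cone = \zl(ad-bc)$ is an irreducible hypersurface and products of irreducibles are irreducible; $Y$ is Cohen-Macaulay since $\cone$ is a complete intersection of codimension one, hence CM, while $\an{4n}$ is smooth; and $g$ is surjective since for $n \geq 2$ every element of $\an{4}$ is a sum of two cone elements (or two rank-$\leq 1$ matrices). Equidimensionality of all fibers at dimension $\dim Y - 4$ follows by a direct count: the $\mathrm{GL}_2 \times \mathrm{GL}_2$-action on $\an{4}$ intertwines with $g$ and has only three orbits ($\{0\}$, rank one, rank two), so it suffices to check one representative in each orbit, with $k = 0$ handled by Lemma \ref{lemma:ci0} and the other two handled by cutting out $\cone^{n-1}$ (respectively $\an{4(n-1)}$) by the single non-identically-zero equation $\det(k - \textstyle\sum_{i<n} k_i) = 0$, which is non-vanishing on the ambient variety as can be seen by choosing, e.g., $k_1 = e_2e_2^T$, $k_2 = \ldots = k_{n-1} = 0$.

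The key content is showing that over some open dense $U \subset X$ the fibers of $g$ over $f(U)$ are irreducible. When $n = 3$ the hypothesis $f \not\equiv 0$ makes $U := f^{-1}(\an{4} \setminus \{0\})$ open and dense in $X$; for $n \geq 4$ I take $U = X$. It therefore suffices to show: for $n = 3$ the fiber $g^{-1}(k)$ is irreducible for every $k \neq 0$, and for $n \geq 4$ the fiber $g^{-1}(k)$ is irreducible for every $k$. I prove this by induction on $n$. For the step from $n$ to $n+1$ (with $n \geq 3$), I observe
\[
g_{n+1}^{-1}(k) \;\simeq\; \cone \times_{\an{4}} \cone^n
\]
via the maps $k_{n+1} \mapsto k - k_{n+1}$ and $(k_1,\ldots,k_n) \mapsto \sum_i k_i$. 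Applying Lemma \ref{lemma:fib} to this internal fiber product, with $X' = \cone$, $f'(k_{n+1}) = k - k_{n+1}$, $Y' = \cone^n$, and $U' := \cone \setminus \{k\}$ (which satisfies $f'(U') \subset \an{4} \setminus \{0\}$), the inductive hypothesis supplies the final requirement of Lemma \ref{lemma:fib}, so $g_{n+1}^{-1}(k)$ is irreducible for every $k$. The spinor version proceeds identically with $\cone$ replaced by $\an{4}$.

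The main obstacle is the base case $n = 3$ with $k \neq 0$. I handle it first in the spinor version, where the fiber can be rewritten as
\[
\bigl\{(V, W) \in \an{2 \times 3} \times \an{3 \times 2} \;:\; VW = k \bigr\}.
\]
Using the action $(A, B) \cdot k = A k B$ of $\mathrm{GL}_2 \times \mathrm{GL}_2$ on $\an{4} \cong \an{2 \times 2}$, lifted to the fiber by $V \mapsto AV$, $W \mapsto WB$, I reduce to the two orbit representatives of nonzero $k$: rank two (e.g.~$k = I$) and rank one (e.g.~$k = e_1 e_1^T$). In either case the open subset of the fiber where $V$ has maximal rank $2$ is nonempty; over this subset the equation $VW = k$ determines $W$ up to a translate of the $2$-dimensional kernel of left-multiplication by $V$, making it an affine bundle over the irreducible rank-two locus of $\an{2 \times 3}$, hence irreducible. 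The rank-deficient locus is empty when $k$ has rank two, and is strictly lower-dimensional when $k$ has rank one, where an explicit first-order perturbation of a rank-$1$ pair $(V, W)$ into a rank-$2$ pair shows it lies in the closure of the open subset; hence the full spinor fiber is irreducible. Irreducibility finally transfers to the cone version via the surjection $\an{12} \to \cone^3$, $(v_i, w_i) \mapsto (v_i w_i^T)$, whose generic fibers over $(\cone \setminus \{0\})^3$ are the irreducible tori $(\C^\times)^3$.
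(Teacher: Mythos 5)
Your proof is correct and follows the same overall architecture as the paper's: apply Lemma~\ref{lemma:fib} with $Y = \cone^n$ (resp.~$\an{4n}$), establish equidimensionality of the fibers of the summation map $g$, and induct on $n$ via the identification of the fiber $g_n^{-1}(k)$ with a smaller fiber product $\cone \times_{\an{4}} \cone^{n-1}$. The genuine divergence is in the base case $n=3$, $k\neq 0$: the paper discharges this step (the assertion it calls $A_3$) by a computer check on two orbit representatives of $k\neq 0$, whereas you give a hands-on geometric argument, rewriting the spinor fiber as the matrix factorization variety $\{(V,W) \in \C^{2\times 3}\times\C^{3\times 2} : VW=k\}$, showing the open locus where $\operatorname{rank} V = 2$ is an irreducible affine $\C^2$-bundle, handling the rank-deficient stratum (nonempty only when $\operatorname{rank} k = 1$) by a first-order perturbation argument, and then pushing irreducibility forward to the cone version along the surjection $(v_i,w_i)\mapsto (v_iw_i^T)$. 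This is a worthwhile trade: your version removes the reliance on machine computation at the cost of a longer and more delicate hands-on verification — and the perturbation step (show a rank-$1$ pair $(V_0,W_0)$ is a limit of rank-$2$ pairs in the fiber, by choosing $A$ with $AW_0=0$ whose rows span a complement of $\C u$ in $\C^3$, so $V_0+tA$ has rank two for $t\neq 0$ while $W_t=W_0$ stays fixed) is stated rather tersely and deserves to be spelled out for the two cases $\operatorname{rank} W_0 \in \{1,2\}$. Your equidimensionality check also differs superficially from the paper's: you eliminate $k_n$ and cut $\cone^{n-1}$ by the single nontrivial equation $\det(k-\sum_{i<n}k_i)=0$, whereas the paper appeals to the regular sequence in Lemma~\ref{lemma:ci0} with $s$ symbolic; both are fine, but note your reduction to orbit representatives via $\mathrm{GL}_2\times\mathrm{GL}_2$ is not strictly needed there since the hypersurface argument already works uniformly for all $k\neq 0$.
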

\begin{proof}
Clearly
$\cone^n$ is irreducible,
a complete intersection of dimension $3n$,
hence Cohen-Macaulay.
The fibers of $g: (k_{1\ldots n}) \mapsto k_1 + \ldots + k_n$
have dimension\footnote{%
Note that for $n = 2$, which is excluded,
one fiber has dimension 3 instead of 2.
} $3n-4$ because
$(k_1 + \ldots + k_n - \vn,\Q_1,\ldots,\Q_n)$ is
for all $\vn \in \an{4}$
a regular sequence
in $\C[k_{1\ldots n}]$
of length $n+4$, see Lemma \ref{lemma:ci0}.
Let $A_n$ be the assertion that
the fiber $g^{-1}(\vn)$ is irreducible for all $\vn \in \an{4}$,
except $\vn=0$ if $n=3$.
Note that $A_n$ implies
 Lemma \ref{lemma:fib2} for that $n$,
by Lemma \ref{lemma:fib}.
The claim $A_3$ is checked using the computer;
it suffices to check it for two $\vn \neq 0$,
one in every orbit of the automorphism group of $\cone$.
%
%
%
We prove $A_n$, $n \geq 4$ by induction. Namely the fiber
$k_1 + \ldots + k_n = \vn$ is equal to
$\cone \times_{\an{4}} \cone^{n-1}$ for
\begin{equation}\label{eq:ccc}
\cone
\xrightarrow{\;\;
    k_n \mapsto -k_n + \vn
\;\;}
\an{4}
\xleftarrow{\;\;(k_{1 \ldots n-1}) \mapsto k_1 + \ldots + k_{n-1}\;\;}
 \cone^{n-1}
\end{equation}
and is irreducible because $A_{n-1}$
allows us to use Lemma \ref{lemma:fib2}
for $n-1$.
\qed\end{proof}
\begin{lemma}[Primality] \label{lemma:fib3}
Suppose $\px \subset \C[x_{1\ldots d}]$ is a prime ideal
for some integer $d \geq 1$.
Let $f \in \C[x_{1\ldots d}]^4$ be four polynomials.
Let $n \geq 3$ be an integer
and if $n=3$ then demand that at least 
one of the four components of $f$ not be in $\px$. 
Then $\qx \subset \C[x_{1\ldots d},k_{1\ldots n}]$ given by
\[
\qx = (\px,k_1+\ldots + k_n - f, \Q_1,\ldots,\Q_n)
\]
is a prime ideal.
Same for $\qx = (\px,v_1w_1^T + \ldots + v_nw_n^T - f) \subset
\C[x_{1\ldots d},v_{1\ldots n},w_{1\ldots n}]$.
\end{lemma}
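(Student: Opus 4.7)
The plan is to verify three properties of the ideal $\qx$: (i) $R = \C[x_{1\ldots d}, k_{1\ldots n}]/\qx$ is Cohen-Macaulay; (ii) $\zl(\qx)$ is irreducible; (iii) $\zl(\qx)$ admits a smooth $\C$-point. These together force $\qx$ to be prime: (i) gives that $R$ satisfies Serre's condition $S_1$; the smooth $\C$-point $M$ makes $R_M$ a regular local domain, and further localizing $R_M$ at the unique (by (ii)) minimal prime $\mathfrak{m}_0$ of $R$ gives $R_{\mathfrak{m}_0}$, a localization of a domain, hence a $0$-dimensional local domain, hence a field; so $R$ satisfies $R_0$. Serre's criterion ($R_0 + S_1 \Rightarrow$ reduced) then gives $R$ reduced, which combined with uniqueness of $\mathfrak{m}_0$ means $R$ is an integral domain.

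For (i), set $A = \C[x_{1\ldots d}]/\px$, a domain. I claim $k_1 + \ldots + k_n - f, \Q_1, \ldots, \Q_n$ form a regular sequence in $A[k_{1\ldots n}]$. Using the four linear relations to eliminate $k_n$, this reduces to showing $\Q_1, \ldots, \Q_{n-1}, g$ is a regular sequence in $A[k_{1\ldots n-1}]$, where $g = \det(f - k_1 - \ldots - k_{n-1})$. The first $n-1$ elements involve disjoint sets of variables, and the quotient
\[
A[k_{1\ldots n-1}]/(\Q_1, \ldots, \Q_{n-1}) \;\simeq\; A \otimes_\C \C[k_1]/\Q_1 \otimes_\C \cdots \otimes_\C \C[k_{n-1}]/\Q_{n-1}
\]
is a tensor product over the algebraically closed field $\C$ of finitely generated $\C$-algebras that are domains, hence itself a domain $D$. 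It then suffices to check $g \neq 0$ in $D$: the map $\cone^{n-1} \to \an{4}$, $(k_1, \ldots, k_{n-1}) \mapsto k_1 + \ldots + k_{n-1}$, is surjective for $n - 1 \geq 2$ (the cone linearly spans $\an{4}$), so for any $x_0 \in \zl(\px)$ with $f(x_0) \neq 0$ (possible by the component hypothesis on $f$ when $n = 3$, automatic for $n \geq 4$) one can specialize $(k_1, \ldots, k_{n-1})$ to make $g$ evaluate to a nonzero determinant.

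Property (ii) is Lemma \ref{lemma:fib2} applied to $X = \zl(\px) \xrightarrow{f} \an{4}$; its $n = 3$ nonvanishing hypothesis is exactly our component condition on $f$. For (iii), I exhibit a smooth $\C$-point of $\zl(\qx)$: choose a smooth $x_0 \in \zl(\px)$ with $f(x_0) \neq 0$, and by Lemma \ref{lemma:fib2} pick $(k_1^0, \ldots, k_n^0) \in \cone^n$ with $\sum_i k_i^0 = f(x_0)$, all $k_i^0$ nonzero, and at least two linearly independent in $\C^4$. The Jacobian of the generators of $\qx$ at this point decomposes into three blocks: a $\px$-block of rank $\codim(\px)$ in $x$-derivatives (smoothness of $\zl(\px)$); four rows from the linear relation contributing rank $4$ in $k$-derivatives via identity blocks in each $k_i$; and $n$ rows from $\Q_i$ contributing rank $n$ via the nonzero gradients $(d_i^0, -c_i^0, -b_i^0, a_i^0)$ supported in disjoint $k_i$-coordinates. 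The combined rank is $\codim(\px) + n + 4$, matching the codimension of $\qx$; joint independence of the linear and $\Q_i$ rows uses precisely that two of the $k_i^0$ are non-proportional.

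The spinor variant is analogous: its ideal has codimension $\codim(\px) + 4$ (the map $\an{4n} \to \an{4}$, $(v, w) \mapsto \sum_i v_i w_i^T$, is surjective for $n \geq 2$ with equidimensional fibers of dimension $4n-4$), so the four quadratic generators form a regular sequence over $A$ giving (i); (ii) is the second form of Lemma \ref{lemma:fib2}; and (iii) is an analogous Jacobian computation at a generic $(x_0, v^0, w^0)$ with the $v_i^0, w_i^0$ suitably nonzero. The main obstacle in either variant is the Jacobian bookkeeping in step (iii); the underlying linear algebra is routine, but the choice of point in the fiber product must respect the non-proportionality and nondegeneracy conditions that make all three blocks jointly of full rank.
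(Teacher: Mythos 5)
Your overall architecture — show irreducibility of $\zl(\qx)$ via Lemma \ref{lemma:fib2}, show reducedness, and conclude primality — is a genuinely different route from the paper's, which instead proves $\sqrt{\qx} = \qx$ directly by Gr\"obner reduction: it picks a monomial order for which the fibers $\ax_\vn$ have Gr\"obner bases $G_\vn$ with $\vn$-independent leading monomials, reduces a given $p$ symbolically in $A[k_{1\ldots n}]$ to a normal form $P$, and uses primality of $\px$ together with fiberwise primality of $\ax_{f(x)}$ to force $P \in \px[k_{1\ldots n}]$. Your Serre-criterion plan is cleaner in outline, and your verifications of (ii) and (iii) are essentially correct (the Jacobian block computation at a smooth point of $\zl(\px)$ with two non-proportional $k_i^0$ does give $R_0$ along the unique minimal prime).

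However, step (i) has a genuine gap. You claim $R = \C[x_{1\ldots d},k_{1\ldots n}]/\qx$ is Cohen-Macaulay because $k_1+\ldots+k_n - f,\Q_1,\ldots,\Q_n$ is a regular sequence in $A[k_{1\ldots n}]$ with $A = \C[x_{1\ldots d}]/\px$. But the lemma assumes only that $\px$ is prime, so $A$ need not be Cohen-Macaulay; a quotient of a polynomial ring over a non-CM domain by a regular sequence need not be $S_1$, and can even fail to be reduced (for instance with $A = \C[s^4,s^3t,st^3,t^4]$, the non-$S_2$ domain from Remark \ref{remark:hfail}, the quotient $A/(s^4)$ has the nonzero nilpotent $s^3t$). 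So the regular-sequence computation alone does not deliver the $S_1$ condition your Serre argument requires, and after eliminating $k_n$ the element $g = \det(f - k_1 - \ldots - k_{n-1})$ does involve the $x$-variables nontrivially, so you cannot isolate the bad ring $A$ away from the quotient. In the paper's applications $\px$ happens to be CM, but the statement as written allows arbitrary prime $\px$ and the paper's Gr\"obner argument genuinely uses only primality. (A secondary inaccuracy: you assert $f(x_0)\neq 0$ is ``automatic for $n\ge4$'', but the lemma permits $f\equiv 0$ on $\zl(\px)$ when $n\ge 4$; fortunately the regular-sequence verification does not actually need $f(x_0)\neq 0$ since $\cone + \cone = \an{4}$ lets you hit any nonsingular target matrix.) To repair (i) one would want to show $R$ is free, or at least torsion-free, as an $A$-module and then reduce to the generic fiber $R \otimes_A \Frac(A)$ — which is again a Gr\"obner-basis-with-uniform-leading-monomials argument, i.e.~the paper's route in disguise.
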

\begin{proof}[Using Lemma \ref{lemma:fib2}]
Set
$\ax_\vn =
(k_1 + \ldots + k_n - \vn,\Q_1,\ldots,\Q_n) \subset \C[k_{1\ldots n}]$.
Let $A_n$ be the assertion that $\ax_\vn$
is prime for all $\vn \in \an{4}$,
except $\vn=0$ if $n=3$.
Check $A_3$ using the computer.
We show that $A_n$ implies
Lemma \ref{lemma:fib3} for that $n$;
the $A_n$
then follow for $n \geq 4$ by induction by a direct analog of \eqref{eq:ccc}.
By Lemma \ref{lemma:fib2},
$\zl(\qx)$ is irreducible,
so $\sqrt{\qx}$ is prime by the Nullstellensatz,
so it suffices to check that $\sqrt{\qx} = \qx$.
So for all $p \in \C[x_{1\ldots d},k_{1\ldots n}]$
with $p^m \in \qx$, $m \geq 2$
we must show $p \in \qx$.
Denote evaluation
at $x \in \zl(\px)$ by $p_x \in \C[k_{1\ldots n}]$.
Then $(p_x)^m \in \ax_{f(x)}$.
But $A_n$,
and the assumption on $f$ if $n=3$,
 imply that $\ax_{f(x)}$ is prime
for all $x$ in an open dense $U \subset \zl(\px)$,
so $x \in U$ implies $p_x \in \ax_{f(x)}$.
There exist a monomial order for $k_{1\ldots n}$ and
a Gr\"obner basis $G_\vn$ for $\ax_\vn$ that depends
polynomially on $\vn$,
with leading monomials independent of $\vn$;
use degrevlex and coprimality as in Lemma \ref{lemma:ci0}.
For $x \in U$,
Gr\"obner reduction of $p_x$ using $G_{f(x)}$ yields zero.
So Gr\"obner reduction with $x$ symbolic 
returns a $P \in \C[x_{1\ldots d}][k_{1\ldots n}]$
that vanishes if $x \in U$
and by continuity if $x \in \zl(\px)$.
Hence $P \in \px[k_{1\ldots n}]$ since $\px$ is prime.
But $p-P \in \qx$ by construction and therefore $p \in \qx$.
\qed\end{proof}
\begin{proof}[Without using Lemmas \ref{lemma:fib}, \ref{lemma:fib2}]
We modify the above proof
to show directly that $p_1p_2 \in \qx$ implies $p_1 \in \qx$ or $p_2 \in \qx$.
If $x \in \zl(\px)$ then $p_{1x}p_{2x} \in \ax_{f(x)}$.
Set $U_i = \{x \in U \mid p_{ix} \in \ax_{f(x)}\}$.
Primality of $\ax_{f(x)}$ for $x \in U$ implies $U_1 \cup U_2 = U$.
Let $P_i \in \C[x_{1\ldots d}][k_{1\ldots n}]$
be the result of Gr\"obner reduction applied to $p_i$.
Then $p_i-P_i \in \qx$,
and $P_i = 0$ if $x \in U_i$.
Expand $P_i = \sum_M P_{iM} M$ where $P_{iM} \in \C[x_{1\ldots d}]$
and $M$ runs over the monomials in $k_{1\ldots n}$.
Then $P_{1M_1}P_{2M_2} = 0$ for $x \in U = U_1 \cup U_2$ for all $M_1$ and $M_2$,
therefore $P_{1M_1}P_{2M_2} \in \px$ using primality of $\px$,
hence $P_{1M_1} \in \px$ or $P_{2M_2} \in \px$.
If $P_{1M} \in \px$ for all $M$ then $p_1 \in \qx$ and we are done.
If $P_{1M} \notin \px$ for one $M$ then $P_{2M} \in \px$ for all $M$ and
then $p_2 \in \qx$.
\qed\end{proof}

{\footnotesize
}

\end{document}